\definecolor{weborange}{rgb}{.8,.3,.3}
\definecolor{webblue}{rgb}{0,0,.8}
\definecolor{internallinkcolor}{rgb}{0,.5,0}
\definecolor{externallinkcolor}{rgb}{0,0,.5}
\theoremstyle{plain}
\newtheorem{theorem}{Theorem}[section]
\newtheorem{lemma}[theorem]{Lemma}
\newtheorem{corollary}[theorem]{Corollary}
\newtheorem{claim}[theorem]{Claim}
\newtheorem{proposition}[theorem]{Proposition}
\theoremstyle{definition}
\newtheorem{definition}[theorem]{Definition}
\newtheorem{remark}[theorem]{Remark}
\newtheorem{question}[theorem]{Question}
\newtheorem{observation}[theorem]{Observation}
\newcommand{\C}{{\mathbb C}}
\newcommand{\R}{{\mathbb R}}
\newcommand{\Z}{{\mathbb Z}}
\newcommand{\N}{{\mathbb N}}
\newcommand{\poly}{\operatorname{poly}}
\newcommand{\eps}{\varepsilon}
\newcommand{\bra}[1]{{\left\langle{#1}\right\vert}}
\newcommand{\ket}[1]{{\left\vert{#1}\right\rangle}}
\newcommand{\qw}[1][-1]{\ar @{-} [0,#1]}
\newcommand{\qwx}[1][-1]{\ar @{-} [#1,0]}
\newcommand{\gate}[1]{*+<.6em>{#1} \POS ="i","i"+UR;"i"+UL **\dir{-};"i"+DL **\dir{-};"i"+DR **\dir{-};"i"+UR **\dir{-},"i" \qw}
\newcommand{\control}{*!<0em,.025em>-=-<.2em>{\bullet}}
\newcommand{\ctrl}[1]{\control \qwx[#1] \qw}
\newcommand{\targ}{*+<.02em,.02em>{\xy ="i","i"-<.39em,0em>;"i"+<.39em,0em> **\dir{-}, "i"-<0em,.39em>;"i"+<0em,.39em> **\dir{-},"i"*\xycircle<.4em>{} \endxy} \qw}
\newcommand{\multigate}[2]{*+<1em,.9em>{\hphantom{#2}} \POS [0,0]="i",[0,0].[#1,0]="e",!C *{#2},"e"+UR;"e"+UL **\dir{-};"e"+DL **\dir{-};"e"+DR **\dir{-};"e"+UR **\dir{-},"i" \qw}
\newcommand{\ghost}[1]{*+<1em,.9em>{\hphantom{#1}} \qw}
\newcommand{\rstick}[1]{*!L!<-.5em,0em>=<0em>{#1}}
\newcommand{\lstick}[1]{*!R!<.5em,0em>=<0em>{#1}}
\newcommand{\Qcircuit}{\xymatrix @*=<0em>}
\DeclareMathOperator{\Exp}{\mathbf{E}}
\renewcommand{\Pr}{\mathbf{P}}
\renewcommand{\bra}[1]{\langle#1|}
\renewcommand{\ket}[1]{|#1\rangle}
\newcommand{\braket}[2]{\langle#1|#2\rangle}
\newcommand{\ketbra}[2]{|#1\rangle\!\langle#2|}
\newcommand{\proj}[1]{\ketbra{#1}{#1}}
\newcommand{\tensor}{\otimes}
\newcommand{\Tr}{\operatorname{Tr}}
\newcommand{\polylog}{\operatorname{polylog}}
\newcommand{\sizee}{\ensuremath{\mathcal{C}_\epsilon}}
\newcommand{\NOT}{\operatorname{NOT}}
\newcommand{\NOTNOT}{\operatorname{NOTNOT}}
\newcommand{\CNOT}{\operatorname{CNOT}}
\newcommand{\CNOTNOT}{\operatorname{CNOTNOT}}
\newcommand{\Fredkin}{\operatorname{Fredkin}}
\newcommand{\Toffoli}{\operatorname{Toffoli}}
\titleformat{\chapter}[display]
{\normalfont\Large\filcenter\rmfamily\scshape}
{\titlerule[1pt]%
 \vspace{1pt}%
 \titlerule
 \vspace{1pc}%
 \LARGE{Lecture} \thechapter}
{1ex}
{\large\slshape}
\newcommand{\lecture}[3]{%
  \chapter{#3}%
  \vspace{-5ex}%
  \textit{Lecturer: #1 \hfill Scribe: #2}\par%
  \vspace{1ex}\titlerule\vspace{2ex}}
\begin{document}

\begin{titlepage}
\begin{center}
\begin{huge}
\textsc{Lecture notes for the 28th\\ McGill Invitational Workshop on\\
Computational Complexity\\}
\end{huge}
\vspace{1.5cm}

\begin{Large}
Bellairs Institute\\
Holetown, Barbados

\vspace{2.5cm}
{\bf The Complexity of Quantum States and Transformations: \\ From Quantum Money to Black Holes}

\vspace{2cm}
\emph{Primary Lecturer}:\\
Scott Aaronson

\vspace{1cm}
\emph{Guest Lecturers}:\\
Adam Bouland\\
Luke Schaeffer
\end{Large}
\end{center}
\end{titlepage}

\newpage

\pagenumbering{roman}
\tableofcontents

\newpage

\chapter*{Foreword}
These notes reflect a series of lectures given by Scott Aaronson and
a lecture given by Adam Bouland and Luke Schaeffer at the 28th McGill Invitational Workshop on
Computational Complexity. The workshop was held at the Bellairs
Research Institute in Holetown, Barbados in February, 2016.

\newpage

\chapter*{Abstract}
This mini-course will introduce participants to an exciting frontier for quantum computing theory: namely, questions involving the computational complexity of preparing a certain quantum state or applying a certain unitary transformation. Traditionally, such questions were considered in the context of the Nonabelian Hidden Subgroup Problem and quantum interactive proof systems, but they are much broader than that. One important application is the problem of ``public-key quantum money'' -- that is, quantum states that can be authenticated by anyone, but only created or copied by a central bank -- as well as related problems such as copy-protected quantum software. A second, very recent application involves the black-hole information paradox, where physicists realized that for certain conceptual puzzles in quantum gravity, they needed to know whether certain states and operations had exponential quantum circuit complexity. These two applications (quantum money and quantum gravity) even turn out to have connections to each other! A recurring theme of the course will be the quest to relate these novel problems to more traditional computational problems, so that one can say, for example, ``this quantum money is hard to counterfeit if that cryptosystem is secure,'' or ``this state is hard to prepare if PSPACE is not in PP/poly.'' Numerous open problems and research directions will be suggested, many requiring only minimal quantum background. Some previous exposure to quantum computing and information will be assumed, but a brief review will be provided.

%%%%%%%%%%%%%%%%%%%%%%%%%%%%%%%%%%%%%%%%
%% Lecture 1
%%%%%%%%%%%%%%%%%%%%%%%%%%%%%%%%%%%%%%%%
\lecture{Scott Aaronson}{Anil Ada and Omar Fawzi}{Quantum Information Basics}
\pagenumbering{arabic}

\section{Introduction}

This is a mini-course on quantum complexity theory, focusing in particular on the complexity of quantum states and unitary transformations.
To be a bit more specific, we'll be interested in the following sorts of questions:
\begin{itemize}
    \item {\bf Complexity of quantum states:} Given a quantum state, how many operations do we need to prepare it?
    \item {\bf Complexity of unitary transformations:} Given a unitary transformation, how many operations do we need to apply it?
\end{itemize}

One way to present quantum computing is to talk about which languages or decision problems we can solve in quantum polynomial time. Instead of that, we'll just talk directly about how hard it is to solve {\em quantum} problems, e.g., how hard is it to create a given quantum state or to apply a given transformation of states?

This is not a new idea; people have looked at such questions since the beginning of quantum computing theory in the 1990s. For example, even if you want classical information out of your computation at the end, the subroutines will still involve transforming quantum states.  What's new, in this course, is that
\begin{enumerate}
\item[(1)] we're going to be a little more systematic about it, pausing to ask whichever questions need asking, and
\item[(2)] we're going to use the complexity of states and unitaries as a connecting thread tying together a huge number of interesting topics in quantum computing theory, both ``ancient'' (i.e., more than $10$ years old) and ``modern.''
\end{enumerate}

The impetus for the course is that the same underlying questions, e.g.\ about the difficulty of applying a given unitary transformation, have recently been showing up in contexts as varied as quantum proofs and advice, the security of quantum money schemes, the black-hole information problem, and the AdS/CFT correspondence (!). One example of such a question is the so-called ``Unitary Synthesis Problem,'' which we'll state in Lecture 3.  So, that provides a good excuse to discuss all these topics (each of which is fascinating in its own right) under a single umbrella!

Here's a rough outline of the topics we'll cover in this course:
\begin{itemize}
    \item \textbf{Lecture 1:} Crash course on quantum mechanics itself.
    \item \textbf{Lecture 2:} Crash course on quantum computation.
    \item \textbf{Lecture 3:} Complexity of states and unitaries, using the famous Hidden Subgroup Problem (HSP) as a case study.
    \item \textbf{Lecture 4:} Quantum sampling and quantum witness states.
    \item \textbf{Lecture 5:} Quantum versus classical proofs and advice (featuring Grover's algorithm).
    \item \textbf{Lecture 6:} The black-hole information problem, the firewall paradox, and the complexity of unitaries.
    \item \textbf{Lecture 7:} Wormholes, AdS/CFT, and the complexity of quantum states.
    \item \textbf{Lecture 8:} Building a secure private-key quantum money scheme.
    \item \textbf{Lecture 9:} {\em Public}-key quantum money.
    \item \textbf{Lecture 10:} Special guest lecture on the classification of quantum gate sets.
\end{itemize}
%The first lecture will be mainly about the basics of quantum computing that we'll be using for the rest of the course.

Since the audience for these lectures includes both classical computational complexity theorists with no background in physics, {\em and} string theorists with no background in computer science (!), we'll try hard to remain accessible to both audiences, while still getting as quickly as we can into current research topics.  Readers who find some particular lecture too difficult or esoteric (or conversely, who already know the material) are invited to skip around.  Roughly speaking, Lectures 1 and 2 have essential quantum-information prerequisites that the rest of the course builds on, while Lecture 3 has less familiar material that nevertheless plays a role in several other lectures.  After that, Lecture 5 builds on Lecture 4, and Lecture 9 builds on Lecture 8, and we'll highlight other connecting links---but a reader who only cared (for example) about quantum gravity could safely skip many of the other lectures, and vice versa.

\section{Basics of Quantum Mechanics}

\subsection{Quantum States, Unitary Transformations, and Measurements}

In this section we'll introduce the basics of quantum mechanics that we'll be using for the rest of the course. Quantum mechanics is simply a mathematical generalization of the rules of probability theory. In this generalization, probabilities are replaced with \emph{amplitudes}, which are complex numbers. The central claim of quantum physics is that the state of a perfectly isolated physical system is described by a unit vector of complex numbers.

An $N$-dimensional \emph{quantum} state is defined to be a unit vector over $\C^N$. The notation that the physicists use to denote quantum states is called the \emph{ket notation}, which consists of an angle bracket and a vertical bar: $\ket{\cdot}$. Suppose there are $N$ possible states. We'll denote these states by $\ket{1}, \ket{2}, \ldots, \ket{N}$. A quantum state $\ket{\psi}$ is a \emph{superposition} of these $N$ states:
\[
\ket{\psi} = \alpha_1 \ket{1} + \alpha_2 \ket{2} + \cdots + \alpha_N \ket{N}.
\]
Here, for all $i$, $\alpha_i \in \C$ is the amplitude corresponding to state $i$, and these amplitudes must satisfy $\sum_i |\alpha_i|^2 = 1$. We think of $\ket{1}, \ket{2}, \ldots, \ket{N}$ as an orthonormal basis spanning $\C^N$ and we can think of $\ket{\psi}$ as the unit vector
\[
\begin{bmatrix}
\alpha_1 \\
\alpha_2 \\
\vdots \\
\alpha_N
\end{bmatrix}.
\]
The conjugate transpose of $\ket{\psi}$ is denoted using the \emph{bra notation}: $\bra{\cdot}$. So $\bra{\psi}$ represents the row vector
\[
\begin{bmatrix}
\alpha_1^* & \alpha_2^* & \cdots & \alpha_N^*
\end{bmatrix},
\]
where $\alpha_i^*$ denotes the complex conjugate of $\alpha_i$. Given two states $\ket{\psi} = \alpha_1 \ket{1} + \cdots + \alpha_N \ket{N}$ and $\ket{\varphi} = \beta_1 \ket{1} + \cdots + \beta_N \ket{N}$, their \emph{inner product} is
\[
\braket{\psi}{\varphi} = \alpha_1^* \beta_1 + \cdots + \alpha_N^* \beta_N.
\]
So the bra and the ket come together to form the \emph{bra-ket notation}.\footnote{This notation was introduced by Paul Dirac.} The \emph{outer product} $\ketbra{\psi}{\varphi}$ corresponds to the product
\[
\begin{bmatrix}
\alpha_1 \\
\alpha_2 \\
\vdots \\
\alpha_N
\end{bmatrix}
\begin{bmatrix}
\beta_1^* & \beta_2^* & \cdots & \beta_N^*
\end{bmatrix},
\]
which results in the $N \times N$ matrix of rank $1$ in which the $(i, j)$'th entry is $\alpha_i \beta_j^*$. The outer product $\ketbra{\psi}{\psi}$ is called a \emph{projection} to $\ket{\psi}$.
%\[
%\begin{bmatrix}
%\alpha_1 \beta_1^* & & \cdots & & \alpha_1 \beta_N^* \\
% & & & & \\
% & & \alpha & & \\
% & & & & \\
%\alpha_N \beta_1^* & & & & \alpha_N \beta_N^*
%\end{bmatrix}
%\]

Quantum mechanics has two kinds of operations that you can apply to quantum states. The first is \emph{unitary transformations} and the second is \emph{measurements}.
\\\\
\noindent
{\bf 1. Unitary transformations.} A unitary transformation $U$ is simply a linear map $U: \C^N \to \C^N$ that preserves inner products between pairs of vectors. That is, if $\ket{\psi'} = U \ket{\psi}$ and $\ket{\varphi'} = U\ket{\varphi}$, then $\braket{\psi'}{\varphi'} = \bra{\psi} U^\dagger U \ket{\varphi} = \braket{\psi}{\varphi}$, where $U^\dagger$ denotes the conjugate transpose of $U$. This implies that a unitary transformation preserves the norm of a state. So as you'd expect, a unitary transformation maps a quantum state into another quantum state. There are various equivalent ways of defining a unitary transformation. We could've defined it as a linear transformation $U$ that satisfies $U^\dagger U = I$, where $I$ is the identity (in other words, the inverse of $U$ equal its conjugate transpose). A third definition of a unitary transformation is that the rows of $U$ (when $U$ is viewed as a matrix) form an orthonormal set of vectors, or equivalently that the columns do so.

Let's now consider an example. Suppose $N = 2$ (which is the case for a single qubit). Let's denote by $\ket{0}$ and $\ket{1}$ the two orthogonal states. Define $\ket{+} = (\ket{0} + \ket{1})/\sqrt{2}$ and $\ket{-} = (\ket{0} - \ket{1})/\sqrt{2}$. Observe that $\ket{+}$ and $\ket{-}$ form another orthonormal basis.

\begin{center}
	\includegraphics[scale=0.35]{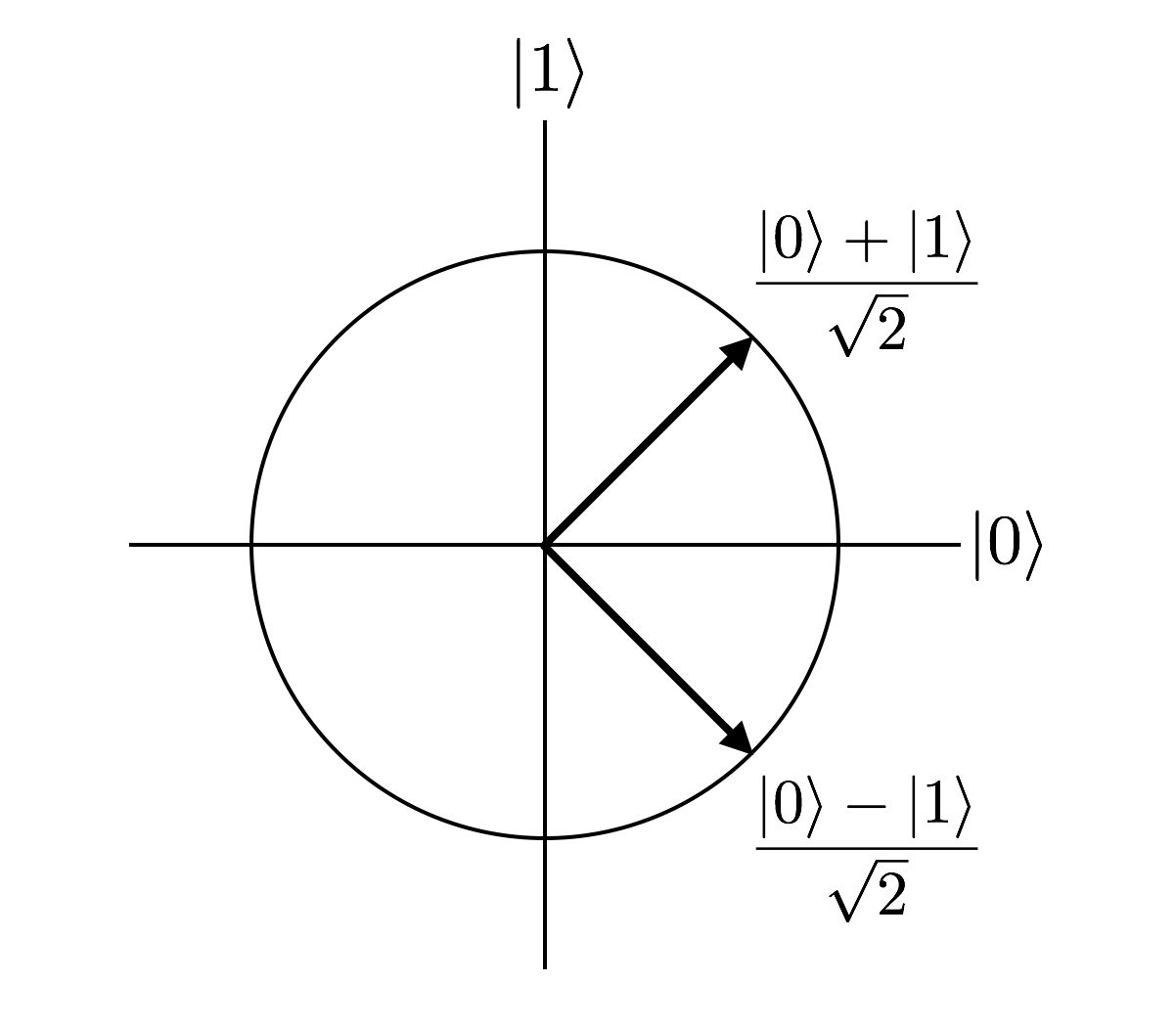}
\end{center}

We can apply unitary transformations to rotate states. For example, consider the following unitary transformation:
\[
U_\Theta =
\begin{bmatrix}
\cos \Theta & - \sin \Theta \\
\sin \Theta & \cos \Theta
\end{bmatrix}
\]
This transformation takes any state and rotates it by $\Theta$ counterclockwise. Fix $\Theta = \pi/4$. Then it's clear from the picture that if we were to apply $U_\Theta$ to $\ket{0}$, we'd get $\ket{+}$. If we were to apply $U_\Theta$ one more time, then we'd get $\ket{1}$. This small example illustrates what's called \emph{interference}: the central phenomenon of quantum mechanics that doesn't appear in classical probability theory.  Interference refers to the fact that amplitudes can cancel each other out (whereas probabilities can't)

To see this, let's write out explicitly how the state is changing. First, $U_\Theta$ changes $\ket{0}$ and $\ket{1}$ as follows:
\begin{align*}
\ket{0} \; &  \overset{U_\Theta}{\longrightarrow} \; \ket{+}  \; = \; \frac{1}{\sqrt{2}} \ket{0} + \frac{1}{\sqrt{2}} \ket{1}, \\
\ket{1} \; &  \overset{U_\Theta}{\longrightarrow} \; -\ket{-}  \; = \; -\frac{1}{\sqrt{2}} \ket{0} + \frac{1}{\sqrt{2}} \ket{1}.
\end{align*}
If we then start with $\ket{0}$ and apply $U_\Theta$ twice, we get
\begin{align*}
\ket{0}  & \; \overset{U_\Theta}{\longrightarrow} \; \frac{1}{\sqrt{2}} \ket{0} + \frac{1}{\sqrt{2}} \ket{1} \\
            & \; \overset{U_\Theta}{\longrightarrow} \; \frac{1}{\sqrt{2}} \left( \frac{1}{\sqrt{2}} \ket{0} + \frac{1}{\sqrt{2}} \ket{1} \right ) + \frac{1}{\sqrt{2}} \left(  -\frac{1}{\sqrt{2}} \ket{0} + \frac{1}{\sqrt{2}} \ket{1} \right ) \\
            & \quad = \quad \frac{1}{2}\ket{0} + \frac{1}{2} \ket{1} - \frac{1}{2}\ket{0} + \frac{1}{2}\ket{1} \\
            & \quad = \quad \ket{1}.
\end{align*}
Note that we arrive at the $\ket{1}$ state because the positive $1/2$ and the negative $1/2$ contributions to the amplitude of the $\ket{0}$ state cancel each other out, i.e., they interfere with each other destructively. One can describe the double-slit experiment in this manner. A photon can be in a superposition of going through two slits, and the two paths that it can take can interfere destructively and cancel each other out. This is why you don't see the photon appear at a certain location on the screen even though if you were to close one of the slits, the photon would have some nonzero probability to appear at that location.
\\\\
\noindent
{\bf 2. Measurements.} Quantum states can't exist forever in the abstract realm. At some point we have to \emph{measure} them. The most basic type of measurement we can do is with respect to the orthonormal basis $\{\ket{1}, \ket{2}, \ldots, \ket{N}\}$. If $\ket{\psi} = \alpha_1 \ket{1} + \cdots + \alpha_N \ket{N}$, and we measure $\ket{\psi}$, we'll get the outcome $i$ with probability $|\alpha_i|^2$. Measurement is a destructive operation. This means that once $\ket{\psi}$ is measured and outcome $i$ is observed, the state \emph{collapses} to $\ket{i}$, and all other information about the original state is vanished (or exists in parallel universes depending on your interpretation of quantum mechanics).

A more general version of the measurement rule allows us to measure a given state $\ket{\psi}$ in any orthonormal basis $\{\ket{v_1}, \ket{v_2}, \ldots, \ket{v_N}\}$. In this case, the probability that we get the outcome $i$ is $|\braket{\psi}{v_i}|^2$. And once the measurement is done, the state collapses to $\ket{v_i}$.

Note that the \emph{only} way one can access information about the amplitudes of a quantum state is through measurements. This fact is very important in quantum mechanics. Many of the misconceptions people have about quantum mechanics come from imagining that you have some kind of access to a quantum state other than measurements.

Once consequence of our not having such access is that the \emph{global phase} of a quantum state is unobservable. For example, $-\ket{0}$ and $\ket{0}$ are physically the same state, since the $-1$ multiplier in $-\ket{0}$ disappears whenever we calculate the probability of a measurement outcome by taking the absolute squares of amplitudes.  In contrast, \emph{relative} phase is physically observable: for example, $\alpha \ket{0}+\beta \ket{1}$ and $\alpha \ket{0}-\beta \ket{1}$ are different whenever $\alpha$ and $\beta$ are both nonzero, even though $\ket{1}$ and $-\ket{1}$ are the same when considered in isolation.

Two states are more distinguishable the closer they are to being orthogonal (in fact, two states are perfectly distinguishable if and only if they're orthogonal). For example, $\ket{+}$ and $\ket{-}$ are orthogonal, and indeed they can be perfectly distinguished by measuring them with respect to the basis $\{\ket{+}, \ket{-}\}$.

\subsection{Multipartite Systems, Entanglement, and Density Matrices}

Next, we'll talk about how to represent the combination of more than one quantum state. In general, to combine states, we use the tensor product. Suppose we have a qubit $\alpha \ket{0} + \beta \ket{1}$ and another qubit $\gamma \ket{0} + \delta \ket{1}$. The joint state of these two qubits is
\[
(\alpha \ket{0} + \beta \ket{1}) \tensor (\gamma \ket{0} + \delta \ket{1}) = \alpha \gamma (\ket{0} \tensor \ket{0}) + \alpha \delta (\ket{0} \tensor \ket{1}) + \beta \gamma (\ket{1} \tensor \ket{0}) + \beta \delta (\ket{1} \tensor \ket{1}).
\]
We often omit the tensor product sign and write the state as:
\[
\alpha \gamma \ket{0}  \ket{0} + \alpha \delta \ket{0}  \ket{1} + \beta \gamma \ket{1}  \ket{0} + \beta \delta \ket{1}  \ket{1}.
\]
Going even further, we use the shorthand $\ket{00}$ to denote $\ket{0}\ket{0}$, so we can rewrite the above state as:
\[
\alpha \gamma \ket{00} + \alpha \delta \ket{01} + \beta \gamma \ket{10} + \beta \delta \ket{11}.
\]
This is a vector in a 4-dimensional space spanned by $\{\ket{00}, \ket{01}, \ket{10}, \ket{11}\}$. A bipartite state is called \emph{separable} if it can be written as the tensor product of two states. Clearly, by construction, the above state is separable. However, there are states that aren't separable. The most famous example is called the {\em Bell pair} (or singlet, or EPR pair):
\[
\frac{\ket{00} + \ket{11}}{\sqrt{2}}.
\]
If a state is not separable, we call it \emph{entangled}, which can be thought of as the quantum analogue of correlation between two random variables.

Two qubits being entangled with each other doesn't prevent them from being separated by arbitrary distances (they could even be in two different galaxies).

Now suppose Alice and Bob are physically separated. Alice has one qubit, Bob has another qubit, and the joint state of the two qubits is
\[
a \ket{00} + b \ket{01} + c \ket{10} + d \ket{11}.
\]
What happens if Alice measures her qubit?

Here we need a rule for \emph{partial measurements}. The rule simply says that Alice observes the outcome $0$ with probability $|a|^2 + |b|^2$, and observes the outcome $1$ with probability $|c|^2 + |d|^2$. Also, after the measurement is done, the state undergoes a partial collapse. If Alice observes a $0$, we'd expect the collapsed state to be described by $a \ket{00} + b \ket{01}$. However, we need to normalize this vector so that it's a unit vector. Therefore the state actually collapses to $(a \ket{00} + b \ket{01})/\sqrt{|a|^2 + |b|^2}$, which means Bob's qubit after Alice's measurement would be $(a \ket{0} + b \ket{1})/\sqrt{|a|^2 + |b|^2}$. Similarly, if Alice observes a $1$, then Bob's state becomes $(c \ket{0} + d \ket{1})/\sqrt{|c|^2 + |d|^2}$.

Now suppose Alice and Bob share a Bell pair. We'd like a description of the qubit that Alice has. We know that the joint state is entangled, but can we get a description of Alice's qubit that's local to Alice?

One attempt would be to describe Alice's state as $\ket{+}$. But this would be flat-out wrong: you can check for yourself that, if we were to measure Alice's qubit in the $\{\ket{+}, \ket{-}\}$ basis, then we'd observe $\ket{+}$ and $\ket{-}$ with equal probability. More generally, if you measure Alice's qubit in \emph{any} orthonormal basis, you'll get the two outcomes with equal probability. So Alice's qubit is not behaving like the quantum states that we've talked about so far. In fact, it's behaving much more like a classical random bit.

We need a formalism to represent such states. A \emph{mixed state} is defined as a probability distribution over quantum states: $\{(p_i, \ket{\psi_i})\}$, where $p_i$ is the probability of state $\ket{\psi_i}$. The kind of quantum states we talked about before, the individual vectors $\ket{\psi_i}$, are often called \emph{pure states} in order to distinguish them from mixed states. With this formalism, Alice's qubit in the above scenario can be represented as the mixed state $\{(1/2, \ket{0}), (1/2, \ket{1})\}$.

Now, an important and counterintuitive property of mixed states is that different decompositions can physically represent the same state. For example, the mixed states $\{(1/2, \ket{0}), (1/2, \ket{1})\}$ and $\{(1/2, \ket{+}), (1/2, \ket{-})\}$ are actually physically indistinguishable (just like $\ket{0}$ and $-\ket{0}$ are physically indistinguishable).  That is, we can measure both mixtures in any basis we want, and the probabilities of all measurement outcomes will the same in both cases (as you can check for yourself).

This means that this encoding of a mixed state as a probability distribution over pure states is a redundant encoding.  Conveniently, though, there's a different way to represent a mixed state, called a \emph{density matrix}, that has no redundancy.

The density matrix, usually denoted by $\rho$, corresponding to the mixed state $\{(p_i, \ket{\psi_i})\}$ is the $N\times N$ matrix
\[
\rho = \sum_i p_i \ketbra{\psi_i}{\psi_i}.
\]
We leave it as an exercise to verify that two mixed states have the same density matrix if and only if they're indistinguishable by any measurement.

Suppose Alice and Bob jointly have a bipartite state $\sum \alpha_{ji} \ket{j} \ket{i}$. This state can be rewritten as $\sum \beta_i \ket{\psi_i} \ket{i}$, for some $\beta_i$ and $\ket{\psi_i}$. Then the density matrix that describes Alice's state is $\sum |\beta_i|^2 \ketbra{\psi_i}{\psi_i}$. The example below nicely illustrates this.
\\
\noindent
{\bf Example:} Let the joint state of Alice and Bob be
\[
\frac{\ket{00} + \ket{01} - \ket{10}}{\sqrt{3}},
\]
which we can rewrite as
\[
\sqrt{\frac{2}{3}} \left( \frac{\ket{0} - \ket{1}}{\sqrt{2}} \right) \ket{0} + \sqrt{\frac{1}{3}} \ket{0}\ket{1}.
\]
Then the density matrix describing Alice's state is
\[
   \frac{2}{3} \begin{bmatrix}
		     1/2 & -1/2 \\
		     -1/2 & 1/2
		    \end{bmatrix}        +  \frac{1}{3} \begin{bmatrix}
									1 & 0 \\
									0 & 0
									\end{bmatrix}
= \begin{bmatrix}
2/3 & -1/3 \\
-1/3 & 1/3
\end{bmatrix}.
\]

In general, any matrix that is Hermitian, positive semidefinite, and whose eigenvalues sum to 1 is a density matrix. We can state the rules of quantum mechanics directly in terms of density matrices. A quantum state is a mixed state represented by a density matrix $\rho$.  When we apply a unitary matrix $U$ to $\rho$, we get the state $U \rho U^\dagger$. It's also easy to verify that if we do a measurement in the basis $\{\ket{1}, \ket{2}, \ldots, \ket{N}\}$, we'll observe outcome $i$ with probability $\rho_{ii}$. In other words, the probabilities of the different outcomes are just the diagonal entries of $\rho$.

\bigskip

Let's now revisit the idea that in a bipartite system, one party's local measurement can affect the other party's state even though the two states can be arbitrarily far apart. A popular version of entanglement is that the particles in the universe are spookily connected; you measure one qubit and it must instantaneously send a signal to another qubit that it's entangled with.

But this is a misunderstanding of entanglement! Quantum mechanics actually upholds Einstein's principle that you can't send a signal instantaneously (and in particular, faster than the speed of light). The quantum-mechanical version of that principle can be proven and is known as the No Communication Theorem.

\begin{theorem}[No Communication Theorem]
Suppose Alice and Bob share a bipartite state. Then nothing that Alice chooses to do (i.e.\ any combination of measurements and unitary transformations) can change Bob's local state, i.e., the density matrix describing Bob's state.
\end{theorem}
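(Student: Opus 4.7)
The plan is to formalize ``Bob's local state'' as the partial trace $\rho_B = \Tr_A(\rho_{AB})$ of the joint density matrix, and then to show $\rho_B$ is invariant under either a unitary on Alice's subsystem or a projective measurement on Alice's subsystem (assuming Bob does not learn the outcome). Since every combination of such operations is just a composition of these two building blocks, iterating the argument proves the theorem. The justification that $\rho_B$ is the right notion of ``Bob's state'' is precisely the exercise quoted earlier in the lecture: two mixed states have identical density matrices if and only if they are statistically indistinguishable under every measurement, so invariance of $\rho_B$ is exactly what it means for Bob to detect no change.

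For a unitary $U$ applied on Alice's side, the global state becomes $(U \otimes I)\rho_{AB}(U^\dagger \otimes I)$. Using cyclicity of the partial trace over the $A$ factor,
\[
\Tr_A\!\bigl[(U \otimes I)\rho_{AB}(U^\dagger \otimes I)\bigr] = \Tr_A\!\bigl[(U^\dagger U \otimes I)\rho_{AB}\bigr] = \Tr_A(\rho_{AB}) = \rho_B,
\]
so unitaries on Alice's side leave $\rho_B$ untouched.

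For a measurement on Alice's side in some basis, with orthogonal projectors $\{P_i\}$ satisfying $\sum_i P_i = I_A$, outcome $i$ occurs with probability $p_i = \Tr[(P_i \otimes I)\rho_{AB}]$, after which the global state collapses to $\tfrac{1}{p_i}(P_i \otimes I)\rho_{AB}(P_i \otimes I)$. Because Bob does not learn $i$, his effective state is the $p_i$-weighted mixture of the conditional states, which by linearity is
\[
\sum_i \Tr_A\!\bigl[(P_i \otimes I)\rho_{AB}(P_i \otimes I)\bigr] = \sum_i \Tr_A\!\bigl[(P_i \otimes I)\rho_{AB}\bigr] = \Tr_A(\rho_{AB}) = \rho_B,
\]
where the first equality uses cyclicity together with $P_i^2 = P_i$, and the second uses $\sum_i P_i = I_A$.

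The main subtle point, and the only place the argument could fail, is the measurement case: what makes the theorem work is that Bob averages over Alice's outcomes weighted by their natural probabilities. Conditional on any particular outcome $i$, Bob's state $\Tr_A[(P_i \otimes I)\rho_{AB}(P_i \otimes I)]/p_i$ generally does differ from $\rho_B$ (this is the ``spooky'' correlated behavior of entanglement), but without a classical channel delivering $i$ to Bob, the mixture over $i$ restores $\rho_B$ exactly, and no signal has been sent.
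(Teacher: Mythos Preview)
Your proof is correct and is the standard argument; the paper itself leaves the proof as an exercise for the reader, so there is nothing to compare against. The only minor point worth making explicit is that ``iterating the argument'' also covers the case where Alice's later unitaries depend on her earlier measurement outcomes, but this follows immediately by applying your unitary step branch-by-branch before averaging.
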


The proof is left as an exercise for the reader.

As an example, consider the Bell state $\frac{1}{\sqrt{2}} (\ket{00} + \ket{11})$. Then the density matrix describing Bob's state is the following \emph{maximally mixed state}:
\[
\rho_B = \begin{bmatrix}
1/2 & 0 \\
0 & 1/2
\end{bmatrix}.
\]
If Alice measures her qubit, with $1/2$ probability she'll observe a $0$, in which case Bob's state will be $\ket{0}$, and with $1/2$ probability she'll observe a $1$, in which case Bob's state will be $\ket{1}$. So Bob's density matrix does not change; he still has a maximally mixed state. (Of course, \emph{if} we condition on Alice's outcome, that \emph{does} change Bob's density matrix. And this would also be true in the classical probability world.) As an exercise, one can also check that whatever unitary transformation Alice applies to her qubit won't change Bob's density matrix.

In 1935, Einstein, Podolsky and Rosen (EPR) published a famous paper which asked, in modern terms, whether quantum entanglement can be understood purely in terms of classical correlation.  They thought this was necessary to uphold the `no faster than light communication' principle. Indeed, the experiment described above, in which we measure the Bell pair in the $\{\ket{0}, \ket{1}\}$ basis, doesn't distinguish quantum entanglement from classical correlation.

John Bell, in the 1960s, clarified the situation significantly. He described experiments (in which Alice and Bob both randomly choose a basis to do their measurements in rather than just selecting the $\{\ket{0}, \ket{1}\}$ basis) that provably {\em distinguish} quantum entanglement from classical correlation.  More concretely, he proved that, if such experiments have the outcomes that quantum mechanics predicts (and later experimental work would show that they do), then the correlations obtained by quantum entanglement can't be achieved by any theory of shared classical randomness (i.e., classical correlations).

Entanglement is thus ``intermediate'' between classical correlation and faster-than-light communication: it doesn't allow faster-than-light communication, and yet any simulation of entanglement in a classical universe \emph{would} require faster-than-light communication.\footnote{From a computational complexity theorist's point of view, Bell's result can be restated as saying that there's a 2-prover game that can be won with higher probability in nature than it can be if the provers were separated and had only classical shared randomness.}

\subsection{More General Operations and Measurements}

The type of measurement we've discussed so far is called a \emph{projective measurement}: you pick an orthonormal basis to measure in and you ask your state to collapse to one of the basis vectors. But we can also make measurements that involve an additional ancilla system.

For example, suppose we have a state $\ket{\psi}$ that we want to measure, but we introduce an additional qubit in the $\ket{0}$ state. So the joint state is $\ket{\psi} \tensor \ket{0}$, and we can apply a projective measurement to this state. This is {\em not} the same as a projective measurement on just $\ket{\psi}$, since it lives in a higher-dimensional vector space. This more general kind of measurement goes by the catchy name {\em positive-operator valued measurement (POVM)}.

More precisely, a POVM can be described as follows. Suppose $\rho$ is an $N$-dimensional density matrix. A POVM is simply a set of Hermitian positive semidefinite $N$ by $N$ matrices $\{E_1, E_2, \ldots, E_k\}$ (where $k$ can be chosen to be any number), with the property that $\sum_i E_i$ equals the identity matrix $I$. When such a POVM is applied to $\rho$, we observe the outcome $i$ with probability $\Tr(E_i \rho)$. Let's underline the fact that if $k > N$, then the POVM allows for more than $N$ outcomes. Also note that the POVM formalism doesn't specify what happens to the post-measurement state. It's left as an exercise to show that any POVM (as just defined) can be described as a projective measurement applied to the original state together with an ancilla state. Conversely, any projective measurement done on the original state together with an ancilla state can be described as a POVM.

There's yet another formalism for quantum operations that in some sense, is more general than both unitaries and POVMs.  It's appropriately called the \emph{superoperator} formalism. This formalism captures any combination of unitaries, measurements, and ancillary systems in a single mathematical package.

Formally, a superoperator is specified by a set of matrices $\{E_1, E_2, \ldots, E_k\}$ satisfying $\sum_i E_i^\dagger E_i = I$. Given a density matrix $\rho$, the superoperator maps it to another density matrix
\[
S(\rho) = \sum_i^k E_i \rho E_i^\dagger.
\]
Note that $S(\rho)$ can have a different dimension than $\rho$. (Exercise: show that $S(\rho)$ is indeed a density matrix.) Superoperators can also be used to do a measurement: outcome $i \in \{1,2,\ldots, k\}$ is observed with probability $\Tr(E_i \rho E_i^\dagger)$ and $\rho$ collapses to $E_i \rho E_i^\dagger / \Tr(E_i \rho E_i^\dagger)$.

Another concept that we will use is  \emph{purification} of a mixed state. It's easy to prove that any mixed state $\rho$ in $N$ dimensions can be obtained by tracing out a higher dimensional pure state $\ket{\psi}$ (in fact, $N^2$ dimensions suffices for this). That is, given $\rho$ which acts on a Hilbert space $\mathcal{H}_A$, there is a pure state $\ket{\psi}$ which lives in a Hilbert space $\mathcal{H}_A \tensor \mathcal{H}_B$ such that $\Tr_B(\ketbra{\psi}{\psi}) = \rho$. This $\ket{\psi}$ is called a purification of $\rho$.

\subsection{Measuring Distances Between Quantum States}

There are several ways to measure distances between quantum states. For example, if we had two pure states, we could try to measure the distance between them using the Euclidean distance. However, note that even though the Euclidean distance between $\ket{0}$ and $-\ket{0}$ is $2$, they're really the same state. Thus, a better choice would be to look at the inner product of the states. In particular, if $\ket{\psi}$ and $\ket{\varphi}$ are two pure states, their distance can be measured using $|\braket{\psi}{\varphi}|$, which is always between $0$ and $1$. We can generalize the inner product to allow mixed states, which gives us a measure known as \emph{fidelity}. If $\rho$ is a mixed state and $\ket{\psi}$ is a pure state, then the fidelity between $\rho$ and $\ket{\psi}$ is $\sqrt{ \bra{\psi} \rho \ket{\psi} }$. If $\rho$ and $\sigma$ are two mixed states, then the fidelity between them is defined as
\[
F(\rho, \sigma) = \max_{\ket{\psi}, \ket{\varphi}} | \braket{\psi}{\varphi} |,
\]
where the maximum is over all purifications $\ket{\psi}$ of $\rho$ and all purifications $\ket{\varphi}$ of $\sigma$. The fidelity between $\rho$ and $\sigma$ can also be computed as
\[
F(\rho, \sigma) = \Tr\left(\sqrt{ \sqrt{\rho} \sigma \sqrt{\rho}}\right) = \| \sqrt{\rho} \sqrt{\sigma} \|_{tr},
\]
where the trace norm $\| \cdot \|_{tr}$ is equal to the sum of the singular values.

The second distance measure we will look at is the \emph{trace distance}, which is a generalization of variation distance between probability distributions. If $D_1$ and $D_2$ are two probability distributions, the variation distance between them is
\[
\| D_1 - D_2 \| = \frac{1}{2} \sum_x |D_1(x) - D_2(x)|.
\]
Informally, this is a measure of how well we can distinguish a sample from the first distribution from a sample from the second distribution. If the distance is $1$, we can tell them apart perfectly, and if it's $0$, we can't distinguish them at all. Any experiment that accepts a sample from the first distribution with probability $p$ will accept a sample from the second distribution with probability in $\big[p - \|D_1 - D_2\|, p + \|D_1 - D_2\|\big]$. The generalization of this to quantum states is the trace distance. Given two mixed states $\rho$ and $\sigma$, their trace distance is defined as
\[
\| \rho - \sigma \|_{tr} = \frac12 \Tr\left( \sqrt{(\rho - \sigma)^2} \right) = \frac12 \sum_{i} |\lambda_i| \ ,
\]
where $\lambda_i$ are the eigenvalues of $\rho - \sigma$. It's easy to check that the trace distance is invariant under unitary transformations (same is true for fidelity):
\[
\| \rho - \sigma \|_{tr} = \| U \rho U^\dagger - U \sigma U^\dagger \|_{tr}.
\]
Being a distance measure, trace distance indeed satisfies the triangle inequality:
\[
\| A - C\|_{tr} \leq \|A - B\|_{tr} + \|B - C\|_{tr}.
\]
Furthermore, if you trace out parts of $\rho$ and $\sigma$, that can only decrease the trace distance $\| \rho - \sigma \|_{tr}$ (exercise).

Similar to the variation distance, if we have two mixed states $\rho$ and $\sigma$ with $\| \rho - \sigma \|_{tr} \leq \eps$, and we have a 2-outcome POVM that accepts $\rho$ with probability $p$, then the same POVM accepts $\sigma$ with probability in $[p-\eps, p+\eps]$. This is a very important property that we'll use many times throughout the course.

The trace distance and fidelity are related as follows:
\[
1 - F(\rho, \sigma) \leq \| \rho - \sigma \|_{tr} \leq \sqrt{1 - F(\rho, \sigma)^2}.
\]
In the upper bound, equality is achieved if $\rho$ and $\sigma$ are both pure states. This means that if $\rho$ and $\sigma$ are pure states, then there is a simple expression for the trace distance in terms of the inner product.

\section{``Almost As Good As New'' Lemma and Quantum Union Bound }

We'll now see an application of trace distance. We know that measurement in quantum mechanics is an inherently destructive process. Yet, we should not interpret this as saying that every measurement is always maximally destructive. In reality, how destructive a measurement is depends on whether there are any basis vectors (in the basis in which we are measuring) that are close to the state being measured. If the state being measured is close to a basis vector, then with high probability it will collapse to that basis vector---and since those vectors were close to begin with, the loss of information about our original vector will be minimal.

%\begin{definition}
%Given two positive semidefinite operators $\rho$ and $\sigma$, we define the trace distance as
%\begin{align*}
%\| \rho - \sigma \|_{tr} = \frac12 \Tr\left[ \sqrt{(\rho - \sigma)^2} \right] = \frac12 \sum_{i} |\lambda_i| \ ,
%\end{align*}
%where $\lambda_i$ are the eigenvalues of $\rho - \sigma$.
%\end{definition}

\begin{lemma}[``Almost As Good As New Lemma''~\cite{aar:adv}, closely related to ``Gentle Measurement Lemma'' \cite{gentle}]
\label{lem:aagan}
Let $\rho$ be a mixed state acting on $\C^d$. Let $U$ be a unitary and $(\Pi_0, \Pi_1 = \mathsf{1} - \Pi_0)$ be projectors all acting on $\C^{d} \otimes \C^{d'}$. We interpret $(U, \Pi_0, \Pi_1)$ as a measurement performed by appending an ancillary system of dimension $d'$ in the state $\proj{0}$, applying $U$ and then performing the projective measurement $\{ \Pi_0, \Pi_1\}$ on the large system. Assuming that the outcome corresponding to $\Pi_0$ has probability $1-\eps$, i.e., $\Tr[ \Pi_0 (U \rho \otimes \proj{0} U^{\dagger}) ] = 1 - \eps$, we have
\begin{align*}
\| \rho - \tilde{\rho} \|_{tr} \leq \sqrt{\eps} \ ,
\end{align*}
where $\tilde{\rho}$ is state after performing the measurement and then undoing the unitary $U$ and tracing out the ancillary system: $\tilde{\rho} =  \Tr_{d'}\left(U^{\dagger} \left( \Pi_0 U (\rho \otimes \proj{0}) U^{\dagger} \Pi_0  +  \Pi_1 U (\rho \otimes \proj{0}) U^{\dagger} \Pi_1 \right) U \right)$.
%Let $\{\Lambda, \mathbf{1} - \Lambda\}$ be a two-outcome POVM with $\Tr[\Lambda \rho ] \geq 1 - \eps$. Then
\end{lemma}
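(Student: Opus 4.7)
The plan is to reduce to the pure-state case, compute the trace distance exactly on pure inputs via a rank-two eigenvalue calculation, and then lift back to mixed inputs using concavity of $\sqrt{\cdot}$.

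First I would observe two simple reductions that let me work in a convenient place. Unitary invariance of the trace norm together with the fact that partial trace is a contraction under $\|\cdot\|_{tr}$ gives
\[
\|\rho - \tilde\rho\|_{tr} \;=\; \|U^\dagger\sigma U - U^\dagger \sigma' U\|_{tr}\,,\;\text{traced} \;\leq\; \|\sigma - \sigma'\|_{tr},
\]
where $\sigma = U(\rho\otimes\proj{0})U^\dagger$ and $\sigma' = \Pi_0 \sigma \Pi_0 + \Pi_1 \sigma \Pi_1$. So it suffices to bound $\|\sigma - \sigma'\|_{tr}$. Writing $\sigma = \sum_{a,b\in\{0,1\}}\Pi_a \sigma \Pi_b$ using $\Pi_0+\Pi_1 = \mathsf{1}$, we get
\[
\sigma - \sigma' \;=\; \Pi_0 \sigma \Pi_1 + \Pi_1 \sigma \Pi_0,
\]
i.e.\ exactly the off-diagonal blocks of $\sigma$ in the $\Pi_0/\Pi_1$ decomposition.

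Next I would handle the case where $\rho$ is a pure state, $\rho = \proj{\psi}$. Then $\sigma = \proj{\phi}$ with $\ket{\phi} = U(\ket{\psi}\otimes\ket{0})$. Write $\ket{\phi} = \ket{\phi_0} + \ket{\phi_1}$ with $\ket{\phi_a} = \Pi_a\ket{\phi}$ and $\braket{\phi_0}{\phi_1}=0$; by hypothesis $\|\ket{\phi_0}\|^2 = 1-\eps$ and $\|\ket{\phi_1}\|^2 = \eps$. Then
\[
\sigma - \sigma' \;=\; \ketbra{\phi_0}{\phi_1} + \ketbra{\phi_1}{\phi_0},
\]
a Hermitian operator of rank at most two supported on $\mathrm{span}(\ket{\phi_0},\ket{\phi_1})$. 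In the orthonormal basis obtained by normalizing $\ket{\phi_0}$ and $\ket{\phi_1}$, it is the $2\times 2$ matrix $\bigl[\begin{smallmatrix}0 & c\\ c & 0\end{smallmatrix}\bigr]$ with $c = \|\ket{\phi_0}\|\,\|\ket{\phi_1}\| = \sqrt{\eps(1-\eps)}$, whose eigenvalues are $\pm c$. Hence
\[
\|\sigma - \sigma'\|_{tr} \;=\; \tfrac12(|c|+|-c|) \;=\; \sqrt{\eps(1-\eps)} \;\leq\; \sqrt{\eps},
\]
proving the lemma in the pure case.

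Finally, for a general mixed $\rho = \sum_j p_j \proj{\psi_j}$, let $\eps_j = \Tr[\Pi_1 U(\proj{\psi_j}\otimes\proj{0})U^\dagger]$, so $\eps = \sum_j p_j \eps_j$. By linearity, $\tilde\rho = \sum_j p_j \tilde\rho_j$ where $\tilde\rho_j$ is the corresponding state produced by the pure input $\ket{\psi_j}$. The triangle inequality, the pure-case bound, and concavity of $\sqrt{\cdot}$ (equivalently Cauchy--Schwarz) then give
\[
\|\rho - \tilde\rho\|_{tr} \;\leq\; \sum_j p_j \|\proj{\psi_j} - \tilde\rho_j\|_{tr} \;\leq\; \sum_j p_j \sqrt{\eps_j} \;\leq\; \sqrt{\textstyle\sum_j p_j \eps_j} \;=\; \sqrt{\eps}.
\]
The only genuinely nontrivial step is the rank-two eigenvalue computation in the pure case; everything else is routine once that is in hand. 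I do not anticipate a major obstacle, since the hypothesis that $\Pi_0$ and $\Pi_1$ are orthogonal projectors summing to the identity is exactly what forces the difference $\sigma - \sigma'$ to be the off-diagonal blocks and to have such a clean spectral structure.
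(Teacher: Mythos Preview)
Your proposal is correct and follows essentially the same route as the paper's proof: reduce to the pure-state case via unitary invariance and contractivity of partial trace, compute the off-diagonal blocks $\ketbra{\phi_0}{\phi_1}+\ketbra{\phi_1}{\phi_0}$ explicitly to get eigenvalues $\pm\sqrt{\eps(1-\eps)}$, and then lift to mixed $\rho$ by eigendecomposition, the triangle inequality, and concavity of $\sqrt{\cdot}$. The only cosmetic difference is that you state the reduction $\|\rho-\tilde\rho\|_{tr}\le\|\sigma-\sigma'\|_{tr}$ up front, whereas the paper carries out the pure-state computation on the larger space first and then invokes the contraction; the substance is identical.
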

\begin{proof}
We start with the case where $\rho$ is a pure state that we write $\proj{\psi}$.
Defining the two orthogonal unit vectors $\ket{\phi_0}$ and $\ket{\phi_1}$ as
\begin{align*}
\ket{\phi_0} = \frac{\Pi_{0} U (\ket{\psi} \otimes \ket{0})}{\sqrt{\Tr[\Pi_0 (U \rho \otimes \proj{0} U^{\dagger})]}} \quad \text{and} \quad \ket{\phi_1} = \frac{\Pi_{1} U (\ket{\psi} \otimes \ket{0})}{\sqrt{\Tr[\Pi_1 (U \rho \otimes \proj{0} U^{\dagger})]}} \ ,
\end{align*}
and using the defining expression for $\eps$, we have
\begin{align*}
U (\ket{\psi} \otimes \ket{0}) = \sqrt{1-\eps} \ket{\phi_0} + \sqrt{\eps} \ket{\phi_1} \ .
\end{align*}
With this notation, we have
\begin{align*}
\tilde{\rho} = \Tr_{d'}\left[ U^{\dagger} \left( (1-\eps) \proj{\phi_0} + \eps \proj{\phi_1} \right) U^{\dagger} \right] \ .
\end{align*}
We can now compute
\begin{align*}
\| U (\proj{\psi} \otimes \proj{0}) U^{\dagger} - \left((1-\eps) \proj{\phi_0} + \eps \proj{\phi_1} \right) \|_{tr} &= \| \sqrt{\eps (1-\eps)} \ketbra{\phi_0}{\phi_1} + \sqrt{\eps(1-\eps)} \ketbra{\phi_1}{\phi_0} \|_{tr} \\
&= \sqrt{\eps(1-\eps)} \ .
\end{align*}
To see the last equality, we observe that $\ket{\phi_0} \pm \ket{\phi_1}$ are the two eigenvectors with eigenvalues $\pm \sqrt{\eps(1-\eps)}$.
Using the fact that the trace distance cannot increase if we apply $U^{\dagger}$ followed by the partial trace on the system $\C^{d'}$,  we obtain
\begin{align}
\label{eq:aagan_pure}
\| \rho - \tilde{\rho} \|_{tr} \leq \sqrt{\eps(1-\eps)} \leq \sqrt{\eps} \ .
\end{align}
For the general case, we take an eigendecomposition of $\rho = \sum_i p_i \proj{\psi_i}$. We define
\begin{align*}
\eps_i &= \Tr[\Pi_1 (U \proj{\psi_i} \otimes \proj{0} U^{\dagger})] \ , \\
 \tilde{\psi}_i &= \Tr_{d'}\left(U^{\dagger} \left( \Pi_0 U (\proj{\psi_i} \otimes \proj{0}) U^{\dagger} \Pi_0  +  \Pi_1 U (\proj{\psi_i} \otimes \proj{0}) U^{\dagger} \Pi_1 \right) U \right) \ .
\end{align*}
 Note that $\sum_i p_i \eps_i = \eps$ and $\sum_{i} p_i \tilde{\psi}_i = \tilde{\rho}$. Using the triangle inequality and the pure state case~\eqref{eq:aagan_pure}, we have
\begin{align*}
\|\rho - \tilde{\rho} \|_{tr} &\leq \sum_i p_i \| \proj{\psi_i} - \tilde{\psi}_i \|_{tr} \\
&\leq \sum_{i} p_i \sqrt{\eps_i} \\
&\leq \sqrt{\eps} \ ,
\end{align*}
where we used the concavity of the square-root function in the last inequality.
%For every $i$, we can define two orthogonal unit vectors $\ket{\phi^i_0}$ and $\ket{\phi^i_1}$ as
%\begin{align*}
%\ket{\phi^i_0} = \frac{\Pi_{0} U (\ket{\psi_i} \otimes \ket{0})}{\sqrt{1-\eps_i}} \quad \text{and} \quad \ket{\phi^i_1} = \frac{\Pi_{1} U (\ket{\psi_i} \otimes \ket{0})}{\sqrt{\eps_i}} \ ,
%\end{align*}
%where $\eps_i = \Tr[\Pi_1 (U \proj{\psi_i} \otimes \proj{0} U^{\dagger})]$
%so that
%\begin{align*}
%U \ket{\psi_i} \otimes \ket{0} = \sqrt{1-\eps_i} \ket{\phi_0^i} + \sqrt{\eps_i} \ket{\phi_1^i} \ .
%\end{align*}
\end{proof}

(Note: the proof above fixes an error in the proof given in \cite{aar:adv}.)

The above is also known as the gentle measurement lemma in the information theory literature (see e.g.,~\cite[Section 9.4]{wildebook} for more discussion). Its standard formulation is as follows.
\begin{lemma}[``Gentle Measurement''~\cite{gentle}]
Let $\rho$ be a mixed state and $\{\Lambda, \mathsf{1}-\Lambda\}$ be a two-outcome POVM with $\Tr[\Lambda \rho] = 1-\eps$, then
\begin{align*}
\| \tilde{\rho} - \rho \|_{tr} \leq \eps \ ,
\end{align*}
where $\tilde{\rho} = \frac{\sqrt{\Lambda} \rho \sqrt{\Lambda}}{\Tr[\Lambda \rho]}$.
\end{lemma}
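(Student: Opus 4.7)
The plan is to mimic the reduction strategy used in the proof of the Almost As Good As New Lemma above: spectrally decompose $\rho$ into pure components, analyze each component's post-measurement overlap via operator monotonicity of the square root, and reassemble using the triangle inequality plus convexity. This keeps the argument self-contained and parallel to the previous lemma.

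First, decompose $\rho = \sum_i p_i \proj{\psi_i}$ and set $1 - \eps_i := \bra{\psi_i}\Lambda\ket{\psi_i}$, so linearity of trace gives $\sum_i p_i \eps_i = \eps$. The Kraus-like map $\sigma \mapsto \sqrt{\Lambda}\sigma\sqrt{\Lambda}$ sends $\proj{\psi_i}$ to $(1-\eps_i)\proj{\phi_i}$ where $\ket{\phi_i} := \sqrt{\Lambda}\ket{\psi_i}/\sqrt{1-\eps_i}$, and hence $\tilde\rho = \sum_i \tfrac{p_i(1-\eps_i)}{1-\eps} \proj{\phi_i}$.

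Second, bound the pure-component trace distance. Using the standard identity $\|\proj{\psi_i} - \proj{\phi_i}\|_{tr} = \sqrt{1 - |\braket{\psi_i}{\phi_i}|^2}$ and the operator inequality $\sqrt{\Lambda} \succeq \Lambda$ (valid because $0 \leq \Lambda \leq I$ implies $\sqrt{t} \geq t$ on $[0,1]$), we obtain $\bra{\psi_i}\sqrt{\Lambda}\ket{\psi_i} \geq 1 - \eps_i$, hence $|\braket{\psi_i}{\phi_i}|^2 \geq 1 - \eps_i$. To assemble, split via the triangle inequality $\|\rho - \tilde\rho\|_{tr} \leq \|\rho - \sqrt{\Lambda}\rho\sqrt{\Lambda}\|_{tr} + \|\sqrt{\Lambda}\rho\sqrt{\Lambda} - \tilde\rho\|_{tr}$. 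The second term equals exactly $\eps$, since $\tilde\rho - \sqrt{\Lambda}\rho\sqrt{\Lambda} = \tfrac{\eps}{1-\eps}\sqrt{\Lambda}\rho\sqrt{\Lambda}$ and $\sqrt{\Lambda}\rho\sqrt{\Lambda}$ is PSD with trace $1-\eps$. The first term is handled componentwise and then combined via concavity of $\sqrt{\cdot}$ against $\sum_i p_i \eps_i = \eps$, so as to end up with a bound at most $\eps$.

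The hard part will be forcing the per-component pure-state distance down to something that aggregates linearly in $\eps$ rather than in $\sqrt{\eps}$: the naive bound $\sqrt{\eps_i}$ is already saturated when $\Lambda$ is a rank-one projector aligned close to $\ket{\psi_i}$, so the operator inequality $\sqrt{\Lambda} \succeq \Lambda$ by itself cannot deliver the claimed $\eps$ scaling. Overcoming this requires not bounding $\|\rho - \sqrt{\Lambda}\rho\sqrt{\Lambda}\|_{tr}$ in isolation and then paying the renormalization cost on top, but rather handling both contributions jointly: in the eigenbasis of $\Lambda$, the component of $\rho - \tilde\rho$ that projects onto the large eigenspaces of $\Lambda$ cancels between the $\sqrt{\Lambda}\rho\sqrt{\Lambda}$ step and the $(1-\eps)^{-1}$ renormalization, leaving only a contribution supported on the $\eps$-weight sector of $I - \Lambda$. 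Carrying this cancellation out carefully, rather than bounding the two pieces separately, is the technical crux of pushing the estimate from the familiar $O(\sqrt\eps)$ gentle-measurement bound down to the tighter $\leq \eps$ stated here.
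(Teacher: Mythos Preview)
The paper does not actually supply a proof of this lemma: it is stated with a citation to \cite{gentle} and to Wilde's book, immediately after the Almost As Good As New Lemma, without any argument. So there is no paper proof to compare against.

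More importantly, the bound $\|\tilde\rho - \rho\|_{tr} \leq \eps$ as printed cannot be correct, and your own proposal already contains the counterexample. Take $\Lambda = \proj{v}$ a rank-one projector and $\rho = \proj{\psi}$ with $|\braket{v}{\psi}|^2 = 1-\eps$. Then $\sqrt{\Lambda} = \Lambda$, so $\tilde\rho = \proj{v}$, and using the paper's convention $\|\proj{\psi} - \proj{v}\|_{tr} = \sqrt{1 - |\braket{\psi}{v}|^2} = \sqrt{\eps}$. This saturates $\sqrt{\eps}$, not $\eps$. You correctly flag this in your ``hard part'' paragraph, but then posit a cancellation between the $\sqrt{\Lambda}\rho\sqrt{\Lambda}$ contraction and the $(1-\eps)^{-1}$ renormalization. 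In the projector example there is nothing to cancel: $\tilde\rho$ already lives entirely on the ``large eigenspace'' $\ket{v}$, and the deficit comes solely from the $\eps$ mass of $\rho$ sitting orthogonal to $\ket{v}$, which contributes $\sqrt{\eps}$ to the trace distance regardless of renormalization. The standard Gentle Measurement bound in the literature is $\sqrt{\eps}$ (or $2\sqrt{\eps}$, depending on conventions), matching the Almost As Good As New Lemma just above; the $\eps$ printed here is almost certainly a typo for $\sqrt{\eps}$.

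Your first two steps---eigendecomposition of $\rho$, the operator inequality $\sqrt{\Lambda} \succeq \Lambda$ giving $|\braket{\psi_i}{\phi_i}|^2 \geq 1-\eps_i$, then convexity---are exactly the right ingredients for the correct $\sqrt{\eps}$ bound and mirror the structure of the preceding lemma's proof. The gap is only in the final paragraph, where you are chasing a strengthening that does not hold.
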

To compare to Lemma~\ref{lem:aagan}, we can set $\Lambda = \bra{0} U^{\dagger} \Pi_0 U \ket{0}$. Note that here, the state conditioned on the measurement outcome is used, whereas Lemma~\ref{lem:aagan} uses the unconditioned state.

It's useful to generalize these results to multiple measurements performed in sequence.

\begin{lemma}[``Quantum Union Bound'' (e.g. \cite{aar:qmaqpoly})]
\label{lem:qunionbound}
Let $\rho$ be a mixed state acting on $\C^d$. Suppose we have $k$ measurements described by $(U_i, \Pi_0^i, \Pi_1^i)$ for every $i \in \{1, \dots, k\}$. Assume that for all $i \in \{1, \dots, k\}$, $\Tr[\Pi_0^i (U_i \rho \otimes \proj{0} U_i^{\dagger})] \geq 1-\eps$. Define the superoperator $\mathcal{M}_i$ taking operators on $\C^d$ to operators on $\C^d$ by
\begin{align*}
\mathcal{M}_i(\rho) = \Tr_{d'_i}\left(U_i^{\dagger} \left( \Pi^i_0 U_i (\rho \otimes \proj{0}) U_i^{\dagger} \Pi^i_0  +  \Pi_1^i U_i (\rho \otimes \proj{0}) U_i^{\dagger} \Pi_1^i \right) U_i \right) \ .
\end{align*}
%Let $U$ be a unitary and $(\Pi_0, \Pi_1 = \mathsf{1} - \Pi_0)$ be projectors all acting on $\C^{d} \otimes \C^{d'}$. We interpret $(U, \Pi_0, \Pi_1)$ as a measurement performed by appending an ancillary system of dimension $d'$ in the state $\proj{0}$, applying $U$ and then performing the projective measurement $\{ \Pi_0, \Pi_1\}$ on the large system. Assuming that the outcome corresponding to $\Pi_0$ has probability $1-\eps$, i.e., $\Tr[ \Pi_0 (U \rho \otimes \proj{0} U^{\dagger}) ] = 1 - \eps$, we have
Then,
\begin{align*}
\| \rho - (\mathcal{M}_k \circ \cdots \circ \mathcal{M}_1)(\rho) \|_{tr} \leq k \sqrt{\eps} \ .
\end{align*}
%where $\tilde{\rho}$ is state after performing the measurement and then undoing the unitary $U$ and tracing out the ancillary system: $\tilde{\rho} =  \Tr_{d'}\left(U^{\dagger} \left( \Pi_0 U (\rho \otimes \proj{0}) U^{\dagger} \Pi_0  +  \Pi_1 U (\rho \otimes \proj{0}) U^{\dagger} \Pi_1 \right) U \right)$.
%Let $\{\Lambda, \mathbf{1} - \Lambda\}$ be a two-outcome POVM with $\Tr[\Lambda \rho ] \geq 1 - \eps$. Then
\end{lemma}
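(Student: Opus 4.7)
The natural strategy is induction on $k$, using Lemma~\ref{lem:aagan} as the one-step ingredient and the contractivity of trace distance under quantum operations to compound the errors additively.

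For the setup, I would write $\sigma_0 = \rho$ and $\sigma_i = \mathcal{M}_i(\sigma_{i-1})$, so that the quantity to bound is $\|\rho - \sigma_k\|_{tr}$. The base case $k=1$ is exactly Lemma~\ref{lem:aagan}, which gives $\|\rho - \mathcal{M}_1(\rho)\|_{tr} \leq \sqrt{\eps}$. For the inductive step, assuming the bound $\|\rho - \sigma_{k-1}\|_{tr} \leq (k-1)\sqrt{\eps}$, I would use the triangle inequality through the ``ideal'' intermediate $\mathcal{M}_k(\rho)$:
\[
\|\rho - \sigma_k\|_{tr} \;\leq\; \|\rho - \mathcal{M}_k(\rho)\|_{tr} \;+\; \|\mathcal{M}_k(\rho) - \mathcal{M}_k(\sigma_{k-1})\|_{tr}.
\]
The first term is at most $\sqrt{\eps}$ by Lemma~\ref{lem:aagan} applied with $\rho$ as input (this is where the hypothesis $\Tr[\Pi_0^k(U_k \rho\otimes\proj{0} U_k^\dagger)] \geq 1-\eps$ gets used, stated exactly as given — on $\rho$, not on $\sigma_{k-1}$). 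The second term is at most $\|\rho - \sigma_{k-1}\|_{tr} \leq (k-1)\sqrt{\eps}$ by the inductive hypothesis, provided we know that $\mathcal{M}_k$ is trace-distance-contractive.

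The key subsidiary observation I would justify explicitly is that each $\mathcal{M}_i$ is a CPTP map (a superoperator in the sense of the preceding subsection): it is the composition of tensoring with $\proj{0}$, the unitary $U_i$, the dephasing channel $X \mapsto \Pi_0^i X \Pi_0^i + \Pi_1^i X \Pi_1^i$, the unitary $U_i^\dagger$, and a partial trace. Each factor is CPTP, and CPTP maps are contractive under $\|\cdot\|_{tr}$. Summing the two bounds then yields $\|\rho - \sigma_k\|_{tr} \leq k\sqrt{\eps}$, closing the induction.

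The only real subtlety — what I would flag as the ``obstacle'' — is resisting the temptation to apply Lemma~\ref{lem:aagan} directly to each step of the chain $\sigma_{i-1} \mapsto \sigma_i$. Doing so would require knowing that $\Pi_0^i$ has high acceptance probability on $\sigma_{i-1}$, which is not given in the hypothesis; the hypothesis only controls the acceptance probability on the original $\rho$. Routing the triangle inequality through $\mathcal{M}_k(\rho)$ sidesteps this issue cleanly, letting contractivity absorb the discrepancy between $\rho$ and $\sigma_{k-1}$ without requiring any further measurement hypothesis. The $\sqrt{\eps}$-versus-$\eps$ distinction (compared to the Gentle Measurement form) is inherited directly from Lemma~\ref{lem:aagan} and requires no extra care.
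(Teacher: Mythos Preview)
Your proof is correct and is essentially the same as the paper's: the paper writes out the full telescoping sum $\|\rho-\mathcal{M}_k(\rho)\|_{tr}+\|\mathcal{M}_k(\rho)-\mathcal{M}_k\mathcal{M}_{k-1}(\rho)\|_{tr}+\cdots$, bounds each term by $\|\rho-\mathcal{M}_i(\rho)\|_{tr}\le\sqrt{\eps}$ via contractivity plus Lemma~\ref{lem:aagan}, while your induction is just this telescope folded up one step at a time. Your flagged ``obstacle''---applying Lemma~\ref{lem:aagan} to $\rho$ rather than to $\sigma_{i-1}$ and letting contractivity handle the rest---is exactly the point of the paper's argument as well.
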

\begin{proof}
Using the triangle inequality, we have
\begin{align*}
\| \rho - (\mathcal{M}_k \circ \cdots \circ \mathcal{M}_1)(\rho) \|_{tr}
&\leq \| \rho - \mathcal{M}_k(\rho) \|_{tr} + \cdots + \| (\mathcal{M}_{k} \circ \cdots \circ \mathcal{M}_2)(\rho) - (\mathcal{M}_k \circ \cdots \circ \mathcal{M}_1)(\rho) \|_{tr} \\
&\leq \| \rho - \mathcal{M}_k(\rho) \|_{tr} + \cdots + \| \rho - \mathcal{M}_1(\rho) \|_{tr} \ ,
\end{align*}
where we used in the last inequality that applying a superoperator cannot increase the trace distance. To conclude, we observe that Lemma~\ref{lem:aagan} precisely states that $\| \rho - \mathcal{M}_i(\rho) \|_{tr} \leq \sqrt{\eps}$ for all $i$.
\end{proof}
We note that, in the special case of projective measurements, the bound of Lemma \ref{lem:qunionbound} can be improved from $O(k \sqrt{\eps})$ to $O(\sqrt{k \eps})$ \cite{sen}---and indeed, even better than that, to $O(k \eps)$ \cite{gao}.  Meanwhile, for arbitrary POVMs, Wilde \cite{wilde} showed that the bound can be improved to $O(\sqrt{k \eps})$, but whether it can be improved further to $O(k\eps)$ remains open.

%%%%%%%%%%%%%%%%%%%%%%%%%%%%%%%%%%%%%%%%
%% Lecture 2
%%%%%%%%%%%%%%%%%%%%%%%%%%%%%%%%%%%%%%%%

\lecture{Scott Aaronson}{Eric Allender and Michael Saks}{Quantum Circuits, Gates, and Complexity}

\section{The No-Cloning Theorem}

Let's now prove an extremely basic fact about quantum mechanics that will be invoked throughout the course.

\begin{theorem}
There is no procedure
$\ket{\psi} \rightarrow \ket{\psi}^{\otimes 2}.$
(Here $\ket{\psi}^{\otimes 2}$ is an abbreviation for $\ket{\psi}\otimes \ket{\psi}$.)
\end{theorem}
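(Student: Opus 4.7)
The plan is to argue by contradiction, assuming some cloning procedure exists and showing it would violate the inner-product preservation that unitarity forces on quantum dynamics. First I would reduce to the unitary case: any physical procedure is a superoperator, and by the Stinespring-style purification/dilation already introduced in the lecture (appending ancillas, unitary, trace-out), we may assume without loss of generality that the putative cloner is a unitary $U$ acting on the input register together with an ancillary register initialized in some fixed state $\ket{0}$, whose action is
\begin{equation*}
U\bigl(\ket{\psi}\otimes\ket{0}\bigr) \;=\; \ket{\psi}\otimes\ket{\psi}\otimes\ket{\text{garbage}_\psi}
\end{equation*}
for every input $\ket{\psi}$. (If the procedure is probabilistic, one could instead condition on a measurement outcome and apply the argument branch by branch.)

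Next I would pick two arbitrary pure states $\ket{\psi}$ and $\ket{\varphi}$ and exploit the fact, established earlier in the notes, that a unitary preserves inner products. On the one hand,
\begin{equation*}
\bigl(\bra{\psi}\otimes\bra{0}\bigr)\,U^{\dagger}U\,\bigl(\ket{\varphi}\otimes\ket{0}\bigr) \;=\; \braket{\psi}{\varphi}\cdot \braket{0}{0} \;=\; \braket{\psi}{\varphi}.
\end{equation*}
On the other hand, using the cloning property on both sides,
\begin{equation*}
\bigl(\bra{\psi}\otimes\bra{\psi}\otimes\bra{\text{garbage}_\psi}\bigr)\bigl(\ket{\varphi}\otimes\ket{\varphi}\otimes\ket{\text{garbage}_\varphi}\bigr) \;=\; \braket{\psi}{\varphi}^{2}\cdot\braket{\text{garbage}_\psi}{\text{garbage}_\varphi}.
\end{equation*}
Setting these equal gives $\braket{\psi}{\varphi} = \braket{\psi}{\varphi}^{2}\cdot c$ for some complex $c$ with $|c|\le 1$, so $|\braket{\psi}{\varphi}|\in\{0,1\}$; in particular any two inputs must be either identical (up to a phase) or orthogonal. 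Picking, say, $\ket{\psi}=\ket{0}$ and $\ket{\varphi}=\ket{+}$ contradicts this, completing the proof.

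I expect the only real subtlety, and hence the main ``obstacle,'' to be defending the reduction at the start: a reader might worry that a cleverer cloner could use measurements or an infinite-dimensional ancilla to sidestep unitarity. I would address this explicitly by pointing to the superoperator formalism already developed, observing that every superoperator $S(\rho)=\sum_{i}E_i\rho E_i^{\dagger}$ admits a unitary dilation, so the argument above rules out even the most general procedure that outputs $\ket{\psi}^{\otimes 2}$ deterministically (and, by a similar branch-wise argument, also rules out procedures that succeed with probability $1$ after post-selection on a measurement outcome).
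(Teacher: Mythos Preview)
Your argument is correct and follows the classic Wootters--Zurek inner-product route, extended to arbitrary superoperators by first dilating to a unitary via Stinespring and then observing that a pure output forces the garbage register to decouple. The paper proceeds differently. Its first proof is a one-line nonlinearity observation (the map $(\alpha\ket{0}+\beta\ket{1})\ket{0}\mapsto(\alpha\ket{0}+\beta\ket{1})^{\otimes 2}$ is quadratic in $\alpha,\beta$), which, like your core step, only directly rules out unitary cloners. For the general superoperator case, however, the paper does \emph{not} reduce to a unitary: it instead works directly with density matrices and invokes the contractivity of trace distance under superoperators (established in Lecture~1). Writing $C=\Tr(\rho\sigma)=|\braket{\psi}{\varphi}|^2$, one computes $\|\rho-\sigma\|_{tr}\propto\sqrt{1-C}$ while $\|\rho^{\otimes 2}-\sigma^{\otimes 2}\|_{tr}\propto\sqrt{1-C^2}$, and the latter strictly exceeds the former whenever $0<C<1$, contradicting contractivity. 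Your approach has the virtue of using only the most primitive fact about unitaries and delivering the clean dichotomy $|\braket{\psi}{\varphi}|\in\{0,1\}$; the paper's trace-distance argument buys you a proof that stays entirely at the superoperator level, with no Stinespring step and no need to argue that the environment disentangles.
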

\begin{proof}
We give two proofs of this theorem (called the {\bf No-Cloning Theorem}).

For the first proof, consider a transformation that maps
$(\alpha \ket{0} + \beta\ket{1})\ket{0}$
to
$(\alpha \ket{0} + \beta\ket{1}) (\alpha \ket{0} + \beta\ket{1})$.
But the latter expression is equal to
$\alpha^2\ket{00} + \alpha\beta \ket{01} + \alpha\beta\ket{10} + \beta^2 \ket{11}$,
and thus this is a manifestly {\em nonlinear} transformation, and thus cannot
be unitary. Technically, this assumes that the cloning procedure is unitary; one needs to say slightly
more to extend the theorem to superoperators.

Thus, for a more general proof, consider a superoperator $S$ such that
$S(\ket{\psi}) = \ket{\psi}^{\otimes 2}.$  Consider two state vectors
$\ket{\phi}$ and $\ket{\psi}$, and the associated density matrices
$\rho = \ketbra{\phi}{\phi}$ and
$\sigma = \ketbra{\psi}{\psi}$.
Let $\rho^{\otimes 2}$ and $\sigma^{\otimes 2}$ likewise be the density matrices
of the cloned states $S(\rho)$ and $S(\sigma)$, respectively.

Let $C = \Tr(\rho\cdot\sigma)$ and note that $0 \leq C \leq 1$.
The trace distance of
$\rho$ and $\sigma, ||\rho - \sigma||_{\rm tr}$, for density matrices
of this form, is equal to $2\sqrt{1-C}$.  Similarly,
$||\rho^{\otimes 2} - \sigma^{\otimes 2}||_{\rm tr}$ is equal to
$2\sqrt{1-C^2}$.
(For these equalities, consult equations (5) and (6) in
\cite{Maz15} and the references cited
therein.)
Note that $2\sqrt{1-C}< 2\sqrt{1-C^2}$.  Thus the trace distance
{\em increases} after applying the cloning superoperator $S$.
This contradicts the fact that applying a superoperator can never
increase the trace distance between two states.
\end{proof}

\section{Monogamy of Entanglement}

We now discuss another important property of quantum states. Consider a 3-qubit system with qubits held by Alice, Bob, and Charlie.
\emph{Monogamy of entanglement} is the principle that, if Alice's and Bob's qubits are fully entangled,
then there can be no entanglement---or for that matter, even classical correlation---between
Alice's and Charlie's qubits.  To illustrate, consider the pure state
$$\frac{1}{\sqrt{2}}(\ket{0}_A\ket{0}_B\ket{0}_C + \ket{1}_A\ket{1}_B\ket{1}_C).$$
The probability that Alice sees 0 is 1/2, and thus the density matrix
for the reduced system $BC$ is $\ket{00}_{BC}$ with probability 1/2,
and $\ket{11}_{BC}$ with probability 1/2.
Thus Bob and Charlie's density matrix is
$$\rho_{BC} = \begin{bmatrix}
1/2 & 0 & 0 & 0 \\
0 & 0 & 0 & 0 \\
0 & 0 & 0 & 0 \\
0 & 0 & 0 & 1/2
\end{bmatrix}.$$
This is an unentangled mixed state.

This should be contrasted with the Bell pair between
Alice and Bob, which is
$$\frac{1}{\sqrt{2}}(\ket{0}_A\ket{0}_B + \ket{1}_A\ket{1}_B),$$
or
$$\rho_{AB} = \begin{bmatrix}
1/2 & 0 & 0 & 1/2 \\
0 & 0 & 0 & 0 \\
0 & 0 & 0 & 0 \\
1/2 & 0 & 0 & 1/2
\end{bmatrix}.$$
This is an entangled pure state.

{\em Monogamy of Entanglement} says that, if $\rho_{AB}$ is the density matrix
of the maximally entangled state, then $\rho_{ABC} = \rho_{AB} \otimes \rho_C$.
I.e., if Alice and Bob are totally entangled, then there can be no entanglement
at all between Alice and Charlie---or even classical correlation.  There are quantitative versions of this theorem, as well as
generalizations to larger numbers of qubits.

Before moving on to quantum circuits, we should also say something about the different
ways to measure entanglement.  {\em Distillable Entanglement} is defined as the number of EPR pairs that one can obtain from a state, using only
local operations and classical communication.
{\em Entanglement of Formation} is defined as the number of EPR
pairs that are needed in order to create the state, again using local operations and classical communication.  These notions
coincide for pure states, but for mixed states, distillable entanglement can be strictly less than entanglement of formation.

\section{Quantum Circuits and Universal Gate Sets}

In this section, we introduce the conventions that we will follow, in
representing quantum circuits that act on $n$ qubits, implementing a
transformation
$$\sum_{x \in \{0,1\}^n} \alpha_x \ket{x} = \ket{\psi} \mapsto U(\ket{\psi})$$
where $U$ is a $2^n\times 2^n$ unitary matrix.  Below, we see an example
on $n=4$ qubits, where the gate $G$ is applied to the two low-order bits.
This circuit implements the $2^n\times 2^n$ matrix
$I\otimes I \cdots \otimes I \otimes G$, which has copies of the
$4$-by-$4$ unitary matrix $G$ along the main diagonal.
\[
\Qcircuit @C=1em @R=1.5em
{
    \lstick{\ket{x_1}} & \qw  & \qw \\
    \lstick{\ket{x_2}} & \qw   & \qw \\
    \lstick{\ket{x_3}}& \multigate{1}{G} & \qw \\
    \lstick{\ket{x_4}}& \ghost{G} & \qw
}
\]

Define $C(U)$ to be the minimum number of gates required in order to implement
$U$ (exactly) using {\em any} sequence of 1- and 2-qubit gates.  For almost all $U$, we have
$C(U) \geq 4^n$ by a dimension-counting argument, and this is approximately tight by an explicit construction.  However, our focus is
primarily what can be accomplished using a fixed collection of gates.  Here
are some important examples.

\begin{itemize}
\item The Hadamard Gate
$$H = \frac{1}{\sqrt{2}}
\left[
\begin{array}{cc}
1 & 1 \\
1 & -1
\end{array}
\right]
$$
mapping $\ket{0}$ to $\ket{+}$ and $\ket{1}$ to $\ket{-}$ (and vice versa).

\item The {\em Controlled Not} gate (CNOT), mapping $\ket{x,y}$ to $\ket{x,x\oplus y}$ (where $\oplus$ represents XOR).
\[
{\rm CNOT} =
\left[
\begin{array}{cccc}
1 & 0 & 0 & 0 \\
0 & 1 & 0 & 0 \\
0 & 0 & 0 & 1 \\
0 & 0 & 1 & 0
\end{array}
\right]
\]

As an example, the following circuit diagram represents a Hadamard gate
on the first qubit providing the ``control'' for a CNOT applied to the
second qubit.

\[
\Qcircuit @C=1em @R=1.5em
{
    \lstick{\ket{x_1}} & \gate{H} & \ctrl{1} & \qw \\
    \lstick{\ket{x_2}} & \qw   & \targ & \qw
}
\]

Composition of gates corresponds to multiplying the relevant unitary matrices.
Continuing with this example, applying this circuit to the input $\ket{x}=\ket{00}$
yields
\[
\left[
\begin{array}{cccc}
1 & 0 & 0 & 0 \\
0 & 1 & 0 & 0 \\
0 & 0 & 0 & 1 \\
0 & 0 & 1 & 0
\end{array}
\right]
\left[
\begin{array}{cccc}
1/\sqrt{2} & 0 & 1/\sqrt{2} & 0 \\
0 & 1/\sqrt{2} & 0 & -1/\sqrt{2} \\
1/\sqrt{2} & 0 & 1/\sqrt{2} & 0 \\
0 & 1/\sqrt{2} & 0 & -1/\sqrt{2}
\end{array}
\right]
\left[
\begin{array}{c}
1\\
0\\
0\\
0\\
\end{array}
\right]
=
\left[
\begin{array}{c}
1/\sqrt{2}\\
0\\
0\\
1/\sqrt{2}\\
\end{array}
\right]
\]
\item The {\em Toffoli} gate, mapping $\ket{x,y,z}$ to $\ket{x,y,z\oplus x y}$.

\item The {\em Phase} gate
$$P =
\left[
\begin{array}{cc}
1 & 0 \\
0 & i
\end{array}
\right]
$$
\end{itemize}

Let $\mathcal{G}$ be a finite set of gates.
$\mathcal{G}$ is {\em universal} if we can use gates from $\mathcal{G}$ to
approximate {\em any} unitary $U$ on any number of qubits to any
desired precision.  To give some examples:

\begin{itemize}
\item \{Toffoli,$H$,Phase\} is universal.
\item \{CNOT, $G$\} is universal for almost any 1-qubit gate $G$---i.e., with
probability 1 over the set of 2-by-2 unitary matrices $G$.
\item In particular, \{CNOT, $G$\} is universal, for
$$G =
\left[
\begin{array}{cc}
3/5 & 4i/5 \\
4/5 & -3i/5
\end{array}
\right]
$$
\item \{Toffoli,$H$\} is {\em not} universal, since these are both matrices
over $\R$.  However, this set is nonetheless sometimes called
``universal for quantum computing'' since it densely generates the real
orthogonal group.  This is reasonable, since one can ``simulate''
the state
$\sum_x \alpha_x \ket{x}$ by
$\sum_x \mathrm{Re}(\alpha_x)\ket{0x} + \mathrm{Im}(\alpha_x)\ket{1x}$.

\item \{CNOT,$H$,Phase\} is {\em not} universal, since it turns out to generate only a
discrete subgroup.  Furthermore, the celebrated \emph{Gottesman-Knill Theorem} states
that circuits over this set of gates can be simulated in classical polynomial time
(and in fact they can be simulated in the complexity class $\mathsf{\oplus L}$ \cite{ag}).
\end{itemize}

See Lecture 10 for more about universality of gate sets.

\subsection{The Solovay-Kitaev Theorem}

We now state one of the most important results about quantum gates.

\begin{theorem}[Solovay-Kitaev (see \cite{solovaykitaev})]
Given any physically universal gate set $\mathcal{G}$ and a target unitary $U$, the
number of $\mathcal{G}$ gates that are needed in order to approximate $U$ to
accuracy $\epsilon$ (entrywise) is $\log^{O(1)}(1/\epsilon)$, omitting dependence on the
number of qubits $n$.
\end{theorem}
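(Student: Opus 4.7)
The plan is to prove this by the now-standard iterative refinement construction based on group commutators. I first reduce to approximating $U \in SU(2^n)$, since overall phases are irrelevant, and absorb all dimension dependence into the implicit constants. The core idea is that once one has \emph{some} crude approximation of $U$, the remaining error is a unitary close to the identity, and such unitaries can be written as commutators of unitaries even closer to the identity. Recursively approximating those commutator factors then yields a doubly-exponential shrinkage of the error against only an exponential blow-up of the circuit length.

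First I would establish the base case. Because $\mathcal{G}$ densely generates $SU(2^n)$, by a compactness argument there exists a fixed $\epsilon_0 > 0$ and a finite length $L_0$ (both depending on $\mathcal{G}$ and $n$) such that every $U \in SU(2^n)$ is within operator-norm distance $\epsilon_0$ of some word of length at most $L_0$ over $\mathcal{G}$. We choose $\epsilon_0$ small enough so that the iteration below converges. Next comes the key lemma: every unitary $V$ with $\|V - I\| \leq \delta$ (for $\delta \leq \epsilon_0$) can be written as a group commutator $V = ABA^{-1}B^{-1}$ with $\|A - I\|, \|B - I\| \leq c \sqrt{\delta}$ for an absolute constant $c$. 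To prove this one works locally on the Lie algebra $\mathfrak{su}(2^n)$: writing $A = e^{a}$, $B = e^{b}$ with $a, b$ skew-Hermitian of small norm, the Baker-Campbell-Hausdorff expansion gives $ABA^{-1}B^{-1} = e^{[a,b] + O(\|a\|\|b\|(\|a\|+\|b\|))}$, so the commutator map sends $(a,b)$ of norm $\sqrt{\delta}$ to a neighborhood of the identity of radius $\delta$; an inverse-function-theorem argument on the orthogonal decomposition of $\mathfrak{su}(2^n)$ into commuting and non-commuting parts then shows surjectivity with the claimed $\sqrt{\delta}$ scaling.

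With these ingredients, the recursion runs as follows. Let $N(\delta)$ be the number of $\mathcal{G}$-gates needed to approximate an arbitrary $U$ to accuracy $\delta$. Given $U$, first produce a level-$k$ approximation $U_k$ with $\|U_k - U\| \leq \delta_k$, then compute $V = U U_k^{\dagger}$, which is $\delta_k$-close to $I$. Apply the commutator lemma to write $V = ABA^{-1}B^{-1}$ with $A, B$ both $O(\sqrt{\delta_k})$-close to $I$, recursively approximate $A$ and $B$ at accuracy $\delta_k^{3/2}$ (say) by unitaries $\tilde A, \tilde B$, and output $U_{k+1} = \tilde A \tilde B \tilde A^{\dagger} \tilde B^{\dagger} U_k$. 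A short calculation (using that the commutator of unitaries close to $I$ is Lipschitz in each factor, with Lipschitz constant equal to the distance from $I$ of the \emph{other} factor) gives $\|U_{k+1} - U\| \leq C \delta_k^{3/2}$, so the error is super-exponentially contracted. The gate count satisfies $N(\delta_{k+1}) \leq 5 N(\delta_k) + O(1)$, since building the commutator requires four sub-approximations plus the previous one. Unrolling the recursion, reaching accuracy $\epsilon$ requires $k = O(\log \log (1/\epsilon))$ levels, giving a total gate count of $5^{k} \cdot L_0 = O\bigl(\log^{\log_{3/2} 5}(1/\epsilon)\bigr) = \log^{O(1)}(1/\epsilon)$.

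The main obstacle is the commutator decomposition lemma in the second step: squeezing out the optimal $\sqrt{\delta}$ scaling (rather than a weaker $\delta^{1/3}$ or similar) requires a careful second-order BCH analysis together with a non-trivial surjectivity argument on $\mathfrak{su}(2^n)$, since the map $(a,b) \mapsto [a,b]$ is not a submersion at the origin and one must perturb off the kernel. Once this is in place, the error and gate-count recurrences, and the base-case net from universality, are straightforward.
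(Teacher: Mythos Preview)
The paper does not actually prove this theorem: it is stated with a citation to the literature (``see \cite{solovaykitaev}''), and restated once more in Lecture~10, again without proof. So there is no ``paper's own proof'' to compare against.

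That said, your outline is the standard Solovay--Kitaev argument (essentially the Dawson--Nielsen presentation of Kitaev's idea): a coarse net from compactness, the group-commutator decomposition lemma giving $\sqrt{\delta}$ factors for a $\delta$-close-to-identity target, and the recursion that trades a factor-of-five blowup in length for a $3/2$-power contraction in error. The recurrences and the resulting exponent $\log_{3/2} 5$ are correct. The only place where real work hides is exactly where you flag it---the surjectivity of the commutator map with the right scaling---and your description of how to handle it via BCH and a local inverse-function-type argument is the accepted route. Nothing is missing or wrong; this is simply a theorem the paper quotes rather than proves.
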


It is known that if $\mathcal{G}$ has certain special properties, then the number of
gates drops down to $O(\log 1/\epsilon)$, which is optimal (see
\cite{HRC02}).

\section{Relations to Complexity Classes}

We assume that the reader is familiar with the complexity class
$\mathsf{P}$ (deterministic polynomial time).  The analogous probabilistic
class $\mathsf{BPP}$ consists of all languages $L\subseteq \{0,1\}^*$ for which there's a
Turing machine running in polynomial time, which takes as input a pair
$(x,y)$, where the length of $y$ is bounded by a polynomial in the length
of $x$, such that
\begin{itemize}
\item if $x \in L$ then $\Pr_y[M(x,y)=1] > 2/3$, and
\item if $x \not\in L$ then $\Pr_y[M(x,y)=1] < 1/3$
\end{itemize}
(So in particular, for every $x$, the probability that $M$ accepts $x$ is bounded away from 1/2.)
Languages in $\mathsf{BPP}$ are considered easy to compute; indeed, it's
widely conjectured today that $\mathsf{P} = \mathsf{BPP}$.  (See \cite{IW97}.)

By analogy, Bernstein and Vazirani \cite{bv} defined $\mathsf{BQP}$, or Bounded-Error Quantum Polynomial-Time, as the class of all languages efficiently decidable by quantum computers.

\begin{definition}
$\mathsf{BQP}$ is the class of languages $L \subseteq \{0,1\}^*$ for which there
exists a $\mathsf{P}$-uniform family of polynomial-size quantum circuits $\{C_n\}_{n \geq 1}$
acting on $p(n)$ qubits (for some polynomial $p$) over some finite universal
gate set $\mathcal{G}$ such that, for all $n$ and for all $x \in \{0,1\}^n,$
\begin{eqnarray*}
x \in A \Longrightarrow \Pr[C_n\ \mbox{accepts}\  \ket{x}\otimes  \ket{0}^{\otimes p(n)-n}] \geq 2/3.\\
x \not\in A \Longrightarrow \Pr[C_n\ \mbox{accepts}\  \ket{x}\otimes  \ket{0}^{\otimes p(n)-n}] \leq 1/3.
\end{eqnarray*}
\end{definition}

(The original definition given by Bernstein and Vazirani was in terms of
``quantum Turing machines.''  But this definition, in terms of quantum circuits, is easier to work with and known to be equivalent.)

As a first observation, the Toffoli gate can be used to implement universal classical computation (for example, by simulating a NAND gate).
Hence, we can do error amplification like usual, by repeating an algorithm several times and then outputting the majority vote.
 For this reason, the constants $2/3$ and $1/3$ in the definition of $\mathsf{BQP}$ are arbitrary, and can be replaced
by other constants (or even by, say, $1-2^{-n}$ and $2^{-n}$), exactly like for the classical probabilistic complexity class $\mathsf{BPP}$.

\begin{proposition}
$\mathsf{P} \subseteq \mathsf{BQP}$
\end{proposition}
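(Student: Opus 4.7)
The plan is to show that any polynomial-time deterministic computation can be carried out by a polynomial-size quantum circuit, using the fact that the Toffoli gate (which is in our universal gate set) implements reversible classical computation.

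First I would start with a language $L \in \mathsf{P}$, witnessed by a deterministic Turing machine $M$ running in time $p(n)$ for some polynomial $p$. By the standard simulation of Turing machines by circuits (Pippenger--Fischer / Cook--Levin style), there is a $\mathsf{P}$-uniform family $\{D_n\}$ of polynomial-size classical Boolean circuits over $\{\AND,\OR,\NOT\}$ (or just $\{\NAND\}$) such that $D_n(x) = 1$ iff $x \in L \cap \{0,1\}^n$.

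Next I would convert $\{D_n\}$ into a \emph{reversible} classical circuit using Toffoli gates, incurring only polynomial overhead. Concretely, a $\NAND$ gate on inputs $a,b$ can be simulated by a Toffoli gate applied to $(a,b,1)$, which produces $(a,b,\NAND(a,b))$ in its third output wire (and NOT gates can likewise be simulated by a Toffoli with two fixed $1$'s). By introducing one fresh ancilla qubit initialized to $\ket{0}$ (or $\ket{1}$) for each internal gate, I obtain a reversible circuit $R_n$ on $\poly(n)$ wires that maps $\ket{x}\ket{0}^{\otimes q(n)} \mapsto \ket{x}\ket{g(x)}\ket{D_n(x)}$, where $g(x)$ records the garbage intermediate bits. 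Since Toffoli is in the universal gate set $\{\Toffoli, H, P\}$ used for $\mathsf{BQP}$ (and CNOT and NOT are easily built from it), this reversible circuit is directly a quantum circuit $C_n$ over $\mathcal{G}$.

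Finally, measuring the designated output qubit of $C_n\ket{x}\ket{0}^{\otimes p(n)-n}$ in the standard basis yields the bit $D_n(x)$ with probability $1$, which trivially satisfies the $\geq 2/3$ (resp.\ $\leq 1/3$) conditions in the definition of $\mathsf{BQP}$. Uniformity of $\{C_n\}$ follows from the uniformity of $\{D_n\}$, since the Toffoli-ization step is a simple local rewriting computable in polynomial time.

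There is no real obstacle here; the only thing to be a bit careful about is the bookkeeping of ancillas and the fact that the garbage qubits are left in computational-basis states (so they don't cause interference with the output qubit that we measure). If we wanted a ``clean'' simulation with ancillas returned to $\ket{0}$, we could apply Bennett's uncomputation trick, but this is not needed for membership in $\mathsf{BQP}$ since the definition allows an arbitrary final state on the non-output qubits.
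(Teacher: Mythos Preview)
Your proposal is correct and follows essentially the same approach as the paper: simulate a polynomial-time classical computation by a polynomial-size circuit of Toffoli gates (which are in the universal set), then measure the output bit. The paper's one-line proof even states the clean form $\ket{x,0,0^m}\mapsto\ket{x,f(x),0^m}$ (implicitly invoking Bennett's uncomputation), whereas you correctly observe that leaving the garbage is already enough for $\mathsf{BQP}$ membership.
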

\begin{proof}
Again, we use \{Toffoli\} as our gate set.  For every polynomial-time
computable function $f$, there is a circuit using only Toffoli gates
implementing the transformation
$\ket{x,0,0^m} \mapsto \ket{x,f(x),0^m}$, for some $m$ bounded by a
polynomial in the length of $x$.\end{proof}

\begin{proposition}
$\mathsf{BPP} \subseteq \mathsf{BQP}$.
\end{proposition}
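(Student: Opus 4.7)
The plan is to reduce $\mathsf{BPP} \subseteq \mathsf{BQP}$ to the previous proposition ($\mathsf{P} \subseteq \mathsf{BQP}$) by showing that a quantum computer can generate its own uniformly random bits ``for free,'' and then simulate the deterministic verifier of a $\mathsf{BPP}$ machine using Toffoli gates as in the previous proof.

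More concretely, fix $L \in \mathsf{BPP}$ with associated polynomial-time Turing machine $M$ and polynomial $q(n)$ bounding the length of the random string $y$. First I would invoke the previous proposition to get, for each $n$, a polynomial-size Toffoli circuit that implements the reversible transformation
\[
\ket{x}\ket{y}\ket{0^m}\;\mapsto\;\ket{x}\ket{y}\ket{0^{m-1}}\ket{M(x,y)},
\]
where $M(x,y) \in \{0,1\}$ is the output of the $\mathsf{BPP}$ machine on input $x$ and random tape $y$. Next I would prepend a layer of Hadamard gates acting on $q(n)$ ancilla qubits initialized to $\ket{0}$, producing the uniform superposition
\[
\frac{1}{\sqrt{2^{q(n)}}}\sum_{y\in\{0,1\}^{q(n)}}\ket{y}
\]
on the ``random tape'' register. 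Applying the Toffoli circuit above and then measuring the output qubit in the computational basis yields outcome $1$ with probability exactly $\Pr_y[M(x,y)=1]$, since the squared amplitudes on the output register coincide with the uniform distribution over $y$. Hence the completeness/soundness bounds $2/3$ and $1/3$ are inherited directly from the $\mathsf{BPP}$ machine, verifying the definition of $\mathsf{BQP}$.

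The only substantive observation to check is that the gate set is indeed universal in the $\mathsf{BQP}$ sense: Hadamard plus Toffoli suffices to carry out this construction (and this combination is discussed explicitly in the previous section, where it is noted that \{Toffoli,$H$\} generates a dense subgroup of the real orthogonal group, which is more than enough here since both the uniform superposition and the Toffoli simulation have real amplitudes). Uniformity of the circuit family $\{C_n\}$ follows from the uniformity of the Toffoli implementation of $M$, since we only add a fixed Hadamard layer on top. There is no real obstacle; the only conceptual point to flag is that we do not need to measure the random register before running the computation --- we simply measure the output qubit at the end, and the linearity of quantum mechanics combined with the computational-basis structure of the Toffoli circuit guarantees that the measurement statistics match those of sampling $y$ classically and then running $M(x,y)$.
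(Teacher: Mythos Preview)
Your proposal is correct and takes essentially the same approach as the paper: generate random bits via Hadamard gates on ancilla qubits, then simulate the deterministic $\mathsf{BPP}$ verifier $M(x,y)$ with a Toffoli circuit, exactly as in the previous proposition. The paper's proof is a one-sentence version of what you wrote.
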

\begin{proof}
On input $x$, using Hadamard gates, a quantum circuit can generate a sequence
of
random bits $y$, which can then be fed into a circuit that simulates the
(deterministic) operation of a $\mathsf{BPP}$ machine $M$ on input $(x,y)$.
\end{proof}

\subsection{Classical Simulation of Quantum Computation}

Recall that $\mathsf{GapP}$ is the class of functions $f:\{0,1\}^* \longrightarrow \mathbb{Z}$
for which there is a non-deterministic Turing machine $M$ such that $f(x)=A_M(x)-R_M(x)$,
where $A_M(x)$ (resp. $R_M(x)$) is the number of accepting (resp. rejecting) computations of $M$ on
input $x$.  A somewhat more convenient characterization of $\mathsf{GapP}$
is:  $\mathsf{GapP} = \{f-g : f,g \in \mathsf{\#P}$\}.  $\mathsf{GapP}$ has
several useful closure properties.  For instance, if $g(x,y,i)$
is a $\mathsf{GapP}$ function
and $F(x,y,i,a_1,\ldots,a_{q(n)})$ can be computed by using $(x,y,i)$ to
build (in polynomial time) an unbounded fan-in arithmetic formula of depth
$O(1)$ and then applying the formula to the integers
$(a_1,\ldots,a_{q(n)})$ (for some polynomial $q$), the function
\[
h(x) = \sum_{y \in \{0,1\}^{q(n)}} F(x,y,g(x,y,1),g(x,y,2),\ldots , g(x,y,q(n)))
\]
is in $\mathsf{GapP}$.  (In the application below, the function
$g(x,y,1),g(x,y,2),\ldots, g(x,y,q(n))$ will represent the real and
imaginary parts of certain complex numbers, and $F(\ldots)$ will compute the
real (or imaginary) part of their product.  $\mathsf{GapP}$ and its closure
properties are discussed in \cite{FFK94}.)
The class $\mathsf{PP}$ is the class of languages $L$ that are definable
by $\{x:f(x) >0\}$ for some function $f \in \mathsf{GapP}$.

\begin{theorem}
\label{bqppp}
$\mathsf{BQP} \subseteq \mathsf{PP}$
\end{theorem}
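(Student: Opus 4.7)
The plan is to use the Feynman sum-over-paths representation of the acceptance probability and then show that, after clearing denominators, this probability can be written as a $\mathsf{GapP}$ function, whose positivity decides the language.

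First I would fix a convenient universal gate set whose entries have a simple algebraic form, e.g.\ $\{H, \text{Toffoli}, \text{Phase}\}$; after rescaling, every entry of every gate is (up to a fixed factor $1/\sqrt{2}$ absorbed from the $H$ gates) an element of $\{0, \pm 1, \pm i\}$, so the real and imaginary parts are polynomial-time computable integers when we pull out a common normalization. Given input $x$ and a $\mathsf{P}$-uniform circuit $C_n$ on $p = p(n)$ qubits with $T = \poly(n)$ gates $U_1, \dots, U_T$, the amplitude on final basis state $z$ is
\begin{align*}
\alpha_z(x) \;=\; \sum_{y_1,\dots,y_{T-1}\in\{0,1\}^p} \prod_{t=1}^{T} \bra{y_t} U_t \ket{y_{t-1}},
\end{align*}
where $y_0 = x\,0^{p-n}$ and $y_T = z$. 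Each factor $\bra{y_t}U_t\ket{y_{t-1}}$ depends only on the two or three bits on which $U_t$ acts, so it is polynomial-time computable given $(y_{t-1},y_t)$.

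Next I would invoke the closure property of $\mathsf{GapP}$ stated in the excerpt: summing over an exponentially large set $y$ of a constant-depth arithmetic formula applied to polynomially many $\mathsf{GapP}$ values yields a $\mathsf{GapP}$ function. Writing each gate entry as $a_t + i\,b_t$ (with the global $1/\sqrt{2}$ factors pulled out as a power-of-two denominator $2^{-T/2}$), the product $\prod_t (a_t + i b_t)$ expands into a constant-depth arithmetic formula of depth $O(1)$ (fixed number of multiplications times a sum over a choice of real/imag at each level) whose output real and imaginary parts are integers. Hence $\mathrm{Re}(\alpha_z(x))$ and $\mathrm{Im}(\alpha_z(x))$, after multiplication by $2^{T/2}$, are $\mathsf{GapP}$ functions of $(x,z)$. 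Squaring and summing once more, the quantity
\begin{align*}
g(x) \;=\; 2^{T}\!\!\sum_{z \text{ accepting}} \bigl(\mathrm{Re}(\alpha_z(x))^2 + \mathrm{Im}(\alpha_z(x))^2\bigr)
\end{align*}
is a $\mathsf{GapP}$ function, and it equals $2^T \cdot \Pr[C_n \text{ accepts }\ket{x}\ket{0^{p-n}}]$.

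Finally I would convert $\mathsf{BQP}$ membership into a $\mathsf{PP}$ predicate. Since $\Pr[\text{accept}] \geq 2/3$ for $x \in L$ and $\leq 1/3$ for $x \notin L$, the function $h(x) = 2\,g(x) - 2^{T}$ is a $\mathsf{GapP}$ function (closure under subtracting a constant from $\mathsf{\#P}$) satisfying $h(x) > 0 \iff x \in L$, placing $L$ in $\mathsf{PP}$. The main obstacle is really just the bookkeeping in the second step: one must verify that the product over $T$ factors of $(a_t + i b_t)$, together with the sum over $(y_1,\dots,y_{T-1})$ and over accepting $z$, fits the specific form of the $\mathsf{GapP}$ closure lemma (constant-depth formula of polynomially many $\mathsf{GapP}$ inputs per outer summand). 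This is handled by iterating the closure property a constant number of times, or equivalently by folding the computation of $\mathrm{Re}$ and $\mathrm{Im}$ of the product through a $\mathsf{\#P}$-style nondeterministic branching over which conjugate of each factor to take.
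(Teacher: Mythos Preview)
Your proposal is correct and follows essentially the same Feynman path-sum approach as the paper. The only cosmetic difference is bookkeeping: the paper writes $p(x)=\langle\sigma|U_1^{-1}\cdots U_t^{-1}PU_t\cdots U_1|\sigma\rangle$ and expands this as a single sum over forward--backward paths of length $2t$ (so the squaring $|\alpha_z|^2=\alpha_z\overline{\alpha_z}$ is absorbed into one longer path), whereas you compute $\mathrm{Re}(\alpha_z),\mathrm{Im}(\alpha_z)\in\mathsf{GapP}$ first and then invoke closure under squaring and summation; both routes land on the same $\mathsf{GapP}$ expression for the rescaled acceptance probability and the same sign test for $\mathsf{PP}$.
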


\begin{proof}
To simulate a quantum circuit on a given input $x$, it suffices to estimate the probability $p(x)$ that
the final measurement of the output bit gives value 1.
The state of the computation of a quantum circuit after each gate
is described by a vector $\ket{z}$ in the Hilbert space
$\mathbb{C}^{2^{p(n)}}$ where $p(n)$ is the number of qubits that the circuit
acts on (including the input qubits, output qubit, and ancilla qubits).
As usual we write such a state vector in the form $\ket{\alpha} = \sum_{z} \alpha_z|z\rangle$
where $z$ ranges over $\{0,1\}^{p(n)}$.
If $\ket{\phi}=\sum_z \phi_z \ket{z}$ is the state at the conclusion of the computation then
$p(x)$ is the sum of $|\phi_{z}|^2$ over all
$z$ whose output bit is set to 1.

Each gate of the circuit acts on the quantum state via a unitary matrix.
Thus $\ket{\phi} = U_tU_{t-1}\ldots U_1 \ket{\sigma}$ where $\ket{\sigma}=\ket{\sigma(x)}$ is the
initial state of the qubits and $U_j$ is the unitary matrix corresponding to the $j$th gate.  The entries of the matrices $U_j$ are complex, but we will
represent them by pairs of dyadic rational numbers (where these dyadic
rationals are close approximations to the real and imaginary parts of the
actual entries, and we assume without loss of generality that the denominators
of all entries are the same power of two).

If $P$ is the diagonal matrix such that $P_{z,z}$ is equal to the value of the output bit corresponding to basis
state $z$ then the acceptance probability $p(x)$ satisfies:

\[
p(x) = \langle \sigma |U_1^{-1} \cdots U_t^{-1}PU_t \cdots U_1\ket{\sigma},
\]

Using the standard connection between matrix multiplication and counting
the number of paths in a graph, we can observe that $p(x)$ is the sum
(over all paths $p$ from the start vertex to the terminal vertex in the graph)
of the product of the edge weights encountered along path $p$ (where the
edge weights are given by the product of the edge weights.
That is, $p(x) = \sum_p \prod_i (\mbox{weight of the $i$-th edge along path $p$})$.

Each of the matrices appearing in this product is exponentially large, but succinctly described: for any
basis states $y,z$ the $y,z$ entry of any of the matrices can be computed in
time polynomial in $n$.
Furthermore, by the closure properties of $\mathsf{GapP}$ discussed above and
the fact that the entries in each matrix have the same denominator, it follows
that $p(x) = f(x)/2^{q(n)}$ for some polynomial
$q$ and $\mathsf{GapP}$ function $f$.  Thus $p(x) > 1/2$ if and only if
$f(x) - 2^{q(n)+1} > 0$.
Since $f(x) - 2^{q(n)+1}$ is a $\mathsf{GapP}$ function, this completes
the proof.
\end{proof}

It's known that $\mathsf{PP} \subseteq \mathsf{P^{\#P}} \subseteq \mathsf{PSPACE} \subseteq \mathsf{EXP}$,
where $\mathsf{\#P}$ is the class of all combinatorial counting problems, $\mathsf{PSPACE}$ is the class
of problems solvable by a classical computer using polynomial memory, and $\mathsf{EXP}$ is the class of
problems solvable by a classical computer using $2^{n^{O(1)}}$ time. \ So to summarize, we have

$$ \mathsf{P} \subseteq \mathsf{BPP} \subseteq \mathsf{BQP} \subseteq \mathsf{PP} \subseteq \mathsf{P^{\#P}} \subseteq \mathsf{PSPACE} \subseteq \mathsf{EXP}.$$

Or in words: ``quantum computers are at least as powerful as classical computers, and at most exponentially more powerful.''  This means, in particular, that
quantum computers can't solve uncomputable problems like the halting problem; they can ``merely'' solve computable problems up to exponentially faster. It also
means that, in our present state of knowledge, we have no hope of proving unconditionally that $\mathsf{BPP} \ne \mathsf{BQP}$ (i.e., that quantum computers are strictly more powerful than classical probabilistic ones), since any such proof would imply $\mathsf{P} \ne \mathsf{PSPACE}$, which would be essentially as big a deal as a proof of $\mathsf{P} \ne \mathsf{NP}$. The best we can do is get \emph{evidence} for $\mathsf{BPP} \ne \mathsf{BQP}$, e.g.\ by finding examples of $\mathsf{BQP}$ problems that are considered unlikely to be in $\mathsf{BPP}$.

\begin{remark}
It can be shown that one can choose the gates of a quantum circuit so that that resulting
unitaries have all of their entries in $\{0,\pm 1/2,\pm 1 \}$.  In this case the $\mathsf{GapP}$ computation
can be carried out without introducing any approximation errors.
This implies that
the above simulation of $\mathsf{BQP}$ extends to the class $\mathsf{PQP}$ which
is the quantum analog of $\mathsf{PP}$, and shows that $\mathsf{PQP}=\mathsf{PP}$.
\end{remark}

\begin{remark}
It's also known that $\mathsf{BQPSPACE}=\mathsf{PSPACE}$: that is, quantum computers can provide at most a polynomial savings in space complexity.  This is because there's a generalization of Savitch's theorem, by Papadimitriou from 1985 \cite{Pap85}, which gives $\mathsf{PPSPACE}=\mathsf{PSPACE}$.  That immediately implies the desired equivalence. More precisely, we get that any quantum algorithm using space $s$ can be simulated classically using space $s^2$.
Ta-Shma \cite{TS13} provided a candidate example where using a quantum algorithm might indeed provide this quadratic space savings.
\end{remark}

\subsection{Quantum Oracle Access}

As with classical computation, it is illuminating to consider quantum computation relative to an oracle.
As usual, an oracle is an arbitrary language $A \subseteq \{0,1\}^*$ and for any complexity class
$C$, we write $C^A$ for the relativized class in which $C$ has the power to make oracle
calls to $A$ at unit cost. (More generally, we may also consider oracles that are functions
$A:\{0,1\}^* \longrightarrow \{0,1\}^*$.)

In the setting of quantum circuits, an oracle $A$ is accessed using ``oracle gates.''
There are two different definitions of oracle gates, which turn out to be equivalent to each other.
In the first definition, each basis state of the form $\ket{x,z,w}$ gets mapped to $\ket{x,z\oplus A(x),w}$.
Here $x$ is an input, $A(x)$ is the oracle's response to that input, $z$ is an ``answer qubit'' where $A(x)$ gets
written, $\oplus$ denotes XOR (the use of which makes the oracle gate unitary), and $w$ is a ``workspace register,''
consisting of qubits that don't participate in this particular query.
In the second definition, each basis state of the form $\ket{x,z,w}$ gets mapped to $(-1)^{z\cdot A(x)} \ket{x,z,w}$.
In other words, the amplitude gets multiplied by $A(x)$, if and only if $z=1$.
To see the equivalence of the two types of oracle gate, we simply need to observe that either one becomes the other
if we conjugate the $\ket{z}$ register by a Hadamard gate before and after the query.

As usual, oracles provide a lens for exploring the relationships between different complexity classes.
Given two classes $C_1$ and $C_2$ such that $C_1 \subseteq C_2$, one can ask
(a) Does the containment hold relative to every oracle? (b) Are there
oracles with respect to which equality holds? and (c) Are there oracles with
respect to which the containment  is strict?

It is routine to show that the containments $\mathsf{BPP} \subseteq \mathsf{BQP} \subseteq \mathsf{PSPACE}$
hold with respect to any oracle.  Since there are oracles $A$ such that $\mathsf{P}^A=\mathsf{PSPACE}^A$
(e.g. any $\mathsf{PSPACE}$-complete oracle), it also follows that $\mathsf{BPP}^A=\mathsf{BQP}^A$ for those oracles.
This leaves the third question: are there oracles with respect to which $\mathsf{BPP}$ is
strictly contained in $\mathsf{BQP}$?

The answer is yes.  Bernstein and Vazirani \cite{bv} constructed an oracle $A$ and a language $L \in \mathsf{BQP}^A$,
such that any classical randomized algorithm to decide $L$ must make $\Omega(n^{\log n})$ queries to $A$.  This was later
improved by Simon \cite{simon}, who found an exponential separation between quantum and classical query complexities.

In the remainder of this lecture, we'll state Simon's problem and see why a quantum algorithm achieves an exponential
advantage for it. Then, in Lecture 3, we'll put Simon's problem into the more general framework of the {\em Hidden Subgroup Problem}
(HSP).

In Simon's problem, we're given oracle access to a function $f:\{0,1\}^n\longrightarrow \{0,1\}^n$, mapping $n$-bit
inputs to $n$-bit outputs. We're promised that either

\begin{enumerate}
\item[(i)] $f$ is 1-to-1, or
\item[(ii)] there exists a ``secret string'' $s\ne 0^n$, such that for all $x\ne y$, we have $f(x)=f(y)$ if and only if $x\oplus s = y$
(where $\oplus$ denotes bitwise XOR).
\end{enumerate}

\noindent The problem is to decide which, by making as few queries to $f$ as possible.

Using a classical randomized algorithm, it's not hard to see that $\Theta(2^{n/2})$ queries are necessary and sufficient to solve this problem
with high probability.

To see the sufficiency: we just keep querying $f(x)$ for randomly chosen inputs $x$, until we happen to have found a pair $x\ne y$ such that $f(x)=f(y)$, at which
point we know that $f$ is 2-to-1 and can stop. If we {\em don't} find such a collision pair after $2^{n/2}$ queries, then we guess that $f$ is 1-to-1. This works
because of the famous {\em Birthday Paradox}: if $f$ is indeed 2-to-1, then we have constant probability of finding a collision pair after $\sqrt{2^n} = 2^{n/2}$
queries, for exactly the same reason why, in a room with $\sqrt{365}$ people, there's a decent chance that at least two of them share a birthday.

To see the necessity: suppose that with $1/2$ probability, $f$ is a random 1-to-1 function
from $\{0,1\}^n$ to itself, and with $1/2$ probability, it's a random function satisfying condition (ii)
above (with $s\ne 0^n$ chosen uniformly at random).
It's not hard to show that for any choice of queries, the decision rule that minimizes the
overall probability of error is to guess we're in case (i) if and only if the answers to all queries are
different. But if the algorithm makes $q$ queries to $f$, then in case (ii), the probability of answering correctly
is at most the expected number of collision pairs (i.e.,
$x \neq y$ such that $f(x)=f(y)$).  This is $q(q-1)2^{n-1}$, which is $o(1)$ if $q=o(2^{n/2})$.

On the other hand, there's a quantum algorithm for Simon's problem that uses only $O(n)$ queries to $f$, and $n^{O(1)}$ computational steps in total.
That algorithm is as follows:

\begin{enumerate}
\item[(1)] Prepare an equal superposition over all $2^n$ possible inputs to $f$:
$$\frac{1}{\sqrt{2^n}}\sum_{x \in \{0,1\}^n} \ket{x} \ket{0^n}.$$
\item[(2)] Query $f$ in superposition to get
$$\frac{1}{\sqrt{2^n}}\sum_{x \in \{0,1\}^n} \ket{x} \ket{f(x)}.$$
We don't actually care about the value of $f(x)$; all we care about is what the computation of $f(x)$ does to the original $\ket{x}$ register.
\item[(3)] Measure the $\ket{f(x)}$ register (this step is not actually necessary, but is useful for pedagogical purposes).
If $f$ is 1-to-1, then what's left in the $\ket{x}$ register will be a single computational basis state, $\ket{x}$. If $f$ is 2-to-1,
then what's left will be an equal superposition over two inputs $x$ and $y$ with $f(x)=f(y)$ and hence $y=x\oplus s$:
$$\frac{\ket{x}+\ket{y}}{\sqrt{2}}.$$
The question now is: in case (ii), what measurement of the above superposition will tell us anything useful about $s$? Clearly measuring
in the computational basis is pointless, since such a measurement will just return $x$ or $y$ with equal probability, revealing no information about $s$.
If we could measure the state twice, and see $x$ {\em and} $y$, then we could compute $s=x\oplus y$. But alas, we can't measure the state twice. We could repeat the entire computation from the beginning, but if we did that, then with overwhelming probability we'd get a superposition over a different $(x,y)$ pair, so
just like in the classical case, it would take exponentially many repetitions before we'd learned anything about $s$.
\item[(4)] So here's what we do instead. We apply Hadamard gates to each of the $n$ qubits in the $\ket{x}$ register, and only {\em then} measure that register in the computational basis. What does Hadamarding each qubit do? Well, it maps each basis state $\ket{x}$ to
    $$ \frac{1}{\sqrt{2^n}} \sum_{z\in \{0,1\}^n} (-1)^{x\cdot z}\ket{z}. $$
    Therefore it maps $\frac{1}{\sqrt{2}}(\ket{x}+\ket{y})$ to
    $$ \frac{1}{\sqrt{2^{n+1}}} \sum_{z\in \{0,1\}^n} (-1)^{x\cdot z + y\cdot z}\ket{z}. $$
    When we measure the above state, which basis states $z$ have a nonzero probability of being observed? Well, any $z$ for which the two contributions to its amplitude interfere constructively (i.e., are both positive or both negative), rather than interfering destructively (i.e., one is positive and the other is negative).  That means any $z$ such that $$x\cdot z \equiv y\cdot z \pmod 2.$$  That, in turn, means any $z$ such that $$(x\oplus y)\cdot z \equiv 0 \pmod 2$$ --- or equivalently, any $z$ such that $s\cdot z \equiv 0 \pmod 2$.  In conclusion, we don't learn $s$ when we measure, but we do learn a random $z$ such that $s\cdot z \equiv 0$---that is, a random mod-2 linear equation that $s$ satisfies.
\item[(5)] Now all that remains to do is to repeat steps (1) through (4) until we've learned enough about $s$ to determine it uniquely (assuming we're in case (ii)). \ Concretely, every time we repeat those steps, we learn another random linear equation that $s$ satisfies:
    $$s\cdot z_1 \equiv 0 \pmod 2, \quad \quad s\cdot z_2 \equiv 0 \pmod 2, \quad \ldots$$
    We just need to continue until the only two solutions to the linear system are $0^n$ and $s$ itself. A simple probabilistic analysis shows that, with overwhelming probability, this will happen after only $O(n)$ repetitions. Furthermore, we can easily check whether it's happened, using Gaussian elimination, in classical polynomial time.  Thus, our overall algorithm uses $O(n)$ quantum queries to $f$ and polynomial computation time (with the latter dominated by {\em classical} computation).
\end{enumerate}

%%%%%%%%%%%%%%%%%%%%%%%%%%%%%%%%%%%%%%%%
%% Lecture 3
%%%%%%%%%%%%%%%%%%%%%%%%%%%%%%%%%%%%%%%%
\lecture{Scott Aaronson}{Michal Kouck\'{y} and Pavel Pudl\'ak}{Hidden Subgroup and Complexity of States and Unitaries}

\section{Hidden Subgroup Problem}

We are going to look at the so-called {\em Hidden Subgroup Problem} $HSP(G)$ over a finite group $G$, which is a generalization of Simon's problem from Lecture 2.
In the HSP, we're given black-box
access to some function $f:G\rightarrow \{0,1\}^*$, which is constant on cosets of some hidden subgroup $H \le G$.
On different cosets the function must take different values. The problem is to determine the subgroup $H$, for example by a list of its generators.
Or, in a common variant of the problem, we just want to determine whether $H$ is trivial or nontrivial.

Simon's algorithm, which we covered in Lecture 2, solves this problem in quantum polynomial time in the special case $G=\Z^n_2$:

\begin{theorem}[\cite{simon}]
$HSP(\Z^n_2) \in \mathsf{BQP}^f$.
\end{theorem}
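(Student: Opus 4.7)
The plan is to run essentially the same algorithm as Simon's, but with a more general analysis that works for an arbitrary subgroup $H \le \Z_2^n$ (not just $H = \{0, s\}$). The key structural fact I will rely on is that subgroups of $\Z_2^n$ are exactly $\F_2$-linear subspaces, so $H$ has an orthogonal complement $H^\perp = \{z \in \Z_2^n : h \cdot z \equiv 0 \pmod 2 \text{ for all } h \in H\}$ satisfying $|H| \cdot |H^\perp| = 2^n$, and $H$ is determined by $H^\perp$ via $H = (H^\perp)^\perp$.

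First I would run steps (1)--(3) of Simon's algorithm verbatim, querying $f$ in uniform superposition and measuring the output register. Since $f$ is constant on cosets of $H$ and distinct on distinct cosets, this collapses the input register to a uniformly random coset state
\[
\ket{\psi_{x}} = \frac{1}{\sqrt{|H|}} \sum_{h \in H} \ket{x + h}.
\]
Next, I would apply $H^{\otimes n}$ (qubit-wise Hadamard) and measure in the computational basis. A direct calculation of amplitudes gives that the probability of outcome $z \in \Z_2^n$ is
\[
\frac{1}{|H| \, 2^n} \Bigl| \sum_{h \in H} (-1)^{(x+h)\cdot z} \Bigr|^2,
\]
which, by the standard character-sum identity for the subgroup $H$, equals $|H|/2^n$ if $z \in H^\perp$ and $0$ otherwise. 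So each run yields a uniformly random element of $H^\perp$, independent of the coset representative $x$.

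Now I would repeat the procedure $O(n)$ times, collecting samples $z_1, z_2, \ldots$ from the uniform distribution on $H^\perp$. A standard probabilistic argument (identical to the one used for Simon's problem, applied inside the subspace $H^\perp$ of dimension at most $n$) shows that after $O(n)$ samples, with probability $1 - 2^{-\Omega(n)}$ the collected vectors span all of $H^\perp$. We can detect that we have a spanning set using Gaussian elimination over $\F_2$ in classical polynomial time, and then solve the linear system $\{h \cdot z_i \equiv 0 \pmod 2\}_i$ to recover a basis for $H = (H^\perp)^\perp$. (For the decision version, we just test whether $H^\perp = \Z_2^n$, i.e., whether the sampled $z_i$ span everything.)

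There is no real obstacle here --- the main thing to verify carefully is the character-sum calculation showing the post-Hadamard measurement is uniform on $H^\perp$, and the claim that $O(n)$ uniform samples suffice to generate a subspace of $\F_2^n$ with high probability. Both are routine. The total cost is $O(n)$ quantum queries to $f$ and $\poly(n)$ additional classical and quantum computation, placing $HSP(\Z_2^n)$ in $\mathsf{BQP}^f$.
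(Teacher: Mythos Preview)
Your proposal is correct and is exactly the intended argument: the paper does not give a separate proof of this theorem but simply refers back to Simon's algorithm from Lecture~2 and remarks that $HSP(\Z_2^n)$ ``is just Simon's problem.'' Your write-up supplies the natural generalization from $|H|\le 2$ to arbitrary subgroups $H\le \Z_2^n$ (coset state $\to$ Hadamard $\to$ uniform sample from $H^\perp$ $\to$ Gaussian elimination), which is precisely what the exercise is asking the reader to check.
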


(Exercise for the reader: check that $HSP(\Z^n_2)$ is just Simon's problem.)

Shor's famous factoring algorithm is directly inspired by Simon's algorithm. Indeed, Shor's algorithm
gives a classical reduction from factoring an integer $N=pq$ to a different problem, {\em Period-Finding} over the integers.

In Period-Finding, we're given black-box access to a function $f:\Z \rightarrow \Z$, and we need to determine the smallest $k>0$ such that $f(x)=f(x+k)$ for all integers $x$. In the case of Shor's algorithm, the function $f$ happens to be $f(x)=a^x \mod N$, for some randomly chosen $a$. Shor makes three crucial observations:

\begin{enumerate}
\item[(1)] The function $f(x)=a^x \mod N$ can be computed in polynomial time, using the well-known trick of {\em repeated squaring}. Thus, in this case,
there's no need for an oracle; the ``oracle function'' can just be computed explicitly.
\item[(2)] The period of $f$ always divides $(p-1)(q-1)$, the order of the multiplicative group modulo $N$. For this reason, if we can solve Period-Finding for various choices of $a$, that gives us enough information to determine $p$ and $q$. (This step is just classical number theory; it has nothing to do with quantum computing.)
\item[(3)] We can solve Period-Finding in quantum polynomial time, indeed with only $O(1)$ queries to $f$. This is the ``hard'' step, and the only step that actually uses quantum mechanics. Much of the technical difficulty comes from the fact that $\Z$ is infinite, so we have to work only with a small
finite part of $\Z$ by cutting it off somewhere. However, while the details are a bit complicated, conceptually the algorithm is extremely similar to Simon's algorithm from Lecture 2---just with a secret period $r$ in place of the ``secret string'' $s$, and with a discrete Fourier transform in place of Hadamarding each qubit.
\end{enumerate}

This leads to:

\begin{theorem}[\cite{shor}]
Integer Factoring (when suitably phrased as a decision problem) is in $\mathsf{BQP}$.
\end{theorem}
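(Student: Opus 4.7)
The plan is to combine the three observations highlighted in the excerpt, realizing the idea as a $\mathsf{BQP}$ subroutine for order-finding that is then composed with a classical reduction. For the reduction, pick $a \in \{2,\dots,N-1\}$ uniformly at random; if $\gcd(a,N)>1$, Euclid's algorithm already returns a factor, so assume $\gcd(a,N)=1$ and let $r$ be the multiplicative order of $a$ modulo $N$. A standard number-theoretic argument shows that, with probability at least $1/2$ over the choice of $a$, $r$ is even and $a^{r/2}\not\equiv -1\pmod{N}$; then $(a^{r/2}-1)(a^{r/2}+1)\equiv 0\pmod{N}$ with neither factor divisible by $N$, so $\gcd(a^{r/2}\pm 1,\,N)$ is a nontrivial factor, and amplifying by $O(1)$ independent choices of $a$ drives the failure probability arbitrarily close to $0$.

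For the quantum order-finding subroutine itself, choose $M=2^m$ with $N^2 \le M < 2N^2$, prepare $\frac{1}{\sqrt{M}}\sum_{x=0}^{M-1}\ket{x}\ket{0}$, and coherently compute $f(x)=a^x\bmod N$ using the polynomial-size repeated-squaring circuit from observation (1), obtaining $\frac{1}{\sqrt{M}}\sum_x\ket{x}\ket{a^x\bmod N}$. Exactly as in Simon's algorithm from Lecture 2, one may imagine measuring the second register, collapsing the first to $\frac{1}{\sqrt{A}}\sum_{j=0}^{A-1}\ket{x_0+jr}$ for some $x_0\in\{0,\dots,r-1\}$ with $A=\lfloor (M-1-x_0)/r\rfloor+1$. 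Then apply the quantum Fourier transform modulo $M$, $\ket{x}\mapsto \frac{1}{\sqrt{M}}\sum_{y=0}^{M-1} e^{2\pi i xy/M}\ket{y}$, which can be implemented by $O(m^2)$ gates drawn from $\{H,\text{controlled phase}\}$, and measure the first register to produce some $y\in\{0,\dots,M-1\}$.

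To extract $r$ from $y$, a direct estimate of the geometric sum in the amplitude at $y$ shows that with constant probability the observed outcome satisfies $|yr-kM|\le r/2$ for some integer $k\in\{0,\dots,r-1\}$. For any such ``good'' $y$, the rational $y/M$ lies within $1/(2M)\le 1/(2r^2)$ of $k/r$, which by a classical theorem on continued fractions forces $k/r$ (in lowest terms) to appear as a convergent of the continued-fraction expansion of $y/M$. That expansion has $O(\log N)$ convergents, all computable in classical polynomial time, so one tests each candidate denominator $r'$ by checking $a^{r'}\equiv 1\pmod{N}$; repeating the whole procedure $O(\log N)$ times and taking the least common multiple of the successful denominators recovers the true $r$ with overwhelming probability, which plugs back into the reduction via observation (2).

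The main obstacle is precisely this last step, because $r$ generically does not divide $M$ and so the Fourier interference is only approximate---unlike Simon's algorithm over $\Z_2^n$, where the Hadamard transform produces exact cancellation. One must both bound $\sum_{j=0}^{A-1} e^{2\pi i j r y/M}$ sharply enough to isolate a constant-probability ``good'' event, and invoke the continued-fraction uniqueness theorem to turn the approximate relation $y/M\approx k/r$ into the exact integer $r$. Everything else---the classical reduction, the repeated-squaring circuit for $a^x\bmod N$, and the $O(m^2)$-gate QFT---is standard polynomial-time bookkeeping already hinted at in Lecture 2.
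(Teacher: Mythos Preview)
Your proposal is correct and follows exactly the approach the paper outlines. Note, however, that the paper does not actually \emph{prove} this theorem: it merely states it, with a citation to Shor, as the consequence of the three numbered observations immediately preceding it (repeated squaring, the number-theoretic reduction from factoring to period-finding, and the quantum period-finding algorithm analogous to Simon's). Your write-up is a faithful and correct expansion of precisely those three observations into an actual proof sketch, including the standard continued-fraction post-processing that the paper alludes to only as ``the details are a bit complicated.''
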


% A.Yu. Kitaev, Quantum measurements and the Abelian Stabilizer Problem, quant-ph/9511026.

Shortly afterward, Kitaev generalized both Simon's and Shor's algorithms, to conclude that $HSP(G)$ can be solved in quantum polynomial time for {\em any} abelian group $G$.

\begin{theorem}[\cite{kitaev:meas}]
For any finite abelian group $G$, $HSP(G) \in \mathsf{BQP}^f$.
\end{theorem}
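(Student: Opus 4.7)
The plan is to generalize Simon's algorithm by replacing the Hadamard transform on $n$ qubits, which implements the QFT over $\Z_2^n$, with the quantum Fourier transform over the general finite abelian group $G$. Using the structure theorem for finite abelian groups, we write $G \cong \Z_{N_1} \times \cdots \times \Z_{N_k}$, and we work in the Hilbert space $\bigotimes_i \C^{N_i}$ whose computational basis is labeled by elements of $G$. The QFT over $G$ is then the tensor product $\text{QFT}_G = \bigotimes_i \text{QFT}_{N_i}$, where $\text{QFT}_N$ is the map $\ket{x} \mapsto \frac{1}{\sqrt{N}} \sum_{y=0}^{N-1} e^{2\pi i xy/N} \ket{y}$.

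The quantum algorithm proceeds exactly as in Simon's case. First prepare the uniform superposition $\frac{1}{\sqrt{|G|}} \sum_{g \in G} \ket{g}\ket{0}$; second, call the oracle to obtain $\frac{1}{\sqrt{|G|}} \sum_{g \in G} \ket{g}\ket{f(g)}$; third, measure the second register, collapsing the first to $\frac{1}{\sqrt{|H|}}\sum_{h \in H} \ket{g_0 + h}$ for a uniformly random coset representative $g_0$; fourth, apply $\text{QFT}_G$ and measure in the computational basis. A short calculation (using that $\sum_{h \in H} \chi(h) = |H|$ if the character $\chi$ is trivial on $H$ and $0$ otherwise) shows that the measurement returns a character $\chi \in H^\perp \le \hat{G}$ chosen uniformly at random, with the arbitrary phase $\chi(g_0)$ being unobservable. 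Each run thus produces a uniform sample from $H^\perp$, which under the identification $\hat{G} \cong G$ is the annihilator subgroup of $H$.

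After collecting $O(\log|G|)$ such samples, a standard subgroup-growth argument shows that with high probability they generate $H^\perp$. Classically, given generators of $H^\perp$ we recover generators of $H$ by computing the kernel of the corresponding character matrix, which reduces to Smith normal form over $\Z_{N_i}$'s and is doable in classical polynomial time. Combining this with the ability to test (again classically) whether each new sample enlarges the subgroup generated so far gives a clean termination criterion and completes the decision version asking whether $H$ is trivial.

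The main obstacle, and the one real departure from Simon's case, is implementing $\text{QFT}_N$ efficiently when $N$ is not a power of $2$. For $N = 2^n$ one has Coppersmith's $O(n^2)$-gate construction from Hadamards and controlled phase rotations. For general $N$, Kitaev's phase-estimation technique approximates $\text{QFT}_{\Z_N}$ to inverse-polynomial precision using $\polylog(N)$ gates, which by the ``Almost As Good As New'' Lemma (Lemma~\ref{lem:aagan}) suffices because the total-variation error in the final measurement outcomes grows only linearly in the per-gate trace-distance error. The remaining steps---oracle query, measurement, classical linear algebra over $\Z_{N_i}$'s---are straightforward, so the whole procedure runs in time $\polylog(|G|)$ with $O(\log|G|)$ queries to $f$, placing $HSP(G) \in \mathsf{BQP}^f$.
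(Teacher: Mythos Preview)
The paper does not actually give a proof of this theorem; it simply states the result with a citation to Kitaev and moves on. Your proposal is the standard proof sketch for the abelian HSP and is essentially correct: decompose $G$ via the structure theorem, replace the Hadamard transform by $\text{QFT}_G=\bigotimes_i \text{QFT}_{N_i}$, observe that the post-QFT measurement yields a uniform sample from the annihilator $H^\perp$, and finish with $O(\log|G|)$ samples plus classical linear algebra (Smith normal form) to recover $H$.

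One small correction: the appeal to the Almost As Good As New Lemma is misplaced. That lemma concerns how little a nearly-certain measurement disturbs a state; it is not the tool for bounding the error incurred by using an approximate QFT in place of the exact one. What you actually need is the elementary fact that if $\|\tilde U - U\|\le\varepsilon$ in operator norm, then the output states (and hence the measurement statistics) differ by at most $O(\varepsilon)$ in trace distance, and these errors add at most linearly across the circuit. This is just contractivity of trace distance under quantum operations together with the triangle inequality, not Lemma~\ref{lem:aagan}.
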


Inspired by the success with abelian groups, many researchers tried to design quantum algorithms for the HSP over various non-abelian groups. Unfortunately,
after twenty years of effort, this program has had very limited success, and almost all of it is for groups that are ``close'' in some way to abelian groups (for example, the Heisenberg group).

Why do people care about non-abelian HSP?  One major reason is that Graph Isomorphism can be reduced to HSP over the symmetric group $S_n$.
 A second reason is that, as shown by Regev \cite{Regev04}, Approximate Shortest Lattice Vector
 can ``almost'' be reduced to HSP over the dihedral group $D_N$. (Here the ``almost'' is because Regev's reduction assumes that the HSP algorithm works via ``coset sampling'': the approach followed by Simon's algorithm, Shor's algorithm, and essentially all other known HSP algorithms.)

Now, Graph Isomorphism and Approximate Lattice Vector are the two most famous ``structured'' $\mathsf{NP}$ problems: that is, problems that are not known to be in $\mathsf{P}$, but also have strong theoretical reasons not to be $\mathsf{NP}$-complete, and that therefore plausibly inhabit an ``intermediate zone'' between $\mathsf{P}$ and $\mathsf{NP}$-complete. And assuming tentatively that $\mathsf{NP} \not\subset \mathsf{BQP}$, which many people believe, these ``structured'' $\mathsf{NP}$ problems constitute a large fraction of all the interesting problems for which one could try to design an efficient quantum algorithm.

The reduction from Graph Isomorphism ($GI$) to $HSP(S_n)$ is  simple and works as follows. Fix graphs $A$ and $B$
on $n$ vertices. We define a function $f$ from  $S_{2n}$ into graphs on $2n$ vertices as follows: $f(\sigma)$ is the graph obtained by permuting by $\sigma$ the disjoint union of
$A$ and $B$. Determining whether the hidden subgroup of $f$ can interchange the vertices of $A$ and $B$ answers whether $A$ and $B$
are isomorphic. Since $f$ can be computed easily given $A$ and $B$, a polynomial-time quantum algorithm for $HSP(S_n)$ would imply $GI\in \mathsf{BQP}$.

(Of course, now we have Babai's {\em classical} quasi-polynomial algorithm for $GI$ \cite{Babai16}, which implies a quantum quasi-polynomial algorithm for the same problem!  But besides reducing quasipolynomial to polynomial, a quantum algorithm for $HSP(S_n)$ would also yield solutions to numerous {\em other} isomorphism problems, like isomorphism of rings, multivariate polynomials, and linear codes, which Babai's algorithm doesn't address.)

Meanwhile, a {\em lattice} is a set of vectors in $\R^n$ that's closed under integer linear combinations.  In the {\em Approximate Shortest Vector Problem}, the goal is to find a nonzero vector in a lattice $L$ that's at most a factor (say) $\sqrt{n}$ longer than the shortest nonzero vector in $L$.
This problem is closely related to solving $HSP$ over the dihedral group $D_n$.  A fast quantum algorithm for Approximate Shortest Vector would
mean that many of the public-key cryptographic schemes that weren't yet broken by Shor's factoring algorithm would be broken by quantum computers as well.
So for example, the security of the Regev and Ajtai-Dwork cryptosystems, as well as Gentry's fully homomorphic encryption \cite{gentry}, are based on the hardness of Approximate Shortest Vector or variants thereof.

Why might one even hope for a fast quantum algorithm to solve $HSP$ over nonabelian groups? In 1997, Ettinger, H\o yer and Knill \cite{ehk} showed that
at least from the perspective of query complexity, quantum computing is powerful enough. Indeed, using only polynomially
(in $\log \left|G\right|$) many queries to $f$, a quantum algorithm can extract enough information to solve $HSP(G)$ for {\em any} finite group $G$, abelian or not.
In Ettinger-H\o yer-Knill's algorithm, the post-processing after the queries to $f$ takes exponential time, but doesn't require any extra queries to $f$.

\begin{theorem}[Ettinger, H\o yer, Knill 1997 \cite{ehk}]
\label{ehkthm}
For every group $G$, $HSP(G)$ is solvable with $O(\log^2|G|)$ quantum queries to $f$. Furthermore, the queries can be efficiently constructed.
\end{theorem}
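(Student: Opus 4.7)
The plan is to use the standard ``coset sampling'' method to produce a quantum state that encodes $H$, and then argue information-theoretically that polynomially many copies of this state determine $H$ uniquely. The algorithm uses only one query per copy; the real content of the theorem is the sample-complexity bound.

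For the state preparation, first build the uniform superposition $\frac{1}{\sqrt{|G|}} \sum_{g \in G} \ket{g} \ket{0}$ using the standard encoding of $G$, then apply one query to $f$ to obtain $\frac{1}{\sqrt{|G|}} \sum_{g \in G} \ket{g} \ket{f(g)}$. Tracing out (or equivalently, measuring) the second register yields the random \emph{coset state} $\ket{gH} = \frac{1}{\sqrt{|H|}} \sum_{h \in H} \ket{gh}$ for a uniformly random coset representative $g$, whose density matrix $\rho_H = \frac{1}{|G|} \sum_{g \in G} \ketbra{gH}{gH}$ depends only on $H$. Each copy costs exactly one query to $f$, and the non-oracle part is efficient given a reasonable presentation of $G$.

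Next I would show that for any two distinct subgroups $H \neq H'$, the fidelity $F(\rho_H, \rho_{H'})$ is bounded away from $1$ by at least a constant (in fact by something like $\tfrac{1}{\sqrt{2}}$). The key combinatorial computation is that $\Tr(\rho_H \rho_{H'})$ reduces to a coset-counting expression in terms of $|H|$, $|H'|$, and $|H \cap H'|$, using that the intersection $gH \cap g'H'$ is either empty or a coset of $H \cap H'$. Once a constant-gap fidelity bound is in hand, the tensor-product property $F(\rho_H^{\otimes k}, \rho_{H'}^{\otimes k}) = F(\rho_H, \rho_{H'})^k$ yields exponentially small fidelity (equivalently, trace distance close to $1$, via the fidelity/trace-distance inequality stated earlier in the notes) after $k = O(\log^2 |G|)$ copies.

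To conclude, the number of subgroups of $G$ is at most $|G|^{O(\log |G|)}$ since any subgroup is generated by $O(\log |G|)$ elements. Applying the pretty-good measurement (or a union bound over all pairs of candidate subgroups) to $\rho_H^{\otimes k}$ with $k = O(\log^2 |G|)$ identifies $H$ with high probability; the total query cost is $O(\log^2 |G|)$, and the queries are just repeated applications of the uniform-superposition-plus-oracle circuit, which is efficiently constructible. The main obstacle is the fidelity/overlap bound between coset-state density matrices for distinct subgroups: all the quantum post-processing is allowed to be exponential-time for this theorem, so the whole burden of the proof lies in the combinatorial lemma that bounds $\Tr(\rho_H \rho_{H'})$ (and hence $F(\rho_H, \rho_{H'})$) away from $1$ whenever $H \neq H'$.
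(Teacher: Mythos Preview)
Your proposal is correct and rests on the same core ingredients as the paper's proof: the coset-sampling construction of $\rho_H$, the bound of $|G|^{\log|G|}$ on the number of subgroups, and a constant-gap distinguishability statement between $\rho_H$ and $\rho_{H'}$ for $H\neq H'$. The identification step, however, is organized differently. The paper orders the subgroups $H_1,H_2,\ldots$ by decreasing size and takes $M_i$ to be the projection onto $\operatorname{span}\{\ket{gH_i}:g\in G\}$; one checks that $M_i$ accepts $\rho_H$ with certainty whenever $H_i\leq H$ and rejects with probability at least $1/2$ otherwise (this is exactly the coset-intersection calculation you allude to, since $\Tr(P_{H_i}\rho_H)=|H_i\cap H|/|H_i|$). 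After amplifying with $k$ copies, the paper applies the $M_i$ \emph{sequentially} to the single block $\rho_H^{\otimes k}$ and invokes the Quantum Union Bound (Lemma~\ref{lem:qunionbound}) to argue that the state survives undamaged until the first accepting $M_i$, which is then $H$ itself. Your route via pairwise fidelity plus a pretty-good-measurement/union bound is equally valid and arguably more standard in the literature, but it bypasses the notes' own gentle-measurement machinery; the paper's version buys a more explicit description of the measurement as an ordered sequence of projectors.

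One small caution on your sketch: the Hilbert--Schmidt overlap $\Tr(\rho_H\rho_{H'})=|H\cap H'|/|G|$ does not by itself bound $F(\rho_H,\rho_{H'})$, since both quantities can be tiny when $|G|$ is large. The clean way to get the constant fidelity gap is the same projector computation the paper uses: $P_H$ accepts $\rho_H$ with probability $1$ and accepts $\rho_{H'}$ with probability $|H\cap H'|/|H|\leq 1/2$ whenever $H'\not\subseteq H$, yielding $\|\rho_H-\rho_{H'}\|_{tr}\geq 1/2$ and hence $F(\rho_H,\rho_{H'})\leq\sqrt{3}/2$ directly.
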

\begin{proof}
Take
\[
\frac 1{\sqrt{|G|}} \sum_{x\in G}\ket{x}\ket{f(x)}.
\]
Measure $\ket{f(x)}$. We get, for some $g\in G$,
\[
\frac 1{\sqrt{|H|}} \sum_{h\in H}\ket{gh} .
\]
Call this state $\ket{gH}$. More precisely, we'll have a mixed state given by the density matrix
\[
\rho_H:=\Exp_g{[\ketbra{gH}{gH}]}.
\]
For $k=O(\log^2|G|)$, repeat this procedure $k$ times to get $\rho_H^{\otimes k}$. This is the only part that queries $f$.

\begin{claim}
$\rho_H^{\otimes k}$ information-theoretically determines $H$.
\end{claim}
This follows from the fact that $G$ has $\leq |G|^{\log|G|}$ subgroups (because every subgroup has at most $\log|G|$ generators).

\bigskip
Let  $H_1,\dots,H_r$ be a listing of all subgroups in decreasing order. Define measurements $M_i$ such that $M_i$ accepts $\rho_{H_i}$ and rejects $\rho_{H_j}$ for $j>i$ with probability
\[
1-\frac{1}{|G|^{\log|G|}}.
\]
Define $M_i$ using the projection operator on the space spanned by $\{\ket{gH_i}\ ;\ g\in G\}$. Then if $H\geq H_i$ it accepts with certainty. Otherwise it rejects with probability at least $1/2$. (This is because $|\braket{H_i}{H_j}|\le \frac{1}{\sqrt{2}}$ as can be easily verified.) If we use $\rho_H^{\otimes k}$, then we reject with probability
$$1-\frac 1{2^k}\geq 1-\frac{1}{|G|^{\log|G|}}$$
in the second case. Note that if $i$ is the first one such that $H\geq H_i$, then $H=H_i$.
By the Quantum Union Bound (Lemma \ref{lem:qunionbound}), we can do all measurements $M_i$ and almost preserve the state until some $M_i$ accepts.
\end{proof}

\section{Circuit Complexity}

Can the EHK measurement be performed by a polynomial size quantum circuit? Could it be done under some wild complexity assumption? For example, if $\mathsf{P}=\mathsf{PSPACE}$, then does it follow that $HSP(G) \in \mathsf{BQP}^f$ for all finite groups $G$?

(This question is interesting only for black-box access to $G$, as $HSP(G)$ is clearly in $\mathsf{NP}$ for explicit presentations of $G$.)

To investigate such questions, let's now formally define a measure of quantum circuit complexity.

\begin{definition}
Given an $n$-qubit unitary transformation $U$, we let ${\cal C}_{\epsilon}(U)$ be minimum size of a quantum circuit (say, over $\{H,P,\mbox{Toffoli}\}$) needed to implement $U$ to entry-wise precision $\epsilon$ (say, $\epsilon=2^{-n}$).
\end{definition}

What can we say about this measure?  Given a Boolean function $f:\{0,1\}^N\to\{0,1\}$, let $U_f:\ \ket{x,a}\mapsto \ket{x,a\oplus f(x))}$.  Then the following can be established by a standard counting argument.

\begin{observation}
There are $2^{2^n}$ Boolean functions $f$, but only $\leq T^{O(T)}$ quantum circuits with $\leq T$ gates.  Hence, for almost every $f$,
we must have ${\cal C}_{\epsilon}(U_f) = \Omega\left(\frac{2^n}n\right)$.
\end{observation}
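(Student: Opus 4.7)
The plan is a straightforward counting/pigeonhole argument, comparing the number of Boolean functions on $n$ bits to the number of quantum circuits of a given size.

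First I would bound the number of distinct circuits of size $T$ over the fixed gate set $\{H,P,\text{Toffoli}\}$ acting on $N$ qubits (including ancillas; we may assume $N \le T+n+1$ since unused ancillas contribute nothing). Each gate is described by a gate type (a constant number of choices) together with the tuple of at most $3$ qubits it acts on, giving at most $O(N^3) = T^{O(1)}$ placements per gate. Multiplying over $T$ gates, the total number of circuits is at most $T^{O(T)}$.

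Next I would argue that distinct Boolean functions give rise to well-separated unitaries. If $f \neq g$, then $f(x)\neq g(x)$ for some $x$, and on the basis state $\ket{x,0}$ the two unitaries $U_f, U_g$ produce $\ket{x,f(x)}$ versus $\ket{x,g(x)}$, which are orthogonal. Hence the entrywise distance between $U_f$ and $U_g$ is bounded below by a constant (e.g., $1$), so for $\epsilon$ a sufficiently small constant (and in particular for $\epsilon = 2^{-n}$), any single circuit $C$ can $\epsilon$-approximate at most one $U_f$. Consequently the number of functions $f$ for which $\mathcal{C}_\epsilon(U_f) \le T$ is at most the number of circuits of size $T$, i.e.\ at most $T^{O(T)}$.

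Finally I would compare to the total count $2^{2^n}$ of Boolean functions. If $T = o(2^n/n)$, then $T \log T = o(2^n)$, so $T^{O(T)} = 2^{o(2^n)} \ll 2^{2^n}$, meaning a vanishing fraction of functions $f$ admit a circuit of size $T$. Hence for almost every $f$, we get $\mathcal{C}_\epsilon(U_f) = \Omega(2^n/n)$, as claimed. The only potentially subtle step is the gate-counting estimate $T^{O(T)}$, since one must be careful to include the choice of qubits (not just gate type) and to bound the number of usable qubits in terms of $T$; both are routine.
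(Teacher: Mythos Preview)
Your proposal is correct and is exactly the standard counting argument the paper has in mind; the paper does not spell out a proof beyond the statement itself and the remark that ``the counting argument works because we have a discrete set of gates,'' so your write-up simply fills in the details (circuit count, separation of distinct $U_f$'s, and the comparison $T\log T = o(2^n)$) that the observation leaves implicit.
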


The above observation also holds if we relax the error $\epsilon$ in approximating $U$ to, say, $1/3$. The counting argument works because we have a discrete set of gates; for a continuous set of gates, we would instead use a dimension argument (as mentioned in Lecture 2).

\begin{observation}
If $f\in \mathsf{BQP}$, then ${\cal C}_{\epsilon}(U_f)\leq n^{O(1)}$. Thus, strong enough lower bounds on ${\cal C}_{\epsilon}(U_f)$ would imply ordinary complexity
class separations.
\end{observation}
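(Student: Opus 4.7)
The plan is to convert the promised $\mathsf{BQP}$ algorithm for $f$ into a $\poly(n)$-size \emph{coherent} circuit for the permutation $U_f:\ket{x,a}\mapsto\ket{x,a\oplus f(x)}$, to within entry-wise precision $\epsilon$. The subtlety is that a $\mathsf{BQP}$ circuit only computes $f$ with bounded error, and it leaves ``garbage'' in its workspace that is correlated with $x$; both of these facts, if ignored, would destroy the coherence required for $U_f$.

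First I would amplify. By the standard $\mathsf{BQP}$ amplification argument (run the circuit many times on fresh ancillas and take a majority into a new qubit, all reversibly via Toffoli), we obtain a polynomial-size circuit $C$ implementing
\[
\ket{x}\ket{0}\ket{0^m}\;\longmapsto\;\sqrt{1-\delta_x}\,\ket{x}\ket{f(x)}\ket{\phi_x} \;+\; \sqrt{\delta_x}\,\ket{x}\ket{\overline{f(x)}}\ket{\phi_x'},
\]
with $\delta_x\le\delta$ for any desired $\delta=2^{-q(n)}$. Next, following the ``compute--copy--uncompute'' trick, I would (i) apply $C$, (ii) CNOT the answer qubit into the external target register $a$, and (iii) apply $C^\dagger$ to restore the ancillas to $\ket{0}\ket{0^m}$. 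If the circuit were exact ($\delta=0$), this would exactly implement $U_f$; the only question is what the error $\delta$ does.

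The error analysis is where the ``Almost As Good As New'' lemma (Lemma~\ref{lem:aagan}) does the work. View the computation of $C$ followed by the CNOT as a measurement whose ``good'' projector is onto the branch where the middle register holds $f(x)$; this branch carries weight $1-\delta_x$. By Lemma~\ref{lem:aagan}, uncomputing $C$ and discarding the ancillas leaves a state within trace distance $\sqrt{\delta_x}\le\sqrt{\delta}$ of the ideal output $\ket{x}\ket{a\oplus f(x)}$, for every basis input $(x,a)$. A linearity/convexity argument then turns this per-input bound into an operator-norm (hence entry-wise up to a $2^{n}$ factor) bound on the implemented unitary versus $U_f$. Choosing $\delta=\epsilon^2\cdot 2^{-O(n)}$ makes this at most $\epsilon$, and since amplification costs only $\polylog(1/\delta)=\poly(n,\log 1/\epsilon)$ repetitions of the base $\mathsf{BQP}$ circuit, the overall gate count is $n^{O(1)}$ as required. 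The Solovay--Kitaev theorem lets us do all of this over the fixed gate set $\{H,P,\text{Toffoli}\}$ at only a polylogarithmic overhead.

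The main conceptual obstacle---and the only place where one has to be careful---is exactly the step above: the $\mathsf{BQP}$ circuit produces ``garbage'' entangled with $x$, and naively copying out $f(x)$ and stopping would leave the target register entangled with this garbage, so the resulting map would not even be close to a unitary, let alone to $U_f$. Uncomputation fixes this in the ideal case, and Lemma~\ref{lem:aagan} certifies that the bounded-error case degrades gracefully, which is the whole point of invoking it here. The second sentence of the observation is then immediate: any superpolynomial lower bound on $\mathcal{C}_\epsilon(U_f)$ for an explicit $f$ would, by the contrapositive, give $f\notin\mathsf{BQP}$, hence in particular complexity class separations such as $\mathsf{P}\neq\mathsf{BQP}$ (when $f\in\mathsf{P}$) or, via the containments $\mathsf{BQP}\subseteq\mathsf{PP}\subseteq\mathsf{PSPACE}$ derived earlier, even $\mathsf{P}\neq\mathsf{PSPACE}$.
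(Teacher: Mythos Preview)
The paper states this observation without proof, treating it as folklore; your amplify--then--compute/copy/uncompute argument is exactly the standard justification and is correct. Two small remarks. First, Lemma~\ref{lem:aagan} is heavier than needed here: once you write out the state after applying $C^\dagger$, a direct two-line computation shows that each column of the implemented map is within $O(\sqrt{\delta})$ (in Euclidean norm) of the corresponding column of $U_f\otimes\ket{0^m}$, which immediately gives the entry-wise bound with no extra $2^n$ factor. Second, your closing example ``$\mathsf{P}\neq\mathsf{BQP}$ (when $f\in\mathsf{P}$)'' is off: since $\mathsf{P}\subseteq\mathsf{BQP}$, the observation you just proved already forces $\mathcal{C}_\epsilon(U_f)\le n^{O(1)}$ for every $f\in\mathsf{P}$, so no such lower bound can exist. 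The intended separation (and the one the paper's next Question highlights) is $\mathsf{P}\neq\mathsf{PSPACE}$, coming from an explicit $f\in\mathsf{PSPACE}$.
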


\begin{question}
Are there unitaries $U$ for which ${\cal C}_{\epsilon}(U) > n^{\omega(1)}$ would {\em not} imply anything about standard complexity classes (e.g., $\mathsf{P} \neq \mathsf{PSPACE}$)---that is, for which we have a hope of proving superpolynomial lower bounds on ${\cal C}_{\epsilon}(U)$ {\em unconditionally}?
\end{question}

The above question is really asking: how hard is it to prove exponential lower bounds on quantum circuit complexity for explicitly-given unitary matrices?  To be more concrete, we could for example ask: is there a ``natural proof'' barrier \cite{rr} for this lower bound problem?  By this we mean: are there polynomial-size quantum circuits that give rise to $2^n \times 2^n$ unitary matrices
that can't be distinguished from Haar-random $U$, say by any $2^{O(n)}$-time classical algorithm with access to the entries of the matrix?  A natural guess is that such circuits exist, as a randomly-chosen quantum circuit seems to produce such a matrix. But it remains open whether this could be shown under some standard cryptographic assumption.

We can also study the quantum circuit complexity of unitary transformations relative to an oracle $A$:

\begin{definition}
We let ${\cal C}^A_{\epsilon}(U)$ be the minimum size of a quantum circuit with $A$-oracle gates that $\epsilon$-approximates $U$.
\end{definition}

We can now state one of Scott's favorite open problems about the complexity of unitary transformations, a problem that will show up over and over in this course:

\begin{question}
{\bf The Unitary Synthesis Problem.} Is it true that for every $n$-qubit unitary transformation $U$, there exists an oracle $A$ such that ${\cal C}^A_{\epsilon}(U)\leq n^{O(1)}$? (Here $A$ is a Boolean function, which can take inputs that are more than $n$ bits long---as it must, for counting reasons.)
\end{question}

What we're here calling the Unitary Synthesis Problem was first raised in a 2006 paper of Aaronson and Kuperberg \cite{ak}.  They conjectured that the answer is no: that is, that there exist unitaries (for example, Haar-random unitaries) that can't be implemented in $\mathsf{BQP}^A$ for any Boolean oracle $A$. However, the best they were able to prove in that direction was the following:

\begin{theorem}[Aaronson-Kuperberg \cite{ak}]
There exist $n$-qubit unitaries $U$ such that, for all oracles $A$, a polynomial-time quantum algorithm cannot implement $U$ with just one query to $A$, assuming that the algorithm must implement {\em some} unitary matrix on the $n$ qubits in question regardless of which $A$ it gets.
\end{theorem}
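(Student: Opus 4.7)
The plan is to carry out a counting (or dimension) argument that exploits the discrete nature of oracles together with the structural restriction imposed by the ``must implement some unitary regardless of $A$'' hypothesis.

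First I would fix a candidate polynomial-time one-query algorithm. Such an algorithm is specified by a pair of polynomial-size quantum circuits $V, W$ (over some fixed finite universal gate set) acting on $n+m$ qubits for some $m = \operatorname{poly}(n)$, together with the rule that on input $\ket{\psi}$ it produces $M_A \ket{\psi}\ket{0^m} = W O_A V \ket{\psi}\ket{0^m}$, where $O_A$ is the oracle unitary. Next I would unpack what the hypothesis buys us: for every $A$, the algorithm must implement some unitary $U_A$ on the $n$-qubit input register. Since $M_A$ is linear, this forces $M_A(\ket{\psi}\ket{0^m}) = (U_A \ket{\psi}) \otimes \ket{\phi_A}$ for some ancilla vector $\ket{\phi_A}$ that depends only on $A$ (if the ancilla tail depended nontrivially on $\psi$, superposing two different inputs would produce entanglement between the two registers that is absent from the right-hand side). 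So for each pair $(V,W)$ we get a well-defined partial function $A \mapsto U_A$.

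Now I would count. Only the restriction of $A$ to inputs whose length is at most the width $n+m$ of the circuit affects $O_A$, and such a restriction is a truth table of length at most $2^{n+m}$. Hence for each $(V,W)$ there are at most $2^{2^{\operatorname{poly}(n)}}$ effectively distinct oracles and therefore at most $2^{2^{\operatorname{poly}(n)}}$ distinct values of $U_A$. The number of polynomial-size circuit pairs over a fixed finite gate set is only $2^{O(\operatorname{poly}(n)\log \operatorname{poly}(n))} = 2^{\operatorname{poly}(n)}$. Multiplying and taking the union over all $(V,W)$, the set of $n$-qubit unitaries $U$ that are implemented by some polynomial-time one-query algorithm with some oracle has cardinality at most $2^{2^{\operatorname{poly}(n)}}$, which is finite for each $n$ and countable overall.

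Finally I would compare against the continuum. The unitary group $U(2^n)$ is a real-analytic manifold of dimension $4^n$ with positive Haar measure, so a countable set of unitaries cannot exhaust it; in fact all but a Haar-null set of unitaries are unreachable. This yields the existence of $U$ not implementable in the exact sense. The main obstacle, and the place that takes genuine care, is extending this to approximate implementation: a naive volume argument with $\varepsilon$-balls gives total covered Haar mass at most $2^{2^{\operatorname{poly}(n)}}\cdot \varepsilon^{\Theta(4^n)}$, which beats $1$ only when $2^{\operatorname{poly}(n)} \ll 4^n \log(1/\varepsilon)$. So to handle constant $\varepsilon$ and arbitrary polynomial size, one has to replace raw counting with a dimension argument that uses the further structural fact that, for fixed $(V,W)$ satisfying the hypothesis, the parameter $A$ is discrete while $U_A$ lives in a space of dimension $4^n$, so the image of the map $A \mapsto U_A$ is a zero-dimensional set and therefore its $\varepsilon$-neighborhood still has negligible Haar measure compared with $U(2^n)$ for appropriate $\varepsilon$.
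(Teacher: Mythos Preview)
The paper does not actually prove this theorem; it merely quotes it from \cite{ak}. So there is no ``paper's own proof'' to compare against, and your proposal has to be judged on its own merits.

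For \emph{exact} implementation your counting argument is fine: for each fixed circuit pair $(V,W)$ only the restriction of $A$ to strings of length at most the number of oracle wires matters, so the set $\{U_A : A\}$ is finite; taking the union over the countably many polynomial-size circuit pairs still gives a countable set of reachable unitaries, whereas $\mathrm{SU}(2^n)$ is uncountable. Your derivation of the product structure $M_A(\ket{\psi}\ket{0^m})=(U_A\ket{\psi})\otimes\ket{\phi_A}$ from the hypothesis is also correct.

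The genuine gap is in the approximate case, which is the case that matters for the Unitary Synthesis Problem. You correctly diagnose that the naive volume bound $2^{2^{\operatorname{poly}(n)}}\cdot\varepsilon^{\Theta(4^n)}$ is useless once the number of ancillas exceeds about $2n$, since then $2^{\operatorname{poly}(n)}\gg 4^n$. But your proposed fix---``the image of $A\mapsto U_A$ is zero-dimensional, hence its $\varepsilon$-neighborhood has negligible Haar measure''---does not work. Dimension zero says nothing about cardinality: a finite set of $N$ points in a $d$-dimensional manifold has $\varepsilon$-neighborhood of volume $\Theta(N\varepsilon^d)$, and here $N$ can be $2^{2^{\operatorname{poly}(n)}}$ while $d=4^n$, so the neighborhood can easily have full measure for constant $\varepsilon$. (Think of $\mathbb{Q}\subset\mathbb{R}$: zero-dimensional, yet its $\varepsilon$-neighborhood is all of $\mathbb{R}$.) To get the approximate version one needs to exploit more of the structure of the single oracle query---for instance, how the unitary $U_A$ varies as individual bits of $A$ are flipped---rather than treating the oracle as an unstructured source of $2^{\operatorname{poly}(n)}$ free bits. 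That structural argument is what is actually carried out in \cite{ak}, and it is not supplied by your counting.
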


\section{Quantum State Complexity}

Now, let's look at the complexity of generating a prescribed quantum state, as opposed to implementing a prescribed unitary matrix.

\begin{definition}
Given an $n$-qubit state $\ket{\psi}$, we let $\cal C_{\epsilon}(\ket{\psi})$ be the minimum size of a quantum circuit (over the set $\{H,P,\mbox{Toffoli}\}$) that maps $\ket{0}^{\otimes m}$, for some $m>n$, to a state $\rho$ such that
\[
\|\rho-\ketbra{\psi}{\psi}\otimes \ketbra{0\cdots 0}{0 \cdots 0} \|_{tr}\leq\epsilon .
\]

If we allow garbage in the ancilla bits (and only consider the trace distance between the first $n$ qubits and $\ketbra{\psi}{\psi}$), then we call the measure $\cal C^*_{\epsilon}(\ket{\psi})$.
\end{definition}

For general circuits, it's not known whether there's a separation between $\cal C$ and $\cal C^*$, or whether it's always possible to remove garbage.  We know how to remove garbage in situations like this:
\[
 \ket{x}\ket{\mathrm{garbage}(f(x))}\ket{f(x)}
\]
by copying the classical bit $f(x)$
\[
 \ket{x}\ket{\mathrm{garbage}(f(x))}\ket{f(x)}\ket{f(x)}
\]
and then undoing the computation of $f$ to get $\ket{x}\ket{00 \cdots 0}\ket{0}\ket{f(x)}$.
However, in the case of computing a quantum state $\ket{\psi_x}$,
\[
 \ket{x}\ket{\mathrm{garbage}(\psi_x)}\ket{\psi_x},
\]
we don't know how to remove the garbage, since by the No-Cloning Theorem we can't copy the {\em quantum} state $\ket{\psi_x}$.

\begin{question}
Is there any plausible candidate for a separation between $\cal C_{\epsilon}(\ket{\psi})$ and $\cal C^*_{\epsilon}(\ket{\psi})$?
\end{question}

We now make some miscellaneous remarks about quantum state complexity.

First, just like with unitaries, it's easy to show, by a counting argument, that almost all $n$-qubit states $\ket{\psi}$ satisfy $\mathcal{C}_{\epsilon}(\ket{\psi}) = 2^{\Omega(n)}$.  The challenge is whether we can prove strong lower bounds on the circuit complexities of {\em specific} states.  (But we'll also see, in Lecture 3, how this challenge differs from the analogous challenge for unitaries, by being more tightly connected to ``standard'' complexity theory.)

Second, any state with large circuit complexity must clearly be highly entangled, since (for example) any $n$-qubit separable state $\ket{\psi}$ satisfies $\mathcal{C}_{\epsilon}(\ket{\psi}) = O(n)$.  However, it's important to understand that the converse of this statement is false; {\em complexity is not at all the same thing as entanglement.}  To illustrate, the state
\[
\left(\frac{\ket{00}+\ket{11}}{\sqrt 2}\right)^{\otimes n}
\]
has a maximal entanglement for any state with the same number of qubits, but its complexity is trivial (only $O(n)$).

Fourth, it's probably easy to prepare quantum states that have exponential circuit complexity, using only polynomial time!  The difficulty is merely to prepare the {\em same} complex state over and over again.  To illustrate this point, consider the following procedure: apply a random quantum circuit to $n$ qubits, initially set to $\ket{0}^{\otimes n}$. Then measure
the first $n/2$ qubits in the computational basis, obtaining some outcome $x\in \{0,1\}^{n/2}$.  Let $\ket{\psi_x}$ be the resulting state of the remaining $n/2$ qubits.  As far as anyone knows, this $\ket{\psi_x}$ will satisfy $\mathcal{C}_{\epsilon}(\ket{\psi_x}) = 2^{\Omega(n)}$ with overwhelming probability.

Fifth, here's an extremely interesting open question suggested by Daniel Roy.

\begin{question}
Do there exist $\ket{\psi},\ket{\phi}$ such that
$\braket{\psi}{\phi}=0$,
\[
{\cal
  C}\left(\frac{\ket{\psi}\ket{\psi}+\ket{\phi}\ket{\phi}}{\sqrt{2}}
\right)\leq n^{O(1)},
\]
but
\[
{\cal C}(\ket{\psi}),{\cal C}(\ket{\phi})> n^{O(1)}\ ?
\]
\end{question}

The copies of  $\ket{\psi}$ and $\ket{\phi}$ are important because otherwise the problem is trivial: let $\ket{\psi}$ satisfy ${\cal C}(\ket{\psi}) > n^{O(1)}$.
Then

$$\frac{\ket{0}^{\otimes n}+\ket{\psi}}{\sqrt{2}}, \frac{\ket{0}^{\otimes n}-\ket{\psi}}{\sqrt{2}}$$

\noindent both have exponential quantum circuit complexity, but a linear combination of the two (namely $\ket{0}^{\otimes n}$) is trivial.

Sixth, here's a useful, non-obvious proposition about the behavior of quantum circuit complexity under linear combinations (a version of this proposition is proven in Aaronson \cite{aar:mlin}, but probably it should be considered ``folklore'').

\begin{proposition}
\label{lincomb}
If $\braket{\psi}{\phi}=0$, then
\[
{\cal C}(\alpha\ket{\psi}+\beta\ket{\phi})\leq
O({\cal C}(\ket{\psi})+{\cal C}(\ket{\phi})+n).
\]
\end{proposition}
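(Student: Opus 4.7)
The plan is to introduce one ancilla qubit $A$ that acts as a control selecting between $\ket{\psi}$ and $\ket{\phi}$, produce the entangled superposition $\alpha\ket{0}_A\ket{\psi}+\beta\ket{1}_A\ket{\phi}$, and then use the hypothesis $\braket{\psi}{\phi}=0$ to coherently erase $A$.  Let $U_\psi$ and $U_\phi$ be ancilla-clean preparation circuits of sizes $\mathcal{C}(\ket{\psi})$ and $\mathcal{C}(\ket{\phi})$, meaning they send $\ket{0^m}\mapsto\ket{\psi}\ket{0^{m-n}}$ and $\ket{0^m}\mapsto\ket{\phi}\ket{0^{m-n}}$ respectively, where $m$ is chosen large enough for both.

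The first stage is routine: rotate $A$ from $\ket{0}$ to $\alpha\ket{0}+\beta\ket{1}$, then apply controlled-$U_\psi$ (controlled on $A=0$) and controlled-$U_\phi$ (controlled on $A=1$) to the main $m$-qubit register. Since controlling a circuit gate-by-gate incurs only constant overhead per gate, this stage uses $O(\mathcal{C}(\ket{\psi})+\mathcal{C}(\ket{\phi}))$ gates and produces
\[
\alpha\ket{0}_A\ket{\psi}\ket{0^{m-n}}+\beta\ket{1}_A\ket{\phi}\ket{0^{m-n}}.
\]

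The second stage, which is where orthogonality pays off, uncomputes $A$. Apply $U_\psi^{-1}$ to the main register, obtaining $\alpha\ket{0}_A\ket{0^m}+\beta\ket{1}_A\,U_\psi^{-1}\ket{\phi}\ket{0^{m-n}}$. Because $\ket{\psi}\ket{0^{m-n}}\perp\ket{\phi}\ket{0^{m-n}}$, the vector $U_\psi^{-1}\ket{\phi}\ket{0^{m-n}}$ is orthogonal to $\ket{0^m}$ and so is supported entirely on basis states with at least one $1$. Compute the OR of the $m$ bits of the main register into a fresh ancilla $B$ (which uses $O(m)$ Toffoli gates with scratch space); this perfectly records in $B$ which branch we are in. A single CNOT from $B$ into $A$ then zeroes $A$, after which we uncompute $B$ by running the OR circuit in reverse and apply $U_\psi$ to restore the main register to $\alpha\ket{\psi}\ket{0^{m-n}}+\beta\ket{\phi}\ket{0^{m-n}}$. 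The final state is $\ket{0}_A\ket{0}_B(\alpha\ket{\psi}+\beta\ket{\phi})\ket{0^{m-n}}$, as required.

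Summing, the total gate count is $O(\mathcal{C}(\ket{\psi})+\mathcal{C}(\ket{\phi})+m)$, and since $m\le n+\max(\mathcal{C}(\ket{\psi}),\mathcal{C}(\ket{\phi}))$ this matches the claimed bound $O(\mathcal{C}(\ket{\psi})+\mathcal{C}(\ket{\phi})+n)$. The only conceptually non-obvious step is using $U_\psi^{-1}$ to convert the discrimination question ``is the register $\ket{\psi}$ or $\ket{\phi}$?''---which seemingly requires knowing the states---into the trivial computational question ``is the register $\ket{0^m}$?'', answerable by a single OR. This translation, made possible precisely by orthogonality, is the whole content of the proof; the rest is bookkeeping, plus the standard fact that adding a control to a quantum circuit costs only a constant factor per gate.
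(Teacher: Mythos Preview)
Your proof is correct and takes essentially the same approach as the paper: prepare the controlled superposition, apply $U_\psi^{-1}$ so that orthogonality translates into ``main register is $\ket{0\cdots 0}$ versus not,'' then compute the OR of the main register's bits to coherently erase the control. The paper's version is slightly more terse (it XORs the OR directly into the control rather than routing through an extra ancilla $B$, and it glosses over the $m$-versus-$n$ ancilla bookkeeping), but the key idea and the structure of the argument are identical.
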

\begin{proof}
It's easy to prepare
\[
\alpha\ket{0}\ket{\psi}+\beta\ket{1}\ket{\phi}.
\]
(We're ignoring the ancilla bits that should be put to $0$.)
Let $U$ and $V$ denote the unitary operators used to
construct $\ket{\psi}$ and $\ket{\phi}$ respectively. Then the above
state is
\[
\alpha\ket{0} U \ket{0}^{\otimes n}+\beta \ket{1} V \ket{0}^{\otimes n}.
\]
Applying $U^{-1}$, we get
\begin{equation}\label{e1}
\alpha \ket{0} \ket{0}^{\otimes n}+\beta \ket{1} U^{-1}V \ket{0}^{\otimes n}.
\end{equation}
According to our assumption, we have
\[
0=\braket{\psi}{\phi}=\bra{0}^{\otimes
  n} U^\dagger V\ket{0}^{\otimes n} = \bra{0}^{\otimes n} U^{-1}V \ket{0}^{\otimes n}.
\]
Hence the amplitude of $\ket{0}^{\otimes n}$ in
$U^{-1}V\ket{0}^{\otimes n}$ is $0$. Let $W$ be the unitary
operator that implements the reversible Boolean function
\[
(a_0,a_1,\dots,a_n)\mapsto (a_0\oplus OR(a_1,\dots,a_n),a_1,\dots,a_n).
\]
Applying $W$ to (\ref{e1}) we switch $1$ to $0$:
\[
\alpha \ket{0} \ket{0}^{\otimes n}+\beta \ket{0} U^{-1}V \ket{0}^{\otimes n}.
\]
Applying $U$ we get
\[
\alpha \ket{0} U \ket{0}^{\otimes n}+\beta \ket{0} V \ket{0}^{\otimes n},
\]
which is the state we needed (with an extra ancilla bit $\ket{0}$).
\end{proof}

We remark that Proposition \ref{lincomb} can be generalized to any pair of states $\ket{\psi},\ket{\phi}$ satisfying $\left\|\left\langle \psi | \phi \right\rangle \right\| < 1-\eps$, although in that case we pick up a multiplicative $O(1/\eps)$ factor in the circuit complexity.

\subsection{State Versus Unitary Complexity}

So far, we discussed the problem of determining the circuit complexity of a given $n$-qubit unitary transformation, and we introduced the Unitary Synthesis Problem: for every unitary $U$, does there exist an oracle $A$ such that $U$ can be implemented in $n^{O(1)}$ time with oracle access to $A$?  We also discussed the closely-related question of whether, if (say) $\mathsf{P}=\mathsf{PSPACE}$, large classes of explicitly-describable $n$-qubit unitary transformations would then have $n^{O(1)}$ quantum circuit complexity.

We then moved on to discuss the complexity of quantum {\em states}.  Let's now show that, in the setting of quantum state complexity, the two problems mentioned in the previous paragraph both have positive answers.

\begin{proposition}
\label{canmake}
For every $n$-qubit state $\ket{\psi}$, there exists an oracle function $A:\{0,1\}^*\longrightarrow \{0,1\}$ such that $\sizee^A(\ket{\psi}) \leq n^{O(1)}$.
\end{proposition}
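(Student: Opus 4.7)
The plan is to prepare $\ket{\psi}$ one qubit at a time via the standard Grover-Rudolph style recursion, and to have the oracle $A$ serve up the bits of all the rotation angles on demand. Write $\ket{\psi} = \sum_{x \in \{0,1\}^n} \alpha_x \ket{x}$. For each prefix $y \in \{0,1\}^k$ with $k < n$, let $p_y = \sum_{x : y \text{ is a prefix of } x} |\alpha_x|^2$ and define an angle $\theta_y \in [0,\pi/2]$ by $\cos^2 \theta_y = p_{y0}/p_y$ (with $\theta_y = 0$ if $p_y = 0$); also, for each $x \in \{0,1\}^n$, let $\phi_x = \arg \alpha_x$. The ideal preparation circuit, starting from $\ket{0}^{\otimes n}$, applies in layer $k$ a rotation $R_y(\theta_y)$ to qubit $k+1$ controlled on qubits $1, \dots, k$ equalling $y$, and finally installs a diagonal phase gate $\ket{x} \mapsto e^{i\phi_x}\ket{x}$. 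This produces $\ket{\psi}$ exactly with infinite precision.

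To turn this into an efficient oracle circuit, fix precision $\delta = \epsilon / \mathrm{poly}(n)$ and $m = O(\log(n/\epsilon))$. Define the Boolean oracle $A$ so that on input $(y, i)$ it returns the $i$-th bit of an $m$-bit dyadic approximation to $\theta_y$, and on input $(x, i)$ it returns the $i$-th bit of an $m$-bit approximation to $\phi_x$. Each layer $k$ of the preparation circuit then does the following: query $A$ in parallel on the current prefix register $\ket{y}$ to write $\ket{y}\ket{\tilde\theta_y}$ into an ancillary register of $m$ qubits; apply a controlled rotation on qubit $k+1$ implemented as the standard product of single-qubit rotations by $2^{-j}\pi$ controlled on the $j$-th bit of $\ket{\tilde\theta_y}$; uncompute the oracle query to restore $\ket{y}\ket{0^m}$. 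A final ``phase layer'' does the same with the phase oracle to install the $e^{i\phi_x}$. Each layer uses $\mathrm{poly}(n, \log(1/\epsilon))$ standard gates plus $O(m)$ queries to $A$, so the total gate count is $n^{O(1)}$ when $\epsilon = 2^{-n}$.

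The remaining task is error analysis, which is routine: the $k$-th layer's imperfect rotation differs from the ideal one in operator norm by $O(\delta)$, and by the triangle inequality for the trace distance (together with the contraction property of unitaries and superoperators), the accumulated error over $n+1$ layers is $O(n\delta) \leq \epsilon$. The key point is that uncomputation keeps the ancillas clean, so we get a bound on $\mathcal{C}_\epsilon$ rather than merely $\mathcal{C}^*_\epsilon$.

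There is no real conceptual obstacle here—in particular, nothing like the Unitary Synthesis Problem arises, because we only have to reproduce one fixed output state from one fixed input $\ket{0}^{\otimes m}$, so the oracle can be tailor-made for that single trajectory. The only thing to be slightly careful about is that $A$ has to be a function on $\{0,1\}^*$ rather than on $n$-bit inputs, since there are $2^n$ prefixes $x$ needing phase information; the proposition explicitly allows this, so we are done.
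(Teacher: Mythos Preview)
Your proposal is correct and follows essentially the same approach as the paper: the recursive prefix-by-prefix preparation (Grover--Rudolph style), with the oracle encoding the conditional amplitudes (your $\theta_y$, the paper's $\beta_w$) and the final phases (your $\phi_x$, the paper's $\gamma_x$). Your version is more explicit about precision, uncomputation of the angle register, and the error accumulation, whereas the paper's proof is a brief sketch that leaves those details implicit.
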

 \begin{proof}
 Suppose the state is
 $$\ket{\psi} = \Sigma_{x \in \{0,1\}^n} \alpha_x \ket{x}.$$
 For initial prefixes of $x$ of length $w$,  $|w| <n$, define $\beta_w:= \sqrt{\Sigma_{y} |\beta_{w y }|^2}$.
Suppose we knew $\beta_0$  and $\beta_1$, square roots of probabilities that the first qubit is either 0 or 1. Then we could prepare  $\beta_0 \ket{0} + \beta_1 \ket{1}$.

  Next, recurse: conditioned on the first qubit being $\ket{0}$, we can put the second qubit in an appropriate state. Continue in this way.  Map $\beta_0 \ket{0}+\beta_1\ket{1}$ to
 \[
 \beta_0 \ket{0} \left(\frac{\beta_{00}}{\beta_0} \ket{0} + \frac{\beta_{01}}{\beta_0} \ket{1}\right)+ \beta_1 \ket{1} \left(\frac{\beta_{01}}{\beta_0} \ket{0} + \frac{\beta_{11}}{\beta_1} \ket{1}\right),
 \]
and so on.  Then, as a last step, apply phases: letting $\gamma_x := \alpha_x / \| \alpha_x \|$, map each basis state $\ket{x}$ with $\alpha_x \neq 0$ to $\gamma_x \ket{x}$.

 All we need now is an oracle $A$ that encodes all of the $\beta_x$'s and $\gamma_x$'s.
  \end{proof}

Proposition \ref{canmake} illustrates the difference between the complexity of states and the complexity of unitaries.  Some people assert that the two must be equivalent to each other, because of the so-called {\em Choi-Jamiolkowski isomorphism}, according to which the maximally-entangled state

\[
\ket{\psi_U} := \frac{1}{\sqrt{N}} \sum_{i=1}^N \ket{i} U \ket{i}
\]
encodes {\em all} the information about the unitary transformation $U$.  However, the issue is that there might be easier ways to prepare $\ket{\psi_U}$, other than by first preparing $\ket{\psi_I}$ and then applying $U$ to the second register.  To illustrate this, let $\sigma:\{0,1\}^n\rightarrow \{0,1\}^n$ be a {\em one-way permutation}---that is, a permutation that's easy to compute but difficult to invert---and consider the state
$$ \ket{\phi} = \frac{1}{\sqrt{2^n}} \sum_{x\in \{0,1\}^n} \ket{x} \ket{\sigma(x)}. $$
This state is easy to make if we can first prepare $\ket{x}$ in the first register and then, conditioned on that, prepare $\ket{\sigma(x)}$ in the second register.  By contrast, suppose we needed to start with $\sum_x \ket{x}\ket{x}$, and then map that to $\ket{\phi}$ by applying a unitary transformation to the second register only (with no access to the first register).  To do that, we'd need the ability to pass reversibly between $\ket{x}$ and $\ket{\sigma(x)}$, and therefore to invert $\sigma$.  (See also Lecture 7, where this distinction shows up in the context of AdS/CFT and wormholes.)

As a further observation, given any $n$-qubit unitary transformation $U$, there's an oracle $A$, relative to which a polynomial-time quantum algorithm can apply an $n$-qubit unitary $V$ that agrees with $U$ on any $n^{O(1)}$ rows. (The proof of this is left as an exercise.)

An interesting open problem is whether Proposition \ref{canmake} is tight in terms of query complexity:

\begin{question}
For every $n$-qubit state $\ket{\psi}$, does there exist an oracle $A$ such that $\ket{\psi}$ can be prepared by a polynomial-time quantum algorithm that makes $o(n)$ queries to $A$?  What about $1$ query to $A$?
\end{question}

Proposition \ref{canmake} has the following corollary.

\begin{corollary}
\label{makecor}
If $\mathsf{BQP}=\mathsf{P^{\#P}}$, then $\sizee(\ket{\psi}) \leq n^{O(1)}$ for essentially all the explicit states $\ket{\psi}$ we'll talk about.
\end{corollary}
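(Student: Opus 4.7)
The plan is to combine Proposition \ref{canmake} with the hypothesis $\mathsf{BQP} = \mathsf{P^{\#P}}$. Proposition \ref{canmake} already gives us, for every $n$-qubit state $\ket{\psi}$, a polynomial-size quantum circuit with oracle gates that prepares $\ket{\psi}$ to inverse-exponential precision, where the oracle $A$ is essentially a look-up table for the conditional subtree amplitudes $\beta_{w0}/\beta_w$ and $\beta_{w1}/\beta_w$ together with the phases $\gamma_x = \alpha_x/|\alpha_x|$ of the target state. So the whole task reduces to showing that, for ``explicit'' $\ket{\psi}$, this oracle $A$ lies in $\mathsf{P^{\#P}}$; once that is done, the hypothesis $\mathsf{BQP} = \mathsf{P^{\#P}}$ lets us replace each oracle call by a polynomial-size quantum subroutine, collapsing the $A$-oracle circuit of Proposition \ref{canmake} into an oracle-free quantum circuit of size $n^{O(1)}$.

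First I would fix a concrete interpretation of ``explicit'': a state whose amplitudes $\alpha_x$ have real and imaginary parts given by a $\mathsf{GapP}$ function of $x$, scaled by a fixed power of two. This is the standard notion and covers essentially all the specific states we have discussed. Under this interpretation, $|\alpha_x|^2 = \mathrm{Re}(\alpha_x)^2 + \mathrm{Im}(\alpha_x)^2$ is a $\mathsf{GapP}$ function of $x$, so the subtree masses
\[
|\beta_w|^2 \;=\; \sum_{z \text{ with prefix } w} |\alpha_z|^2
\]
lie in $\mathsf{\#P}$ up to a fixed polynomial-size normalization. A $\mathsf{P^{\#P}}$ machine can then extract the ratios $\beta_{wb}/\beta_w$, their square roots, and the unit-modulus phases $\gamma_x$ to $2^{-\poly(n)}$ precision by the usual bit-by-bit binary search, so every bit of the oracle $A$ is a $\mathsf{P^{\#P}}$ predicate.

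Next I invoke the hypothesis: since $\mathsf{BQP} = \mathsf{P^{\#P}}$, each such bit of $A$ is decidable with negligible error by a uniform polynomial-size quantum circuit. Substituting these $\mathsf{BQP}$ subroutines for the oracle gates in the circuit of Proposition \ref{canmake} produces a $\poly(n)$-size oracle-free quantum circuit that prepares $\ket{\psi}$ to inverse-exponential precision, which is exactly the statement $\sizee(\ket{\psi}) \leq n^{O(1)}$.

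The main obstacle, in my view, is not conceptual but bookkeeping. One must (i) verify that the precision with which each $\beta_{w0}/\beta_w$ and $\gamma_x$ is supplied carries through the $O(n)$ recursive single-qubit rotations and the final diagonal phase step so that the accumulated trace-distance error remains inverse-exponential, and (ii) recall that the language-class equality $\mathsf{BQP} = \mathsf{P^{\#P}}$ transfers bit-by-bit to function computations via standard error reduction, so that the multi-bit oracle $A$ really can be simulated in $\mathsf{BQP}$. Everything else --- the construction of $A$, its membership in $\mathsf{P^{\#P}}$, and the final substitution --- follows immediately from Proposition \ref{canmake} together with the hypothesis.
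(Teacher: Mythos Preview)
Your proposal is correct and follows essentially the same approach as the paper: invoke the recursive construction of Proposition~\ref{canmake}, observe that for states with $\mathsf{\#P}$-computable amplitudes the $\beta_x$'s and $\gamma_x$'s are computable in $\mathsf{P^{\#P}}$, and then use the hypothesis $\mathsf{BQP}=\mathsf{P^{\#P}}$ to replace the oracle. The paper's proof is a single sentence to this effect; you have simply unpacked the bookkeeping (the $\mathsf{GapP}$ interpretation of ``explicit,'' the precision propagation, and the bit-by-bit simulation of $A$) that the paper leaves implicit.
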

\begin{proof}
For any $\ket{\psi}$ whose amplitudes are computable in $\mathsf{\#P}$, we can use the recursive construction of Proposition \ref{canmake} to compute the $\beta_x$'s and $\gamma_x$'s ourselves.
\end{proof}

Corollary \ref{makecor} means that, if we want to prove $\sizee(\ket{\psi}) > n^{O(1)}$ for virtually any interesting state $\ket{\psi}$, then at a minimum, we'll also need to prove $\mathsf{BQP} \neq \mathsf{P^{\#P}}$.  For this reason, it might be easier to prove superpolynomial lower bounds on the complexities of unitaries than on the complexities of states.

%%%%%%%%%%%%%%%%%%%%%%%%%%%%%%%%%%%%%%%%
%% Lecture 4
%%%%%%%%%%%%%%%%%%%%%%%%%%%%%%%%%%%%%%%%
\lecture{Scott Aaronson}{Valentine Kabanets and Antonina Kolokolova}{QSampling States and $\mathsf{QMA}$}

\section{QSampling States}

In theoretical computer science, an {\em efficiently samplable distribution} means a probability distribution over $n$-bit strings that we can sample, either exactly or approximately, by a classical algorithm running in $n^{O(1)}$ time.  (Here we'll assume we're in the setting of nonuniform complexity.)  To give two examples, the uniform distribution over perfect matchings of a bipartite graph $G$, and the uniform distribution over points in a polytope in $\R^n$ (truncated to some finite precision), are both known to be efficiently samplable.

Now let $D=\{p_x\}_x$ be an efficiently samplable distribution. Then we can associate with $D$ the quantum state
\[
\ket{\psi_D} = \sum_{x \in \{0,1\}^n} \sqrt{p_x} \ket{x},
\]
which following Aharonov and Ta-Shma \cite{at}, we call the {\em QSampling state} for $D$.  We can then ask about the quantum circuit complexity of preparing $\ket{\psi_D}$.

We do know that there's a classical polynomial-time algorithm that takes as input a random string $r$, and that outputs a sample $x_r$ from $D$.  Suppose $\|r\| = m = n^{O(1)}$.  Then by uncomputing garbage, it's not hard to give a polynomial-size quantum circuit that prepares a state of the form
\[
\frac{1}{\sqrt{2^m}} \sum_{r \in \{0,1\}^m} \ket{r} \ket{x_r}.
\]
Crucially, though, the above state is not the same as $\ket{\psi_D}$!  The trouble is the $\ket{r}$ register, which acts as garbage---and not only that, but necessarily decohering garbage, for which we might introduce the term {\em nuclear waste} (to distinguish it from garbage unentangled with the $\ket{x_r}$ register).

Aharonov and Ta-Shma \cite{at} gave some nontrivial examples where a QSampling state can be prepared by a polynomial-size quantum circuit, {\em without} the ``nuclear waste.''  For example, they showed that given as input a bipartite graph $G$, one can efficiently prepare (close to) an equal superposition over all the perfect matchings of $G$.  The proof of that
required examining the details of the Markov Chain Monte Carlo algorithm of Jerrum, Sinclair, and Vigoda \cite{jsv}, which is what showed how to {\em sample} a perfect matching.  In other cases, however, we might have distributions that are samplable but not QSamplable (though every QSamplable distribution is samplable).

But why do we even care about the distinction between the two?  Why is getting rid of ``nuclear waste'' important?  Well, consider the following example.

Let $G$ be a graph on $n$ vertices. Let $D_G$ be the uniform distribution over all permutations of vertices of $G$:
\[
\{\sigma(G)\}_{\sigma \in S_n}.
\]
The quantum version of this distribution is:
\[
\ket{\psi_{D_G}} = \frac{1}{\sqrt{n!/|Aut(G)|}} \cdot \sum_{\sigma} \ket{\sigma(G)},
\]
where $\sigma(G)$ represents the result of applying the permutation $\sigma$ to the vertices of $G$, and the sum includes only one $\sigma$ in each orbit.

(If $G$ is rigid---i.e., has no non-trivial automorphisms---then $\ket{\psi_{D_G}}$ will be a superposition over all $n!$ permutations of the $G$.  Otherwise, it will be a superposition over $n!/|Aut(G)|$ permutations, since there will be collisions.)

\begin{claim}
\label{gibqp}
If  $\sizee(\ket{\psi_{D_G}}) \leq n^{O(1)}$ via a uniform polynomial time algorithm, then Graph Isomorphism is in $\mathsf{BQP}$.
\end{claim}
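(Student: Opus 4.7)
The plan is to exploit the fact that the QSampling state $\ket{\psi_{D_G}}$ is a canonical invariant of the isomorphism class of $G$: if $G \cong H$ then $\ket{\psi_{D_G}} = \ket{\psi_{D_H}}$, whereas if $G \not\cong H$ then the two states are supported on disjoint subsets of the computational basis (the orbits of $G$ and $H$ under $S_n$ are disjoint), so $\braket{\psi_{D_G}}{\psi_{D_H}} = 0$. Thus deciding Graph Isomorphism reduces to distinguishing the cases $\ket{\psi_{D_G}} = \ket{\psi_{D_H}}$ versus $\braket{\psi_{D_G}}{\psi_{D_H}} = 0$.

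Given the hypothesis, we have a uniform polynomial-time quantum algorithm that, on input a graph on $n$ vertices, prepares $\ket{\psi_{D_G}}$ (to within, say, $1/\poly(n)$ in trace distance). On input $(G,H)$, I would run this algorithm twice in parallel---once on $G$ and once on $H$---to produce $\ket{\psi_{D_G}} \otimes \ket{\psi_{D_H}}$ on two disjoint registers, and then apply the \emph{SWAP test}: introduce a control qubit in state $\ket{+}$, perform a controlled SWAP on the two registers, Hadamard the control, and measure it. Standard calculation gives acceptance probability $\tfrac{1}{2}(1 + |\braket{\psi_{D_G}}{\psi_{D_H}}|^2)$, so the test accepts with probability $1$ in the isomorphic case and with probability $1/2$ in the non-isomorphic case.

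To turn this into a $\mathsf{BQP}$ algorithm, repeat the whole procedure $\poly(n)$ times independently and take the majority outcome; by a Chernoff bound this distinguishes the two cases with error $2^{-n}$. The errors from the fact that the assumed circuit only prepares $\ket{\psi_{D_G}}$ approximately contribute at most an additive $1/\poly(n)$ per trial (by the contractivity property of trace distance for POVM outcomes stated in Lecture 1), which is absorbed into the amplification.

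The only subtle point---and the place one has to be a little careful---is ensuring that the two copies prepared by the assumed circuit are really close to the \emph{same} state $\ket{\psi_{D_G}}$ (up to global phase) and not merely close to states with the same density matrix; otherwise the SWAP test's acceptance probability drops. This is automatic for the clean preparation $\ket{0}^{\otimes m} \mapsto \ket{\psi_{D_G}} \otimes \ket{0 \cdots 0}$ appearing in the definition of $\sizee(\ket{\psi_{D_G}})$, which is precisely why the distinction between $\sizee$ and $\sizee^*$ (i.e., the absence of ``nuclear waste'') matters here: if the circuit left entangled ancilla garbage correlated with the permutation $\sigma$, the reduced states would look maximally mixed over the orbit and the SWAP test would fail to distinguish isomorphic from non-isomorphic instances.
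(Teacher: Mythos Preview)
Your proof is correct and follows the same core reduction as the paper: both rely on the observation that $\ket{\psi_{D_G}} = \ket{\psi_{D_H}}$ when $G \cong H$ and $\braket{\psi_{D_G}}{\psi_{D_H}} = 0$ otherwise, and then apply an interference test to distinguish equal from orthogonal states.

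The only technical difference is the choice of test. The paper prepares the coherent superposition $\frac{1}{\sqrt{2}}(\ket{0}\ket{\psi_{D_G}} + \ket{1}\ket{\psi_{D_H}})$ via a controlled preparation and measures the control qubit in the $\{\ket{+},\ket{-}\}$ basis; you instead prepare the product $\ket{\psi_{D_G}} \otimes \ket{\psi_{D_H}}$ and apply the SWAP test. Your choice has a mild advantage: the SWAP test depends on $|\braket{\psi_{D_G}}{\psi_{D_H}}|^2$ and is therefore insensitive to any relative global phase introduced by the two preparation circuits, whereas the paper's controlled-preparation test depends on $\mathrm{Re}\braket{\psi_{D_G}}{\psi_{D_H}}$ (a subtlety the paper itself flags in the proof of the subsequent theorem). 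In the present setting all amplitudes are nonnegative reals, so both tests work equally well. Your remarks about why garbage-free preparation is essential are also apt and match the surrounding discussion in the paper.
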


\begin{proof}
Given $n$-vertex graphs $G$ and $H$, we need to decide whether $G \cong H$.  Suppose we can prepare $\ket{\psi_{D_G}}$ and $\ket{\psi_{D_H}}$.  Then we can also prepare
\[
\frac{\ket{0} \ket{\psi_{D_G}} + \ket{1} \ket{\psi_{D_H}}}{\sqrt{2}}.
\]
After doing so, apply a Hadamard gate to the first qubit to switch to the $\ket+, \ket-$ basis, and then measure the first qubit.  As happens over and over in quantum information, we can use this trick to test whether the states $\psi_{D_G}$ and $\psi_{D_H}$ are equal or orthogonal.

In more detail, if $\ket{\psi_{D_G}} = \ket{\psi_{D_H}}$, then we obtain $\ket+\tensor\ket{\psi_{D_G}}$. Therefore we always see $\ket+$. Otherwise, if $\ket{\psi_{D_G}}$ and $\ket{\psi_{D_H}}$ are orthogonal, then we obtain the maximally mixed state in the first qubit.  So in this case, we see a random bit.

If $G \cong H$, then $\ket{\psi_{D_G}}=\ket{\psi_{D_G}}$, and otherwise these states are orthogonal, because they are superpositions over disjoint sets.
\end{proof}

By contrast, note that if all we knew how to prepare was the ``nuclear waste state''

$$\frac{1}{\sqrt{2^n}} \sum_r \ket{r} \ket{x_r},$$

\noindent then the two states could be orthogonal even when the graphs were not isomorphic.  This is why we need QSampling.

Note that, even given Babai's breakthrough $n^{(\log n)^{O(1)}}$ algorithm for Graph Isomorphism \cite{Babai16}, it's still unclear how to prepare the QSampling states for GI in less than exponential time.

\begin{observation}
Suppose $\mathsf{NP} \subseteq \mathsf{BQP}$.  Then $\mathcal{C}_{1/n^{O(1)}} (\ket{\psi_D}\leq n^{O(1)}$ for every samplable distribution $D$.  In other words, there is a weaker assumption than the $\mathsf{BQP} = \mathsf{P^{\#P}}$ of Corollary \ref{makecor}, which would make QSampling states easy to generate to within $1/n^{O(1)}$ precision.
\end{observation}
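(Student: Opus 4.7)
The plan is to reuse the recursive qubit-by-qubit construction from Proposition~\ref{canmake}, but replace the oracle calls that return the magnitudes $\beta_w$ with a $\mathsf{BQP}$ subroutine that approximates them. Concretely, let $S:\{0,1\}^m\to\{0,1\}^n$ with $m=n^{O(1)}$ be the classical sampler for $D$, so that for any prefix $w\in\{0,1\}^{\leq n}$,
\[
p_w \;=\; \Pr_r\bigl[\,S(r) \text{ begins with } w\,\bigr] \;=\; \frac{|\{r : S(r) \text{ begins with } w\}|}{2^m}.
\]
The predicate ``$S(r)$ begins with $w$'' is polynomial-time checkable, so by Stockmeyer's approximate counting theorem, $p_w$ can be approximated to within a multiplicative factor of $1\pm 1/q(n)$ (for any polynomial $q$ we choose) by a machine in $\mathsf{BPP}^{\mathsf{NP}}$. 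Under the hypothesis $\mathsf{NP}\subseteq\mathsf{BQP}$, we have $\mathsf{BPP}^{\mathsf{NP}}\subseteq\mathsf{BPP}^{\mathsf{BQP}}=\mathsf{BQP}$; thus there is a polynomial-time quantum algorithm $\mathcal{A}$ which on input $w$ outputs an approximation $\tilde{p}_w$ with $|\tilde{p}_w-p_w|\leq p_w/q(n)$ except with failure probability at most $2^{-n^c}$ for any desired constant $c$ (using standard amplification).

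Now I would mimic the construction in the proof of Proposition~\ref{canmake}. Starting from $\ket{0}^{\otimes n}$ together with ancilla registers, I process qubits one at a time. Suppose after step $k-1$ the state on the first $k-1$ qubits is (approximately) $\sum_{w\in\{0,1\}^{k-1}} \sqrt{p_w}\,\ket{w}$. To extend to step $k$, I do the following coherently:
\begin{enumerate}
\item[(i)] On an ancilla register, compute $\tilde{p}_{w0}$ and $\tilde{p}_{w}$ using $\mathcal{A}$, obtaining a state of the form $\sum_w \sqrt{p_w}\,\ket{w}\ket{\tilde{p}_{w0},\tilde{p}_{w}}\ket{\text{junk}(w)}$ (the junk comes from $\mathcal{A}$'s internal computation).
\item[(ii)] Conditioned on these classical descriptions, apply a $Y$-rotation of the next qubit by the angle $\tilde{\theta}_w=\arccos\!\sqrt{\tilde{p}_{w0}/\tilde{p}_w}$; by Solovay--Kitaev this rotation can be synthesized to inverse-polynomial precision with polylogarithmic overhead.
\item[(iii)] Uncompute step (i) by running $\mathcal{A}$ in reverse, zeroing out the ancillae.
\end{enumerate}
The key observation making (iii) legal is that $\mathcal{A}$ is a unitary (up to negligible failure) whose classical output is a deterministic function of $w$ with probability $1-2^{-n^c}$; thus the coherent garbage from (i) is indeed cleanly uncomputed up to negligible trace distance. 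At the end of the $n$ rounds, the state on the first $n$ qubits is close to $\ket{\psi_D}=\sum_x\sqrt{p_x}\ket{x}$.

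For the error analysis, I would bound each source of error by a separate inverse polynomial and sum them using the triangle inequality across the $n$ stages: (a) the multiplicative $1\pm 1/q(n)$ Stockmeyer error at level $k$ perturbs the true rotation angle by $O(1/q(n))$, contributing $O(1/q(n))$ in trace distance per step; (b) amplification failure contributes $2^{-n^c}$ per step; (c) Solovay--Kitaev synthesis of the rotation contributes another $1/q(n)$ per step. Choosing $q(n)$ a sufficiently large polynomial and $c$ any constant, the total error after $n$ stages is $O(n/q(n))=1/n^{O(1)}$, as required.

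The main obstacle is really step (iii): making the subroutine $\mathcal{A}$ ``clean enough'' to be run coherently and then uncomputed without leaving decohering nuclear waste. This is where the inclusion $\mathsf{NP}\subseteq\mathsf{BQP}$ is used not merely to decide languages but to implement amplified, essentially deterministic classical functions as unitary subroutines, so that the Bennett-style trick of compute--rotate--uncompute applies. Everything else is essentially the same recursive construction as in Proposition~\ref{canmake}, with the oracle replaced by this in-line approximate counter.
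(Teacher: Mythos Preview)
Your approach is essentially the same as the paper's: invoke Stockmeyer's approximate counting in $\mathsf{BPP}^{\mathsf{NP}}\subseteq\mathsf{BQP}$ to estimate the conditional probabilities $\beta_w$ (equivalently $p_w$), and plug those estimates into the recursive qubit-by-qubit construction of Proposition~\ref{canmake}. The paper's proof is a two-line sketch that stops at ``the $\beta_x$'s are in $\#\mathsf{P}$, so Stockmeyer applies,'' whereas you have spelled out the coherent compute--rotate--uncompute mechanics and the error budget that the paper leaves implicit.
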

\begin{proof}
The approximate counting result of Stockmeyer \cite{DBLP:conf/stoc/Stockmeyer83} says that, in $\mathsf{BPP^{NP}}$, one can approximate any $\#\mathsf{P}$ function to within a small multiplicative error. The $\beta_x$ numbers from the recursive sampling procedure of Proposition \ref{canmake} are in $\#\mathsf{P}$.
\end{proof}

Let's now give a substantial generalization of Claim \ref{gibqp}, due to Aharonov and Ta-Shma \cite{at}.

\begin{theorem}[Aharonov and Ta-Shma \cite{at}]\label{ATS}
Suppose QSampling states can be prepared in quantum polynomial time.  Then $\mathsf{SZK} \subseteq \mathsf{BQP}$.
\end{theorem}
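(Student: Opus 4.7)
The plan is to invoke the Sahai--Vadhan characterization of $\mathsf{SZK}$ by the Statistical Difference problem and combine it with a swap test on QSampling states. Recall that $\mathsf{SZK}$ has a complete promise problem $\mathsf{SD}$: given two polynomial-size classical sampling circuits $C_1, C_2$ inducing distributions $D_1, D_2$ on $\{0,1\}^n$, decide whether $\|D_1 - D_2\|_{TV} \le 1/3$ or $\|D_1 - D_2\|_{TV} \ge 2/3$. So it suffices to place $\mathsf{SD}$ in $\mathsf{BQP}$ under the hypothesis.

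First I would apply the Sahai--Vadhan \emph{polarization lemma}: by replacing $D_i$ with a suitable combination of tensor powers and convex mixtures with the uniform distribution (all still classically samplable in polynomial time), we may assume the gap is amplified to $\|D_1 - D_2\|_{TV} \le 2^{-n}$ (YES case) versus $\|D_1 - D_2\|_{TV} \ge 1 - 2^{-n}$ (NO case). Since the resulting distributions are still samplable by a polynomial-size classical circuit, the hypothesis of the theorem lets us prepare their QSampling states
\[
\ket{\psi_{D_1}} = \sum_x \sqrt{p_x}\ket{x}, \qquad \ket{\psi_{D_2}} = \sum_x \sqrt{q_x}\ket{x},
\]
to, say, inverse-exponential trace-distance accuracy, in quantum polynomial time.

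Next I would run the swap-test trick used in the proof of Claim~\ref{gibqp}: prepare $\frac{1}{\sqrt 2}(\ket 0\ket{\psi_{D_1}} + \ket 1\ket{\psi_{D_2}})$, Hadamard the control qubit, and measure it. The probability of outcome $\ket +$ equals $\tfrac{1}{2}(1 + \mathrm{Re}\,\braket{\psi_{D_1}}{\psi_{D_2}})$, and the inner product is exactly the Bhattacharyya coefficient $\braket{\psi_{D_1}}{\psi_{D_2}} = \sum_x \sqrt{p_x q_x} = F(D_1, D_2)$. Using the standard bounds
\[
1 - F(D_1,D_2) \;\le\; \|D_1 - D_2\|_{TV} \;\le\; \sqrt{1 - F(D_1,D_2)^2},
\]
the polarized YES case gives $F \ge 1 - 2^{-n}$, hence $\Pr[\ket +] \ge 1 - 2^{-n-1}$, while the polarized NO case gives $F \le \sqrt{1-(1-2^{-n})^2} \le 2^{-n/2+1}$, hence $\Pr[\ket +] \le \tfrac{1}{2} + 2^{-n/2+1}$. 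Repeating this test polynomially many times and taking the majority decides which case we are in with overwhelming probability.

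The main obstacle to watch is that QSampling as stated only guarantees preparation of $\ket{\psi_D}$ for distributions \emph{samplable} by a polynomial-time classical algorithm; one must check that the polarized distributions remain samplable by a polynomial-size classical circuit (they are, by design of the polarization lemma), so that the hypothesis applies to them. A secondary issue is error propagation: the preparation error $\varepsilon$ in trace distance changes the swap-test acceptance probability by at most $O(\varepsilon)$ by the operational meaning of trace distance, so choosing $\varepsilon = 2^{-n}$ (still polynomially many gates, in the hypothesis regime) is more than enough to preserve the constant separation between the two cases.
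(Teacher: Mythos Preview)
Your proof is correct and follows essentially the same route as the paper: reduce to Statistical Difference via Sahai--Vadhan, prepare the two QSampling states, run the controlled-superposition/Hadamard test, and bound the acceptance probability using $\braket{\psi_{D_1}}{\psi_{D_2}}=\sum_x\sqrt{p_xq_x}=F(D_1,D_2)$ together with the standard fidelity/total-variation inequalities. The one difference is that you interpose the polarization lemma to push the gap to $(2^{-n},\,1-2^{-n})$; the paper skips this and simply takes the SD constants to be something like $(a,b)=(0.1,0.9)$, for which $1-a>\sqrt{1-b^2}$ already yields a constant separation in $\Pr[\ket{+}]$. Polarization is thus unnecessary here (and note that with $a=1/3,\,b=2/3$ the fidelity bounds actually fail to separate, so your instinct to strengthen the gap is not unreasonable), but it is harmless and keeps the polarized distributions classically samplable as you observe. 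Your error-propagation remark is fine; one small caveat is that the hypothesis ``QSampling states can be prepared in quantum polynomial time'' is typically read as accuracy $1/\mathrm{poly}(n)$ rather than $2^{-n}$, but since the post-polarization gap in $\Pr[\ket{+}]$ is a constant, inverse-polynomial preparation error is already enough.
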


Here $\mathsf{SZK}$ is Statistical Zero-Knowledge: roughly speaking, the class of all languages $L\subseteq \{0,1\}^*$ for which there exists a probabilistic protocol between Arthur and Merlin (a polynomial-time verifier and an all-powerful but untrustworthy prover), whereby Arthur can be convinced that the input is in $L$ without learning anything about the proof.  The standard example of an $\mathsf{SZK}$ protocol is the following protocol for Graph Non-Isomorphism:
\begin{quote}
Suppose Merlin wants to convince Arthur that two graphs, $G$ and $H$, are not isomorphic.  Then Merlin says to Arthur: pick one of the graphs uniformly at random, randomly permute its vertices, send it to me and I will tell you which graph you started with. If Merlin always returns the right answer, then the graphs are almost certainly not isomorphic---because if they were, then Merlin would give the right answer in each such test with probability only 1/2.
\end{quote}

The class of all problems that have a protocol like that one is $\mathsf{SZK}$.

\begin{theorem}[Sahai-Vadhan \cite{DBLP:journals/jacm/SahaiV03}]
\label{svthm}
The following problem, called Statistical Difference, is a complete promise problem for $\mathsf{SZK}$:
Given efficiently samplable distributions $D_1$, $D_2$ and promised that either $\|D_1 -D_2\| \leq a$ or $\|D_1 -D_2\| \geq b$ (where, say, $a=0.1$ and $b=0.9$), decide which.
\end{theorem}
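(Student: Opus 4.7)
The theorem asserts both membership (Statistical Difference $\in \mathsf{SZK}$) and hardness (every $L \in \mathsf{SZK}$ Karp-reduces to Statistical Difference). The plan is to tackle these separately, with a \emph{polarization lemma} as the bridging tool for both directions.

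For membership, I would first prove the polarization lemma: given samplers for $D_1, D_2$ satisfying the promise $\|D_1 - D_2\| \le a$ or $\|D_1 - D_2\| \ge b$ with $b^2 > a$ (comfortably satisfied by $a = 0.1$, $b = 0.9$), one can in polynomial time construct samplers for new distributions $D_1^\star, D_2^\star$ with $\|D_1^\star - D_2^\star\| \le 2^{-n}$ or $\ge 1 - 2^{-n}$, respectively. The construction alternates two elementary operations on pairs of distributions: the $k$-fold product $D_i^{\otimes k}$, which drives large statistical distance toward $1$ exponentially (via a hybrid/maximal-coupling argument), and an XOR-style mixing $D_1^\oplus := \tfrac12 D_1 \times D_1 + \tfrac12 D_2 \times D_2$ versus $D_2^\oplus := \tfrac12 D_1 \times D_2 + \tfrac12 D_2 \times D_1$, which squares small statistical distance (a direct calculation on tensor products). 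The condition $b^2 > a$ is precisely what makes an alternation of these two operations strictly improve both sides of the promise per round. Given the polarized pair, membership follows from the Graph-Non-Isomorphism-style protocol: Arthur secretly flips $b \in \{1,2\}$, sends $x \sim D_b^\star$, and Merlin returns a guess of $b$. In the ``far'' case the prover succeeds with probability $\ge 1 - 2^{-n}$; in the ``close'' case every prover succeeds with probability $\le \tfrac12 + 2^{-n}$. The honest-verifier simulator simply samples $b$ uniformly and $x \sim D_b^\star$ and outputs $(b, x, b)$.

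For hardness, I would take arbitrary $L \in \mathsf{SZK}$ with an honest-verifier statistical zero-knowledge protocol (invoking separately the equivalence of $\mathsf{SZK}$ and $\mathsf{HVSZK}$) and corresponding simulator $S$. Define two samplable distributions from $S(x)$: the ``simulator'' distribution $A(x)$ just outputs $S(x)$, while the ``locally re-randomized'' distribution $B(x)$ runs $S(x)$ up to a uniformly chosen verifier round, then replaces the verifier's next message by a fresh sample from the verifier's honest (efficiently computable) distribution conditioned on the prefix. When $x \in L$, the zero-knowledge guarantee implies $S(x)$ is statistically close to the true interaction, whose conditional verifier messages are already honest by construction; hence $A(x)$ and $B(x)$ are statistically close. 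When $x \notin L$, soundness implies that no setting of the simulator's conditional verifier messages is consistent with honest verifier behavior on accepting transcripts, which one converts into a quantitative lower bound on $\|A(x) - B(x)\|$. The resulting pair has some constant gap depending on the completeness/soundness parameters of the original protocol; applying the polarization lemma once more to the reduction output tightens the gap to the desired $(0.1, 0.9)$.

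The main obstacle is the hardness direction: converting the purely existential ``no good prover exists'' (soundness) statement into a quantitative statistical-distance gap for a Karp reduction is the core technical content of Sahai--Vadhan. It requires a careful round-by-round analysis of the simulator's conditionals and a contrapositive argument showing that if $\|A(x) - B(x)\|$ were small on a NO instance, one could extract from $S$ a cheating prover that breaks soundness. The polarization lemma is also delicate because each individual operation helps one side of the promise while actively hurting the other, so the scheduling of products versus XOR-mixtures must be tuned to track the two parameters simultaneously; the quadratic gain from the mixing step is exactly what dominates the linear loss from products under the condition $b^2 > a$.
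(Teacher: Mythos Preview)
The paper does not actually prove this theorem. It is stated with a citation to Sahai--Vadhan and accompanied only by the remark that the polarization condition $b^2 > a$ is what makes the constants work. The proof environment that appears immediately afterward and is labeled ``Proof of Theorem~\ref{svthm}'' is a mislabeling: its content is the proof of the Aharonov--Ta-Shma theorem (that efficient QSampling implies $\mathsf{SZK}\subseteq\mathsf{BQP}$), which \emph{uses} the Sahai--Vadhan completeness result as a black box and then gives the interference test on $\ket{\psi_{D_1}},\ket{\psi_{D_2}}$.

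Your proposal, by contrast, is a faithful sketch of the actual Sahai--Vadhan argument---polarization via alternating direct products and XOR-mixing, the GNI-style honest-verifier protocol for membership, and the simulator-based reduction for hardness. That is the right proof of the theorem as stated, and your identification of $b^2>a$ as the crossover condition matches the paper's one substantive comment. But it is not what the paper does; the paper simply imports the result.
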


For the proof of Theorem \ref{svthm}, we need $b^2 >a$.  So the ``default constants of theoretical computer science,'' namely $1/3$ and $2/3$, turn out to work, but this is just a lucky coincidence!

\begin{remark}
The Hidden Subgroup Problem, for any group $G$, can be shown to be in $\mathsf{SZK}$.  But $\mathsf{SZK}$ seems more general than $HSP$: an $\mathsf{SZK}$ problem need not have any group-theoretic structure.
\end{remark}

\begin{proof}[Proof of Theorem \ref{svthm}]
Because of \cite{DBLP:journals/jacm/SahaiV03}, it suffices for us to say:
Suppose we could prepare the states
\[
\ket{\psi_{D_1}} = \sum_{x \in \{0,1\}^n} \sqrt{p_x} \ket{x},
\]
and
\[
\ket{\psi_{D_2}} = \sum_{x \in \{0,1\}^n} \sqrt{q_x} \ket{x}.
\]
Then as before, make the state
\[
\frac{\ket0\ket{\psi_{D_1}}+\ket1\ket{\psi_{D_2}}}{\sqrt{2}}
\]
and measure the control qubit in the $\ket+,\ket-$ basis.  The probability to get the outcome $\ket+$ is then
\[
\Pr[\ket+] = \frac{1+\mathrm{Re}\braket{\psi_{D_1}}{\psi_{D_2}}}{2}.
\]

Interestingly, the absolute phases of $\ket{\psi_{D_1}}$ and $\ket{\psi_{D_2}}$ (as determined by the procedures that prepare them starting from $\ket{0}^{\otimes n}$) could actually matter in this test!  But in any case, we can still use the test to determine if the inner product is close to $1/2$ (as it is in the orthogonal case) or far from it.

There's no classical analogue for this interference test. Interference matters: applying a Hadamard gate to the control qubit gives
\[
\ket0\ket{\psi_{D_1}}+\ket1\ket{\psi_{D_2}}+\ket0\ket{\psi_{D_2}}-\ket1\ket{\psi_{D_1}}.
\]
Cancellation happens only if $\ket{\psi_{D_1}}=\ket{\psi_{D_2}}$.

We have
\[
\mathrm{Re}\braket{\psi_{D_1}}{\psi_{D_2}} = \sum_x \sqrt{p_x q_x} = F(D_1,D_2),
\]
where the fidelity $F(D_1,D_2)$ satisfies the inequalities:
\[
1-||D_1 - D_2|| \leq F(D_1,D_2) \leq \sqrt{1-||D_1-D_2||^2}.
\]
Thus, if $||D_1-D_2|| \approx 1$, then $\braket{\psi_{D_1}}{\psi_{D_2}} \approx 0$, while if $||D_1-D_2|| \approx 0$ then $\braket{\psi_{D_1}}{\psi_{D_2}} \approx 1$.
\end{proof}

In 2002, Scott proved that at least in the black-box model, there is no efficient quantum algorithm for $\mathsf{SZK}$ problems.

\begin{theorem}[\cite{aar:col}]\label{A03}
There exists an oracle $A$ such that $\mathsf{SZK}^A \not\subset \mathsf{BQP}^A$.
\end{theorem}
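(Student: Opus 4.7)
The plan is to exhibit an oracle problem that lies in $\mathsf{SZK}^A$ but requires super-polynomially many quantum queries to solve. The natural candidate is the collision problem: given oracle access to a function $f : [N] \to [N]$ promised to be either $1$-to-$1$ or $2$-to-$1$, decide which. Writing inputs in length $n = O(\log N)$, any lower bound of $N^{\Omega(1)}$ quantum queries translates to a super-polynomial-in-$n$ lower bound, hence non-containment in $\mathsf{BQP}^A$.

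First, I would verify that a suitably normalized version of the collision problem lies in $\mathsf{SZK}^A$. By Theorem \ref{svthm} it suffices to exhibit two efficiently samplable distributions $D_0^f, D_1^f$ whose total variation distance is $\ge b$ in one case of the promise and $\le a$ in the other, with $b^2 > a$. A natural pair is $D_0^f$ = uniform on $[N]$ and $D_1^f$ = the pushforward $f(U_{[N]})$: if $f$ is $1$-to-$1$ then $D_0 = D_1$, while if $f$ is $2$-to-$1$ then $D_1$ is uniform on an $N/2$-subset, giving $\|D_0 - D_1\| = 1/2$. To hit the Sahai--Vadhan gap I would apply a standard polarization/amplification lemma to push the distance to $\le 1/10$ versus $\ge 9/10$. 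Both distributions are samplable in classical polynomial time with a single oracle call, so the resulting problem is in $\mathsf{SZK}^f$ for every oracle instantiation of $f$.

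Second, and this is the crux, I would invoke the quantum query lower bound of $\Omega(N^{1/3})$ for the collision problem (Aaronson; improved from $\Omega(N^{1/5})$ by Shi). The key tool is the polynomial method: for any $T$-query quantum algorithm, the acceptance probability on an input $f$ is a polynomial in the variables $(f(x))_x$ of degree $\le 2T$. Symmetrizing this polynomial by averaging over all permutations of $[N]$ acting on both the domain and range of $f$ reduces the analysis, on the uniform distribution over inputs of each promise type, to a univariate polynomial $p(k)$ of degree $\le 2T$ in the number of collision pairs $k$. A Markov-type inequality for polynomials that jump from $\le 1/3$ to $\ge 2/3$ over the relevant range of $k$ then forces $T = \Omega(N^{1/3})$. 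I would sketch this argument rather than redo the full calculation.

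Third, to upgrade the query lower bound to an oracle separation, I would apply the standard diagonalization: enumerate all pairs (polynomial-time $\mathsf{BQP}$ oracle machine $M_i$, input length $n_i$), and in disjoint blocks of the oracle $A$ plant collision-problem instances large enough that $M_i$'s query budget falls below the $\Omega(N^{1/3})$ threshold on its block, forcing $M_i$ to err on some input of the diagonal language $L \in \mathsf{SZK}^A$. The $\mathsf{SZK}$ protocol for $L$ continues to work because the verifier only needs to sample from $D_0^f, D_1^f$, which it can do with a single oracle call regardless of block size.

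The main obstacle, by a wide margin, is step two: the polynomial-method lower bound for the collision problem. The $\mathsf{SZK}$ upper bound reduces to checking Sahai--Vadhan's completeness theorem, and the diagonalization is templatic. The delicate part of the lower bound is the symmetrization step, which requires showing that \emph{any} $T$-query quantum algorithm's acceptance probability on the symmetric distribution over inputs with exactly $k$ collisions is a low-degree polynomial in $k$ alone; this is what lets one import classical approximation-theory bounds to rule out small $T$.
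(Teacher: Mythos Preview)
Your proposal is correct and follows essentially the same route as the paper: show the Collision Problem is in $\mathsf{SZK}$ relative to the oracle, invoke the $\Omega(N^{1/5})$ (or Shi's $\Omega(N^{1/3})$) quantum query lower bound, and diagonalize. The only cosmetic difference is that the paper's primary argument for $\mathsf{SZK}$ membership is a direct Graph-Non-Isomorphism-style protocol (Arthur sends $f(x)$ for random $x$ and asks Merlin for $x$), with the Statistical Difference reduction mentioned as an alternative; your reduction via $D_0 = U_{[N]}$ versus $D_1 = f(U_{[N]})$ is that alternative, and since the resulting gap is $0$ versus $1/2$ (so $b^2 > a$ already), you do not even need the polarization step.
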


This strengthened the result of Bennett et al.:
\begin{theorem}[\cite{DBLP:journals/siamcomp/BennettBBV97}]
There exists an oracle $A$ such that $\mathsf{NP}^A \not\subset \mathsf{BQP}^A$.
\end{theorem}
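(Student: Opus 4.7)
The plan is to combine a standard diagonalization argument with the quantum query lower bound of Bennett, Bernstein, Brassard, and Vazirani (BBBV) for unstructured search. First I would fix a candidate $\mathsf{NP}^A$ language: for any oracle $A$, let
\[
L_A = \{1^n : \exists x \in \{0,1\}^n \text{ with } A(x) = 1\}.
\]
Clearly $L_A \in \mathsf{NP}^A$, since on input $1^n$ a polynomial-time verifier accepts a witness $x \in \{0,1\}^n$ by querying $A(x)$. The whole task is to construct $A$ so that $L_A \notin \mathsf{BQP}^A$.

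The key technical input is the BBBV theorem: any quantum algorithm that makes $T$ queries to a Boolean oracle $f : \{0,1\}^n \to \{0,1\}$ and decides with probability at least $2/3$ whether $f \equiv 0$ or $f = \mathbf{1}_y$ for a uniformly random $y \in \{0,1\}^n$ must satisfy $T = \Omega(2^{n/2})$. The proof of this is itself where the real work lives: a hybrid argument shows that the total amplitude ``touching'' any fixed $y$ across the $T$ queries is small on average, so the trace distance between the final algorithm states under $f \equiv 0$ and under $f = \mathbf{1}_y$, averaged over $y$, is $O(T/\sqrt{2^n})$. I would simply invoke this lemma rather than reprove it, since grinding through the hybrid bound is the bulk of the BBBV paper and is orthogonal to the diagonalization.

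Given BBBV, I construct $A$ by diagonalization. Enumerate all polynomial-time uniform quantum oracle machines $M_1, M_2, \ldots$ together with their polynomial time bounds $p_1, p_2, \ldots$, and pick a rapidly growing sequence of input lengths $n_1 < n_2 < \cdots$ with $p_i(n_i) \ll 2^{n_i/2}$. Proceed in stages: at stage $i$, having already fixed $A$ on all strings of length less than $n_i$, consider the two ways of extending $A$ to length $n_i$: either set $A$ identically $0$ on $\{0,1\}^{n_i}$ (so $1^{n_i} \notin L_A$), or plant a single $1$ at some $y \in \{0,1\}^{n_i}$ (so $1^{n_i} \in L_A$). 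By BBBV applied on length $n_i$, for a uniformly random $y$ the machine $M_i^A$ on input $1^{n_i}$ cannot distinguish the two scenarios with advantage better than $O(p_i(n_i)/2^{n_i/2}) = o(1)$; hence there exists a specific $y \in \{0,1\}^{n_i}$ on which $M_i^A$ gives the wrong answer with probability greater than $1/3$. Fix the restriction of $A$ to length $n_i$ accordingly, leave $A$ to be set later on all longer strings (say, to $0$ unless needed in a future stage), and move on.

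The main obstacle is really just the BBBV lower bound itself; once that is granted, the diagonalization is clean. One minor subtlety is that $M_i$ might query $A$ on strings of lengths other than $n_i$, but shorter-length strings are already committed before stage $i$, and longer-length strings can be reserved and then defaulted to $0$ once stage $i$ is complete, so such queries do not help distinguish the two candidate extensions on length $n_i$. A second subtlety is the usual one in oracle diagonalizations: we must verify that the commitments made in stage $i$ do not conflict with later stages, which is handled by the gap $n_{i+1} \gg n_i$ together with the fact that $M_i$ makes only $p_i(n_i)$ queries. The resulting $A$ satisfies $L_A \in \mathsf{NP}^A$ but $L_A \notin \mathsf{BQP}^A$, as required.
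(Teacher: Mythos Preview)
The paper does not actually prove this theorem; it merely cites it as background for the stronger oracle separation $\mathsf{SZK}^A \not\subset \mathsf{BQP}^A$. Your proposal is correct and is precisely the standard BBBV argument: the $\Omega(2^{n/2})$ quantum query lower bound for unstructured search combined with a routine diagonalization over polynomial-time quantum oracle machines, which is exactly how the cited paper of Bennett, Bernstein, Brassard, and Vazirani establishes the result.
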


Theorem~\ref{A03} was a corollary of a quantum lower bound for the following \textbf{Collision Problem}:
\begin{quote}
Given oracle access to a function  $f\colon \{1,\dots, N\} \to \{1,\dots, N\}$,  where $N$ is even and a promise that either $f$ is 1-to-1 or 2-to-1, decide which. (This is a generalized version of Simon's problem, but without any group structure.)
\end{quote}

It's easy to design a quantum algorithm that ``almost'' solves the collision problem with only $1$ query.  To do so, we prepare the state
\[
\frac{1}{\sqrt{N}} \sum_{x=1}^N \ket{x} \ket{f(x)}.
\]
Just like in Simon's algorithm, if we measure the $\ket{f(x)}$ register in the above state, then we're left with a state in the $\ket{x}$ register of the form

$$\frac{\ket{x} + \ket{y}}{\sqrt{2}},$$

\noindent for some $(x,y)$ such that $f(x)=f(y)$.  If we could just measure the above state twice, to get both $x$ and $y$, we'd be done!  The problem, of course, is that we can't measure twice.  We could repeat the algorithm from the beginning, but if we did, we'd almost certainly get a different $(x,y)$ pair.  In Simon's algorithm, we were able to do an interference experiment that revealed a little bit of collective information about $x$ and $y$---just enough so that, with sufficient repetitions, we could put together the partial information to get what we wanted.

The result of Aaronson \cite{aar:col} showed that the same is not possible for the Collision Problem:

\begin{theorem}[\cite{aar:col}]
\label{colthm}
Any quantum algorithm for the Collision Problem needs $\Omega(N^{1/5})$ queries.
\end{theorem}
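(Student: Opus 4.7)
The plan is to use the polynomial method of Beals--Buhrman--Cleve--Mosca--de~Wolf, combined with a symmetrization argument tailored to the collision problem, and then to extract a degree lower bound from a Markov-style polynomial inequality for polynomials bounded on an integer lattice.

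First, every $T$-query quantum algorithm with oracle $f:[N]\to[N]$ has acceptance probability that can be written as a multilinear polynomial of degree at most $2T$ in the $N^2$ Boolean indicator variables $\chi_{x,y}=[f(x)=y]$. This is the standard step that converts a query complexity question into a question about approximating polynomials.

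Second, I would symmetrize. Let $\mathcal{D}_g$ be the distribution over $f:[N]\to[N]$ that is $1$-to-$1$ on a uniformly random subset of $N-2g$ domain elements and $2$-to-$1$ on the remaining $2g$ elements, with range values chosen uniformly. Averaging the acceptance polynomial over $\mathcal{D}_g$, and also over random permutations of the domain and range, collapses the multilinear polynomial in the $\chi_{x,y}$'s to a univariate polynomial $P(g)$ in the single integer parameter $g$. Because symmetrization of a multilinear polynomial preserves degree, $\deg(P) \leq 2T$. Correctness of the algorithm then supplies the constraints $P(0)\leq 1/3$ (the 1-to-1 case), $P(N/2)\geq 2/3$ (the 2-to-1 case), and $0\leq P(g)\leq 1$ for every integer $g\in\{0,1,\ldots,N/2\}$, since each value is an average of acceptance probabilities on inputs that satisfy the promise.

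Third and most delicate: I would deduce $\deg(P)=\Omega(N^{1/5})$ from these constraints, which yields $T=\Omega(N^{1/5})$. The plain Markov inequality on $[0,N/2]$ is not enough, because the boundedness hypothesis holds only at integer points and a polynomial of degree $d$ can behave wildly between lattice points. Instead I would invoke a Coppersmith--Rivlin-type inequality: if a degree-$d$ polynomial is bounded by $1$ on $\{0,1,\ldots,M\}$ and $d \ll \sqrt{M}$, then $|P(x)|$ is bounded by an absolute constant on the whole interval $[0,M]$. Combining this uniform bound with Markov's inequality applied to $P'$ on a suitably chosen sub-interval, and using the ``jump'' $|P(N/2)-P(0)|\geq 1/3$, produces a polynomial inequality relating $d$ and $N$.

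The main obstacle is extracting the exponent $1/5$ from this inequality, rather than the weaker $N^{1/4}$ or $N^{1/6}$ that naive parameterizations would give. The exponent emerges from balancing two regimes: $d$ must be small enough for the Coppersmith--Rivlin bound to apply on a sub-interval of length $L$, yet large enough that Markov's inequality on that sub-interval is compatible with the required $\Theta(1)$ variation of $P$ across $[0,N/2]$; optimizing $L$ against $d$ produces the fifth-root scaling. Tightening this to the optimal $N^{1/3}$ (as in Shi's later work) appears to require either a sharper polynomial inequality specifically adapted to monotone structure or a switch to the adversary method, and is beyond what the bare polynomial method as described here can deliver.
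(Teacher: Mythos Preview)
The paper does not include a proof of this theorem; it merely cites the result from \cite{aar:col} and notes Shi's subsequent improvement to $\Omega(N^{1/3})$. So there is no in-paper argument to compare against, and your proposal must stand on its own.

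It does not. The third step has a genuine gap. With your parametrization---$g$ counting the number of $2$-to-$1$ pairs, so the input is $1$-to-$1$ on $N-2g$ points and $2$-to-$1$ on $2g$ points---the constraints you extract are
\[
P(0)\le \tfrac{1}{3},\qquad P(N/2)\ge \tfrac{2}{3},\qquad 0\le P(g)\le 1\ \text{for all integers } g\in\{0,\dots,N/2\}.
\]
But the linear polynomial $P(g)=\tfrac{1}{3}+\tfrac{2g}{3(N/2)}$ already satisfies every one of these. No Markov or Coppersmith--Rivlin inequality can produce a nontrivial degree bound from constraints that a degree-$1$ polynomial meets: you are only asking $P$ to move by $\Theta(1)$ across an interval of length $\Theta(N)$, an average slope of $\Theta(1/N)$.

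What the actual proof in \cite{aar:col} does differently is pick a parametrization in which the polynomial is forced to jump by $\Omega(1)$ between \emph{consecutive} integer arguments. One works with $g$-to-$1$ functions, so that correctness separates $g=1$ from $g=2$, while boundedness $0\le P(g)\le 1$ holds for $g$ ranging up to something growing with $N$. Since a $g$-to-$1$ function on $[N]$ requires $g\mid N$, a second parameter is introduced to get enough lattice points, yielding a \emph{bivariate} polynomial; applying Markov-type bounds in each variable separately is exactly where the exponent $1/5$ (rather than $1/3$) comes from. Your ``number of collision pairs'' symmetrization is natural but dissolves precisely the adjacent-integer tension that drives the lower bound.
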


This was improved by Yaoyun Shi \cite{shi} to $\Omega(N^{1/3})$, which is tight.  (Classically, $\Theta(\sqrt{N})$ queries are necessary and sufficient.)

Now, observe that the Collision Problem is in $\mathsf{SZK}$, via a protocol similar to that for Graph Non-Isomorphism:
\begin{quote}
Arthur chooses a uniformly random $x$, then sends $f(x)$ to Merlin and asks what $x$ was.  If $f$ is 1-to-1, then Merlin can always return the correct answer.
If, on the other hand, $f$ is 2-to-1, then Merlin can answer correctly with probability at most $1/2$.  Furthermore, in the honest case (the case that $f$ is 1-to-1), Arthur gains zero information from Merlin's reply.
\end{quote}

Another way to see that the Collision Problem is in $\mathsf{SZK}$ is to observe that it's easily reducible to the Statistical Difference problem (we leave this as an exercise for the reader).

Now, by Aharonov and Ta-Shma's result, the fact that there's an oracle separation between $\mathsf{SZK}$ and $\mathsf{BQP}$ implies that {\em there also exists an oracle $A$ relative to which there's a samplable distribution $D$ that's not QSamplable.}  Indeed, one can get that relative to such an oracle, $\sizee^A(\ket{\psi_D})>n^{O(1)}$.

\section{Quantum Witness States}

Our second example of an interesting class of states is {\em quantum witnesses}. These are states that, when given to you, you can use to verify some mathematical statement---e.g., that an input $x$ belongs to a language $L$---efficiently.

Recall that $\mathsf{MA}$ (Merlin-Arthur) is the probabilistic generalization of $\mathsf{NP}$: that is, it's the class of languages $L$ for which there exists
a probabilistic polynomial-time algorithm $A$ such that, for all inputs $x\in L$, there's a polynomial-size witness that causes $A$ to accept $x$ with probability at least $2/3$, while for all $x\not\in L$, no witness causes $A$ to accept $x$ with probability greater than $1/3$.  We imagine an omniscient but untrustworthy wizard, Merlin, trying to convince a skeptical probabilistic polynomial-time verifier, Arthur, of the statement $x\in L$. Unlike an oracle, Merlin can lie, so his answers
need to be checked.

We now define $\mathsf{QMA}$, the quantum generalization of $\mathsf{MA}$, which will lead us to the concept of quantum witnesses.

\begin{definition}
$\mathsf{QMA}$ (Quantum Merlin-Arthur) is the class of languages $L \subseteq \{0,1\}^n$ for which there exists a polynomial-time quantum verifier $V$ and polynomials $r,q$ such that for all $x$,
\begin{align*}
x \in L & \Rightarrow \quad \exists \ket{\phi_x} \quad  \Pr[V(x,\ket{\phi_x},\ket{00\dots 0}) \text{ accepts }]    \geq 2/3 \\
x \notin L & \Rightarrow \quad \forall \ket{\phi}  \quad \; \ \Pr[V(x,\ket{\phi},\ket{00\dots 0}) \text{ accepts }]    \leq  1/3
\end{align*}
Here, $\ket{\phi_x}$ is a state on $p(n)$ qubits, where $n=|x|$, and there are $q(n)$ ancilla qubits $\ket{00\dots 0}$.
\end{definition}

Note that, if some mixed state $\rho$ causes Arthur to accept, then by convexity, at least one pure state in the mixture makes him accept with at least as great a probability.  So we can assume without loss of generality that Merlin's witness is pure.

An obvious question is, can the completeness and soundness probabilities $2/3$ and $1/3$ in the definition of $\mathsf{QMA}$ be amplified (as they can for
$\mathsf{BPP}$ and other probabilistic complexity classes)?  The difficulty is that Arthur can't reuse his witness state $\ket{\phi_x}$, since it will quickly become corrupted.  However, what Arthur can do instead is to ask Merlin for $k$ witness states, check them independently, and accept if at least (say) $60\%$ cause him to accept.  If $x \in L$, then Merlin can simply give Arthur the same state $k$ times, which will cause Arthur to accept with probability at least $1-2^{-\Omega(k)}$.  If, on the other hand, $x \notin L$, any state of the form $\ket{\phi_1}\otimes \cdots \otimes \ket{\phi_k}$ will cause Arthur to accept with probability at most $2^{-\Omega(k)}$. One might worry that Merlin could entangle the $k$ witness states. If he does, however, that will simply make some of the witness states mixed. Merlin would do as well or better by putting his ``best'' state, the state that maximizes Arthur's acceptance probability, in all $k$ of the registers---and as we've seen, that state is pure without loss of generality.

In 2005, Marriott and Watrous \cite{DBLP:journals/cc/MarriottW05} proved a much more nontrivial result, with numerous applications to $\mathsf{QMA}$. We'll simply state their result without proof.

\begin{theorem}[Marriott and Watrous \cite{DBLP:journals/cc/MarriottW05}]
There exists a way to do {\em in-place amplification} for $\mathsf{QMA}$: that is, to replace the $(2/3,1/3)$ probability gap by $(1-2^{-k},2^{-k})$ for any desired $k$, by increasing Arthur's running time by an $O(k)$ factor, and crucially, without increasing the size of the witness state $\ket{\phi_x}$ at all.
\end{theorem}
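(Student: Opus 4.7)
The plan is to recast Arthur's verification procedure as two projectors on the combined witness-plus-ancilla space, then use Jordan's lemma to reduce in-place amplification to analyzing a classical random walk whose transition probabilities are controlled by the original QMA acceptance probability. Concretely, fix the verifier $V$, a designated output qubit, and let the witness register have $p(n)$ qubits and the ancilla $q(n)$ qubits. On the combined space, define
\[
\Pi_A = I_{\mathrm{wit}} \otimes \proj{0^{q(n)}}, \qquad \Pi_B = V^{\dagger}\bigl(\proj{1}_{\mathrm{out}} \otimes I\bigr)V.
\]
For any witness $\ket{\phi}$, the original acceptance probability equals $\|\Pi_B\Pi_A(\ket{\phi}\otimes\ket{0^{q(n)}})\|^2$, and the maximum over $\ket{\phi}$ equals $\|\Pi_B\Pi_A\|^2 = \|\Pi_A\Pi_B\Pi_A\|$.

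Next I would invoke Jordan's lemma: since $\Pi_A$ and $\Pi_B$ are projectors, the Hilbert space decomposes as an orthogonal direct sum of subspaces of dimension $1$ or $2$ that are simultaneously invariant under both. Within each two-dimensional block $S_j$, the projectors correspond to lines meeting at some angle $\theta_j$, and alternating measurements of $\Pi_B$ and $\Pi_A$ generate a two-state Markov chain in which the probability of agreement between consecutive like-type outcomes is $\cos^2\theta_j$. The original acceptance probability of a witness contained in $S_j$ is exactly $\cos^2\theta_j$. The amplified protocol is then the natural one: Arthur receives a single witness $\ket{\phi}$ of the original size, prepares it together with the all-zero ancilla, and performs $m = O(k)$ alternating projective measurements of $\Pi_B$ and $\Pi_A$ (each implementable by running $V$, measuring the output qubit, and uncomputing $V$, respectively by measuring the ancilla in the computational basis and re-initializing). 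Arthur accepts iff the empirical fraction of $\Pi_B$-outcomes returning $1$ exceeds a chosen threshold such as $1/2$.

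For completeness, in the yes case Merlin sends a top eigenvector of $\Pi_A\Pi_B\Pi_A$, which lives in a single Jordan block with $\cos^2\theta_j \geq 2/3$; a Chernoff bound on the resulting biased Markov chain gives acceptance probability $1 - 2^{-\Omega(k)}$. For soundness, observe that in the no case the QMA soundness promise gives $\|\Pi_A\Pi_B\Pi_A\| \leq 1/3$, which by the Jordan decomposition forces every block to satisfy $\cos^2\theta_j \leq 1/3$. Linearity of expectation across blocks (applied to the diagonal block structure that the initial state $\ket{\phi}\otimes\ket{0^{q(n)}}$ induces) then yields that, regardless of the mixed state Merlin sends, the chain in each block has acceptance bias at most $1/3$, so another Chernoff bound drives the rejection probability to $1 - 2^{-\Omega(k)}$.

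The main obstacle is the soundness analysis, because Merlin is allowed to submit an arbitrary superposition across infinitely many Jordan blocks, some of which might have large angle even when the overall operator norm is small; I would handle this by noting that the sequence of measurement outcomes, conditioned on which Jordan block the initial state projects into, is a classical Markov chain, so the overall acceptance probability is a convex combination over blocks of their individual acceptance probabilities, each of which is bounded by the Chernoff estimate for a chain with bias at most $1/3$. The remaining technicalities are verifying that alternating $\Pi_A$- and $\Pi_B$-measurements can be implemented with $O(k)$ invocations of $V$ and $V^\dagger$ without any additional witness qubits, and choosing the threshold and constants so that the completeness and soundness Chernoff windows do not overlap.
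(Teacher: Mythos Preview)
The paper does not prove this theorem; immediately after stating it, the text says ``We'll simply state their result without proof.'' So there is nothing in the paper to compare against. Your outline is, in fact, the standard Marriott--Watrous argument: reduce to two projectors $\Pi_A$ and $\Pi_B$, invoke Jordan's lemma to decompose the space into one- and two-dimensional invariant blocks, alternate the two projective measurements $O(k)$ times, and apply a Chernoff bound blockwise. The convexity observation you make for soundness---that the overall acceptance probability is a mixture over Jordan blocks of the block-level acceptance probabilities---is exactly the right way to handle an arbitrary witness.

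One small correction worth flagging: the acceptance statistic you name (``the empirical fraction of $\Pi_B$-outcomes returning $1$'') is not the one that yields an i.i.d.\ sequence amenable to Chernoff. Within a two-dimensional Jordan block with parameter $p=\cos^2\theta_j$, the raw outcome sequence $y_0,y_1,y_2,\ldots$ is a Markov chain, not i.i.d., and the marginal of any single $\Pi_B$-outcome need not be $p$. The quantity that \emph{is} i.i.d.\ $\mathrm{Bernoulli}(p)$ is the indicator $z_i$ of whether $y_i=y_{i-1}$ (consecutive measurements agree). Arthur should accept when the majority of the $z_i$ are $1$. With that fix, your completeness and soundness analyses go through exactly as you describe.
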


\section{$\mathsf{QMA}$-Complete Promise Problems}

$\mathsf{QMA}$ can be seen as the quantum generalization of $\mathsf{NP}$ (or more precisely, as the quantum generalization of $\mathsf{MA}$). Just like thousands of practical problems, in optimization, constraint satisfaction, etc., were shown to be $\mathsf{NP}$-hard or $\mathsf{NP}$-complete, so today there's a growing list of {\em quantum} problems (albeit, so far merely dozens, not thousands!) that have been shown to be $\mathsf{QMA}$-hard or $\mathsf{QMA}$-complete.

As a technicality, because of the $(2/3,1/3)$ gap in defining $\mathsf{QMA}$, one needs to talk about $\mathsf{QMA}$-complete {\em promise} problems rather than $\mathsf{QMA}$-complete languages---but that's not a big deal in practice.  In fact, Goldrech \cite{goldreich:promise} has argued that complexity classes should have been defined in terms of promise problems in the first place, with languages just a special case.  This point of view is particularly natural in quantum complexity theory, where there are almost always natural complete promise problems, but rarely natural complete languages.

Here's the ``original'' $\mathsf{QMA}$-complete promise problem, discovered by Kitaev (see \cite{KSV02}), and later improved by Kempe, Kitaev, and Regev \cite{kkr} among others.

\begin{quote}
\textbf{Local Hamiltonians Problem} (quantum generalization of MAX-$2$-SAT).

We're given as input $m$ two-outcome POVMs $E_1,\dots, E_m$ acting on at most $2$ qubits each out of $n$ qubits total, as well as parameters $q$ and $\eps \geq 1/n^{O(1)}$.  let
$$A(\ket{\psi}) = \sum_{i=1}^n \Pr[\text{$E_i$ accepts $\ket{\psi}$}],$$
and then let
$$p = \max_{\ket{\psi}} A(\ket{\psi})$$
The problem is to decide whether $p \geq q$ or $p \leq q - \eps$, promised that one of these is the case.

Note that, if we restrict each $E_i$ to measure in the computational basis, and either accept or reject each of the four possibilities $00,01,10,11$ with certainty, then this problem reduces to the $\mathsf{NP}$-complete MAX-$2$-SAT problem.
\end{quote}

It's not hard to see that Local Hamiltonians is in $\mathsf{QMA}$.  The interesting part is the following:

\begin{theorem}[see \cite{KSV02}]
\label{localham}
Local Hamiltonians is $\mathsf{QMA}$-complete.  That is, {\em any} $\mathsf{QMA}$ problem can be efficiently reduced to Local Hamiltonians.
\end{theorem}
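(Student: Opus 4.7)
The plan is to carry out Kitaev's construction, which is a quantum analogue of the Cook–Levin theorem. The easy direction, Local Hamiltonians $\in \mathsf{QMA}$, goes as follows: Merlin sends the purported low-energy state $\ket{\psi}$; Arthur picks a random term $E_i$ with appropriate weighting, measures it, and accepts/rejects so that his acceptance probability encodes $A(\ket{\psi})/m$. Amplification (Marriott–Watrous) boosts the $1/n^{O(1)}$ gap to constant.

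For the hard direction, fix a $\mathsf{QMA}$ verifier for a language $L$ which, on input $x$, applies a circuit $U_1,\dots,U_T$ (each $U_t$ a 1- or 2-qubit gate) to a witness register plus $q(n)$ ancilla qubits initialized to $\ket{0}$, and then measures a designated output qubit. Given $x$, I would build a local Hamiltonian on a ``computation'' register of $n+q(n)$ qubits together with a clock register of $O(\log T)$ qubits (Kitaev's original proof uses a unary clock on $T$ qubits, which keeps the terms local at the cost of one extra ``clock legality'' term $H_{\mathrm{clock}}$). The Hamiltonian is $H = H_{\mathrm{in}} + H_{\mathrm{prop}} + H_{\mathrm{out}} + H_{\mathrm{clock}}$, where $H_{\mathrm{in}}$ penalizes any ancilla qubit that is $\ket{1}$ at clock time $0$; $H_{\mathrm{prop}} = \sum_{t=1}^T H_{\mathrm{prop},t}$ with
\[
H_{\mathrm{prop},t} = \tfrac{1}{2}\bigl(I\otimes\ketbra{t}{t} + I\otimes\ketbra{t-1}{t-1} - U_t\otimes\ketbra{t}{t-1} - U_t^\dagger\otimes\ketbra{t-1}{t}\bigr)
\]
penalizing any deviation from correct propagation between adjacent clock times; and $H_{\mathrm{out}}$ penalizes the output qubit being $\ket{0}$ at clock time $T$. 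Each term is a low-rank positive operator acting on $O(1)$ qubits, and after rescaling by $1/\operatorname{poly}(n)$ it becomes a POVM element $E_i$ as required by the Local Hamiltonians promise problem.

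Completeness is the ``soft'' step: if $x\in L$, take Merlin's optimal witness $\ket{\phi_x}$ and form the history state $\ket{\eta} = \frac{1}{\sqrt{T+1}}\sum_{t=0}^{T}\ket{\phi_t}\otimes\ket{t}$, where $\ket{\phi_t} = U_t\cdots U_1\,(\ket{\phi_x}\otimes\ket{0\cdots 0})$. A direct check shows $\bra{\eta}H_{\mathrm{in}}\ket{\eta}=\bra{\eta}H_{\mathrm{prop}}\ket{\eta}=\bra{\eta}H_{\mathrm{clock}}\ket{\eta}=0$ and $\bra{\eta}H_{\mathrm{out}}\ket{\eta}\le\tfrac{1}{3(T+1)}$, so $\ket{\eta}$ certifies the ``$p\ge q$'' side.

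The main obstacle is soundness: showing that when $x\notin L$, every state has energy at least $q-\varepsilon$ with $\varepsilon \ge 1/\operatorname{poly}(n)$. Here I would follow Kitaev's approach: work in the subspace of clock-legal states (where $H_{\mathrm{clock}}$ vanishes), observe that $H_{\mathrm{prop}}$ there is unitarily equivalent (via the ``history transformation'' $W$ that conjugates propagation to the identity) to $I\otimes\Delta$, where $\Delta$ is the Laplacian of a path graph on $T+1$ vertices, whose spectral gap is $\Theta(1/T^2)$. Combine this with the fact that in the zero-eigenspace of $H_{\mathrm{in}}+H_{\mathrm{prop}}$ the state must be a valid history of some witness, on which $H_{\mathrm{out}}$ has expectation $\ge (1-s)/(T+1)$ where $s\le 1/3$ is the soundness of the verifier. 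Invoking Kitaev's geometric lemma (for two positive operators $A,B$ with ground spaces making angle $\theta$, $\lambda_{\min}(A+B)\ge 2\min(\lambda_A,\lambda_B)\sin^2(\theta/2)$) then yields an overall promise gap of $\Omega(1/T^3)$, which is $1/\operatorname{poly}(n)$ as required. Finally, to land in the $2$-local version stated above, I would cite the Kempe–Kitaev–Regev perturbation-gadget reduction from $k$-local to $2$-local Hamiltonians, which preserves the promise gap up to a polynomial factor.
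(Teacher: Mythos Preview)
The paper does not actually prove this theorem: immediately after stating it, the text says ``We won't prove Theorem~\ref{localham} here'' and defers to \cite{KSV02}, giving only the one-sentence intuition that ground states of local Hamiltonians can encode arbitrary quantum witnesses. So there is no ``paper's proof'' to compare against.

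Your sketch is the standard Kitaev argument and is correct in outline: the circuit-to-Hamiltonian construction with $H_{\mathrm{in}}+H_{\mathrm{prop}}+H_{\mathrm{out}}+H_{\mathrm{clock}}$, the history state for completeness, and the combination of the path-Laplacian spectral gap with Kitaev's geometric lemma for an $\Omega(1/T^3)$ soundness gap. Citing the Kempe--Kitaev--Regev gadgets to bring locality down to $2$ matches the version of the problem stated in the paper. One small point worth tightening if you write it out fully: with a unary clock the propagation terms act on the two gate qubits plus three adjacent clock qubits, so the raw construction is $5$-local; you should say this explicitly before invoking the perturbative reduction to $2$-local, rather than just ``each term acts on $O(1)$ qubits.''
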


We won't prove Theorem \ref{localham} here. Intuitively, however, the result shows that {\em ground states of local Hamiltonians}---or equivalently, solutions to ``local'' quantum constraint satisfaction problems, with constraints involving only $2$ qubits each---can encode as much complexity as can arbitrary quantum witness states.

\section{Group Non-Membership}

We'll now give our first {\em surprising} example of a language in $\mathsf{QMA}$---that is, of a class of mathematical statements that can be efficiently verified using quantum witnesses, but possibly not using classical witnesses.

Our example will involve the concept of a {\em black-box group}, which was introduced (like so much else in this area) by Babai.  A black-box group is simply
a finite group $G$, typically of order $\exp(n^{O(1)})$, for which we know polynomial-size strings labeling the generators $\langle g_1, \dots, g_k \rangle$, and also have access to a black box that performs the group operations for us: that is, an oracle that returns the labels of group elements $gh$ and $g^{-1}$, given as input the labels of $g$ and $h$.  We also assume that this black box can tell us whether two labels are of the same $G$ element (or equivalently, that it can recognize a label of the identity elements).  In the quantum case, for reasons we'll see, we'll need one additional assumption, which is that every element of $G$ has a {\em unique} label.

The \textbf{Group Membership Problem} is now the following.  In addition to the black-box group $G$, we're also given a subgroup $H \leq G$, via a list of labels of its generators $\langle h_1, \dots, h_l\rangle$---as well as an element $x \in G$.  The problem is to decide whether $x \in H$.

We wish to decide this ``efficiently''---by which we mean, in time polynomial in $\log |G|$.  Recall that the group order $|G|$ is thought of as exponential, and also that every finite group $G$ has a generating set of size at most $\log_2 |G|$.

We can easily produce more concrete examples of this problem, by replacing the black-box groups $G$ and $H$ with explicit groups.  For example: given a list of permutations of an $n$-element set, decide whether they generate a target permutation.  (In the 1960s, Sims \cite{sims} showed how to solve this problem in polynomial time.)  Or: given a list of invertible matrices over a finite field, we can ask whether they generate a given target matrix. (For this problem, Babai, Beals, and Seress \cite{BBS09} gave a polynomial-time algorithm given an oracle for factoring---and therefore a $\mathsf{BQP}$ algorithm---assuming the field is of odd characteristic.)

Let's now consider the structural complexity of the Group Membership Problem.  If $x\in H$, then there's clearly a witness for that fact: namely, a sequence of operations to generate $x$ starting from the generators $\langle h_1, \dots, h_l\rangle$ of $H$.  To make this witness efficient, we might need to introduce a new element for each operation, and feed the new elements as inputs to later operations: for example, $h_3=h_1h_2$, then $h_4=h^2_3h_1^{-1}h_2$, etc.  By a result of \cite{DBLP:conf/focs/BabaiS84}, there always exists such a ``straight-line program,'' of size polynomial in $\log |H|$, to prove that $x\in H$.  Therefore Group Membership is in $\mathsf{NP}$.

OK, but what if $x$ is {\em not} in $H$?  Does Group {\em Non}-Membership (GNM) also have $\mathsf{NP}$ witnesses?  In the black-box setting, the answer is easily shown to be ``no'' (indeed, it's ``no'' even for cyclic groups).  For explicit groups, whether $GNM\in \mathsf{NP}$ is a longstanding open problem, related to the Classification of Finite Simple Groups and the so-called ``Short Presentation Conjecture.''

But even if $GNM\not\in \mathsf{NP}$, that's not the end of the story.  Indeed, Babai observed that $GNM$ has an $\mathsf{AM}$ (Arthur-Merlin) protocol, in which Arthur sends a random challenge to Merlin and then receives a response.

\begin{theorem}
\label{gnmam}$GNM \in \mathsf{AM}$.
\end{theorem}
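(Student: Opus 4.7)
The plan is to give a direct two-round AM protocol based on a \emph{coset-distinguishing} test. Arthur privately flips a coin $b \in \{0,1\}$ and samples a (nearly) uniform random element $g \in H$ using only the generators $h_1,\dots,h_l$; he then sends $z := g \cdot x^{b}$ to Merlin and asks for a bit $b' \in \{0,1\}$. Arthur accepts iff $b' = b$. If one wishes Merlin's answer to be locally checkable, Merlin can additionally enclose a Babai--Szemer\'edi straight-line program over $h_1,\dots,h_l$ certifying that $z \cdot x^{-b'} \in H$; the existence of polynomial-length such programs is classical \cite{DBLP:conf/focs/BabaiS84}.

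The correctness analysis splits into two cases. If $x \notin H$, then $H$ and the coset $Hx$ are disjoint subsets of $G$, so $z \in H$ precisely when $b=0$; an unbounded Merlin tests ``$z \in H$?'' and reports the right bit, giving completeness $1$. If $x \in H$, then $Hx = H$, so the distributions of $z$ conditioned on $b=0$ and on $b=1$ coincide up to a sampling error $\eps$; information-theoretically, no prover can correlate $b'$ with $b$ with probability greater than $\tfrac{1}{2}+\eps$. Standard parallel repetition then amplifies the $(1,\,\tfrac{1}{2}+\eps)$ gap to the required $(\ge 2/3,\,\le 1/3)$ gap for $\mathsf{AM}$.

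The main obstacle, and the only step that goes beyond bookkeeping, is implementing ``sample a nearly-uniform $g \in H$ in time $\polylog|G|$'' given just a list of generators---this is not automatic in a black-box group. The classical resolution is Babai's random-subproduct and random-walk machinery for black-box groups (which is also the place where unique labels on group elements are used essentially): iterating a short sequence of random products over $h_i^{\pm 1}$ drives the total-variation distance from uniform down to $1/\poly(\log|G|)$, which is more than enough to push $\eps$ below any constant after amplification. Once this sampling primitive is in hand, the protocol above is a straightforward two-message exchange, so GNM lies in $\mathsf{AM}$.
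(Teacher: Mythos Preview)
Your protocol is correct and follows the same blueprint as the paper's: Arthur flips a private coin, uses Babai's black-box-group sampler to draw a nearly-uniform element from one of two sets, and challenges Merlin to guess the coin. The one substantive difference is the choice of the second set. You sample from the \emph{coset} $Hx$, whereas the paper samples from the \emph{subgroup} $\langle H,x\rangle$. Your variant is arguably cleaner: when $x\notin H$ the sets $H$ and $Hx$ are disjoint, so an honest Merlin achieves perfect completeness; in the paper's version a uniform element of $\langle H,x\rangle$ still lands in $H$ with probability up to $1/2$, so completeness is only $\ge 3/4$ before amplification. Either way the real work is Babai's nearly-uniform sampler, which you correctly identify as the key ingredient. (One minor aside: that sampler requires only the standard black-box operations together with an identity test, not unique labels---unique encoding is the extra hypothesis the paper invokes for the \emph{quantum} protocol, not for this classical one.)
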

\begin{proof}[Proof Sketch] The protocol involves yet another result of Babai \cite{DBLP:conf/stoc/Babai91}: namely, given any black-box group $G$, there is an algorithm, running in $\log^{O(1)} |G|$ time, that takes a nearly-random walk on $G$, and thus returns a nearly uniformly-random element $g\in G$.

Given this algorithm, Arthur simply needs to flip a coin, if it lands heads then sample a random element from $H$, and if it lands tails then sample a random element from $\langle H,x\rangle$ (that is, the least subgroup containing both $H$ and $x$).  Arthur then sends the element to Merlin and asks him how the coin landed.

If $x\in H$ then $H = \langle H,x\rangle$, so Merlin must guess incorrectly with probability $1/2$.  If, on the other hand, $x\not\in H$, then $\langle H,x\rangle$ has at least twice the order of $H$, so Merlin can guess correctly with probability bounded above $1/2$.
\end{proof}

The above is actually a statistical zero-knowledge protocol, so it yields the stronger result that $GNM\in \mathsf{SZK}$.

On the other hand, one can show (see for example Watrous \cite{DBLP:conf/focs/Watrous00}) that, at least in the black-box setting, $GNM\not\in \mathsf{MA}$: that is, verifying group non-membership requires interaction.

Or it at least, it requires interaction classically!  In 2000, Watrous discovered that Merlin can also use {\em quantum} witnesses to prove group non-membership, without any need for answering random challenges from Arthur.

\begin{theorem}[Watrous \cite{DBLP:conf/focs/Watrous00}]
\label{gnmqma}
$GNM \in \mathsf{QMA}$.
\end{theorem}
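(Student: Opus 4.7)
The plan is to exhibit the quantum witness $\ket{H} := \frac{1}{\sqrt{|H|}}\sum_{h\in H}\ket{h}$, the uniform superposition over the hidden subgroup, and have Arthur perform two tests: an ``$H$-invariance'' test, verifying that the state he received is left-invariant under $H$, and a ``non-invariance under $x$'' test, verifying that $L_x\ket{\phi}$ is distinguishable from $\ket{\phi}$ (where $L_g$ denotes controlled left multiplication on the group register, which Arthur can implement using the black-box oracle together with an ancilla for uncomputation). On input $\ket{H}$, we have $L_h\ket{H}=\ket{H}$ for every $h\in H$ (since $hH=H$), and when $x\notin H$ the set $xH$ is disjoint from $H$, so $L_x\ket{H}=\ket{xH}$ is orthogonal to $\ket{H}$; both tests are therefore passed in the yes-case.

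Concretely, each test is a Hadamard test: prepare an ancilla in $\ket{+}$, apply controlled-$L_g$ for the appropriate group element $g$, Hadamard the ancilla, and measure. Accept the invariance test on outcome $\ket{0}$ and the non-invariance test on outcome $\ket{1}$. For the invariance test, Arthur samples a nearly-uniform $h\in H$ using Babai's polynomial-time random walk on a black-box group (the same algorithm invoked in the proof of Theorem~\ref{gnmam}); averaging over $h$ implements the projector $P_H = \frac{1}{|H|}\sum_{h\in H} L_h$ onto the $H$-invariant subspace. The completeness calculation is immediate: the honest $\ket{H}$ passes the invariance test with probability $1$ and the non-invariance test with probability exactly $\frac{1-\mathrm{Re}\braket{H}{L_x H}}{2} = 1/2$.

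For soundness, assume $x\in H$ and decompose an arbitrary witness as $\ket{\phi}=\alpha\ket{\phi_{\mathrm{inv}}}+\beta\ket{\phi_\perp}$, where $\ket{\phi_{\mathrm{inv}}}$ lies in the range of $P_H$ and $\ket{\phi_\perp}$ in its kernel. Since $x\in H$, we have $L_x\ket{\phi_{\mathrm{inv}}}=\ket{\phi_{\mathrm{inv}}}$, so the invariant component contributes $|\alpha|^2$ to $\mathrm{Re}\braket{\phi}{L_x\phi}$, dragging the non-invariance test's acceptance probability toward $0$. Meanwhile $\mathbb{E}_h L_h\ket{\phi_\perp}=P_H\ket{\phi_\perp}=0$, so $\mathbb{E}_h\mathrm{Re}\braket{\phi_\perp}{L_h\phi_\perp}=0$, meaning the invariance test accepts the perpendicular component with average probability at most $1/2$. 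Combining these two inequalities (via a short convexity argument on the joint acceptance probability viewed as a function of $|\alpha|^2$) bounds the cheating probability by a constant strictly less than the honest $1/2$; Marriott--Watrous in-place amplification then boosts the gap to the standard $(2/3,1/3)$ without increasing witness size.

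The main obstacle I expect is the soundness analysis in the mixed case: a cheating Merlin may send a state that is partly $H$-invariant and partly not, trading off failure probabilities between the two tests, and one needs to verify that the two bounds are ``dual'' in a way that forbids simultaneous success. Two smaller technicalities support the plan: the assumption that each group element has a \emph{unique} label is essential so that the superposition $\ket{H}$ is well defined and the orthogonality $\braket{H}{xH}=0$ genuinely holds in the Hilbert space; and Arthur's controlled-$L_g$ operation should be compiled out of the oracle using ancilla registers and uncomputation so that no nuclear-waste garbage remains entangled with the witness after each test.
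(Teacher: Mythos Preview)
Your protocol is the same as the paper's: witness $\ket{H}$, a Hadamard test of $H$-invariance using a random $h\in H$ sampled via Babai's algorithm, and a Hadamard test that multiplication by $x$ moves the state. (The paper uses right multiplication rather than left; this is cosmetic.)

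The gap is in your soundness argument: the claimed ``short convexity argument'' does not produce any separation. Take a no-instance with $x\in H$ of even order, and let Merlin send a $(-1)$-eigenvector $\ket{\phi_\perp}$ of $L_x$; any such vector lies outside the $+1$-eigenspace of $L_x$ and hence outside the $H$-invariant subspace, so $|\alpha|^2=0$. The non-invariance test then accepts with probability $\tfrac{1-(-1)}{2}=1$, while $\mathbb{E}_h\,\mathrm{Re}\braket{\phi_\perp}{L_h\phi_\perp}=\braket{\phi_\perp}{P_H\phi_\perp}=0$, so the invariance test accepts with probability exactly $1/2$. A direct computation of the \emph{sequential} joint acceptance (expand $\mathbb{E}_h\bra{\phi_\perp}(I+L_{h^{-1}})(I-L_x)(I+L_h)\ket{\phi_\perp}$) also gives exactly $1/2$---identical to the honest value---so there is nothing for Marriott--Watrous to amplify. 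For $x$ of large odd order the eigenvalue of $L_x$ nearest $-1$ pushes the cheating acceptance arbitrarily close to $1/2$, so there is no universal constant gap either. The fix is what the paper gestures at with its ``close to a coset of a supergroup of $H$'' remark: run the invariance test $k$ times before the $x$-test. On the honest $\ket{H}$ this does no damage; on any state in the perpendicular subspace one has $\mathbb{E}_h\|\tfrac{I+L_h}{2}\ket{\psi}\|^2=\tfrac12\|\ket{\psi}\|^2$, so the surviving perpendicular mass after $k$ rounds is $2^{-k}$, the joint cheating acceptance is at most $2^{-k}$, and completeness remains $1/2$.
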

\begin{proof}[Proof Sketch]
The protocol is as follows.  In the honest case, Merlin sends Arthur the ``subgroup state''
\[
\ket{H} = \frac{1}{\sqrt{|H|}} \sum_{h \in H} \ket{h},
\]
i.e.\ a uniform superposition over all the elements of $H$.  Using $\ket{H}$ together with the group oracle, Arthur then prepares the state
\[
\frac{\ket0\ket{H} + \ket1\ket{Hx}}{\sqrt{2}},
\]
where $\ket{Hx}$ is a uniform superposition over the elements of the right-coset $Hx$.  (I.e., he puts a control qubit into the $\ket+$ state, then right-multiplies by $x$ conditioned on the control qubit being $\ket{1}$.)  Then Arthur measures the control qubit in the $\ket+,\ket-$ basis.

Just like we saw in Claim \ref{gibqp}, if $x\in H$ then $H=Hx$ and $\ket{H}=\ket{Hx}$, so this measurement will always return $\ket+$.  If, on the other hand, $x \notin H$, then $H$ and $Hx$ are disjoint sets, so $\braket{H}{Hx}=0$ and the measurement will return $\ket+$ and $\ket-$ with equal probability.

There is, however, one additional step: Arthur needs to verify that Merlin indeed gave him the subgroup state $\ket{H}$, rather than some other state.

There's no advantage for Merlin to send an equal superposition over a supergroup of $H$, since that would only decrease the probability of Arthur's being convinced that $x\not\in H$.  Nor is there any advantage in Merlin sending an equal superposition over a left-coset of such a group.  Thus, it suffices for Arthur to verify a weaker claim: that whatever state Merlin sent, it was (close to) an equal superposition over a left-coset of a supergroup of $H$.

To verify this claim, as in the proof of Theorem \ref{gnmam}, Arthur uses the result of Babai \cite{DBLP:conf/stoc/Babai91} that it's possible, in classical polynomial time, to generate a nearly uniformly-random element $h\in H$.  Given a purported witness $\ket{\psi}$, Arthur generates such an element $h$, and then prepares the state

\[
\frac{\ket0\ket\psi+\ket1\ket{\psi h}}{\sqrt{2}},
\]

\noindent where $\ket{\psi h}$ is the state obtained by taking all the elements in the $\ket{\psi}$ superposition and right-multiplying them by $h$.  Similar to the previous test, Arthur then measures his control qubit in the $\ket+,\ket-$ basis, and accepts the witness state $\ket{\psi}$ only if he observes the outcome $\ket+$---in which case, $\ket{\psi}$ is invariant under right-multiplication by random $H$ elements, and hence (one can show) close to a coset of a supergroup of $H$.  Note that in the honest case, this test doesn't even damage $\ket{\psi}$, so Arthur can reuse the same state to verify that $x\not\in H$.
\end{proof}

Let's make two remarks about the proof of Theorem \ref{gnmqma}.

First, we needed the assumption that every element of $G$ has a unique label, because if that assumption failed, then the states $\ket{H}$ and $\ket{Hx}$ could have been orthogonal even in the case $x\in H$.

Second, given that (by Babai's result) Arthur can efficiently sample a nearly-random $H$ element on his own, one might wonder why he needed Merlin at all!  Why couldn't he just prepare the witness state $\ket{H}$ himself?  The answer, once again, is the ``nuclear waste problem'': Arthur can efficiently prepare a state like

$$\sum_r \ket{r}\ket{h_r},$$

\noindent where $h_r$ is a nearly-random $H$ element.  But the above state is useless for Watrous's protocol, because of the garbage register $\ket{r}$.  And it's {\em not} known whether Arthur can remove $\ket{r}$ on his own; he might need Merlin to provide the garbage-free state $\ket{H}$.

(We should mention that, even in the black-box setting, it remains an open problem whether $GNM\in \mathsf{BQP}$, so it's conceivable that Arthur can always prepare $\ket{H}$ on his own.  But no one knows how.)

%%%%%%%%%%%%%%%%%%%%%%%%%%%%%%%%%%%%%%%%
%% Lecture 5
%%%%%%%%%%%%%%%%%%%%%%%%%%%%%%%%%%%%%%%%
\lecture{Scott Aaronson}{Adam Brown and Fernando Pastawski}{$\mathsf{QCMA}$, $\mathsf{PostBQP}$, and $\mathsf{BQP/qpoly}$}

In Lecture 4, we discussed quantum superpositions of the form

$$\ket{H}  := \frac{1}{\sqrt{| H |}} \sum_{h \in H} \ket{h},$$

\noindent where $H$ is an exponentially-large finite group (say, a subgroup of a black-box group $G$).  We saw how such a superposition could be {\em extremely useful if someone handed it to you} (since you could use it to decide membership in $H$ in polynomial time), but also {\em intractable to prepare yourself} (because of the ``nuclear waste'' problem, and the distinction between sampling and QSampling).

In this lecture, we'll look more closely at the question of whether being given a complicated quantum state, one that you couldn't prepare by yourself, can really enhance your computational power.  In particular, could it enhance your power {\em beyond} what would be possible if you were given a classical string of similar size?

\section{$\mathsf{QCMA}$}

To study that question, we need to define one more complexity class: a cousin of $\mathsf{QMA}$ called $\mathsf{QCMA}$, which stands for the ironic ``Quantum Classical Merlin Arthur.'' $\mathsf{QCMA}$ is same thing as $\mathsf{QMA}$, except that now Merlin's witness must be classical (Arthur's verification procedure can still be quantum).  Thus, $\mathsf{QCMA}$ provides an alternative quantum generalization of $\mathsf{NP}$.

Clearly $\mathsf{QCMA} \subseteq \mathsf{QMA}$, since Arthur can simulate $\mathsf{QCMA}$ by immediately measuring a $\mathsf{QMA}$ witness to collapse it onto a classical state or probability distribution thereof.

Before going further, let's step back to explore how $\mathsf{QCMA}$ and $\mathsf{QMA}$ fit into the bestiary of complexity classes.

\begin{figure}[h!]\label{fig:ComplexityClassHierarchy}
  \begin{center}
\includegraphics[width=0.45\textwidth]{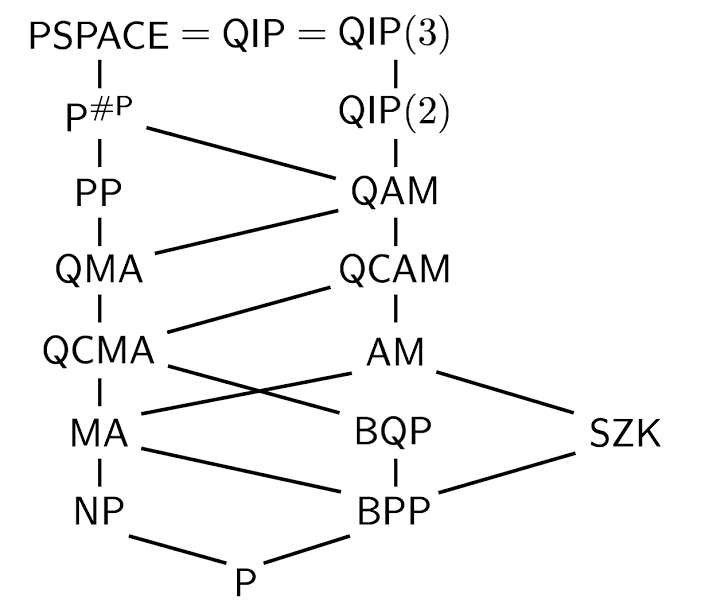}
\caption{Some relevant complexity classes, partially ordered by known inclusions.}
  \end{center}
\end{figure}

There's an interesting proof, which we won't present here, that $\mathsf{QMA} \subseteq \mathsf{PP}$.  There's also a whole plethora of classes such as $\mathsf{QAM}$, where Arthur sends random challenge string to Merlin and Merlin replies with a quantum state.

Then there's $\mathsf{QIP}$, the class of languages that admit a {\em quantum interactive proof protocol}, in which quantum data can go back and forth between Arthur and Merlin for a polynomial number of rounds.  In a 2009 breakthrough, Jain, Ji, Upadhyay, and Watrous \cite{JJUW11} showed that $\mathsf{QIP}=\mathsf{PSPACE}$: that is, once we're in the unbounded-round setting, quantum proofs give no advantage over classical proofs.  On the other hand, Kitaev and Watrous \cite{kitaevwatrous} also showed that $\mathsf{QIP(3)}=\mathsf{QIP}$: that is, 3-round quantum interactive proof protocols (an initial message from Merlin, a challenge from Arthur, and then a response from Merlin) have the full power of unbounded-round systems, a power that we now know to be simply $\mathsf{PSPACE}$.  Since in the classical case, constant-round interactive proof protocols give no more than $\mathsf{AM}$, the ability to do all of $\mathsf{QIP}=\mathsf{PSPACE}$ with only $3$ rounds shows a dramatic difference between the quantum and classical cases.

\section{$\mathsf{QMA}$ versus $\mathsf{QCMA}$}

But let's now return to the class $\mathsf{QCMA}$---and in particular, the question of whether it equals $\mathsf{QMA}$.

Since we saw in Lecture 4 that $GNM \in \mathsf{QMA}$, one way to investigate this question is to ask whether $GNM \in \mathsf{QCMA}$.  That is, in Watrous's $\mathsf{QMA}$ protocol for Group Non-Membership, did the advantage actually come from the quantum nature of the witness state $\ket{H}$?  Or could one also prove Group Non-Membership with a {\em classical} witness, verified using a quantum computer?

One might hope to resolve this problem in an oracular way, by showing that any $\mathsf{QCMA}$ protocol for Group Non-Membership would require Merlin to make a superpolynomial number of quantum queries to the black-box group $G$.  However, a result of Aaronson and Kuperberg \cite{ak} rules out such an oracle lower bound, by showing that we actually {\em can} solve $GNM$ using a $\mathsf{QCMA}$ protocol that makes only $\log^{O(1)} |G|$ queries to the group oracle.

\begin{theorem} [Aaronson-Kuperberg \cite{ak}]
\label{akthm}
There is a $\mathsf{QCMA}$ protocol for $GNM$ that makes
  $\mathsf{polylog}| G |$ queries to $G$, while using exponential post-processing time (in fact  $| G |^{\mathsf{polylog} | G |}$).
\end{theorem}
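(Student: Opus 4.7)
The plan is to convert Watrous's $\mathsf{QMA}$ protocol for $GNM$ (Theorem 4.6) into a $\mathsf{QCMA}$ protocol by having Merlin send a classical description of $H$ that lets Arthur construct the subgroup state $\ket{H}$ himself --- paying in classical computation time, but spending only $\mathrm{polylog}|G|$ queries to the group oracle. Once Arthur holds a trusted $\ket{H}$ he simply re-runs Watrous's test: form $(\ket{0}\ket{H}+\ket{1}\ket{Hx})/\sqrt{2}$ with one extra oracle query, Hadamard and measure the control qubit, and accept $\ket{-}$. If $x\in H$ the control always reads $\ket{+}$; if $x\notin H$ the states are orthogonal and $\ket{-}$ appears with probability $1/2$, which is amplified to the standard gap by repetition.

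Merlin's witness, the classical description, exploits the fact that every subgroup of $G$ is generated by at most $\log_2|G|$ elements, each labeled by $\mathrm{polylog}|G|$ bits. I would have Merlin send a composition-like chain $\{e\}=H_0\leq H_1\leq\cdots\leq H_k=H$ with $k\leq\log|G|$, specifying, for each $i$, a generator $y_i$ of $H_{i+1}$ over $H_i$ and the index $d_i=[H_{i+1}:H_i]$. To make the chain verifiable, Merlin also appends straight-line programs (of length $\mathrm{polylog}|G|$, whose existence is guaranteed by the Babai--Szemerédi theorem) expressing each input generator of $H$ as a word in $y_0,\ldots,y_{k-1}$. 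Arthur verifies the SLPs in $\mathrm{polylog}|G|$ queries and uses Babai's random-walk sampler on each $H_i$ and $H_{i+1}$ to check the claimed indices $d_i$, again in $\mathrm{polylog}|G|$ queries.

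The core technical step is then preparing $\ket{H}$ with only $\mathrm{polylog}|G|$ oracle queries, despite $|H|$ being potentially exponential. Arthur invests his $|G|^{\mathrm{polylog}|G|}$ time budget to precompute, for each link $i$, a full classical coset-decomposition table $z\mapsto(j,h)$ with $z=y_i^j h$; this table is exponentially large but, being purely classical, it is hard-coded into the quantum circuit and used entirely query-free. Starting from $\ket{H_0}=\ket{e}$, Arthur extends $\ket{H_i}$ to $\ket{H_{i+1}}$ by (a) appending an ancilla $\frac{1}{\sqrt{d_i}}\sum_{j=0}^{d_i-1}\ket{j}$, (b) multiplying the group register by $y_i^j$ in superposition via repeated squaring using $O(\log|G|)$ controlled oracle queries, and (c) uncomputing $\ket{j}$ through the precomputed table. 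Over all $k=O(\log|G|)$ links this costs $O(\log^2|G|)=\mathrm{polylog}|G|$ queries.

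The main obstacle is precisely this preparation step: one has to reconcile ``exponentially many basis states of $\ket{H}$'' with ``a polylog query budget,'' and the saving grace is the combination of a logarithmic-depth chain with exponentially large but query-free classical compile-time tables. A secondary obstacle is soundness against a cheating Merlin --- if the chain does not actually produce $H$, Arthur would prepare some $\ket{H'}$ on which Watrous's test is meaningless --- but this is handled by the polylog-query Babai consistency checks on the indices $d_i$ and on the SLPs certifying that $H_k$ contains the input generators of $H$. Putting these pieces together yields a $\mathsf{QCMA}$ verifier that uses $\mathrm{polylog}|G|$ queries and $|G|^{\mathrm{polylog}|G|}$ classical post-processing time, as claimed.
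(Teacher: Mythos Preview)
Your proposal has a genuine gap in step (c), the uncomputation of $\ket{j}$ via a ``precomputed classical coset-decomposition table.'' You write that this table is ``purely classical'' and therefore ``used entirely query-free,'' but you have overlooked that \emph{building} the table is not query-free. The table maps a black-box label $z$ to the pair $(j,h)$ with $z=y_i^{j}h$; to fill in even one entry you must perform the group multiplication $y_i^{j}\cdot h$, and that is an oracle call. Filling the whole table means enumerating all of $H_{i+1}$, which costs $|H_{i+1}|$ oracle queries---exponentially many. There is no way around this in the black-box model: labels are arbitrary strings, and the only access to the group structure is through the oracle. What you have rediscovered is exactly the ``nuclear waste'' obstacle from Lecture~4---one can sample $H$ easily, but no one knows how to prepare the clean superposition $\ket{H}$ with few queries, and your construction does not solve that problem.

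The paper's proof takes a completely different route that sidesteps preparing $\ket{H}$ altogether. Merlin sends a classical description of an explicit \emph{model group} $\Gamma$ together with a claimed isomorphism $\varphi:\Gamma\to G$ (and similarly for $H$ and $x$); the Classification of Finite Simple Groups guarantees this description has $\mathrm{polylog}|G|$ bits. Arthur spends his $\mathrm{polylog}|G|$ queries \emph{verifying} $\varphi$: homomorphism testing \`a la Ben-Or--Coppersmith--Luby--Rubinfeld, and checking injectivity by solving a (normal) Hidden Subgroup Problem for $\ker\varphi$ via the Ettinger--H{\o}yer--Knill algorithm, which uses polylog queries and exponential post-processing. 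Once the isomorphism is verified, the question ``$x\in H$?'' becomes the purely classical question ``$z\in\Delta$?'' inside the explicit group $\Gamma$, answerable with zero further oracle queries and brute-force time. The exponential time thus lives entirely in classical post-processing (EHK and brute force in $\Gamma$), never in oracle access.
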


This is an ``obstruction to an obstruction,'' since it says that the $\mathsf{QCMA}$ query complexity of $GNM$ can only be lower-bounded by $\log^{O(1)} |G|$.
I.e., even if $\mathsf{GNM}$ does not belong to $\mathsf{QCMA}$, we
won't be able to prove that using only query complexity: computational complexity would necessarily also be involved.

To prove Theorem \ref{akthm}, we'll need the following corollary of the Classification of Finite Simple Groups:

\begin{theorem}
\label{finitesimple}
  Any group of order $\exp ( n)$ can be specified using $n^{O ( 1)}$ bits (in other words, there are most $k^{\log^{O(1)}k}$ non-isomorphic groups of order $k$).
\end{theorem}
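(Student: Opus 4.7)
The two formulations of the theorem are equivalent: if the number of isomorphism classes of groups of order $k$ is at most $k^{\log^{O(1)} k}$, then any such group can be specified by its index in an enumeration using $\log_2(k^{\log^{O(1)} k}) = \log^{O(1)} k$ bits, which for $k = \exp(n)$ is $n^{O(1)}$ bits. So the plan is to establish the counting bound.

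For the counting bound, I would invoke Pyber's theorem (1993), which gives the sharper estimate $f(n) \leq n^{(2/27 + o(1)) \mu(n)^2}$, where $f(n)$ is the number of groups of order $n$ and $\mu(n)$ is the largest exponent such that $p^{\mu(n)} \mid n$ for some prime $p$. Since trivially $\mu(n) \leq \log_2 n$, Pyber's bound yields $f(n) \leq n^{O(\log^2 n)}$, which matches the required $k^{\log^{O(1)} k}$. Pyber's proof uses the Classification of Finite Simple Groups (CFSG) in an essential way.

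The strategy, which I would follow, decomposes $G$ via its Fitting subgroup $F(G)$, the largest normal nilpotent subgroup. Since $F(G)$ is the direct product of its Sylow $p$-subgroups, each of order at most $p^{\mu(n)}$, I would bound the number of choices for $F(G)$ using the Higman--Sims estimate that the number of $p$-groups of order $p^m$ is at most $p^{(2/27 + o(1)) m^3}$. The quotient $G/F(G)$ embeds into $\mathrm{Aut}(F(G))$, and here CFSG enters: the non-abelian composition factors of $G/F(G)$ are finite simple groups, of which only boundedly many occur at each order, and Schreier's conjecture (a corollary of CFSG) controls their outer automorphism groups. Combining a chief series of logarithmic length with a cohomological count of extension data in $H^2$ finishes the bound.

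The main obstacle is the $p$-group bound $p^{(2/27) m^3}$, which contributes essentially all of the combinatorial complexity and forces the $\mu(n)^2$ in the exponent of Pyber's estimate. Its proof uses Higman's method of enumerating nilpotent presentations by generator rank modulo the Frattini subgroup, together with a careful bound on the number of commutator relations producing non-isomorphic groups. Granting this input, the remainder of Pyber's argument is a bookkeeping exercise combined with CFSG-driven bounds on the non-solvable structure, and the reduction from the counting bound to the bit-specification claim is immediate.
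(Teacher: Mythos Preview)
The paper does not actually prove this statement. It is quoted as a known corollary of the Classification of Finite Simple Groups and then used as a black box in the proof sketch of Theorem~\ref{akthm} (the Aaronson--Kuperberg $\mathsf{QCMA}$ protocol for Group Non-Membership). There is nothing to compare your argument against.

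That said, your plan is the correct one and matches the standard reference: Pyber's theorem gives $f(k) \le k^{(2/27+o(1))\mu(k)^2} \le k^{O(\log^2 k)}$, which is exactly the bound claimed, and Pyber's proof does invoke CFSG in the way you describe. So your proposal is a faithful expansion of what the paper is implicitly citing; the paper simply chose not to unpack it.
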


We now sketch the proof of Theorem \ref{akthm}.

\begin{proof}[Proof Sketch for Theorem \ref{akthm}]
  The $\mathsf{QCMA}$ protocol works as follows.  Merlin tells Arthur an explicit model group $\Gamma$, which he claims is isomorphic to $G$.  Merlin then needs to describe an isomorphism $\Gamma \equiv G$ in detail, and Arthur needs to verify the purported isomorphism.  Once this is done, Arthur can then verify that $x\in H$ by simply checking that the corresponding statement is true in the model group $\Gamma$, which requires no further queries to the $G$ oracle.

  In more detail, Merlin sends Arthur a list of generators $e_1,
  \ldots, e_k \in \Gamma$, with the property that every $\gamma \in \Gamma$ can be written in the form $e_1^{\alpha_1} \ldots e_k^{\alpha_k}$ with $k = \mathsf{polylog} | \Gamma |$ and $\alpha_i \in \{0,1\}$.  Merlin also tells Arthur a mapping $e_i \rightarrow g_i$,
  which he claims can be extended to a full isomorphism between the groups, $\varphi : \Gamma
  \rightarrow G$.  To define $\varphi$: for each element $\gamma \in
  \Gamma$, we take its lexicographically first representation in the form $\gamma = e_1^{\alpha_1} \ldots
  e_k^{\alpha_k}$, and then define

  $$ \varphi ( \gamma) := \varphi (  e_1^{\alpha_1}) \ldots \varphi ( e_k^{\alpha_k}) = g_1^{\alpha_1} \ldots  g_k^{\alpha_k}. $$

  Merlin likewise gives Arthur a purported mapping from a model subgroup $\Delta \leq \Gamma$ to $H\leq G$, and from some element $z\in \Gamma \diagdown \Delta$ to $x\in G$.

  For simplicity, let's restrict ourselves to how Arthur checks the claimed isomorphism $\varphi : \Gamma
  \rightarrow G$ (checking the other parts is analogous).  There are two things that Arthur needs to check:
  \begin{enumerate}
    \item \textbf{That $\varphi$ is a homomorphism.}
    According to a result by Ben-Or, Coppersmith, Luby, and Rubinfeld \cite{bclr}, this can be done efficiently by checking the homomorphism $\varphi ( x \cdot y) =  \varphi ( x) \cdot \varphi ( y)$ on randomly chosen inputs $x$ and $y$.
    Even if it is not an exact homomorphism, and this goes undetected by the test (which may only happen if there is an epsilon fraction of incorrect entries, i.e.\ if it is close to a homomorphism), there is a further theorem which guarantees that the map $\varphi$ may actually be fixed into a homomorphism by taking $\Phi ( x) = \mathsf{maj}_y   \varphi ( x y) \varphi ( y^{- 1})$ (i.e.\ simply taking some form of majority vote).

    \item   \textbf{That the homomorphism is a 1-to-1 embedding}.  In other words, we must verify that the kernel of $\varphi$, or the subgroup of $\Gamma$ mapping to the identity, is trivial.  But this is just an instance of the Hidden Subgroup Problem!  (Indeed, the HSP with a normal subgroup, which is thought to be easier than the general problem.)  So
        we can use the algorithm of Ettinger, H\o yer, and Knill \cite{ehk} (Theorem \ref{ehkthm} from Lecture 3), which lets us solve the HSP with only polynomially many queries plus exponential post-processing.
  \end{enumerate}
  Note that these checks don't rule out that there could be additional elements of $G$ that aren't even in the image of $\varphi$: that is, they verify that $\varphi$ is an embedding, but not that it's an isomorphism.  Fortunately, though, this distinction is irrelevant for the specific goal of verifying that $x\not\in H$.
\end{proof}

So the bottom line is that, if we want to prove an oracle separation between $\mathsf{QMA}$ and $\mathsf{QCMA}$, we'll need a candidate problem that's less ``structured'' than Group Non-Membership.  In fact, finding an oracle $A$ such that $\mathsf{QCMA}^A \neq \mathsf{QMA}^A$ remains a notorious open problem to this day.  In their 2007 paper, Aaronson and Kuperberg \cite{ak} were able to make some progress toward this goal, by showing that there exists a {\em quantum} oracle $U$ such that $\mathsf{QCMA}^U \neq \mathsf{QMA}^U$.  Here, by a quantum oracle, they simply meant an infinite collection of unitary transformations, $\{U_n \}_n$, that can be applied in a black-box manner.  For convenience, we also assume the ability to apply controlled-$U_n$ and $U_n^{-1}$ (though in this case, the $U_n$'s that we construct will actually be their own inverses).

\begin{theorem}[Aaronson and Kuperberg \cite{ak}]
\label{akthm2}
  There exists a quantum oracle $U = \{ U_n \}_{n \geq 1}$ such that $\mathsf{QMA}^U
  \neq \mathsf{QCMA}^U$.
\end{theorem}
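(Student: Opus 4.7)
The plan is to construct, for each $n$, a random unitary $U_n$ drawn from one of two distributions according to whether the string $0^n$ lies in a target language $L$. In the ``yes'' case, draw a Haar-random pure state $\ket{\psi_n}\in\C^{2^n}$ and set $U_n := I - 2\ketbra{\psi_n}{\psi_n}$ (the reflection about $\ket{\psi_n}$); in the ``no'' case, set $U_n := I$. The first task is to show $L\in\mathsf{QMA}^U$; the second, that with probability $1-o(1)$ over the random choices, $L\notin\mathsf{QCMA}^U$. A standard diagonalization (enumerate all $\mathsf{QCMA}^U$ verifiers and, for each, choose $\ket{\psi_n}$ at a sufficiently large $n$ to defeat it) then converts the probabilistic construction into a single fixed oracle.

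For the upper bound, on input $0^n$ Arthur treats the witness $\ket{\phi}$ as the claimed $\ket{\psi_n}$ and performs a phase-kickback test: prepare $\tfrac{1}{\sqrt 2}(\ket 0+\ket 1)\otimes\ket{\phi}$, apply controlled-$U_n$ to the second register, apply a Hadamard to the control, and measure; accept iff the outcome is $1$. In the yes case the honest witness $\ket{\psi_n}$ satisfies $U_n\ket{\psi_n}=-\ket{\psi_n}$ so Arthur accepts with probability $1$; in the no case $U_n=I$ so no witness can produce any kickback and Arthur always rejects. Hence $L\in\mathsf{QMA}^U$ with perfect completeness and soundness, uniformly over the random draw of $\ket{\psi_n}$.

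The real work is the $\mathsf{QCMA}$ lower bound. Fix a verifier $V$ using $T=\mathrm{poly}(n)$ oracle queries and a classical witness of length $\ell=\mathrm{poly}(n)$, and for $w\in\{0,1\}^\ell$ let $p_w(\mathcal{O})$ denote $V$'s acceptance probability with oracle $\mathcal{O}$. The central technical claim is a BBBV/Grover-style indistinguishability lemma: for $N=2^n$ and any fixed $T$-query algorithm,
\[
\Bigl|\,\Exp_{\psi_n}\bigl[p_w(I - 2\ketbra{\psi_n}{\psi_n})\bigr] - p_w(I)\,\Bigr| = O\!\left(\frac{T^2}{N}\right),
\]
with Levy-type concentration around the expectation at scale $O(T/\sqrt N)$. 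The intuition is that a single query has overlap of order $1/\sqrt{N}$ with a Haar-random unknown direction, so after $T$ queries the amplitude that has leaked onto the ``marked'' ray is at most $O(T/\sqrt N)$ and the observable acceptance probability shifts by at most $O(T^2/N)$. This is the step that genuinely requires a hybrid argument, and it is the main obstacle: one must track amplitude in the $\ket{\psi_n}$-direction across adaptively chosen queries, not just non-adaptive ones, and show that the Haar average suppresses any attempt by $V$ to concentrate amplitude there.

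Granted the lemma, the proof concludes by a union bound. Soundness of $V$ in the no case forces $p_w(I)\leq 1/3$ for every $w$, so by the lemma and concentration, for each fixed $w$ we have $\Pr_{\psi_n}[\,p_w(U_n^{\mathrm{yes}})\geq 1/2\,] \leq \exp(-\Omega(N/T^2))$. Since $2^\ell\cdot\exp(-\Omega(N/T^2))=o(1)$ whenever $T,\ell=\mathrm{poly}(n)$ and $N=2^n$, a union bound over all $2^\ell$ classical witnesses shows that with probability $\geq 1-o(1)$ over $\ket{\psi_n}$, no witness achieves acceptance probability $\geq 2/3$ in the yes case, so $V$ fails to decide $L$. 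Running this argument in parallel over an enumeration of $\mathsf{QCMA}^U$ verifiers (choosing $\ket{\psi_n}$ inductively for a sparse sequence of input lengths to kill the $k$-th machine, as in standard oracle diagonalizations) yields a single quantum oracle $U$ with $L\in\mathsf{QMA}^U\setminus\mathsf{QCMA}^U$, completing the proof.
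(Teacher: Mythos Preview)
Your construction of the oracle and the $\mathsf{QMA}$ upper bound match the paper exactly. The $\mathsf{QCMA}$ lower bound is correct but proceeds by a genuinely different route. The paper argues geometrically: a $w$-bit classical witness partitions the Haar measure on $n$-qubit states into $2^w$ cells, and the most favorable cell shape for the verifier is a spherical cap; a cap of probability mass $\sim 2^{-w}$ has ``height'' $h\sim\sqrt{(w+1)/2^n}$, and the optimality of amplitude amplification then forces $\Omega(1/h)=\Omega\bigl(\sqrt{2^n/(w+1)}\bigr)$ queries to detect the reflection even given the best witness. Your argument instead fixes the witness, runs a BBBV hybrid against the identity oracle (whose intermediate states are independent of $\psi$), bounds the expected drift by $O(T/\sqrt N)$, and then invokes Levy's lemma (Lipschitz constant $O(T)$) to get tail probability $\exp(-\Omega(N/T^2))$ before a union bound over the $2^\ell$ witnesses.

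Both arguments are valid. The paper's approach yields a sharp, reusable query lower bound as a function of both $T$ and the witness length, and makes explicit the intuition that a classical witness can at best localize $\ket{\psi}$ to a small cap. Your approach is more ``off-the-shelf'': it sidesteps the nontrivial claim that caps are extremal and replaces Grover optimality by concentration of measure, at the cost of a less explicit quantitative dependence. One minor quibble: the $O(T^2/N)$ you quote for the expectation should really be $O(T/\sqrt N)$ (the $T^2/N$ is the bound on the \emph{squared} state deviation), but this has no effect on the conclusion.
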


\begin{proof}[Proof Sketch]
  For each value of $n$, we choose an $n$-qubit pure state $\ket{\psi}=\ket{\psi_n}$ uniformly at random from the
  Haar measure.  We then design an $n$-qubit unitary transformation $U_n$ so that one of the following holds: either
\begin{enumerate}
\item[(i)] $U_n =I$ is the identity, or
\item[(ii)] $U_n =I- 2 \ket{\psi}\bra{\psi}$ implements a reflection about $\ket{\psi}$.
\end{enumerate}
Clearly, a witness for case (ii) is just $\ket{\psi}$ itself.
  We now use that ``the space of quantum states is big,'' in the sense that there exist
   $$ N = 2^{\Omega ( 2^n)}  \hspace{1em} \text{states }
    \ket{ \psi_1}, \ldots, \ket{ \psi_N }, $$
  of $n$ qubits each, such that $\braket{ \psi_j}{ \psi_i} \leq \varepsilon$ for all
  pairs $i \neq j \in \{ 1, \ldots, N \}$.  We can achieve that, for example, using states of the form
  \[
  \ket{ \psi_j } := \frac{1}{\sqrt{2^n}} \sum_x ( - 1)^{f_i ( x)} \ket{ x},
  \]
  where $f_1, \ldots, f_N$ are randomly-chosen Boolean functions.

  We now need an extended digression about the second most famous quantum algorithm after Shor's algorithm: namely, Grover's algorithm \cite{grover}!

  Grover's algorithm is usually presented as a quantum algorithm to search a database of $N$ records for a desired record in only $O(\sqrt{N})$ steps.  More abstractly, the algorithm
  is given oracle access to a Boolean function $f : \{ 1, \ldots, N \} \rightarrow \{ 0, 1 \}$.  Its goal is to decide whether there exists an $i\in \{ 1,ldots,N\}$ such that $f ( i) = 1$. Or, in a common variant, the algorithm needs to find $i$ such that $f(i)=1$, promised that there is a unique such $i$.  In 1996, Grover \cite{grover} gave a quantum algorithm to solve this problem using only $O ( \sqrt{N} )$ queries
  to $f$.  This is also known to be optimal for quantum algorithms, by a result of Bennett, Bernstein, Brassard, and Vazirani \cite{bbbv}.

  Just like the core of Shor's algorithm is period-finding, the core of Grover's algorithm is a more general procedure called {\em amplitude amplification}.  In amplitude amplification,
  we consider two nearly-orthogonal pure states $\ket{ v }$ and $\ket{ w }$, which satisfy
  $\braket{ v }{ w } | = \varepsilon$.
  For the moment, we'll assume that $\varepsilon$ is known, although this assumption can be lifted in ``fixed-point'' amplitude amplification.  We're given a copy of $\ket{v}$, and
  we want to convert it into $\ket{w}$, or something close to $\ket{w}$.  To do this, we can use two quantum oracles: $U_v = I- 2 \ket{v}\bra{v}$ and $U_w = I- 2 \ket{w}\bra{w}$.

  The central result of amplitude amplification is that we can achieve this conversion task using only $O( 1/ \varepsilon)$ queries to $U_v$ and $U_w$, as follows:
$$
    \ket{w} \leftarrow \overbrace{\ldots U_v U_w U_v U_w}^{1 / \epsilon \;
    \mathsf{times}} \ket{v} $$

  \begin{figure}[h!]\label{fig:AmplitudAmplification}
  \begin{center}
\includegraphics[width=0.6\textwidth]{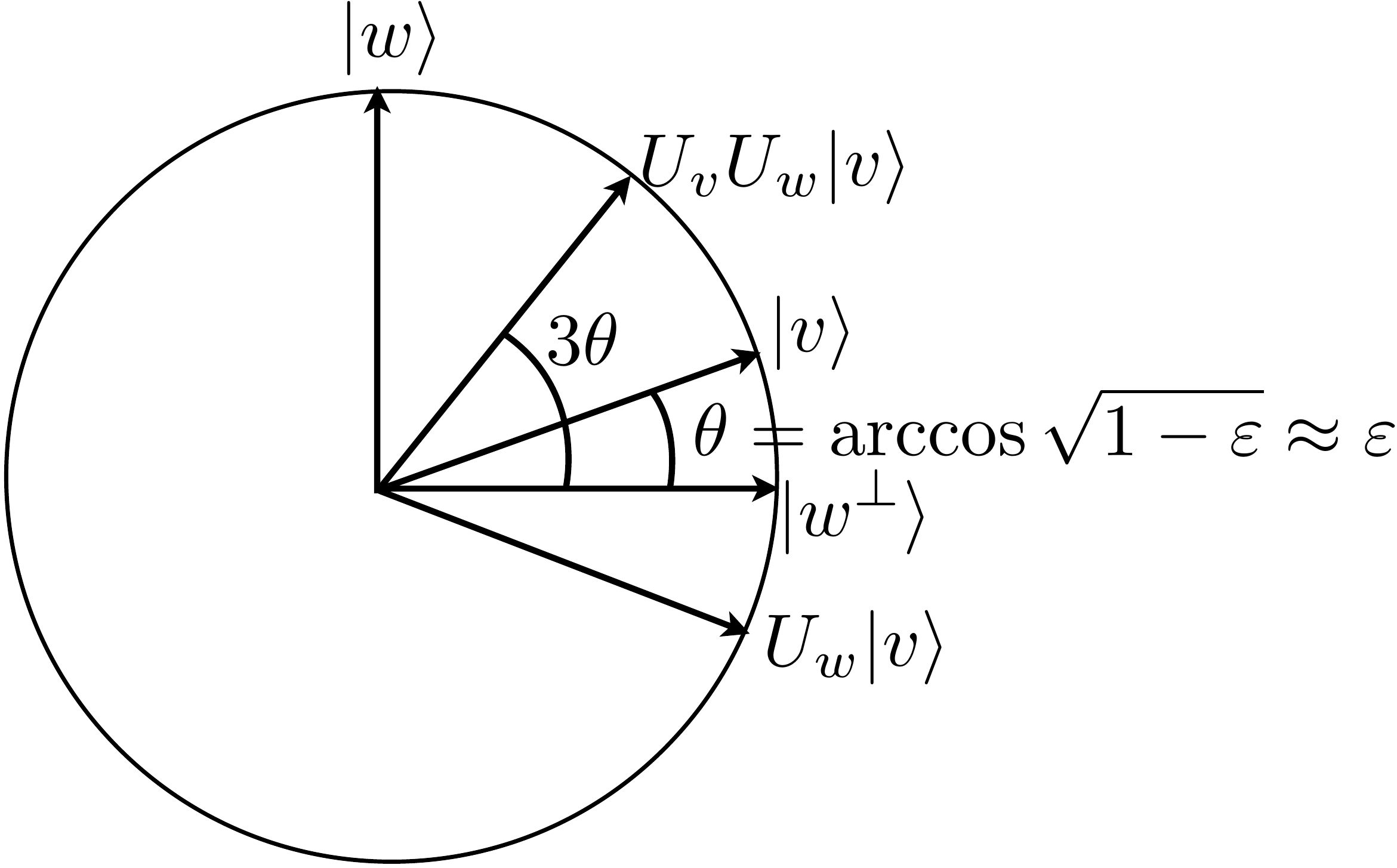}
\caption{The intermediate states in the relevant 2-dimensional subspace for amplitude amplification.
This provides a ``proof by picture'' for why amplitude amplification works.}
\end{center}
\end{figure}
Why does amplitude amplification work?  We can simplify the analysis by observing that the intermediate states are restricted to a purely 2-dimensional subspace:
  namely, the subspace spanned by the basis states $\ket{ v }$ and $\ket{ w }$ (see Figure \ref{fig:AmplitudAmplification}).  Note that $\ket{ w^{\perp} }$ and $\ket {v }$ have an angle of $\theta = \mathsf{arcos} (\sqrt{1-\varepsilon})\approx \varepsilon$.
  Each reflection step increases the angle to $\ket{w^\perp}$ by $\theta$ and hence, the number of steps to approach $\ket{w}$ is of the order $1/\varepsilon$.

Grover's algorithm works by taking the initial state to be the uniform positive superposition over all basis states $\ket{ v } = \frac{\ket{1} +  \ldots + \ket{N}}{\sqrt{N}}$ and $\ket{w} = \ket{i}$ be the basis state with index $i$ such that $f (i) = 1$, then we have $| \ketbra{ v}{w } | =  \frac{1}{\sqrt{N}} $.
Thus, amplitude amplification lets us find $\ket{i}$ in a time proportional to $\sqrt{N}$.\footnote{As a side remark, the state $\ket{ v}$ does not by itself let us implement $U_v$, and
  this fact is used explicitly in quantum money schemes (see Lectures 8 and 9).
  On the other hand, if we  have some unitary $U$ which prepares $\ket{ v }$, meaning
  for example that $U \ket{ 0 } = \ket{ v }$, and we also have $U^{-1}$, then we can also implement $U_v = U U_0 U^{- 1}$, where $U_0$ implements the reflection around the $\ket{ 0 }$ state.}

Now, to return to the proof of Theorem \ref{akthm2}: Merlin can give Arthur a $w$-bit witness string (there are $2^w$ possible such
strings).
Let's say that the Hilbert space is fragmented into $2^w$ regions, one for each string that Merlin could send.
Conditioned on being in a particular region, $\ket{\psi}$ will still be
uniformly random.
Now with high probability, the volume of that region is $\sim 1 / 2^w$ in terms of the probability mass.
One can show that the best possible case, from the algorithm's standpoint, is that we've localized $\ket{ \psi }$ to within a spherical cap (i.e., the set of all $\ket{v}$ such that $| \braket{ v }{ \psi } |
\geq h$, for some $h$).
Here we can calculate that the behavior of $h$ as a function of $w$ is given by $h \sim \sqrt{\frac{w + 1}{2^n}}$.
Since Grover search and amplitude amplification are optimal, this can be used to show that the number of queries is $\Omega\left(\frac{2^n}{w+1}\right)$.

\end{proof}

As mentioned earlier, it remains open whether there exists a classical oracle $A$ such that $\mathsf{QMA}^A \neq \mathsf{QCMA}^A$.
Very recently, Fefferman and Kimmel \cite{FK15} presented interesting progress on this problem.
Specifically, they separated $\mathsf{QMA}$ from $\mathsf{QCMA}$ relative to a classical oracle of an unusual kind.
A standard classical oracle receives an input of the form $\ket{x,y}$ and outputs $\ket{x, y\oplus f(x)}$, whereas theirs receives an input of the form $\ket{x}$ and outputs $\ket{\pi(x)}$, where $\pi$ is some permutation.  Furthermore, their oracle is probabilistic, meaning that the permutation $\pi$ is chosen from a probability distribution, and can be different at each invocation.

Let's now notice that there's a fundamental question about the complexity of quantum states, which is directly related to the $\mathsf{QMA}$ versus $\mathsf{QCMA}$ question:

\begin{question}
\label{qmaques}
For every language $L\in \mathsf{QMA}$, for every $\mathsf{QMA}$ verifier $V$ for $L$, and for every input $x\in L$, is there a polynomial-size quantum circuit to prepare a state $\ket{ \psi}$ such that $V(x,\ket{\psi})$ accepts with high probability?
\end{question}

{\em If} the answer to the above question is yes, then it follows immediately that $\mathsf{QMA}=\mathsf{QCMA}$, since Merlin could simply send Arthur a classical description of a circuit
to prepare an accepting $\mathsf{QMA}$ witness $\ket{ \psi}$.  However, the converse is not known: even if $\mathsf{QMA}=\mathsf{QCMA}$, there might still be $\mathsf{QMA}$ verifiers for
which accepting witnesses are exponentially hard to prepare.  (We {\em can} always prepare accepting witnesses efficiently if we make the stronger assumption that $\mathsf{BQP}=\mathsf{P^{\#P}}$.)

\subsection{Quantum Advice States}

We now turn from quantum witness states to another extremely interesting family of complicated states: namely, quantum {\em advice} states.  Unlike a witness state, an advice state can
be trusted: it doesn't need to be verified.  But there's another difference: we want the {\em same} advice state to be useful for deciding many different instances.  Indeed, we'd like
a family of advice states $\{ \ket{\psi_n} \}_{n \geq 1}$ that depend only on the input length $n$, and not on the input itself.  Think of a graduate adviser who's honest but very busy, and only gives advice to his or her students based on which year they are in.

This concept originates in classical complexity theory, and in particular, with the complexity class $\mathsf{P/poly}$.

\begin{definition}
  $\mathsf{P/poly}$ is the class of languages $L\subseteq \{0,1\}^*$ decidable by a polynomial-time Turing machine
  which also receives a polynomial-size advice string $a_n$ only dependent on the length $n$ of the input.
\end{definition}

We now define the quantum analogue of $\mathsf{P/poly}$.

\begin{definition}
  $\mathsf{BQP/qpoly}$ is the class of languages $L \subseteq \{ 0, 1 \}^*$ for which
  there exists a polynomial-time quantum algorithm $Q$, and a set of polynomial-size
  {\em quantum advice states} $\{ \ket{\psi_n} \}_{n\geq 1}$, such that
  $Q ( \ket{ x } \ket{ \psi_n} \ket{ 0 \cdots 0})$ accepts with probability $\geqslant 2 / 3$ if $x \in L$ and
  rejects with probability $\geqslant 2 / 3$ if $x \not{\in} L$ for every $x
  \in \{ 0, 1 \}^n$.
\end{definition}

Because any mixed state can be purified using twice as many qubits, we may assume the advice states to be pure or mixed without changing the power of this complexity class.

What kinds of quantum states can be useful as quantum advice?  We'll try to explore that question by studying the power of $\mathsf{BQP/qpoly}$.

Just like we can ask whether $\mathsf{QCMA}=\mathsf{QMA}$, so we can ask whether $\mathsf{BQP/poly}=\mathsf{BQP/qpoly}$, where $\mathsf{BQP/poly}$ represents
$\mathsf{BQP}$ with polynomial-size {\em classical} advice.  And just as Aaronson and Kuperberg \cite{ak} gave a quantum oracle separating $\mathsf{QCMA}$ from $\mathsf{QMA}$
(Theorem \ref{akthm2}), so an analogous construction gives a quantum oracle separating $\mathsf{BQP/poly}$ from $\mathsf{BQP/qpoly}$:

\begin{theorem}
There exists a quantum oracle $U$ such that $\mathsf{BQP}^U\mathsf{/qpoly} \neq \mathsf{BQP}^U\mathsf{/poly}$.
\end{theorem}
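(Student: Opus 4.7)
The strategy is to adapt the Aaronson--Kuperberg quantum oracle from Theorem~\ref{akthm2}, but now indexed so that a single Haar-random state $\ket{\psi_n}$ per length serves as the ``key'' for all $2^n$ inputs of length $n$, while remaining hidden from any polynomial-size classical advice.

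\textbf{Construction.} For each $n$ I would independently sample a Haar-random pure state $\ket{\psi_n}$ on $n$ qubits and a uniformly random string $b^n \in \{0,1\}^{2^n}$. Define $U_n$ as the unitary on $2n$ qubits sending $\ket{x}\ket{\phi} \mapsto \ket{x}(V_{n,x}\ket{\phi})$, where $V_{n,x} = I - 2\ketbra{\psi_n}{\psi_n}$ if $b^n_x=1$ and $V_{n,x}=I$ otherwise. Take $L = \{x : b^{|x|}_x = 1\}$.

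\textbf{Upper bound.} Let the length-$n$ quantum advice be $\ket{\psi_n}$. On input $x$, run the Hadamard test on controlled-$U_n$ with the input register set to $\ket{x}$ and the workspace initialized to $\ket{\psi_n}$: the control-qubit outcome reveals whether $V_{n,x}$ is $I$ or $I - 2\ketbra{\psi_n}{\psi_n}$, i.e., whether $b^n_x$ is $0$ or $1$. This is a one-query procedure, so $L \in \mathsf{BQP}^U/\mathsf{qpoly}$ with probability one over $U$.

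\textbf{Lower bound.} Fix any candidate $\mathsf{BQP}^U/\mathsf{poly}$ algorithm making $T = n^{O(1)}$ queries and using $p(n) = n^{O(1)}$ classical advice bits. For some $n$, pigeonhole gives an advice string $a^*$ whose Haar-conditional set of $\ket{\psi_n}$'s has measure $\geq 2^{-p(n)}$. By the rearrangement inequality on the sphere (as in~\cite{ak}), the algorithm's success probability only grows if we replace this conditional set by a spherical cap $C = \{\ket{\phi} : |\braket{v_0}{\phi}| \geq h\}$ of the same Haar measure; a standard high-dimensional computation gives $h = \Theta(\sqrt{p(n)/2^n})$.

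\textbf{Main step.} It then suffices to show that, with oracle access to $U_n$ and input $x$, no algorithm making fewer than $\Omega(1/h) = \Omega(\sqrt{2^n/p(n)})$ queries can determine $b^n_x$ with constant advantage when $\ket{\psi_n}$ is uniform on $C$ and $b^n$ is uniform on $\{0,1\}^{2^n}$. I would prove this by the BBBV hybrid method~\cite{bbbv}: starting from any $\ket{\psi_n}$-independent initial state, the intermediate state of the algorithm run against the ``null'' oracle (with all $V_{n,x'}$ replaced by $I$) has $\ket{\psi_n}$-overlap of order $h$ at every step, so each query's switch to the true oracle perturbs the trajectory in trace distance by $O(h)$, and $T$ queries by $O(Th)$. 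For constant distinguishing advantage we need $T = \Omega(1/h)$, which is superpolynomial in $n$. A final Borel--Cantelli union bound over the countably many $\mathsf{BQP}^U/\mathsf{poly}$ algorithms gives $L \notin \mathsf{BQP}^U/\mathsf{poly}$ with probability one over $U$.

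\textbf{Main obstacle.} The delicate step is the hybrid argument in the presence of the $2^n$-bit random string $b^n$: the algorithm can query $U_n$ in superposition over many $x'$, and the random $b^n_{x'}$ bits couple these branches through the single hidden state $\ket{\psi_n}$. Making the per-query bound ``each query deposits $O(h)$ amplitude on $\ket{\psi_n}$'' rigorous requires first averaging the oracle over $b^n$ to obtain a per-query stochastic mixture of $I$ and $I - 2\ketbra{\psi_n}{\psi_n}$, and then applying the BBBV amplitude bound to the cap $C$; disentangling these two layers of randomness, while ruling out clever coherent strategies that correlate queries, is the main technical hurdle.
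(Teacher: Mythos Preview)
The paper does not give a proof of this theorem; it merely asserts that ``an analogous construction'' to the $\mathsf{QMA}$ versus $\mathsf{QCMA}$ quantum-oracle separation (Theorem~\ref{akthm2}) works. Your construction---a single Haar-random $\ket{\psi_n}$ per input length, together with a random Boolean string $b^n$ controlling whether each $V_{n,x}$ is the identity or the reflection about $\ket{\psi_n}$---is exactly that analogous construction, and your upper bound via the Hadamard test on controlled-$U_n$ is correct.

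There is one point in the lower bound where your write-up is imprecise and could use tightening. In your pigeonhole step you write that some advice string $a^*$ has a ``Haar-conditional set of $\ket{\psi_n}$'s'' of measure at least $2^{-p(n)}$. But the advice is a function of the \emph{entire} oracle $U=(\ket{\psi_n},b^n)$, not of $\ket{\psi_n}$ alone, so pigeonhole gives you a large set in the \emph{joint} space, and its $\ket{\psi_n}$-marginal need not be uniform on any set, let alone a cap. The cleanest fix is to drop the pigeonhole framing altogether and instead run a direct union bound: fix a $p(n)$-bit string $a$ \emph{in advance} (not as a function of $U$), show that over Haar-random $\ket{\psi_n}$ and uniform $b^n$ the algorithm with this fixed $a$ fails on some input except with probability $2^{-\omega(p(n))}$, and then union-bound over the $2^{p(n)}$ choices of $a$. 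With $a$ fixed, the algorithm's null-oracle trajectory is independent of $\ket{\psi_n}$, so Haar concentration directly gives the $O(2^{-n/2})$ per-query overlap bound you need, and the spherical-cap rearrangement becomes unnecessary. This also dissolves most of the ``main obstacle'' you flag: once $a$ is fixed, averaging over $b^n$ is clean, and an information-theoretic argument ($p(n)$ bits of advice can bias at most $p(n)$ of the $2^n$ independent bits $b^n_x$) handles the coupling you were worried about.
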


Just like with $\mathsf{QCMA}$ versus $\mathsf{QMA}$, it remains open whether there's a classical oracle separating $\mathsf{BQP/poly}$ from $\mathsf{BQP/qpoly}$.

On the other hand, a variation of the Group Non-Membership Problem gives us an example of something in $\mathsf{BQP}^U\mathsf{/qpoly}$ that's not {\em known} to
be in $\mathsf{BQP/poly}$.  Namely, what we can do is fix in advance the group $G_n$ and subgroup $H_n$ as functions of $n$, and then only take an element $x\in G_n$ as input.  The problem, as before, is to decide whether $x\in H_n$.

The adviser knows both $G_n$ and $H_n$, and can provide $\ket{H_n}$ as the advice state.
Since the adviser is trusted, we don't even need to check any longer that $\ket{H_n}$ is valid, yielding a simpler protocol than before.

\section{Upper-Bounding $\mathsf{BQP/qpoly}$}

For most complexity classes, such as $\mathsf{QMA}$, it's obvious that not all languages are contained in the class.
But in 2003, Harry Buhrman asked the following question: how can we even exclude the possibility that $\mathsf{BQP}/\mathsf{qpoly}$ equals $\mathsf{ALL}$, the class of all languages?

We can think of the problem of deciding a language $L$ on inputs of length $n$ as computing an $n$-input boolean function $L_n:\{0,1\}^n \rightarrow \{0,1\}$.
Since the description of the quantum state $\ket{\psi_n}$ provided by the adviser involves exponentially many amplitudes, the adviser could choose to provide a state that encodes the entire truth table of $L_n$---for example,
$$\ket{\psi_n} = \frac{1}{2^{n/2}} \sum_{x\in\{0,1\}^n} \ket{x}\ket{L_n(x)}.$$
However, it's not clear how one would use this advice: note that measuring in the standard basis gives only a $2^{-n}$ probability of getting the answer for the desired input $x$.

We can exclude the possibility that $\mathsf{BQP/qpoly} = \mathsf{ALL}$ by showing that
\begin{equation}
  \mathsf{BQP} / \mathsf{qpoly} \subseteq \text{Anything} / \mathsf{poly}.
\end{equation}
This follows from a simple counting argument.  Since one needs $2^n$ bits to specify a function $L_n : \{0,1\}^n\rightarrow \{0,1\}$, whereas the $\mathsf{/poly}$ adviser can provide only $n^{O(1)}$ bits, we find that for every uniform complexity class $\mathcal{C}$ (that is, every class defined by
a countable set of machines), almost all languages $L$ must lie outside of $\mathcal{C}\mathsf{/poly}$.

The reason why the na\"{i}ve counting argument doesn't work for quantum advice is that, as we've seen, there are {\em doubly}-exponentially many $n$-qubit states that are almost orthogonal to one another.

And in fact, there are complexity classes that really {\em do} become all-powerful when enhanced with quantum advice!  For example, let $\mathsf{PQP}$ be the quantum analogue of $\mathsf{PP}$: that is, the class of languages decidable by a polynomial-time quantum algorithm that only needs to return the correct answer with probability greater than $1/2$.  Then we have:

\begin{proposition}
  $\mathsf{PQP/qpoly} = \mathsf{ALL}.$
\end{proposition}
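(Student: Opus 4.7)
The plan is to exhibit, for an arbitrary language $L \subseteq \{0,1\}^*$, a $\mathsf{PQP/qpoly}$ protocol that decides $L$. The advice state I would use is simply the ``truth-table superposition''
\[
\ket{\psi_n} = \frac{1}{\sqrt{2^n}} \sum_{x \in \{0,1\}^n} \ket{x}\ket{L_n(x)},
\]
which has only $n+1$ qubits and depends on $L$ only through the length $n$, as required.

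The natural worry (noted in the excerpt just above the proposition) is that if the algorithm measures $\ket{\psi_n}$ in the computational basis, it only has a $2^{-n}$ chance of seeing the input $x$ it actually cares about. For $\mathsf{BQP/qpoly}$ this is fatal, but for $\mathsf{PQP/qpoly}$ it is exactly enough, because we are allowed an acceptance probability only exponentially close to $1/2$. So the algorithm I would run on input $x$ and a fresh copy of $\ket{\psi_n}$ is: measure both registers of $\ket{\psi_n}$ in the standard basis to obtain some $(x', b)$; if $x' = x$, output $b$; otherwise output a uniformly random bit.

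The key step is then a one-line calculation of the acceptance probability. The first branch fires with probability $2^{-n}$, in which case $b = L_n(x)$ with certainty; the second branch fires with probability $1 - 2^{-n}$ and accepts with probability $1/2$. Hence
\[
\Pr[\text{accept}] = 2^{-n}\,\mathbb{1}[L_n(x)=1] + \tfrac{1}{2}(1 - 2^{-n}) = \tfrac{1}{2} \pm 2^{-n-1},
\]
with the $+$ sign when $x \in L$ and the $-$ sign when $x \notin L$. This is exactly the $\mathsf{PQP}$ acceptance criterion (probability strictly greater than $1/2$ iff $x \in L$), and since $L$ was arbitrary we conclude $\mathsf{PQP/qpoly} = \mathsf{ALL}$.

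There is essentially no obstacle here: the ``trick'' is entirely that unbounded-error quantum computation can amplify a $2^{-n}$ bias against a $1/2$ baseline, so a single standard-basis measurement of the truth-table state is already sufficient advice. Had we insisted on bounded error (i.e.\ $\mathsf{BQP/qpoly}$) this approach would fail completely, which is consistent with the known fact that $\mathsf{BQP/qpoly} \neq \mathsf{ALL}$.
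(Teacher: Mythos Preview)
Your proof is correct and is essentially identical to the paper's: the same truth-table advice state, the same ``measure and output $b$ if you see $x$, else flip a coin'' procedure, and the same observation that the resulting $2^{-n-1}$ bias above or below $1/2$ is exactly what $\mathsf{PQP}$ requires. You even made the acceptance-probability calculation explicit, which the paper leaves implicit.
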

\begin{proof}
We can simply use the aforementioned advice state
$$ \frac{1}{\sqrt{2^n}} \sum_{x \in \{ 0, 1 \}^n}\ket{ x} \ket{ L_n( x) }. $$
The $\mathsf{PQP}$ algorithm that uses this state will measure the first register.  If $x$ is observed, then it returns $L_n( x)$, whereas if any input other than $x$ observed, it returns $0$ or $1$ both with probability $1/2$.
This exponentially small bias toward the right answer is sufficient for a $\mathsf{PQP}$ machine to succeed.
\end{proof}

Thus, if we want to upper-bound $\mathsf{BQP/qpoly}$, then we'll need to use further structure in the way quantum states can be accessed and measured.

In 2004, Aaronson \cite{aar:adv} proved that $\mathsf{BQP/qpoly} \subseteq \mathsf{PP/poly}$, which put the first upper bound on the power of quantum advice, and implied that $\mathsf{BQP/qpoly} \neq \mathsf{ALL}$.  To explain how this works, we need to introduce one more complexity class: $\mathsf{PostBQP}$.

\begin{definition} $\mathsf{PostBQP}$
is the same as $\mathsf{BQP}$, except augmented with the power of postselection.
That is, we assume the ability to measure a single qubit, and ``postselect'' on a specific measurement outcome (e.g., by projecting onto
the $\ket{1}$ state).  We also assume that the postselection measurement yields the desired outcome with probability at least $1/\exp(n^{O(1)})$ (if it doesn't, then our machine doesn't define a $\mathsf{PostBQP}$ language).  Conditioned on the postselection measurement succeeding, we then measure a second qubit.  If $x\in L$, then the measurement of the second qubit returns $\ket{1}$ with probability at least $2/3$, while if $x\in L$
then it returns $\ket{1}$ with probability at most $1/3$.
\end{definition}

As a side note, why do we need to assume that the postselected qubit returns
  $\ket{1}$ with probability at least $1/\exp(n^{O(1)})$?  Because otherwise, we might need more than $n^{O(1)}$ bits even to represent
  the probability, and many of the basic properties of quantum computation (such as the Solovay-Kitaev Theorem) would break down.
  If our gate set is reasonably ``tame'' (for example, if all the entries of the unitary matrices are algebraic numbers), then one can prove
  that every polynomial-size circuit that accepts with nonzero probability accepts with probability at least $1/\exp(n^{O(1)})$---so in that
  case, no separate assumption is needed.  On the other hand, there also exist exotic gate sets (for example, involving Liouville numbers) for which this is false: that is, for which polynomial-size circuits can generate nonzero probabilities that are doubly-exponentially small or even smaller. For more about this, see Aaronson's blog post \cite{Aar14}.

In terms of class inclusions, where does $\mathsf{PostBQP}$ fit in?  It's not hard to see that

$$ \mathsf{NP} \subseteq \mathsf{PostBQP} \subseteq \mathsf{PP}. $$

For the first inclusion, we construct a quantum circuit $C$ associated to our favorite $\mathsf{NP}$-complete problem (say, $C(x)=1$ if $x$ satisfies
a given $3SAT$ instance $\varphi$, and $C(x)=0$ otherwise), and then prepare the state
\begin{equation}
   \frac{\sqrt{1-\varepsilon^2}}{\sqrt{2^n}} \sum_{x \in \{0,1\}^n} \ket{x} \ket{ C ( x) } + \varepsilon
  \ket{ \mathsf{NULL} } \ket{1},
\end{equation}
where (say) $\varepsilon = 4^{-n}$.  We then measure the second register, postselect on getting $\ket{1}$, and then accept if and only if we find an $x$ in the first register such that $C(x)=1$.

For the second inclusion, we simply repeat the proof of $\mathsf{BQP} \subseteq \mathsf{PP}$ (Theorem \ref{bqppp} in Lecture 2), except that now we only sum over computational paths which give the desired outcome on the postselected qubit.

In fact, Aaronson \cite{aar:pp} showed that the converse statement, $\mathsf{PP}\subseteq \mathsf{PostBQP}$, holds as well:
\begin{theorem}[\cite{aar:pp}]
\label{postbqppp}
  $\mathsf{PostBQP}=\mathsf{PP}$.
\end{theorem}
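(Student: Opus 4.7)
The plan is to prove $\mathsf{PP}\subseteq\mathsf{PostBQP}$, the direction not yet established in the excerpt. The argument proceeds in three stages. First, a reduction: using the $\mathsf{GapP}$ characterization of $\mathsf{PP}$, any language $L\in\mathsf{PP}$ reduces to deciding $\operatorname{sign}(a-b)$ for a polynomial-time predicate $f\colon\{0,1\}^m\to\{0,1\}$, where $a=|\{y:f(y)=0\}|$ and $b=|\{y:f(y)=1\}|$ and $a+b=2^m$. After padding $y$ with a dummy bit we may assume $a\ne b$, and the hard case is $|a-b|=1$.

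Second, I encode $a$ and $b$ into a single qubit's amplitudes. Compute $\tfrac{1}{\sqrt{2^m}}\sum_y \ket{y}\ket{f(y)}$ in polynomial time, apply $H^{\otimes m}$ to the $y$-register, and postselect on observing $0^m$ there---a legal move since $\mathsf{PostBQP}$ tolerates postselection on events of inverse-exponential probability. The remaining ancilla sits in $\ket{\psi}=(a\ket{0}+b\ket{1})/\sqrt{a^2+b^2}$. A direct measurement only reveals a bias of order $|a-b|/2^m$, far too small to cross the $(1/3,2/3)$ threshold.

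Third, and most delicately, I amplify this tiny bias using postselection. The plan is to prepare, in superposition, an exponentially fine grid of candidate angles $\{\alpha_k\}_{k\in\{0,\ldots,2^{\poly(m)}-1\}}$, apply the rotation $R_{\alpha_k}$ controlled on the grid register to a fresh copy of $\ket{\psi}$, and postselect on the rotated qubit landing in a prescribed basis direction. At the ``correct'' angle $\alpha^*\approx\arctan(b/a)$ the rotation aligns $\ket{\psi}$ with one basis vector, so postselecting on the orthogonal direction kills almost all branches; the surviving branch concentrates on the near-optimal $\alpha_k$'s, and the sign of $a-b$ is recorded in the phase structure of the grid register, readable by a final Hadamard-and-measure. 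An equivalent way to think about it: postselection lets one effectively scale one amplitude by a programmable factor $2^k$, and by sweeping $k$ in superposition one detects the exact crossover at which $a\ket{0}$ and $2^k b\ket{1}$ become comparable.

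The main obstacle is this amplification step: I must simultaneously verify that (i) the total postselection probability remains at least $1/\exp(\poly(n))$, so the $\mathsf{PostBQP}$ definition applies, and (ii) conditioned on success, the correct answer has probability at least $2/3$ uniformly over $a$ and $b$. Both hinges depend on calibrating the grid: fine enough that some $\alpha_k$ lies within $2^{-\poly(m)}$ of $\arctan(b/a)$ for every pair $(a,b)$, yet coarse enough that the correct answer is stable across a grid cell. Once a constant bias is achieved, the usual amplification available for $\mathsf{PostBQP}$ (closed under majority vote, since Toffoli is present in the gate set) widens the gap to the standard $(1/3,2/3)$.
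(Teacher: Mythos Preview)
The paper does not prove this theorem: immediately after stating it, the text reads ``We won't prove Theorem~\ref{postbqppp} in this course, but will just make a few remarks about it.'' So there is no in-paper proof to compare against; what follows compares your plan to Aaronson's original argument in the cited reference \cite{aar:pp}.

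Your first two stages match the original exactly. Your third stage has the right core intuition---use postselection to rescale one amplitude by a factor $2^k$ and look for the crossover---but the specific mechanism you propose does not work as written. If you rotate $\ket{\psi}$ by $\alpha_k$ (in superposition over $k$) and postselect on the direction \emph{orthogonal} to the aligned one, the surviving amplitude on branch $k$ is proportional to $\sin(\theta-\alpha_k)$ where $\theta=\arctan(b/a)$; this is \emph{small} near the correct angle, so weight concentrates away from the optimal $k$'s, the opposite of what you claim. More fundamentally, the resulting superposition $\sum_k \sin(\theta-\alpha_k)\ket{k}$ equals $\sin\theta\,\ket{C}-\cos\theta\,\ket{S}$ for two fixed vectors $\ket{C}=\sum_k\cos\alpha_k\ket{k}$ and $\ket{S}=\sum_k\sin\alpha_k\ket{k}$. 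It therefore lives in a two-dimensional subspace and encodes only $\tan\theta=b/a$---precisely what $\ket{\psi}$ already encoded, with the same exponentially small bias and no amplification. The ``phase structure readable by Hadamard-and-measure'' is not there.

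The fix, and the actual argument, is to drop the superposition and iterate \emph{classically} over $k\in\{-m,\ldots,m\}$. For each $k$ one prepares (via one further postselection, starting from a fresh copy of $\ket{\psi}$ together with $H\ket{\psi}$) a qubit proportional to $(a-b)\ket{0}+\sqrt{2}\cdot 2^k b\,\ket{1}$, and measures it in the $\{\ket{+},\ket{-}\}$ basis. A two-line computation shows the bias toward $\ket{+}$ versus $\ket{-}$ is $\frac{2(a-b)c}{(a-b)^2+c^2}$ with $c=\sqrt{2}\cdot 2^k b>0$: its sign is always $\operatorname{sign}(a-b)$, and its magnitude is $\Omega(1)$ for the unique $k$ with $c\approx |a-b|$. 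Hence for every $k$ the bias is either in the correct direction or negligible, and never the wrong direction. Running all $O(m)$ experiments in parallel (ANDing the postselection bits into one, which is legal in $\mathsf{PostBQP}$) and outputting the overall $\ket{+}/\ket{-}$ majority therefore yields a constant advantage, after which standard amplification finishes the job.
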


We won't prove Theorem \ref{postbqppp} in this course, but will just make a few remarks about it.  First, Theorem \ref{postbqppp} gives a surprising equivalence between a quantum complexity class and a classical one.  Indeed, the theorem can even be used to give alternative, ``quantum'' proofs of classical properties of $\mathsf{PP}$, such as the fact that $\mathsf{PP}$ is closed under intersection.

Theorem \ref{postbqppp} can also be used to argue about the classical hardness of tasks such as BosonSampling \cite{aark}.  This is because of the dramatic difference that Theorem \ref{postbqppp} reveals between $\mathsf{PostBQP}$ and its classical counterpart, $\mathsf{PostBPP}$ (also known as $\mathsf{BPP_{path}}$).  In particular, it's known that

$$\mathsf{PostBPP} \subseteq \mathsf{BPP^{NP}} \subseteq \mathsf{PH}.$$

One striking statement---which is a consequence of the above fact together with Toda's Theorem \cite{toda} that $\mathsf{PH}\subseteq\mathsf{P^{PP}}$---is that if $\mathsf{PostBPP}=\mathsf{PostBQP}$, then the polynomial hierarchy $\mathsf{PH}$ collapses to the third level.  This is much stronger evidence that $\mathsf{PostBPP}\neq \mathsf{PostBQP}$ than we have that (say) $\mathsf{BPP}\neq \mathsf{BQP}$.

With $\mathsf{PostBQP}$ in hand, we're ready to prove the limitation on quantum advice.  In particular, we'll show how to
simulate a quantum advice state by using {\em classical} advice together with postselection.
\begin{theorem}[\cite{aar:adv}]
\label{bqpqpolyub}
   $\mathsf{BQP/qpoly} \subseteq \mathsf{PostBQP/poly} = \mathsf{PP/poly}$.
\end{theorem}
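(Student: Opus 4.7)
By Theorem~\ref{postbqppp}, $\mathsf{PostBQP}=\mathsf{PP}$, so the second equality in the statement is free and I only need to establish $\mathsf{BQP}/\mathsf{qpoly}\subseteq\mathsf{PostBQP}/\mathsf{poly}$. Let $L\in\mathsf{BQP}/\mathsf{qpoly}$ be decided by a verifier $Q$ using advice $\ket{\psi_n}$ on $p(n)=\poly(n)$ qubits. First I would amplify $Q$ by running it with advice $\ket{\psi_n}^{\otimes r}$ for $r=\poly(n)$, which is legal since this enlarged state is itself valid quantum advice; after this step $Q$ errs with probability at most $2^{-n^2}$ on every input of length $n$.

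The plan is for the classical advice of the simulating $\mathsf{PostBQP}$ machine to be a short list of ``training examples'' $(x_1,L(x_1)),\dots,(x_m,L(x_m))$ with $m=O(p(n))$. The machine then (i) initializes a register $R$ of $p(n)$ qubits in the maximally mixed state $I/2^{p(n)}$, which one can think of as a uniform mixture of pure states containing $\ket{\psi_n}$ with weight $2^{-p(n)}$; (ii) for each $i$, coherently runs $Q$ on $(x_i,R)$ using fresh ancillas, uncomputes to return $R$ near its pre-measurement form, and postselects the verdict qubit on $y_i=L(x_i)$; and (iii) finally runs $Q$ on the real input $x$ with $R$ as advice and outputs its verdict. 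Because the amplified $Q$ accepts $\ket{\psi_n}$ on every input with probability $1-2^{-n^2}$, the ``$\ket{\psi_n}$-branch'' survives all postselections with net weight at least about $2^{-p(n)-1}$, which is inverse-exponential and therefore meets the $\mathsf{PostBQP}$ postselection-probability requirement.

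The heart of the proof is choosing the training set so that the postselected register $R$ correctly answers every input of length $n$, not just the $m$ training ones. I would prove this by a greedy adversarial construction. Set $\rho_0:=I/2^{p(n)}$ and track the unnormalized state $\rho_i$ after the first $i$ postselections together with the potential $\Phi_i:=\Tr(\rho_i)/\bra{\psi_n}\rho_i\ket{\psi_n}$. Note $\Phi_0=2^{p(n)}$ and $\Phi_i\geq 1$ always, since $\ket{\psi_n}\!\bra{\psi_n}\preceq I$. While there exists an input $x$ on which the normalized state $\rho_i/\Tr(\rho_i)$ makes $Q$ accept with probability differing from $L(x)$ by at least some constant $c>0$, append $(x,L(x))$ to the training set and form $\rho_{i+1}$ by applying the corresponding filter. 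A direct computation gives $\Phi_{i+1}/\Phi_i\leq (1-c)/(1-2^{-n^2})\leq 1-c/2$: the numerator shrinks by a factor of at most $1-c$, while the denominator changes by a factor in $1\pm 2^{-\Omega(n^2)}$, the latter provided by the Almost-As-Good-As-New Lemma (Lemma~\ref{lem:aagan}) applied on the branch where $R$ equals the true advice $\ket{\psi_n}$. Hence the greedy procedure halts after $m=O(p(n)/c)$ stages, producing $\rho_m$ whose acceptance probabilities on every input $x$ agree with those of $\ket{\psi_n}$ up to a small constant; one final amplification layer converts this into the correct verdict on $x$ with high probability.

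The main technical obstacle is the interaction between the sequential queries and the back-action they inflict on $R$: because $Q$ is a destructive measurement, naively iterating it $m$ times could in principle scramble $\ket{\psi_n}$. I would handle this twice over. Inside the potential argument, the Almost-As-Good-As-New Lemma bounds each per-step denominator disturbance by $2^{-\Omega(n^2)}$, which makes the potential drop by a genuine constant factor each stage. For the overall postselection probability, the Quantum Union Bound (Lemma~\ref{lem:qunionbound}) composes the $m$ per-step disturbances and shows that conditioned on starting from $\ket{\psi_n}$, the register survives all training rounds with probability at least $1-m\cdot 2^{-n^2/2}=1-o(1)$. Combining these gives both the desired inverse-exponential postselection success probability and the correctness on the real input $x$, placing $L$ in $\mathsf{PostBQP}/\mathsf{poly}$.
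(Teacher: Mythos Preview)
Your proposal is correct and follows essentially the same approach as the paper: amplify the advice, let the classical advice be a greedy list of inputs on which the current ``simulated advice'' fails, start that simulated advice at the maximally mixed state, postselect on the correct verdict at each training round, and bound the number of rounds by $O(p(n))$ by comparing how fast the overall acceptance mass drops versus how well the $\ket{\psi_n}$-branch survives. The paper's bookkeeping is slightly more direct than your potential-function version: it simply expands $I/2^m$ in an orthonormal basis containing $\ket{\psi_n}$ and uses linearity to sandwich the joint acceptance probability between $(2/3)^k$ (from the greedy choice of $x_i$) and $0.9/2^m$ (from the $\ket{\psi_n}$ term alone, via the Quantum Union Bound), immediately giving $k=O(m)$. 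Your denominator $\bra{\psi_n}\rho_i\ket{\psi_n}$ is not literally the quantity the Almost-As-Good-As-New Lemma controls, but your claimed multiplicative stability does hold once one also uses $\|\rho_i\|_{\mathrm{op}}\le 2^{-p(n)}$; the paper sidesteps this step entirely by tracking the $\ket{\psi_n}$-branch success probability rather than an overlap.
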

\begin{proof}
We'll consider an amplified advice state $\ket{ \psi }$, consisting of $n$ copies of the original advice state $\ket{ \psi_n }$:
$$ \ket{ \psi }  = \ket{ \psi_n }^{\otimes n}. $$
This lets us output the right answer with exponentially small error probability, $\frac{1}{\exp ( n)}$.
Let $m$ be the number of qubits in $\ket{\psi}$; then $m$ is still polynomial in $n$.

The idea will be to guess the quantum advice state $\ket{\psi}$ without having any quantum access to it.
We will start out with ``the most na\"{i}ve guess'' for $\ket{\psi}$, which we take to be the maximally mixed state $\rho_0={I}/{2^m}$.
The adviser will then help us update our guess, by providing a sequence of input values $x_1,x_2,\ldots$ for which our provisional advice state would fail to lead to the right answer (i.e., would cause us to output the wrong answer with probability at least $1/3$).

In more detail, let $Q(x,\ket{\psi})$ be the $\mathsf{BQP/qpoly}$ algorithm that we're trying to simulate.  Then the adviser will first send a classical value $x_1 \in \{ 0, 1\}^n$ such that
$$\Pr[Q(x_1, \rho_1) = L_n(x)]< 2/3,$$
if such a value exists.  We'll then run $Q(x_1,\rho_1)$, and postselect on obtaining the right answer.  After this happens, let the state left over in the advice register
be $\rho_2$.

Next the adviser again sends a classical value $x_2$ such that
$$\Pr[Q(x_2, \rho_2) = L_n(x)]< 2/3,$$
if such a value exists, and so on.

If at any point there {\em doesn't} exist a suitable $x_k$, then the adviser is done: by construction, we're now left with an advice state $\rho_k$ such that $\Pr[Q(x, \rho_k) = L_n(x)]\geq 2/3$ for every input $x\in \{0,1\}^n$.

Note also that we can simulate postselecting multiple times by using a single postselection step at the end.

Thus, the only remaining problem is to prove an upper bound on $k$, the number of iterations until the refinement procedure terminates.

To prove such an upper bound, the first step is to choose an orthonormal basis containing the ``true'' $m$-qubit quantum advice state $\ket{\psi}$, and use that basis to expand the initial maximally mixed state.  Here we use the fact that the maximally mixed state can be written as an equal mixture of the vectors in {\em any} orthonormal basis:
\begin{equation}
\label{decom}
  I = \frac{1}{2^m} \sum_{x \in \{ 0, 1 \}^n} \ketbra{x}{x} =
  \frac{1}{2^m} \sum_{x \in \{ 0, 1 \}^n} \ketbra{\varphi_i }{ \varphi_i },
\end{equation}
where, say, $\ket{\varphi_1} = \ket{\psi}$ is the true advice state.
By the Quantum Union Bound (Lemma \ref{lem:qunionbound}), we have
$$\Pr [ Q ( \ket{\psi}) \text{ succeeds on } x_1,
\ldots, x_k] > 0.9.$$
So
$$\left( \frac{2}{3} \right)^k \geq \Pr\left[ Q \left( \frac{I}{2^m}\right) \text{ succeeds on } x_1, \ldots, x_k \right] > \frac{0.9}{2^m},$$
where the first inequality follows from the way we defined the iterative procedure, and the second follows from equation \ref{decom} and linearity.  Solving for $k$, we find that the iterative process
must end with $k = O ( m)$.
\end{proof}

Aaronson and Drucker \cite{AD14} later gave the following improvement of Theorem \ref{bqpqpolyub}:
\begin{theorem}[\cite{AD14}]
\label{adruckerthm}
$\mathsf{BQP/qpoly} \subseteq \mathsf{QMA/poly}$ (i.e.,  a quantum computer with polynomial-size quantum advice can be
  simulated by $\mathsf{QMA}$ with polynomial-size {\em classical} advice).  Indeed, something stronger is true: polynomial-size
  trusted quantum advice can be simulated using polynomial-size trusted classical advice together with polynomial-size {\em untrusted} quantum advice.
\end{theorem}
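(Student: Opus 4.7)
The plan is to parallel the proof of Theorem \ref{bqpqpolyub}: use the same kind of classical advice (the iteratively-constructed list of test inputs), but in place of synthesizing the quantum advice state via postselection from $I/2^m$, have Merlin supply it directly as an untrusted $\mathsf{QMA}$ witness, which Arthur then verifies by running those tests.

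First I amplify the given $\mathsf{BQP/qpoly}$ machine by taking $\ket{\Psi_n} := \ket{\psi_n}^{\otimes n}$ as its advice, so the verifier $Q$ errs with probability at most $2^{-n}$ on every input, using $m = \poly(n)$ qubits of advice. Then I run the iterative refinement from Theorem \ref{bqpqpolyub}: starting from $\rho_1 = I/2^m$, at each stage I pick an input $x_i$ on which $Q(x_i, \rho_i)$ is wrong with probability at least $1/3$, record $(x_i, b_i := L_n(x_i))$, and update $\rho_{i+1}$ to be $\rho_i$ postselected on $Q(x_i, \cdot)$ outputting $b_i$. The process terminates in $k = O(m)$ rounds, and the resulting list $(x_1,b_1),\dots,(x_k,b_k)$ is the $\mathsf{/poly}$ advice. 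The $\mathsf{QMA}$ protocol itself: Merlin sends a purported $m$-qubit state $\sigma$; Arthur runs, sequentially, the projective measurements that implement $Q(x_i, \cdot)$ and read off the output qubit, rejecting whenever the outcome disagrees with $b_i$; if all $k$ tests pass, he runs $Q(x, \cdot)$ on what is left and outputs the result.

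Completeness follows from the Quantum Union Bound (Lemma \ref{lem:qunionbound}). With the honest witness $\sigma = \ket{\Psi_n}$, each test accepts with probability at least $1 - 2^{-n}$, so Lemma \ref{lem:qunionbound} tells us that all $k = \poly(n)$ tests accept jointly with high probability and the state is preserved within trace distance $O(k \cdot 2^{-n/2})$ of $\ket{\Psi_n}$, leaving the final $Q(x, \cdot)$ correct.

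The hard part is soundness: bounding, for any $\sigma$ Merlin might send, the probability $\Tr[E \sigma]$ that Arthur both accepts and outputs $1$ on an input $x \notin L$, where $E$ is the POVM element for that joint event. The operator inequality $\sigma \preceq I$ gives
\[
\Tr[E\sigma] \;\leq\; \Tr[E] \;=\; 2^m \, \Pr[\text{accept and output } 1 \mid I/2^m],
\]
which reduces the question to the maximally mixed input, i.e., exactly the setting of Theorem \ref{bqpqpolyub}. There the construction of the tests guarantees that the state postselected on acceptance gives the correct answer on every input, so the right-hand side is at most $2^m \cdot (1/3) \cdot (2/3)^k$. Making this quantitatively small is the technical crux: the bare termination condition of Theorem \ref{bqpqpolyub} only gives $(2/3)^k \geq 0.9/2^m$, which leaves no slack, so one must amplify $Q$ much more sharply (to error $2^{-\omega(m)}$) and continue to add test inputs past the natural termination until $(2/3)^k \ll 2^{-m}$, using Lemma \ref{lem:qunionbound} to guarantee that the honest witness still survives the longer verification. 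This is the step I expect to be the main obstacle to writing out cleanly.
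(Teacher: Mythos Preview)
The paper does not prove this theorem; it only states it and cites \cite{AD14}. So there is no ``paper's own proof'' to compare against, and I will assess your approach on its merits.

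Completeness is fine. The genuine gap is soundness, and your proposed rescue does not close it. The bound $\Tr[E\sigma]\le\Tr[E]=2^m\Pr[E\mid I/2^m]$ is valid, but making the right-hand side small is exactly the obstruction. Write $\Tr[E]=2^m\cdot\Pr[\text{pass all}\mid I/2^m]\cdot\Pr[Q(x,\rho_{k+1})=1]$. The termination condition bounds the last factor by $1/3$, but says nothing about $\Pr[\text{pass all}\mid I/2^m]$ beyond the trivial $\le 1$; the procedure can terminate after a handful of tests (if those already make the postselected mixture $\rho_{k+1}$ good on every input), leaving $\Pr[\text{pass all}\mid I/2^m]=\Theta(1)$ and $\Tr[E]=\Theta(2^m)$. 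Nor can you drive $\Pr[\text{pass all}\mid I/2^m]$ below $\Omega(2^{-m})$: the honest advice $\ket{\Psi_n}$ itself contributes $\ge(1-o(1))2^{-m}$ to that probability, and you need it to for completeness. So the only way to make $\Tr[E]$ small is to make $\Pr[\text{wrong}\mid\rho_{k+1}]$ small---say $\le\delta$---by tightening the termination threshold to $1-\delta$. But then (i) each round shrinks $\Pr[\text{pass all}]$ by only a $(1-\delta)$ factor, so the argument that bounds $k$ now gives $k=O(m/\delta)$, and (ii) you need $\delta<2^{-m}/3$ to get $\Tr[E]<1/3$. Together these force $k=2^{\Omega(m)}$, i.e., exponentially long classical advice. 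Amplifying $Q$ by parallel repetition to error $2^{-\omega(m)}$ is circular because it inflates $m$; Marriott--Watrous in-place amplification avoids the circularity but still leaves you with the $k=O(m/\delta)$ advice-length blowup.

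The underlying conceptual issue is that the tests from Theorem~\ref{bqpqpolyub} certify that the \emph{postselected mixture} $\rho_{k+1}$ behaves well, not that every pure state passing the tests does; converting the former into the latter via $\sigma\preceq I$ costs exactly the factor $2^m$ you cannot afford. The actual Aaronson--Drucker proof uses a genuinely different mechanism: a learning-theoretic ``majority-certificates'' lemma (resting on bounds for the fat-shattering dimension of the concept class $\{x\mapsto\Pr[Q(x,\rho)\text{ accepts}]\}_\rho$) yields polynomially many short certificate lists; Merlin sends a separate witness for each, Arthur verifies each against its certificate, and outputs the \emph{majority} of the resulting $Q(x,\cdot)$ answers. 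The majority vote over several independently certified witnesses is what defeats a single adversarial state, and it does not emerge from any refinement of the single-trajectory postselection procedure of Theorem~\ref{bqpqpolyub}.
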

In other words, we can assume that the quantum advice state is simply a $\mathsf{QMA}$ witness state.
This may seem surprising, since quantum advice states need to satisfy exponentially many constraints associated to different inputs.

Theorem \ref{adruckerthm} has an interesting implication: namely, if Question \ref{qmaques} has a positive answer---that is, if all $\mathsf{QMA}$ witness states can be prepared by polynomial-size quantum circuits---then that would imply not merely $\mathsf{QCMA} = \mathsf{QMA}$, but $\mathsf{BQP/poly} = \mathsf{BQP/qpoly}$ as well.  This explains, perhaps, why these two complexity class questions have seemed so closely related, with every result about one matched by a corresponding result about the other.

We'll end by mentioning one more result in this direction, by Aaronson \cite{aar:qmaqpoly} in 2006:

\begin{theorem}[\cite{aar:qmaqpoly}]
\label{qmaqpolythm}
$\mathsf{QMA/qpoly}\subseteq \mathsf{PSPACE/poly}$.
\end{theorem}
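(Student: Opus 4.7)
I would follow the strategy of Theorem~\ref{bqpqpolyub} ($\mathsf{BQP/qpoly} \subseteq \mathsf{PP/poly}$), iteratively refining a ``candidate advice state'' $\rho_i$ starting from $\rho_0 = I/2^m$ and guided by polynomial-size classical advice, with two modifications: each refinement step must accommodate the existentially-quantified QMA witness, and the simulation is done in $\mathsf{PSPACE}$ rather than $\mathsf{PostBQP}$, using $\mathsf{QMA} \subseteq \mathsf{PP} \subseteq \mathsf{PSPACE}$. The preprocessing step is to apply Marriott--Watrous in-place amplification so that with the true advice $\rho^* := \proj{\psi_n}^{\otimes T}$ on $m = \poly(n)$ qubits, the $\mathsf{QMA/qpoly}$ protocol has completeness and soundness errors $\eps = 2^{-c(n+m)}$ for a large constant $c$.

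The classical $\mathsf{/poly}$ advice is a list $(x_1, b_1),\ldots,(x_k, b_k)$ with $b_i = L_n(x_i)$; each $x_i$ is flagged because the current candidate $\rho_i$ violates the QMA promise on it --- either $b_i = 1$ with $\max_{\ket\phi} \Pr[V(x_i,\rho_i,\ket\phi)\text{ accepts}] < 2/3$, or $b_i = 0$ with that maximum exceeding $1/3$. At step $i$ we commit to a specific witness $\ket{\phi_i}$, apply $V(x_i,\cdot,\ket{\phi_i})$ to $\rho_i$, and postselect on outcome $b_i$ to obtain $\rho_{i+1}$. For $b_i = 0$, $\ket{\phi_i}$ is the witness that \emph{maximizes} acceptance on $\rho_i$ (computable in PSPACE as the top eigenvector of a Hermitian operator derived from $V$); for $b_i = 1$, $\ket{\phi_i}$ is the witness that maximizes acceptance on $\rho^*$ --- PSPACE does not know $\rho^*$ explicitly, but it can existentially guess $\ket{\phi_i}$ via the power underlying $\mathsf{QMA} \subseteq \mathsf{PSPACE}$.

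By amplified completeness/soundness, $\rho^*$ yields the postselected outcome $b_i$ with probability $\geq 1 - \eps$ at every step --- for $b_i = 0$ because soundness applies to \emph{every} witness including $\ket{\phi_i}$, and for $b_i = 1$ by construction of $\ket{\phi_i}$ together with completeness. Hence by the Almost-As-Good-As-New Lemma (Lemma~\ref{lem:aagan}) and the Quantum Union Bound (Lemma~\ref{lem:qunionbound}), $\rho^*$ is perturbed by at most $O(k\sqrt\eps)$ in trace distance after all $k$ refinement operators. Conversely, the postselection probability under $\rho_i$ is at most $2/3$ by the bad-input flag, so the weight of $\rho^*$ inside $\rho_i$ grows by a factor $\geq (1 - \eps)/(2/3) \geq 3/2$ per step; since this weight starts at $2^{-m}$ in $\rho_0$, a potential-function argument forces $k = O(m) = \poly(n)$ before no further bad $x_i$ can exist, and $\rho_k$ becomes a valid advice state for the input $x$ at hand.

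On input $x$ and advice $(x_1,b_1),\ldots,(x_k,b_k)$, PSPACE then computes any requested amplitude of $\rho_k$ to exponential precision by recursively unrolling the refinement (soundness SDPs are solved directly; completeness witnesses are handled via the existential quantifier and depth-first exploration of a discrete net), then computes $A_x(\rho_k) := \max_{\ket\phi} \Pr[V(x,\rho_k,\ket\phi)\text{ accepts}]$ and accepts iff this exceeds $1/2$. The main obstacle I expect is the nonlinearity of the refinement rule in $\rho_i$: because $\ket{\phi_i}$ depends on $\rho^*$ or on $\rho_i$, the update is not a single Kraus operator across candidate states, which breaks the naive Quantum Union Bound application. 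This is resolved by the observation that \emph{once} a fixed witness sequence $\ket{\phi_1},\ldots,\ket{\phi_k}$ is chosen, each refinement becomes a linear Kraus operation, so QUB applies to $\rho^*$'s evolution verbatim; the existence of a good witness sequence is guaranteed by taking $\rho^*$'s own optimal witnesses, so PSPACE's adaptive existential choice is always satisfiable.
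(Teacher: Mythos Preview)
There is a genuine gap in your handling of the $b_i = 1$ (yes-instance) refinement steps, and it is precisely the obstacle the paper singles out as requiring a new idea.

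Your refinement at a $b_i = 1$ step postselects on acceptance of $V(x_i,\cdot,\ket{\phi_i})$, and for the potential argument to go through you need $\ket{\phi_i}$ to make $\rho^*$ accept with probability $\geq 1-\eps$. You propose that $\mathsf{PSPACE}$ ``existentially guess'' such a $\ket{\phi_i}$. But the overall algorithm then has the form: accept $x$ iff \emph{there exists} a witness sequence $(\ket{\phi_i})_{b_i=1}$ such that the resulting $\rho_k$ satisfies $A_x(\rho_k)>1/2$. Completeness is fine (the canonical sequence works), but soundness breaks. When $x\notin L$, nothing prevents an adversarial witness sequence: if some guessed $\ket{\phi_i'}$ causes $\rho^*$ to accept with probability only $1/2$, then postselecting on ``accept'' can rotate $\rho^*$ by $\Omega(1)$ in trace distance at that single step, so the Almost-As-Good-As-New potential collapses and $\rho_k$ need bear no relation to $\rho^*$. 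Such a garbage $\rho_k$ can perfectly well satisfy $A_x(\rho_k)>1/2$ even for $x\notin L$. There is no local verification condition your $\mathsf{PSPACE}$ machine can check to rule out these bad sequences --- checking that the advice flags $(x_i,b_i)$ are still ``bad for $\rho_i$'' does not control what happens to $\rho^*$, and the adviser cannot ship the correct $\ket{\phi_i}$ in the classical advice because it is an arbitrary $p(n)$-qubit state. Replacing $\exists$ by $\forall$ symmetrically kills completeness.

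The paper's route sidesteps this entirely by separating the two difficulties. First one shows $\mathsf{QMA/qpoly}\subseteq\mathsf{BQPSPACE/qpoly}$: the $\mathsf{BQPSPACE}$ machine \emph{physically holds} the quantum advice and, for each input, loops over an $\eps$-net of all possible witnesses, using the Quantum OR Bound (Lemma~\ref{secretacc} and the surrounding discussion) to test ``does any witness cause acceptance?'' while damaging the advice by only a negligible amount. This is exactly the ``sticking point'' the paper names --- looping through all $\mathsf{QMA}$ witnesses without destroying the $\mathsf{/qpoly}$ advice --- and it is what your existential-guess shortcut tries to avoid. Only after the witness has been absorbed does one run the Theorem~\ref{bqpqpolyub} refinement, now at the $\mathsf{BQPSPACE/qpoly}$ level where there is no longer any Merlin, yielding $\mathsf{BQPSPACE/qpoly}\subseteq\mathsf{PostBQPSPACE/poly}=\mathsf{PSPACE/poly}$.
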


In other words, even if we {\em combine} quantum proofs and quantum advice in the same complexity class, we still don't get infinite computational power: everything we do can still be simulated using polynomial-size classical advice and polynomial space.  Surprisingly, this required a new proof, with the sticking point being to show that $\mathsf{QMA/qpoly}\subseteq \mathsf{BQPSPACE/qpoly}$---i.e., that there's a way to loop through all possible $\mathsf{QMA}$ witnesses, without destroying the $\mathsf{/qpoly}$ advice in the process.  After that, the same argument used to show that $\mathsf{BQP/qpoly}\subseteq \mathsf{PostBQP/poly}$ also easily implies that

$$\mathsf{BQPSPACE/qpoly} \subseteq \mathsf{PostBQPSPACE/poly} = \mathsf{PSPACE/poly}.$$

%%%%%%%%%%%%%%%%%%%%%%%%%%%%%%%%%%%%%%%%
%% Lecture 6
%%%%%%%%%%%%%%%%%%%%%%%%%%%%%%%%%%%%%%%%
\lecture{Scott Aaronson}{Ryan O'Donnell and Toniann Pitassi}{Black Holes, Firewalls and Complexity}

We now take a break from complexity theory to discuss the black-hole information problem.  But we promise: by the end of the lecture, this will bring us right back to complexity!  (What doesn't?)

A black hole has two main regions of interest.
The first is the singularity in the middle.
If you ever jump into a black hole, be sure to visit the singularity;
fortunately you won't be able to miss it!
The other region of interest is the event horizon surrounding the singularity.
The event horizon of a black hole is the boundary
from inside of which no signals can escape. A black hole has so much mass that it distorts
the causal structure of spacetime, so signals can't get from the interior of the
black hole to the exterior.

It's a clear prediction of general relativity that a
sufficiently large mass in a sufficiently small volume
will collapse to a black hole.  Penrose and Hawking proved this in the 1960s.
Einstein didn't think that black holes existed, but that was because
he didn't understand his own theory well enough.
Today we know that they do exist.
Indeed, just before these lectures were given, LIGO announced
the most direct evidence yet for the reality of black holes: namely, a gravitational wave signal that
could only be caused by two spiralling black holes.

What are the puzzles about black holes that have made theoretical physicists so obsessed with them?

\section{Thermodynamics and Reversibility}

The first puzzle is that the existence of black holes {\em seems} inconsistent with
the Second Law of Thermodynamics.
The Second Law says that entropy never decreases,
and thus the whole universe is undergoing a mixing process
(even though the microscopic laws are reversible).
The problem is that black holes seem to provide a surefire way to decrease entropy. They're like
an entropy dumpster, since when you throw something into a black
hole it goes away forever.
Indeed, in classical general relativity, the {\it No-Hair Theorem}
states that a black hole in equilibrium is completely characterized by its
position, velocity,
mass, charge, and angular momentum.
The attractor equations are such that there
is no possibility of adding extra ``hair'': in other words, no possibility of changing
the shape of the event horizon by adding microscopic ripples and so forth to it.
This means that, when bits of information are
thrown into a black hole, the bits seem to disappear from the universe,
thus violating the Second Law.

This troubled the late Jacob Bekenstein, a graduate student in the early 1970s.
He thought that thermodynamics must prevail, and did not accept
that black holes were an exception.
(He was a ``conservative radical.'')
After considering various thought experiments, he proposed
that entropy is proportional to $A$, the surface area of
the event horizon, at a rate of about $10^{69}$ bits per meter squared.
In other words, he said that entropy scales like the {\em surface area} of the black
hold, rather than like the volume as one might have thought.

Bekenstein's later arguments implied that $10^{69}$ bits per meter$^2$ is actually the {\em maximum}
entropy that any physical system of a given surface area can store---since if a system had greater entropy,
then it would simply collapse to a black hole.  In other words: for storage density, nothing beats black holes.
They're the most compact hard drives allowed by the laws of physics, although they're terrible for retrieval!

So black holes went from having no hair to being hairiest things in the universe---i.e.,
having the maximum entropy of any system of the same surface area.

The story goes that Hawking didn't like this. So he set out to refute it.
One thing people were saying was that in
ordinary physics, anything that has an entropy also has
a nonzero temperature.  But if black holes had a nonzero temperature,
then they'd have to be giving off
thermal radiation.  But that seems like an obvious absurdity, since
black holes are defined as objects that nothing can escape from.

So then Hawking
did a calculation, which was called ``semi-classical''
because it treated the radiation around a black hole quantum-mechanically, but still
treated the curved spacetime as being classical.  The purpose of the
calculation was to predict what energy fluctuations an external observer would see in the vicinity of a black hole.
Say that an observer, Alice, is sitting outside the event horizon.
Hawking's calculation predicted that she would indeed see thermal radiation
slowly escaping from the black hole, and at exactly the
rate that Bekenstein's formula predicted.

(A popular, oversimplified view is that there are these
```virtual particles'' in the vacuum---positive- and negative-energy excitations
that are constantly forming and annihilating.  What can happen is that the negative-energy one falls in and the positive-energy one comes out, so the
black hole loses a tiny bit of mass/energy, which the external observer
sees coming out.)

This slowly-escaping radiation from a black hole is
called {\em Hawking radiation}.
Recall that the number of bits stored by a black hole scales like $A \sim r^2$, where $r$ is the radius of the event horizon.
By comparison, the evaporation time scales like $r^3 \sim A^{3/2}$.  For a black hole the mass of our sun, the evaporation time
would be approximately $10^{67}$ years.

In 1976, however, Hawking drew attention to a further puzzle.
The semi-classical calculation also predicts that the escaping radiation
is {\em thermal}---so in particular, uncorrelated with the details of the information that fell into the
black hole.
In other words, each photon would come out in a mixed state $\rho$,
independent of the infalling matter.

So let's come back to Alice. What does she see? Suppose she knows the complete
quantum state %
$\ket{\psi}$ (we'll assume for simplicity that it's pure)
of all the infalling matter.  Then, after collapse to a black hole and Hawking evaporation,
what's come out is thermal radiation in a mixed state ${\rho}$.
This is a problem.
We'd like to think of the laws of physics as just applying one huge unitary transformation to the quantum
state of the world. But there's no unitary $U$
that can be applied to a pure state $\ket{\psi}$ to get a mixed state ${\rho}$.
Hawking proposed that black holes were simply a case where unitarity broke down, and pure states
evolved into mixed states.  That is, he again thought that
black holes were exceptions to the laws that hold everywhere else.

People discussed this problem for forty years or so.
Most people (including even Hawking himself, eventually) evolved to the point of view that in the true quantum-gravitational description of nature, everything should be
completely unitary. So if semi-classical field theory says otherwise, then
the semi-classical calculations are wrong---they're just an approximation anyway
to the real, unitary calculations, and this is a case where
they break down.
(A familiar analogy is burning a book. The smoke and ashes appear to have
no correlation with what's written in the book.  But
in reality, we know that you could in principle recover the contents of
the book from the smoke, ashes, and other byproducts, together with complete knowledge of the relevant laws of physics.)
Reversibility has been a central concept in physics since Galileo and Newton.  Quantum mechanics says that the only exception to reversibility
is when you take a measurement, and the Many-Worlders say
not even {\em that} is an exception.

\section{The Xeroxing Problem}

Alas, even if we believe that the evolution is unitary and the information comes out (and that the physics of the future will explain how),
there's still a further problem.
This is what physicists in the 1980s called
the {\em Xeroxing  Problem}.
Let's consider another thought experiment.
This time, instead of classical information, Alice drops a qubit into the
black hole.
From the perspective of someone who's inside the black hole and sees the qubit come in, it never
leaves.  But from the perspective of Alice, who stays outside
(if we believe in reversibility), the qubit
eventually has to come out in some scrambled form in the Hawking radiation.
So this leads us to the conclusion that there are
two copies of the qubit, violating the No-Cloning Theorem.

To deal with this problem, Susskind and 't Hooft proposed a concept called {\it black hole complementarity}.
They argued that the qubit isn't actually cloned, because the same observer
would never actually see both copies of the qubit.  You might object: why couldn't Alice just stay outside, wait $10^{67}$ years
or whatever for the qubit to come out, and then once it comes out, immediately
jump in and look for the other copy?
Well, the calculations predict that by the time Alice jumps in, the ``inside'' manifestation of the qubit
would have long ago hit the singularity, where Alice could no longer access it.
On the basis of thought experiments such as those, Susskind and Hooft proposed that the inside and outside qubits were
{\em literally the same qubit} (rather than two copies of the qubit), but measured or viewed in two different ways.

\section{The Firewall Paradox}

In 2012, Almheiri, Marolt, Polchinski and Sully (AMPS) wrote a paper \cite{amps} where they proposed yet another thought experiment,
which leads to yet another apparent problem, even if we believe in black hole complementarity.  The new thought experiment is called
the {\em firewall paradox}.  In Scott's view, the firewall paradox helped to sharpen the discussion around black hole complementarity, because
at the end of the day, it's not about which mathematical formalism to use or how to describe a state space---it's simply about {\em what an observer would experience} in a certain situation.  And no one can claim to understand complementarity unless they have an answer to that question.

The firewall paradox is more technical than the previous paradoxes
and involves several ingredients.
The first ingredient is a fact from Quantum Field Theory (QFT), which we're going to ask you to take on faith.
The QFT fact is that if you look at the vacuum state of our universe,
it has an enormous amount of short-range entanglement.
So if you're jumping into a black hole, and you see a photon of Hawking radiation
just emerging from the event horizon,
QFT predicts that there must be a partner photon that is just inside the event horizon,
entangled with the other one.
We can visualize this as a whole bunch of Bell pairs straddling the horizon.
Taking the contrapositive, if you {\em didn't} see these Bell pairs
as you crossed the event horizon, then you wouldn't see a smooth spacetime.
You would instead encounter an ``end of spacetime,'' a Planck-energy wall of photons at which
you'd immediately disintegrate.  This is what's referred to as the firewall.

The second ingredient that we'll need is a modern, information-theoretic view of
black holes, as extremely efficient scramblers of information.
In particular, when we let matter collapse to form a black hole, we can imagine that we start with $n$ qubits
in a simple state (say the $\ket{0\cdots 0}$ state).
And when the qubits come back out of the black hole
in the form of Hawking radiation,
we can model that by a very complicated quantum circuit $C$---for some purposes, a {\em random} quantum circuit---having been applied to
the qubits.
So we can think of the output state as a pseudorandom pure state,
and for many purposes, we can model it as being Haar-random.
(Of course it can't {\em really} be Haar-random, because the black hole formation time is only polynomial,
whereas we know that a Haar-random pure state has $2^{\Omega(n)}$ circuit complexity
with overwhelming probability.)

Abstracting the situation, we want to say something
purely information-theoretic about Haar-random states.
Let's consider a Haar-random pure state $\ket{\psi}$ on $n$ qubits.
Look at the reduced (mixed) state, ${\rho}$, of the first $k$ of these qubits,
obtained by tracing out the remaining $n-k$ qubits.
(In other words, look at the ``marginal distribution'' on the
first $k$ qubits.)  This state will have the form

%$$\rho = \Sum_{i=1}^{2^{n-k}} | \psi_i > < \psi_i |.$$
$$\rho = \sum_{i=1}^{2^{n-k}} p_i \ketbra{\psi_i}{\psi_i}.$$

What does this state look like?  There are two regimes of interest.

\begin{itemize}
\item[(1)] $k<n/2$. In this case, $\operatorname*{rank}(\rho) = 2^k$.
Since $2^{n-k}$ dominates $2^k$, there are enough terms in
the above sum to give a full rank matrix. Using
concentration of measure inequalities, one can show that
the resulting state $\rho$ will be
very close to the maximally mixed state, $I_k$.  (To build intuition, it might help
to consider the extreme case $k=1$ or $k=2$.)
\item[(2)] $k > n/2$.
In this case $\operatorname*{rank}(\rho) = 2^{n-k} < 2^k$.  So in this regime,
$\rho$ is
no longer the maximally mixed state.
\end{itemize}

In conclusion, something very interesting happens when exactly half of the
qubits have emerged from the black hole (in the form of
Hawking radiation).
When half come out, the state $\rho$ as seen by the
outside observer is no longer maximally mixed.
Alice, our outside observer, can in principle start to see correlations, between the Hawking photons themselves, and between them and the infalling matter.
This point in time when half of the qubits come out is
called the {\em Page Time}, after Don Page who studied
it in the 1980s.

There's a further thing that we can say.
Looking at this state $\rho$ in a bit more detail,
when $k>n/2$ qubits have come out, with overwhelming probability, any one of these $k$ qubits
that we pick is entangled with the remaining $k-1$ qubits that
have come out.  (The proof of this is left as an exercise.)

Now back to the thought experiment.
We imagine Alice, who's outside the black hole and is
an all-powerful observer.  Alice sets up a system of $n$ qubits,
in a known initial state, and lets them collapse to form a black hole---evolving,
in isolation from everything else in the universe (we'll assume for simplicity), according
to unitary laws of physics that are completely known to Alice.
Then, as the black hole slowly evaporates, there's a sphere of perfect photodetectors surrounding
the black hole, which catches all the photons of Hawking radiation as they escape,
and routes them into a quantum computer for
processing.

Alice waits roughly $10^{67}$ years, until the black hole has passed its Page Time (but hasn't yet evaporated completely).
Suppose $2n/3$ of the qubits
have come out in the form of Hawking radiation, while $n/3$ remain in the black hole.
Then we consider three subsystems:
\begin{itemize}
\item $R$, consisting of the $k=2n/3$ qubits that have come out,
\item $B$, the very next qubit coming out, and
\item $H$, the remaining qubits that are still inside the black hole.
\end{itemize}
Now, what we concluded from our previous discussion (about states
that are effectively like Haar-random states) is that we
expect $B$ to be entangled with $R$.
In more detail, the joint state of these $k+1$ qubits {\em cannot} be the maximally
mixed state, and we further expect that $B$ shares one bit of entanglement with $R$.
Thus, by applying
a suitable unitary transformation to $R$ alone, Alice should be able to put some designated qubit of $R$ (say,
the ``rightmost'' one, for some arbitrary ordering of the qubits) into a Bell pair with $B$.

So Alice does that, and then measures $B$ and the last qubit of $R$ to confirm that they really are in a Bell pair.  And then Alice jumps into the black hole.

We already said that there exists another qubit
inside the black hole (i.e., in $H$) that $B$ is maximally entangled with, since there's nothing special about the event horizon
in terms of spacetime deformity.
And this entanglement between $B$ and $H$ must also be observable as we cross the event horizon.
But this violates the {\em Principle of Monogamy of Entanglement}, from back in Lecture 2!  The same qubit $B$ can't be maximally entangled with two other qubits (or for that matter, even entangled with one qubit and correlated with another).  There's simply no way to do everything we've demanded,
if all of Alice's experiences are to be described using quantum mechanics.

So if we want to preserve quantum mechanics, or even the appearance of quantum mechanics, then either Alice must be unable to observe entanglement between $B$ and $R$, or
else she must be unable to observe entanglement between $B$ and $H$.  This is the firewall paradox.

There are a few possibilities for resolving the paradox:
\begin{itemize}
\item[(1)] We could throw up our hands and declare that ``black holes are like Las Vegas''---that is, that it's outside the scope of science
to understand what happens inside a black hole.  (I.e., that the goal is to describe that part of the universe that we can receive signals from---in this case, the regions $R$ and $B$ outside the event horizon, where there's no problem.)
\item[(2)] We could say that there really is a firewall, or end of spacetime, at the event horizon.
This would be a radical change to black hole physics, but note that it's compatible with unitarity.  Something weird happens to the
infalling observer, but everything is fine for an outside observer.
\item[(3)] We could give up on one of the basic assumptions of physics that
led to this point. For example, we could give up on unitarity.
(In that case, something weird {\em would} happen for the outside observer, who would
see a truly mixed state.)
\item[(4)] We could say that there exists a firewall {\em if} Alice does this crazy experiment,
but not under ``normal circumstances.''  I.e., that whether there is or isn't a firewall---what Alice perceives as the
nature of spacetime at the event horizon---depends on whether she switched on her quantum
computer and had it process the Hawking radiation in this bizarre way.
\end{itemize}

Scott, in common with many of the actual experts in this area (of which he's not one), tends to favor (4) as the ``least bad option.''
On this fourth view, the whole paradox arose from not taking complementarity seriously enough.
We were insisting on thinking of the qubits coming out as being
different from the ones going in, but complementarity has been telling us
all along that they're the same.  In other words, on the fourth view, $H$ {\em doesn't exist}: what we've been calling
$H$ is really just $R$ and $B$ measured in a different basis.  In normal circumstances one doesn't notice this,
but sufficiently extreme processing of $R$ and $B$ can make it apparent.

However, if we want option (4), then we face a further question: namely, what do we mean by ``normal circumstances''?
Or more pointedly: can we draw any sort of principled distinction between the types of unitary evolution of $R$ and $B$ that might create a firewall,
and the ``ordinary'' types that have no danger of creating one?

\section{The HH Decoding Task}

This brings us to the striking work of Harlow and Hayden (HH) \cite{harlowhayden}, who abstracted
the firewall paradox into a purely computational problem, as follows:

\begin{itemize}
\item \textbf{The Harlow-Hayden Decoding Task.}  We're given as input a description of a quantum circuit $C$, which
maps $n$ qubits (say, in the initial state $\ket{0}^{\tensor n}$) to a tripartite state $\ket{\psi}_{RBH}$, where $B$ is
a single qubit.  We're promised
that there exists a unitary transformation $U$, which acts only on the $R$ part of $\ket{\psi}_{RBH}$, and which has the effect
of putting $B$ and the rightmost qubit of $R$
into the joint state $\frac{1}{\sqrt{2}}(\ket{00}+\ket{11})$.  The challenge is to apply such a $U$ to the $R$ part of $\ket{\psi}_{RBH}$.
\footnote{Throughout these lectures, we'll assume for simplicity that the HH Decoding Task requires decoding the Bell pair $\frac{1}{\sqrt{2}}(\ket{00}+\ket{11})$ {\em perfectly}.  Relaxing this, to allow for approximate decoding, leads to technical complications but doesn't
change anything conceptually.}
\end{itemize}

The HH Decoding Task is {\em information-theoretically} possible---indeed, the very statement of the problem assures us that the desired unitary transformation $U$ exists---but it might be {\em computationally} intractable.  For suppose again that $R$ has $k$ qubits for $k>n/2$.  Then
we argued before that the reduced state $\rho_{RB}$ can't possibly have maximum rank.  If, moreover, $\ket{\psi}_{RBH}$ is the output of a random quantum circuit (as in the ``scrambling'' model of black holes discussed earlier), then a further argument shows that $\rho_{RB}$ will generically have entanglement between $R$ and $B$.  However, these arguments were abstract and non-effective, based on iterating through a basis for the whole Hilbert space.  If we tried to extract from the arguments an actual quantum circuit, acting on $R$, to distill the entanglement between $R$ and $B$, that circuit would have exponential size.\footnote{We do, of course, know the quantum circuit $C$ that maps $\ket{0}^{\tensor n}$) to $\ket{\psi}_{RBH}$, and {\em given access to all three registers}, it would be easy to apply $C^{-1}$.  The difficulty, if one likes, is that we're trying to uncover structure in $\ket{\psi}_{RBH}$ by acting only on $R$.}

But the above, as Harlow and Hayden noted, merely means that the decoding task {\em might} be computationally hard, not that it {\em is}.  We've said nothing against the possibility that there could be some clever, general way, which we haven't thought of yet, to convert the input circuit $C$ into a small circuit for the desired unitary $U$.  In complexity theory, to make the case for an algorithm's existence being unlikely, we normally at the least want a {\em reduction argument}, showing that the algorithm's existence would have many surprising consequences for seemingly-unrelated problems.  This is precisely what Harlow and Hayden accomplished with the following theorem.

\begin{theorem}[Harlow-Hayden \cite{harlowhayden}]
\label{hhthm}
If the HH Decoding Task can be done in polynomial time for arbitrary circuits $C$,
then $\mathsf{SZK} \subseteq \mathsf{BQP}$.
\end{theorem}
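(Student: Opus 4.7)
The plan is to reduce the $\mathsf{SZK}$-complete Statistical Difference problem (Theorem~\ref{svthm}) to the HH Decoding Task. Given classical samplers $f_0, f_1 : \{0,1\}^m \to \{0,1\}^n$ for distributions $D_0, D_1$, promised to satisfy either $\|D_0 - D_1\| \leq 1/3$ or $\|D_0 - D_1\| \geq 2/3$, I build a polynomial-size quantum circuit $C$ preparing
\[
\ket{\psi}_{RBH} \;=\; \frac{1}{\sqrt{2 \cdot 2^m}} \sum_{b \in \{0,1\},\, r \in \{0,1\}^m} \ket{b, r}_R \ket{b}_B \ket{f_b(r)}_H.
\]
Here the first qubit $R_1$ of $R$ coherently holds $b$, the remaining qubits $R'$ hold the seed $r$, and $H$ holds the sample $f_b(r)$. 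Preparation is immediate: Hadamards on $R_1$ and $R'$, a CNOT from $R_1$ into $B$, and a reversible implementation of $(b,r) \mapsto f_b(r)$ from $R'$ into $H$, controlled on $R_1$.

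The key computation is the marginal $\rho_{BH}$: the coherent copy of $b$ in $R_1$ kills every $b \neq b'$ off-diagonal when $R$ is traced out, leaving
\[
\rho_{BH} \;=\; \tfrac{1}{2}\ket{0}\bra{0}_B \otimes \sigma_0 \;+\; \tfrac{1}{2}\ket{1}\bra{1}_B \otimes \sigma_1, \qquad \sigma_b := \frac{1}{2^m}\sum_r \ket{f_b(r)}\bra{f_b(r)}.
\]
Since any unitary acting only on $R$ preserves $\rho_{BH}$, one can convert $\ket{\psi}_{RBH}$ into a state of the form $\ket{\Phi^+}_{BR_1} \otimes \ket{\eta}_{R'H}$ if and only if $\rho_{BH}$ has the product form $\tfrac{I_B}{2} \otimes \rho_H$ (by Uhlmann's theorem, since both sides are then purifications of the same state). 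The block structure above makes this product condition equivalent to $\sigma_0 = \sigma_1$, and because the $\sigma_b$ are diagonal in the computational basis $\|\sigma_0 - \sigma_1\|_{tr} = \|D_0 - D_1\|$; a short triangle-inequality calculation yields $\min_{\rho_H}\|\rho_{BH} - \tfrac{I_B}{2} \otimes \rho_H\|_{tr} = \tfrac{1}{2}\|D_0 - D_1\|$.

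The $\mathsf{BQP}$ algorithm for Statistical Difference now feeds $C$ to the hypothesized decoder, applies the returned $U_R$ to a fresh copy of $\ket{\psi}_{RBH}$, measures $(B, R_1)$ in the Bell basis, and accepts iff it observes $\ket{\Phi^+}$. In the close case the HH promise is (approximately) satisfied, so the decoder returns a $U_R$ driving the Bell-acceptance probability close to one; in the far case the trace-distance obstruction in $\rho_{BH}$ survives every unitary on $R$, so the Bell-acceptance probability is provably bounded away from one by $\Omega(\|D_0-D_1\|)$ regardless of which $U_R$ the decoder returns. A constant number of repetitions amplifies the gap, placing Statistical Difference in $\mathsf{BQP}$ and hence $\mathsf{SZK} \subseteq \mathsf{BQP}$. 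The main obstacle---and really the only subtle point---is the gap between the theorem's \emph{exact}-decoding hypothesis and the \emph{approximate}-decoding one that the construction actually produces (since $\rho_{BH}$ is literally of product form only when $D_0 = D_1$); closing this gap requires either appealing to the approximate version of the HH task flagged in the footnote, or bootstrapping an exact decoder into an approximate one by using the Bell-measurement test as a rejection filter.
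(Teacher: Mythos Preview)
Your construction and analysis are correct, and you have put your finger on exactly the right obstacle. The gap you flag at the end is not a technicality to be patched later: under the \emph{exact} HH hypothesis stated in the theorem, the decoder is only obligated to succeed when the promise ``there exists $U$ on $R$ producing a perfect Bell pair'' literally holds. In your construction that promise holds iff $\sigma_0=\sigma_1$, i.e.\ iff $D_0=D_1$ exactly---something the ``close'' case of Statistical Difference does not guarantee. So in the close case the decoder may legally output garbage, and neither of your proposed fixes closes the gap: invoking the approximate task changes the hypothesis, and a Bell-test rejection filter gives you no leverage over a decoder that is allowed to behave arbitrarily off-promise.

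The paper sidesteps this precisely by \emph{not} reducing from general Statistical Difference. It reduces instead from Set Equality: injective $f,g$ with ranges promised either equal or disjoint. In your notation this forces $D_0=D_1$ exactly (equal case) or $\|D_0-D_1\|=1$ (disjoint case), so the HH promise is either perfectly satisfied or the $R$--$B$ entanglement is exactly zero. That is the move you are missing. The price is that the paper's argument, as written, yields ``HH decoding easy $\Rightarrow$ Set Equality $\in\mathsf{BQP}$'' rather than the full $\mathsf{SZK}\subseteq\mathsf{BQP}$; the lecture notes explicitly say they are proving ``a slightly weaker result.'' Your more ambitious route via the $\mathsf{SZK}$-complete problem would genuinely need the approximate decoding hypothesis mentioned in the footnote.
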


Indeed, if there exist $\mathsf{SZK}$ problems that are exponentially hard for quantum computers, then the HH Decoding Task also requires exponential time.

One can interpret this as saying that, at least in Harlow and Hayden's idealization, for a black hole the mass of our sun, {\em for Alice to do the preprocessing of the Hawking radiation necessary to create a firewall would be impossible in a mere $\sim 10^{67}$ years---it would instead take more like $\sim 2^{10^{67}}$ years!}.  And thus, we might argue, before Alice had even made a dent in the task, the black hole would have long ago evaporated anyway, so there would be nothing for Alice to jump into and hence no Firewall Paradox!

In the next section, we'll follow Harlow and Hayden's construction to prove a slightly weaker result.  We'll then see how to prove results that strengthen Theorem \ref{hhthm} in several respects.

\section{The Harlow-Hayden Argument}

Our goal now is to prove that, if the HH Decoding Task can be done in polynomial time for arbitrary circuits $C$, then a problem called ``Set Equality'' can also be solved in quantum polynomial time.

Set Equality is a special case of the Collision Problem, and is also a special case of Statistical Difference,
the canonical $\mathsf{SZK}$-complete problem (the Collision Problem and Statistical Difference were both defined in Lecture 4).

\begin{definition}
In Set Equality, we're given black-box access to two injective functions,
$f,g: \{1,...,N\} \rightarrow \{1,..,M\}$, where $M \geq 2N$.
We're promised that either
\begin{enumerate}
\item[(i)] $\operatorname*{Range}(f) = \operatorname*{Range}(g)$, or
\item[(ii)] $\operatorname*{Range}(f) \cap \operatorname*{Range}(g) = \emptyset$.
\end{enumerate}
The problem is to decide which.
\end{definition}

Now, in the same paper \cite{aar:col} where Aaronson proved the collision lower bound (Theorem \ref{colthm}), he also proved the following
lower bound on the quantum query complexity of Set Equality:

\begin{theorem}[\cite{aar:col}]
\label{seteqthm}
Any quantum algorithm for Set Equality must make
$\Omega(N^{1/7})$ queries to $f$ and $g$.
\end{theorem}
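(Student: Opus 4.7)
The plan is to adapt the polynomial method of Beals et al., as used by Aaronson to prove the $\Omega(N^{1/5})$ collision lower bound (Theorem~\ref{colthm}), to the bipartite setting of Set Equality. The first step is to encode the inputs $f$ and $g$ as indicator variables $X_{i,y}=[f(i)=y]$ and $Y_{i,y}=[g(i)=y]$, so that the acceptance probability of any $T$-query quantum algorithm becomes a multilinear polynomial $P$ of degree at most $2T$ in these variables. I would then define a family of hard input distributions $\mathcal{D}_{r}$ interpolating between the two YES/NO cases: sample $f:[N]\to[M]$ uniformly among injections, sample a uniformly random permutation $\sigma$ of $[N]$ and a uniformly random injection $h:[N]\to[M]\setminus\mathrm{Range}(f)$, and set $g(i)=f(\sigma(i))$ for $i\le r$ and $g(i)=h(i-r)$ for $i>r$. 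Then $\mathcal{D}_N$ is (an instance of) case (i) and $\mathcal{D}_0$ is case (ii).

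The second step is symmetrization. Averaging $P$ over $\mathcal{D}_r$ and using the invariance of the distribution under the natural actions of $S_N\times S_N$ (on the input domains) and $S_M$ (on the range), the averaged acceptance probability becomes a univariate function $\tilde{P}(r)$ that is a polynomial in $r$ of degree at most $\mathrm{poly}(T)$. The crucial point is that the pairwise and higher-order joint expectations of the indicator variables, over $\mathcal{D}_r$, are ratios of falling factorials in $r$ and $N-r$, each of degree bounded in terms of the number of variables in the monomial; so the $2T$-degree bound on $P$ translates into a controlled degree bound on $\tilde{P}$.

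The third step is to use correctness of the algorithm: $\tilde{P}(N)\ge 2/3$, $\tilde{P}(0)\le 1/3$, and $0\le\tilde{P}(r)\le 1$ on the $N{+}1$ integer points in between. Applying a Markov/Paturi-type inequality for polynomials bounded on a discrete interval then forces the degree of $\tilde{P}$ to be polynomially large in $N$, which in turn forces $T=\Omega(N^{1/7})$.

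The main obstacle is the degree-accounting in step two: unlike in the collision problem, the joint-expectation polynomial for $\mathcal{D}_r$ does not collapse to degree $2T$, because the two functions $f$ and $g$ are coupled only through a partial matching of their ranges, and the combinatorics of ``how many $f$-queries hit a $g$-value'' inflates the degree. Getting a tight enough degree bound — and then matching it against the right approximation-theoretic inequality so that the exponent comes out to $1/7$ rather than a weaker fraction — is where essentially all the technical work lies. A natural fallback, if a single-parameter interpolation is too lossy, is to use a two-parameter family $\mathcal{D}_{r,s}$ (controlling both the overlap size and a ``near-collision'' parameter) and invoke a bivariate analogue of Paturi's inequality, paying for the extra dimension with a smaller exponent.
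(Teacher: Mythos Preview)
The paper does not prove this theorem; it merely states it and cites \cite{aar:col}, so there is no in-paper proof to compare against. Your high-level plan --- polynomial method via indicator variables, symmetrization over a random input family, then a Markov/Paturi bound --- is indeed the shape of the argument in the original source. However, there is a genuine gap in your primary plan.

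The single-parameter family $\mathcal{D}_r$ cannot work as you describe it. The only correctness constraints you extract are $\tilde P(0)\le 1/3$ and $\tilde P(N)\ge 2/3$, together with $\tilde P(r)\in[0,1]$ for $r\in\{0,1,\ldots,N\}$. But the linear polynomial $\tilde P(r)=\tfrac{1}{3}+\tfrac{r}{3N}$ already satisfies all of these, so no Markov-type inequality can force the degree of $\tilde P$ above $1$, and you get no lower bound on $T$ whatsoever. This is not a matter of the degree accounting in step two being ``lossy''; the \emph{constraints themselves} carry no degree information when the two promise cases sit at the two endpoints of the interpolation.

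What makes the arguments in \cite{aar:col} work is a parametrization in which the algorithm is forced to jump from ``reject'' to ``accept'' between two \emph{adjacent} integer values of a parameter (e.g., from $g=1$ to $g=2$ for $g$-to-1 functions), while the symmetrized polynomial remains a bounded acceptance probability over a grid of length polynomial in $N$. Markov's inequality then pits an $\Omega(1)$ derivative on a unit interval against boundedness on a long interval, which is what forces high degree. For Set Equality, Aaronson obtains this adjacency structure only by passing to a \emph{bivariate} polynomial, and the extra dimension is exactly why the exponent degrades from $1/5$ (collision) to $1/7$. So your ``fallback'' is not a fallback at all but the essential mechanism; the single-parameter plan should be discarded. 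You would still need to specify the second parameter concretely and carry out the degree bookkeeping in that bivariate setting to recover the $1/7$.
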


In 2013, Zhandry \cite{zhandry:col} strengthened Theorem \ref{seteqthm}, to show
that any quantum algorithm for Set Equality
must make $\Omega(N^{1/3})$ queries.  Zhandry's lower bound is tight.

If the functions $f$ and $g$ are described explicitly---for example, by circuits to compute them---then
of course we can no longer hope for a black-box lower bound.  Even in that case, however, we can say
that an efficient (in this case, $\log^{O(1)} N$-time) quantum algorithm for Set Equality would let us, for example,
solve Graph Isomorphism in polynomial time---since given two graphs $G$ and $H$, the sets $\{\sigma(G)\}_{\sigma\in S_n}$ and
$\{\sigma(H)\}_{\sigma\in S_n}$ are equal if $G\cong H$ and disjoint otherwise.  Indeed, the Statistical Difference
problem from Lecture 4 is just a slight generalization of Set Equality for explicit $f,g$, and it follows from Theorem \ref{svthm}
of Sahai and Vadhan \cite{DBLP:journals/jacm/SahaiV03} that an efficient quantum algorithm for Statistical Distance
would imply $\mathsf{SZK}\subseteq \mathsf{BQP}$.

Without further ado, let's show the reduction from Set Equality to the HH Decoding Task.  The reduction involves preparing
the following state $\ket{\psi}_{RBH}$:

$$\ket{\psi}_{RBH} = {1 \over \sqrt{2^{n+1}}}
\sum_{x\in\{0,1\}^n} \left( \ket{x,0}_R \ket{0}_B \ket{f(x)}_H + \ket{x,1}_R \ket{1}_B \ket{g(x)}_H \right).
$$

It's clear that $\ket{\psi}_{RBH}$ can be prepared by a polynomial-size circuit, given the ability to compute $f$ and $g$.  But why
does this state encode Set Equality?

Well, first suppose that $\operatorname*{Range}(f) \cap \operatorname*{Range}(g) = \emptyset$.
In this case, because the two ranges are disjoint, the $H$ register decoheres
any entanglement between $R$ and $B$, exactly as if $H$ had measured $B$ (or as happens in the GHZ state).
So the reduced state $\rho_{RB}$ is not entangled.  Thus, the HH Decoding Task is impossible because the promise is violated.

Second, suppose that $\operatorname*{Range}(f)=\operatorname*{Range}(g)$.
In that case, Alice (acting on $R$) simply needs to apply a permutation that maps each basis state $\ket{x,0}$ to itself,
and each basis state $\ket{x,1}$ to $\ket{f^{-1}(g(x)),1}$.  This yields the state

$${1 \over \sqrt{2^{n+1}}}
\sum_{x\in\{0,1\}^n} \left( \ket{x,0}_R \ket{0}_B + \ket{x,1}_R \ket{1}_B \right) \ket{f(x)}_H,
$$

\noindent in which $B$ and the rightmost qubit of $R$ are jointly in the entangled state $\frac{1}{\sqrt{2}}(\ket{00} + \ket{11})$, as desired.

Thus, if HH Decoding were easy whenever the promise were satisfied, then given functions $f,g$ for which we wanted to solve Set Equality,
we could first prepare the corresponding state $\ket{\psi}_{RBH}$, then try to perform HH Decoding, and then apply a measurement that projects
onto $\frac{1}{\sqrt{2}}(\ket{00} + \ket{11})$ to check whether we'd succeeded.  If $\operatorname*{Range}(f) = \operatorname*{Range}(g)$, we would find that we'd succeeded with probability $1$, while if $\operatorname*{Range}(f) \cap \operatorname*{Range}(g) = \emptyset$, we would find that we'd succeeded with probability at most $1/2$, which is the maximum possible squared fidelity of a separable state with the Bell pair (we leave the proof of that as an exercise).  Thus we can decide, with bounded error probability, whether $\operatorname*{Range}(f)$ and $\operatorname*{Range}(g)$ are equal or disjoint.

So, taking the contrapositive, if Set Equality is hard for a quantum computer, then so is the HH Decoding Task.

Before going further, let's address a few common questions about the HH argument.

\begin{enumerate}
\item[(1)] What role did black holes play in this complexity argument?  Firstly, the black hole is what prevents Alice from accessing $H$, thereby forcing her to apply complicated processing to $R$.  Secondly, the black hole is what ``scrambles'' the infalling qubits (admittedly, many more mundane physical systems would also scramble the qubits, though probably not as thoroughly as a black hole).  Thirdly, of course, the black hole is what led to the ``paradox'' in the first place!  With most physical systems, there would be no particular reason for $B$ and $H$ to be entangled, and therefore no difficulty if we observe $R$ and $B$ to be entangled.
\item[(2)] Are Harlow and Hayden saying that a contradiction in the laws of physics is okay, so long as it takes exponential time to reveal?   No, they aren't.   {\em Approximate} theories can and do have contradictions.  The HH view would say that general relativity and quantum field theory work well in ``ordinary'' circumstances, {\em but} if Alice were somehow able to solve, in polynomial time, a problem that we currently conjecture to be exponentially hard, then a full quantum theory of gravity would be needed to describe her experiences.   Note that physicists agree that GR and QFT fail in some regime, but that regime is usually taken to be that of extremely high energy and curvature (e.g., the Big Bang, or a black hole singularity).  The novelty of the HH argument is that it suggests a breakdown of field theories, {\em not}
    at the Planck energy (there's nothing high-energy in the HH Decoding Task), but instead in a regime of exponential computational complexity.
\item[(3)] What if we had a microscopic black hole?  Such a black hole would evaporate extremely quickly, thereby making asymptotic complexity apparently irrelevant.  The basic answer is that, if microscopic black holes exist, then they're fundamentally quantum-gravitational objects, so the physicists {\em don't much care} if they lead to a breakdown of smooth spacetime at their event horizons.  The firewall paradox only arises in the first place for ``astrophysical-mass'' black holes, because those are the ones for which the physicists expected GR and QFT to describe Alice's experiences (as long as she restricts herself to low-energy experiments).
\item[(4)] Doesn't the HH argument apply only to artificial, engineered quantum states?  Why couldn't the states occurring in real black holes have some special symmetry (for example)
        that would make the HH Decoding Task much easier than in the worst case?  Of course, this is possible in principle, but what we know about ``black holes as scramblers'' suggests the opposite: that the decoding task for generic black hole states should, if anything, be {\em harder} than for the special $\ket{\psi}_{RBH}$'s for which we're able to give
        formal hardness reductions.  For more about this, see Section \ref{ITSWORSE}.
\item[(5)] In a subsequent paper, Oppenheim and Unruh \cite{OU14} argued that, given exponential pre-computation time, it would be possible to engineer a special black hole to which the HH hardness argument didn't apply.  In their construction, the black hole qubits are maximally entangled with qubits that remain outside the black hole, in Alice's control---and before even letting the black hole qubits collapse, Alice spends exponential time processing their entangled partner qubits specifically to make her later decoding task easier.  Scott's inclination is to bite the bullet and say yes, in this extreme situation, the HH argument doesn't apply!  But if the argument applies to real, astrophysical black holes, or (better yet) to {\em any} black holes that we can produce via polynomial-time computations, then that's already an extremely interesting conclusion.
\end{enumerate}

\subsection{Improvements to Harlow-Hayden}

We'll now discuss some improvements to HH argument due to Aaronson (paper still in preparation).  The first improvement concerns the complexity assumption: despite the lower bounds for Collision and Set Equality, $\mathsf{SZK}\not\subset \mathsf{BQP}$ still strikes some people as a strong assumption.

In cryptography, the gold standard is to base whatever conclusion we want on {\em the existence of one-way functions}, which is considered the minimal assumption for complexity-based cryptography.  A one-way function (OWF) is a family of functions $f_n:\{0,1\}^n\longrightarrow \{0,1\}^{p(n)}$, for some polynomial $p$, such that
\begin{enumerate}
\item[(1)] $f_n$ is computable in $n^{O(1)}$ time, but
\item[(2)] for all polynomial-time algorithms $A$,
$$ \Pr_{x\in \{0,1\}^n}[f(A(f(x)))=f(x)]<\frac{1}{\operatorname*{poly}(n)}. $$
\end{enumerate}
(For simplicity, we often suppress the dependence on $n$.)

Or in words: an OWF is a function that's easy to compute but hard to invert---for which no polynomial-time algorithm can find preimages with any non-negligible success probability.

It's known (see Ostrovsky \cite{Ost91}) that if there are hard $\mathsf{SZK}$ problems, then there are also OWFs, but no one knows the converse.  So the existence of OWFs is a weaker assumption than $\mathsf{SZK}$ being hard.

Of course, for present purposes, we'll need OWFs that are hard to invert even by quantum computers.  But this is considered almost as safe an assumption as $\mathsf{NP}\not\subset \mathsf{BQP}$.  Certain {\em specific} OWFs useful in public-key cryptography, such as those based on factoring and discrete logarithms, can be inverted using Shor's algorithm.  But a general OWF $f$ could be based (for example) on simply iterating a cellular automaton or some other ``unstructured'' system, in which case, we wouldn't currently know how to invert $f$ any faster than by using Grover's algorithm.

We'll actually need {\em injective} OWFs, which are OWFs $f$ with the additional property that $f(x)\neq f(y)$ for all $x\neq y$.  But while no one knows how to convert an OWF into an injective one, the existence of injective OWFs is considered a safe assumption as well.  Indeed, if $f:\{0,1\}^n\longrightarrow \{0,1\}^{p(n)}$ ``behaves like a random function'' and $p(n) \gg 2n$, then probabilistic considerations immediately suggest that $f$ will be injective with overwhelming probability.

We can now state Aaronson's two improvements to Harlow and Hayden's Theorem \ref{hhthm}:

\begin{theorem}[Aaronson, not yet published]
\label{aarfirewall}
Suppose there exist injective OWFs that are hard to invert by quantum computers.  Then the HH Decoding Task is hard.
\end{theorem}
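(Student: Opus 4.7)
The plan is to strengthen the Harlow--Hayden reduction of Theorem~\ref{hhthm} so that the Set Equality hardness hypothesis is replaced by the existence of quantum-hard injective one-way functions. Concretely, given an efficient HH decoder $\mathcal{A}$, I will construct a polynomial-time quantum algorithm that inverts an arbitrary injective OWF $h$, contradicting the assumption.

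The construction reuses the HH two-function gadget. Given an injective OWF $h : \{0,1\}^n \to \{0,1\}^{p(n)}$ and a string $s \in \{0,1\}^n$, consider
\[
\ket{\psi^{(s)}}_{RBH} = \frac{1}{\sqrt{2^{n+1}}} \sum_{x \in \{0,1\}^n} \left( \ket{x,0}_R \ket{0}_B \ket{h(x)}_H + \ket{x,1}_R \ket{1}_B \ket{h(x \oplus s)}_H \right).
\]
Since $x \mapsto x \oplus s$ is a bijection, $\operatorname{Range}(h) = \operatorname{Range}(h(\cdot \oplus s))$, so the HH promise holds. As in the HH analysis, any decoding unitary on $R$ must, up to a free unitary $U_0$ acting on the first $n$ qubits, implement the permutation $x \mapsto x \oplus s$ controlled on the branch bit; informally, ``the decoder knows $s$.'' Using a swap-style trick---prepare $\ket{0}_{R_1}\ket{+}_{R_2}$, apply $\mathcal{A}$, apply $X$ to $R_2$, apply $\mathcal{A}^{-1}$, and measure $R_1$ in the computational basis---one then extracts $s$ in polynomial time from black-box access to $\mathcal{A}$ and $\mathcal{A}^{-1}$.

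The reduction invokes this $s$-extraction machinery on an HH instance embedding an OWF challenge $y^* = h(x^*)$. The idea is to prepare a version of $\ket{\psi^{(s)}}$ for which $s = r \oplus x^*$, where $r$ is a uniformly random string chosen by the reduction, so that the extracted $s$ reveals $x^* = s \oplus r$. The $B{=}1$ branch is built by a re-randomization that only uses forward evaluations of $h$ together with the challenge $y^*$, never evaluating $h^{-1}(y^*)$ explicitly; the ``implicit'' $s$ is thus defined by the challenge without ever being computed.

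The main obstacle is engineering this $s$-hiding preparation circuit: the description $C$ of the circuit is given to the decoder as input to the HH task, and $C$ must compute $g(x) = h(x \oplus r \oplus x^*)$ efficiently \emph{without} leaking $s$ or $x^*$ to an adversary who inspects $C$ directly---otherwise the decoder could simply read $s$ off of $C$ rather than extracting it from the quantum state, trivializing the reduction. Overcoming this likely requires a hardcore-bit or commitment-style gadget that hides $s$ in the description of $C$ while keeping forward evaluation efficient; for a generic injective OWF this is the technically delicate step, since unlike the Set Equality setting we no longer have an information-theoretic obstruction available. Handling approximate rather than exact decoding is then routine via Lemma~\ref{lem:aagan} (Almost As Good As New), and amplifying the inverter's success probability across the randomness in $r$ uses standard repetition together with Lemma~\ref{lem:qunionbound} (Quantum Union Bound).
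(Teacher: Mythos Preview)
Your proposal has a fatal gap, and it is precisely the one you flag as ``the main obstacle'' without resolving. To prepare the state $\ket{\psi^{(s)}}$ with $s = r \oplus x^*$, the circuit $C$ must compute $x \mapsto h(x \oplus s)$ on every input $x$. For a generic injective OWF $h$ with no homomorphic structure, there is simply no way to do this given only $r$ and the challenge $y^* = h(x^*)$: evaluating $h(x \oplus r \oplus x^*)$ at even a single fresh $x$ requires knowing $x^*$. So the reduction cannot build $C$ at all without already having inverted $h$. Your suggestion that a ``hardcore-bit or commitment-style gadget'' could hide $s$ inside $C$ misdiagnoses the problem: the difficulty is not that $C$ leaks $s$ to the adversary, but that the reduction does not possess $s$ in the first place. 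No amount of obfuscation helps if you cannot write down the function.

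The paper's proof sidesteps this entirely with a different and much simpler construction. Rather than trying to hide a secret shift in the circuit, it places the preimage in the inaccessible register $H$ and the image in $R$:
\[
\ket{\psi}_{RBH} = \frac{1}{\sqrt{2^{n+1}}} \sum_{x} \bigl( \ket{x\,0^{p(n)-n},0}_R \ket{0}_B + \ket{f(x),1}_R \ket{1}_B \bigr) \ket{x}_H .
\]
This state is prepared by a circuit that only evaluates $f$ forward, so the reduction can write $C$ down trivially. Any $U$ on $R$ that distills a Bell pair between $B$ and the last qubit of $R$ must map $\ket{x\,0^{p(n)-n},0}$ and $\ket{f(x),1}$ to $\ket{\phi_x}\ket{0}$ and $\ket{\phi_x}\ket{1}$ for common states $\ket{\phi_x}$; letting $V,W$ be the restrictions of $U$ to the two values of the last qubit, one reads off $V^\dagger W \ket{f(x)} = \ket{x\,0^{p(n)-n}}$, inverting $f$ directly. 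The key idea you are missing is to put $f(x)$ itself in $R$ rather than in $H$, so that the decoder's job \emph{is} the inversion, with nothing hidden in the circuit description.
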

\begin{proof}
Let $f:\{0,1\}^n\longrightarrow \{0,1\}^{p(n)}$ be an injective OWF.  Then consider the state
$$
\ket{\psi}_{RBH} = {1 \over \sqrt{2^{n+1}}} \sum_{x\in\{0,1\}^n}  \left( \ket{x \, 0^{p(n)-n},0}_R \ket{0}_B + \ket{f(x),1}_R \ket{1} ) \ket{x}_H \right)
$$
where the $0^{p(n) -n}$ are ancilla bits.  It's not hard to see that the above state {\em does} have entanglement between $B$ and $R$.  Indeed, we could distill
the entanglement by applying a reversible transformation $U$ to $R$ that mapped each basis state of the form $\ket{f(x),1}$ to $\ket{x,0^{p(n)-n},1}$, while acting as the identity on basis states of the form $\ket{x,0^{p(n)-n},0}$.  (Such a $U$ exists because $f$ is injective.)  The only problem is that implementing such a $U$
would require inverting $f$---and intuitively, this seems inherent to {\em any} entanglement distillation procedure.

For a more formal proof: notice that, if $U$ succeeds at putting $B$ and the last qubit of $R$ into a Bell pair $\frac{1}{\sqrt{2}}{\ket{00}+\ket{11}}$, then there must exist states $\{ \ket{\phi_x} \}_x$ such that
$$ U \ket{x \, 0^{p(n) -n }, 0} = \ket{\phi_x} \ket{0} \ \ {\rm and } \ \ U\ket{f(x),1} = \ket{\phi_x} \ket{1}. $$
Now, let $V$ be the $p(n)$-qubit unitary induced by $U$ if we fix the last qubit of $R$ to 0, and let $W$ be the unitary induced by $U$ if we fix the last qubit of $R$ to $1$.  Then
\begin{eqnarray}
V \ket{x \, 0^{p(n) -n}} &=& \ket{\phi_x}, \\
W \ket{f(x)} &=& \ket{\phi_x}.
\end{eqnarray}
But this means that, to invert $f$, we simply need to apply
\begin{equation}
V^\dagger W \ket{f(x)} = V^\dagger \ket{\phi_x} = \ket{x \, 0^{p(n) - n}}.
\end{equation}
\end{proof}

Next, we give an improved construction based on injective OWFs, for which it's hard even to distill classical correlation between $R$ and $B$.

\begin{theorem}[Aaronson, not yet published]
Suppose there exist injective OWFs that are hard to invert by quantum computers.  Then not only is the HH Decoding Task hard, but it's hard even to distill
{\em classical correlation} (let alone entanglement) between the $R$ and $B$ regions.
\end{theorem}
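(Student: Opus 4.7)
The plan is to modify the construction of the preceding theorem by encoding the branch bit $b$ through a Goldreich--Levin hardcore bit, so that distinguishing the two branches from $R$ alone is provably as hard as inverting the one-way function. Concretely, given an injective one-way function $g:\{0,1\}^n \to \{0,1\}^{p(n)}$ that is hard to invert by quantum polynomial-time algorithms, I would consider the tripartite state
\begin{equation*}
\ket{\psi}_{RBH} \;=\; \frac{1}{\sqrt{2^{2n+1}}}\sum_{x, r \in \{0,1\}^n}\sum_{b \in \{0,1\}} \ket{g(x),\, r,\, b \oplus \langle x, r\rangle}_R \ket{b}_B \ket{x, r}_H,
\end{equation*}
where $\langle x, r\rangle$ denotes the mod-$2$ inner product. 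This state is clearly preparable by a polynomial-size quantum circuit given an evaluation circuit for $g$.

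First I would verify that the HH promise is satisfied information-theoretically. Let $U$ be the classically reversible unitary on $R$ that sends $\ket{y, r, c} \mapsto \ket{y, r, c \oplus \langle g^{-1}(y), r\rangle}$ for $y \in \mathrm{range}(g)$, and acts as the identity elsewhere; $U$ is well-defined because $g$ is injective. A short calculation shows that $U\ket{\psi}_{RBH}$ equals
\begin{equation*}
\Bigl(\frac{1}{\sqrt{2^{2n}}}\sum_{x, r} \ket{g(x), r}_{R_1} \ket{x, r}_H\Bigr) \;\otimes\; \frac{\ket{00}+\ket{11}}{\sqrt{2}}_{R_2 B},
\end{equation*}
where $R_2$ is the last qubit of $R$. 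So a Bell pair between $B$ and a qubit of $R$ is in principle distillable by acting on $R$ alone, and the only obstacle to implementing $U$ efficiently is the need to invert $g$.

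Next I would reduce classical correlation distillation to inverting $g$. Tracing out $H$, for each $b \in \{0,1\}$ the conditional state $\rho_R|_{B=b}$ is a classical distribution over triples $(y, r, c)$ whose $(y, r)$ marginals are identical, differing only in whether $c = \langle x, r\rangle$ or $c = 1 \oplus \langle x, r\rangle$. Any POVM Alice applies to $R$ is thus effectively a classical algorithm, and an $\eps$-advantage in predicting $B$ translates to a $\Theta(\eps)$ bias in distinguishing these two distributions. Given such an algorithm, a Goldreich--Levin adversary on input $(y = g(x), r)$ would sample a uniform $c' \in \{0,1\}$, run Alice on $(y, r, c')$ to obtain a guess $m$, and output $m \oplus c'$ as its prediction of $\langle x, r\rangle$; a one-line computation shows this prediction inherits the same $\Theta(\eps)$ advantage. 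The quantum Goldreich--Levin theorem of Adcock and Cleve then turns this predictor into a polynomial-time quantum algorithm that inverts $g$ on a non-negligible fraction of inputs, contradicting the hypothesis.

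The main subtlety will be confirming that $\rho_{RB}$ really contains genuine entanglement and not merely classical correlation (since otherwise the statement would be vacuous, and indeed this is precisely what distinguishes the new construction from the preceding one, where the $b$-label in $R$ made $R$ and $B$ trivially classically correlated). This is exactly what the XOR mask $b \oplus \langle x, r\rangle$ on the last qubit of $R$ guarantees: for each fixed $(x, r)$ appearing in the $H$-sum, the superposition over $b$ is a Bell pair between $B$ and the last qubit of $R$, but the mask together with the mixing over $H$ hides this Bell-pair structure behind the hardcore bit, so that neither the entanglement nor any classical fragment of it can be extracted by an efficient adversary without effectively inverting $g$.
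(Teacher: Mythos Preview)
Your proposal is correct and takes essentially the same approach as the paper: the state you write down is, after the change of variables $a = b \oplus \langle x,r\rangle$, literally the paper's construction, and both the verification of the HH promise (via inverting the OWF and XORing in the hardcore bit) and the hardness argument (any classical $R$--$B$ correlation yields a Goldreich--Levin predictor, hence an inverter) match the paper's reasoning. If anything you are slightly more explicit than the paper, in particular by spelling out the predictor-from-distinguisher reduction and by noting that one needs the quantum Goldreich--Levin theorem here.
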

\begin{proof}
For this, we'll need one more basic concept from cryptography, called
a {\em hardcore predicate} for an OWF $f$.  A hardcore predicate is a single bit $h(x)$ about the input $x$, which is ``as hard to learn, given only $f(x)$, as it is to invert $f$ outright''---that is, such that for all polynomial-time algorithms $A$,
$$ \Pr_{x\in \{0,1\}^n }[ A(f(x))=h(x) ] < \frac{1}{2} + \frac{1}{\operatorname*{poly}(n)}.$$
A celebrated result of Goldreich and Levin \cite{GL89} states that, given any injective OWF $f$, if we define the slightly-modified OWF
$$ g(x,s) := \left\langle f(x),s \right\rangle, $$
then $h(x,s) = x \cdot s (\mod 2)$ is a hardcore predicate for $g$.

So let $f$ be an injective OWF secure against quantum computers.  Then consider the following construction:
$$ \ket{\psi}_{RBH} = {1\over\sqrt{2^{2n+1}}} \sum_{x,s \in \{0,1\}^n; a\in \{0,1\}} \ \ket{f(x),s,a}_R \  \ket{(x\cdot s) \oplus a}_B \ \ket{x,s}_H. $$
Again, clearly this state can be prepared efficiently.  To see that it has entanglement between $R$ and $B$: observe that, if we applied a unitary $U$ to $R$ that inverted $f$ (possible because $f$ is injective), and then XORed $x\cdot s$ into the last qubit of $R$, we would have
$$ \frac{1}{\sqrt{2^{2n+1}}} \sum_{x,s \in \{0,1\}^n; a\in \{0,1\}} \ \ket{x,s,(x\cdot s) \oplus a}_R \  \ket{(x\cdot s) \oplus a}_B \ \ket{x,s}_H $$
or equivalently
$$ \frac{1}{\sqrt{2^{2n+1}}} \sum_{x,s \in \{0,1\}^n; a\in \{0,1\}} \ \ket{x,s,a}_R \  \ket{a}_B \ \ket{x,s}_H. $$
On the other hand, if you could detect any classical correlation between $R$ and $B$---not even entanglement---that would imply that given $\left\langle f(x),s \right\rangle$, you could guess the hardcore bit $x\cdot s$ better than chance.  But by Goldreich-Levin, that would imply that you could invert $f$, contrary to assumption.
\end{proof}

\subsection{The ``Generic Case''}
\label{ITSWORSE}

One might complain that the above hardness arguments concerned very special states $\ket{\psi}_{RBH}$.  What about the ``generic case'' of interest for physics?  Well, suppose one applied a random, polynomial-size quantum circuit $C$, which acted across all three registers $R$, $B$, and $H$, like so:
\[
\Qcircuit @C=1em @R=0.7em
{
    \lstick{R} & \ctrl{1}   & \qw & \qw &\rstick{}          \qw  &     \\
           \lstick{} & \targ    & \qw& \qw & \rstick{ }  \qw  \\
          \lstick{} &  \qw & \ctrl{1}& \ctrl{6} & \rstick{ } \qw  \\
    \lstick{B} &  \ctrl{1} & \targ & \qw & \rstick{ } \qw \\
            \lstick{} &  \targ & \qw & \qw & \rstick{ } \qw  \\
            \lstick{} &  \qw & \ctrl{1} & \qw & \rstick{ } \qw  \\
     \lstick{H} &  \ctrl{1} & \targ & \qw & \rstick{ } \qw \\
          \lstick{} &  \targ & \qw & \qw & \rstick{ } \qw  \\
          \lstick{} &  \qw & \qw & \targ& \rstick{ } \qw
}
\]
As before, we want to apply a unitary $U_R$ to the $R$ register in order to decode one qubit of entanglement between $R$ and $B$.  Now, while we lack a rigorous result, intuitively, if anything this seems even {\em harder} than decoding the entanglement for the special states $\ket{\psi}_{RBH}$ that we constructed for the proofs of Theorems \ref{hhthm} and \ref{aarfirewall}!  For the special states, at least there was a classical computational problem---namely, computing $f^{-1}(g(x))$ in the HH construction, or inverting the injective OWF $f$ in Aaronson's constructions---such that, {\em if} we could solve that problem, then the HH Decoding Task would be easy.  Indeed, in both cases, the problem is in $\mathsf{NP}$, meaning that the HH Decoding Task would be easy for these specific states if $\mathsf{NP}\subseteq \mathsf{BQP}$.

For the generic case, by contrast, it's not obvious that there's {\em any} classical computational problem, such that an efficient solution to that problem would make the HH Decoding Task easy.  In other words, we don't have any general converse to Theorems \ref{hhthm} and \ref{aarfirewall}.  We know that, if quantum-resistant injective OWFs exist, then the HH Decoding Task is hard, but we don't even know whether if $\mathsf{P}=\mathsf{PSPACE}$ (for example), then the generic HH Decoding Task is easy.

This directly relates to (and helps to motivate) the Unitary Synthesis Problem, and the other general questions about the complexity of unitary transformations from Lecture 3.

\subsection{Preskill's Problem}

We'll end our discussion of the Harlow-Hayden argument with a beautiful problem, which was communicated to Scott by John Preskill.  Preskill wondered: could one evade the HH argument by feeding a black hole a daily diet of particles, in order to keep it alive for exponential time---enough for Alice to perform the HH Decoding Task?  This then led to a followup question: if one tried to do this, then would one necessarily generate quantum {\em states} at intermediate times with $2^{\Omega(n)}$ quantum circuit complexity?  Or, to put the question differently: is it possible to perform the HH Decoding Task by applying a quantum circuit to $R$ that has exponentially many gates, but that keeps the {\em state} complexity upper-bounded by a polynomial throughout, so that at every time $t$, we have step $ {\cal C}_{{\eps}}(\ket{\psi_t}) \leq n^{0(1)}$?

For the special constructions of $\ket{\psi}_{RBH}$ above, based on Set Equality and injective OWFs, the answer turns out to be yes. For in those cases, it's not hard to see that one can solve the requisite $\mathsf{NP}$ problems---that is, compute $f^{-1}(g(x))$ or $f^{-1}(x)$---in a way that takes exponential time but never makes the current state much more complicated than $\ket{\psi}_{RBH}$ itself.  On the other hand, for the generic case, Preskill's problem remains open: the entanglement-distilling circuits that one could derive from the abstract dimension-counting arguments of Lecture 6 would not only have exponentially many gates, but would also produce intermediate states that plausibly have $2^{\Omega(n)}$ circuit complexity.

%%%%%%%%%%%%%%%%%%%%%%%%%%%%%%%%%%%%%%%%
%% Lecture 7
%%%%%%%%%%%%%%%%%%%%%%%%%%%%%%%%%%%%%%%%
\lecture{Scott Aaronson}{Vijay Balasubramanian}{Complexity, AdS/CFT, and Wormholes}

In this lecture, we'll move on from firewalls to describe {\em another} recent connection between quantum circuit complexity and quantum gravity.

The AdS/CFT correspondence is a conjectured (and well-supported, but not proven) duality between quantum gravity in anti de-Sitter (AdS) spacetime  and Conformal Field Theories (CFTs). So what is AdS spacetime, and what is  a CFT?

   Anti de-Sitter (AdS)  is the name given to universes with a negative cosmological constant.  Let's consider such universes in $d+1$ dimensions where the additional $1$ is for the time dimension.   The spatial slices of such a universe are hyperbolic spaces.   To visualize, we can conformally map  hyperbolic space to a finite disc in the plane (think of the pictures of M.C.\ Escher with repeating figures in a disc shrinking as they approach the boundary which has been mapped from infinitely far away to a finite distance).      You can visualize an AdS spacetime (after this conformal mapping) as a solid, infinitely long cylinder.  Time runs along the infinite length.   An equal time slice is a horizontal section of the cylinder, and each such section is a hyperbolic space.  The interior of the cylinder is often referred to as the ``bulk.''  The boundary of the cylinder is the boundary of AdS spacetime and is  infinitely far way in the hyperbolic metric.  In the AdS/CFT correspondence we consider a quantum gravitational theory on AdS spacetime, generally a string theory, and study its physics.  Each such theory comes equipped with many kinds of fields and particles (generally an infinite tower of objects with increasing masses, and a small number of light or massless degrees of freedom which are most interesting to us since they are easy to excite).

   Meanwhile, a Conformal Field Theory (CFT) is a quantum field theory that has a particularly powerful symmetry, called conformal invariance.   By saying that a quantum field theory ``has a symmetry'' we mean that the dynamical equations and energy functional of the theory are invariant under transformation by the associated symmetry group.   This usually means that the vacuum state of the theory is invariant under the symmetry group, but that may not be the case (the vacuum can ``break'' the symmetry).  Regardless of whether the vacuum itself is invariant under the symmetry group action, all states and configurations of the quantum field can be usefully organized in representations of the symmetry group.      In two dimensions ($1+1$) the conformal group is infinite-dimensional and conformal field theories in 2D are highly constrained because of this.  The conformal group in more than $2$ dimensions is finite dimensional, but still produces many constraints on theories.     In the case of the AdS/CFT correspondence, the relevant CFT (which we'll say more about below) is formulated on the conformal boundary of the the $d+1$-dimensional AdS spacetime.  As described above, this boundary is $d$-dimensional, a $(d-1)$-dimensional sphere plus time.

   We can discretize the space and time on which the CFT is defined and idealize it as a quantum circuit acting on a finite number of qubits:
 \[
\Qcircuit @C=1em @R=0.7em
{
    \lstick{A} & \ctrl{1}   & \qw & \qw &\rstick{}          \qw       \\
           \lstick{Schematic} & \ctrl{0}    & \qw& \ctrl{1}& \rstick{ }  \qw  \\
          \lstick{Quantum} &  \qw & \ctrl{1}& \ctrl{0s} & \rstick{ } \qw  \\
    \lstick{Field} &  \ctrl{1} & \ctrl{0} & \qw & \rstick{ } \qw \\
            \lstick{Theory} &  \ctrl{0}s & \qw & \ctrl{1} & \rstick{ } \qw  \\
            \lstick{} &  \qw & \ctrl{1} & \ctrl{0} & \rstick{ } \qw  \\
     \lstick{} &  \ctrl{1} & \ctrl{0} & \qw & \rstick{ } \qw \\
          \lstick{} &  \ctrl{0} & \qw & \ctrl{1} & \rstick{ } \qw  \\
          \lstick{} &  \qw & \qw & \ctrl{0}& \rstick{ } \qw
}
\]
Here the circuit is shown as acting locally between neighboring qubits because we're interested in {\it local} CFTs, which have spatially local interactions.  The CFT interactions are occurring ``at all times'' and ``between every pair of neighboring points,'' so the above is just a very crude schematic to help the imagination.   A CFT generally has a finite number of interacting quantum fields, but, importantly, does not contain dynamical gravity.

The {\em AdS/CFT correspondence} is a conjectured equivalence or duality between certain quantum theories of gravity (string theories) acting in AdS spacetime and corresponding CFTs associated to the AdS boundary.  On the AdS side the theory is $(d+1)$-dimensional.  On the CFT side the theory is $d$-dimensional.  So if the equivalence is true, one dimension of the AdS space can be regarded as ``emergent'' in the CFT, i.e.\ arising as an effective description of lower-dimensional dynamics.  Likewise on the AdS side the theory contains gravity.  On the CFT side there is no gravity.  So from the CFT perspective, the gravity is emergent also.

All very cool indeed---but what does it mean for two such theories to be ``equivalent''?  First, the symmetries must match.  Thus, the conformal group of the CFT is realized on the gravity side as an isometry group.   The CFT typically contains additional symmetries, which also get realized as symmetries of the theory on the AdS side.  In addition, the objects in the two theories must be in correspondence---that is, for each operator that creates objects and states in the CFT there must be a corresponding quantum field in the AdS theory.  In the proposed examples of the AdS/CFT correspondence this is the case.   In particular, in all such examples, the graviton field in AdS space is associated to the stress-energy tensor of the dual CFT.   It should be said that it's rather non-trivial to find examples of fully defined and self-consistent theories in different dimensions that can match up in this way.    Finally, once we have a dictionary matching up the objects and states on the AdS and CFT sides, the remaining thing is to specify how to compute observable quantities---the AdS/CFT duality states that you get the same answer for the observable by computing on either side of the correspondence.  There's an algorithm for systematically computing observables on the CFT side (essentially correlation functions) and on the AdS side.  In the cases where computations have been possible on both sides they have always agreed, but a general proof of equivalence---and for that matter, even a general, independent {\em definition} of the AdS theory---haven't yet been given.

\subsection{Susskind's Puzzle}

The map between bulk and boundary doesn't seem to be ``complexity-preserving.''   That is, simple states on one side can map onto complex states on the other.

 To be concrete, let's consider two separate regions of space connected by a non-traversible wormhole.  The non-traversibility occurs because the wormhole stretches rapidly.   In general relativity, such wormholes are described by geometries called Einstein-Rosen (ER) bridges.

In the AdS/CFT correspondence, the CFT dual of the wormhole is a so-called ``Thermofield Double State'' (TDS).   This means that the state is maximally entangled between two identical Hilbert spaces:
\begin{equation}
TDS = {1 \over \sqrt{N}} \sum_{i=1}^N \ket{i}_A \ket{i}_B.
\end{equation}
Recalling that the weird properties of maximally entangled states were originally emphasized in a famous paper of Einstein, Podolsky and Rosen (EPR), Susskind and Maldacena \cite{MS13} summarized this relation between ER bridges and entanglement by the slogan ``ER = EPR.''

Now the wormhole gets longer as time passes.  But what happens to the dual CFT state?  The state changes through some unitary evolution:
$$ TDS(t) = \frac{1}{\sqrt{N}}  \sum_i V^t \ket{i}_A \otimes  (V^T)^t \ket{i}_B. $$
Recall, the maximally entangled state has the property that
$$ \frac{1}{\sqrt{N}} \sum_{i=1}^N V \ket{i} \otimes \ket{i} = {1 \over \sqrt{N}} \sum_{i=1}^N \ket{i} \otimes V^T \ket{i}. $$
Thus, we actually have
$$ TDS(t) = {1 \over \sqrt{N}} \sum_{i=1}^N \ket{i} \otimes U^t \ket{i} $$
for some unitary $U = (V^T)^2$.

Assume that $U$ is implemented by a polynomial-sized quantum circuit.  Then we can state Susskind's puzzle as follows: if we apply $U U \cdots U \ket{0}^{\otimes n}$, we'll get something that very quickly ``thermalizes.''  To see this, we can consider what physicists call {\em $k$-point correlation functions}---or equivalently, the local density matrices of $k=O(1)$ qubits at a time.   For example, the density matrix of a single qubit will rapidly converge to the maximally mixed state, and similarly for the state of any small number of qubits.  Furthermore, after the local density matrices have thermalized in this way---namely, after about $n \log n$ gates have been applied---one can show that, with overwhelming probability, the local density matrices will remain in that thermalized condition for {\em doubly-}exponential time.

But on the AdS side, Susskind pointed out, rapid thermalization is {\it not} what's happening---rather, the wormhole just keeps getting longer and longer. Thus, let $\ket{\psi_t}$ be the CFT state after a time $t$.  Then Susskind raised the question: {\em what quantity $f(\ket{\psi_t})$ is the CFT dual to the wormhole length?}

He made an astonishing proposal: namely, the quantum circuit complexity of the state, i.e., $f(\ket{\psi_t}) = {\cal C}_{\eps} (\ket{\psi_t})$.  In the following sections, we'll discuss the heuristic justifications for Susskind's proposal, as well as some recent work aiming to prove rigorously that (under reasonable complexity hypotheses) ${\cal C}_{\eps} (\ket{\psi_t})$ does indeed grow like $f(\ket{\psi_t})$ should.

\section{The Proposal}

To recap, the CFT dual of the thermofield double state at time $t$ is given by
\[
\ket{\psi_t} = \frac{1}{2^{n/2}} \sum_{x \in \{0,1\}^n} \ket{x} \otimes U^t \ket{x}.
\]
Our question was: what property of this state grows linearly with $t$, and as such can explain the wormhole's linear growth in volume over time (which follows from general relativity)? No standard information-theoretic measure seems to suffice. Instead, Susskind's proposal is that the \emph{quantum circuit complexity} ${\cal C}_{\eps}(\ket{\psi_t})$ of the state $\ket{\psi_t}$ is what corresponds to the volume.  Clearly,
\[
{\cal C}_\eps(\psi_t) \le t \cdot \poly(n),
\]
since we can generate $\psi_t$ from $n$ Bell pairs by simply applying $U$ repeatedly $t$ times. (Here we're assuming that $U$ is a ``simple'' unitary that can be implemented by a quantum circuit of size $\poly(n)$, as is the case for
other unitaries in nature.)  One might expect that the upper bound above is tight and the circuit complexity does grow linearly, at least until time $t\approx 2^n$. Beyond that time, the circuit complexity stops growing, since it's easy to see that for \emph{any} $n$-qubit state $\ket{\psi}$,
\[
{\cal C}_\eps(\psi) \le 2^{n} \cdot \poly(n),
\]
which follows by preparing the state one amplitude at a time.
To summarize, one can expect the state complexity to behave as shown in Figure \ref{complexitygraph}.

\begin{figure}[ht]
\label{complexitygraph}
\begin{center}
\includegraphics[width= 0.6\textwidth]{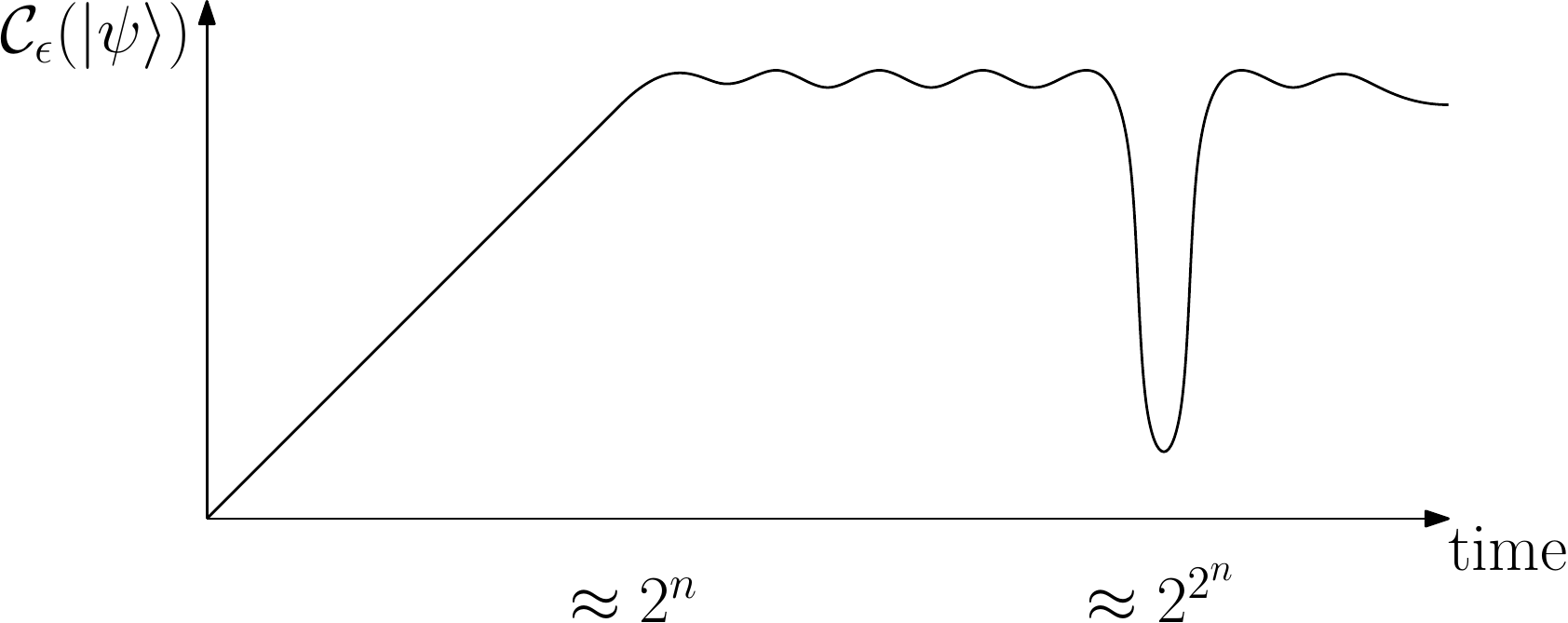}
%\caption{\label{fig:wormhold}}
\end{center}
\end{figure}

The reason for the dip is that once in roughly $\sim 2^{2^n}$ steps, we can expect the ``random-looking'' $\ket{\psi_t}$ to happen to land near a low-complexity state.  (This can also be seen by considering the eigenvalues of $U$; by the pigeonhole principle, one can show that after at most $\sim 2^{2^n}$ steps, we'll reach a $t$ for which $U^t$ is very close to the identity, in which case the state has low complexity.)  This speculative behavior of the state complexity agrees with the known predictions for the volume of a wormhole (based on quantum gravity). Other predictions on the behavior of the volume also agree with this measure:
\begin{enumerate}
    \item Imagine that partway through the process, instead of multiplying by $U$, we start multiplying by $U^{-1}$. We'd expect the wormhole to start shrinking at the same speed it previously grew. And indeed, this is exactly what happens to the state complexity.
    \item More interestingly, imagine that before switching from $U$ to $U^{-1}$, we apply once a unitary $V$ not related to $U$. In that case, the prediction from quantum gravity turns out to be that the wormhole would \emph{continue growing} instead of shrinking as before. And indeed, that's also what we expect from the state complexity (see for example \cite{SS14}):
\[
{\cal C}_\eps\Big(\frac{1}{2^{n/2}} \sum_{x \in \{0,1\}^n} \ket{x} \otimes U^{-t} V U^t \ket{x} \Big) \approx 2t \cdot \poly(n).
\]
\end{enumerate}

The idea that circuit complexity is dual to wormhole volume remains speculative: even if we agree that the two quantities are correlated, we know that in science correlations need not be causal, but could be explained by some yet-unknown third factor.

We could also ask: if circuit complexity is dual to wormhole volume, then {\em which} variant of circuit complexity are we talking about?  For example, exact or approximate circuit complexity?  Are ancilla qubits allowed?  Must garbage be uncomputed (recall the discussion in Lecture 3)?  Also, surely we can't expect the laws of physics to know about nontrivial efficient algorithms like (say) Edmonds' algorithm for maximum matching, or the ellipsoid algorithm for convex programming.  Yet those algorithms could in principle be the reason why a state had polynomial rather than exponential quantum circuit complexity.

Nevertheless, right now circuit complexity is essentially the {\em only} proposal that produces the right answers in the cases of interest.  Other quantities, like entropy, are non-starters, because they saturate too quickly.  There {\em is} the tautological proposal of defining the CFT dual of wormhole volume, for a given state $\ket{\psi}$, to be the minimum value of the time $t$ that satisfies the equation $\ket{\psi}=\ket{\psi_t}$.  But this proposal would require detailed knowledge of the specific Hamiltonian or unitary $U$.

In summary, circuit complexity {\em might} ultimately turn out to be just one way among others to define an ``intrinsic clock'' for a quantum state, to obtain $t$ from $\ket{\psi_t}$.  But at present, we have no other principled way to do that.  So for now, circuit complexity has established itself as a useful tool in the study of the AdS/CFT correspondence.

\subsection{Lower Bounds}

The preceding discussion left at least one question glaringly open: namely, can we \emph{prove} any lower bounds on ${\cal C}_\eps(\ket{\psi_t})$, for natural examples of unitary transformations $U$, showing that it really does grow linearly with $t$, as Susskind's proposal would predict and require?

Alas, this is too much to hope for, in the present state of complexity theory.  Here's the reason:

\begin{proposition}
\label{nope}
Suppose ${\cal C}(\ket{\psi_{b^n}}) > n^{\omega(1)}$, for some $U$ that admits a polynomial-size quantum circuit.  Then $\mathsf{PSPACE} \not\subset \mathsf{BQP/poly}$.
\end{proposition}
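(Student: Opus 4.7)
The plan is to prove the contrapositive: assuming $\mathsf{PSPACE} \subseteq \mathsf{BQP/poly}$, I will show that $\mathcal{C}_\epsilon(\ket{\psi_{b^n}}) \leq n^{O(1)}$ for any $U$ implementable by a polynomial-size quantum circuit, contradicting the hypothesis. The strategy combines (i) the observation that amplitudes of $\ket{\psi_t}$ are computable in $\mathsf{PSPACE}$ even when $t$ is exponential in $n$, with (ii) the recursive state-preparation construction of Proposition \ref{canmake}, which reduces preparing an explicit state to computing a polynomial number of conditional marginal amplitudes and phases.

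First I would verify the $\mathsf{PSPACE}$-computability step. The amplitude of the basis state $\ket{x}\ket{y}$ in $\ket{\psi_t} = \frac{1}{2^{n/2}}\sum_z \ket{z} \otimes U^t \ket{z}$ is $\frac{1}{2^{n/2}}\bra{y} U^t \ket{x}$. Since $U$ has a polynomial-size circuit, $U^t$ is implemented by a quantum circuit of polynomial width and depth $t\cdot \poly(n)$. By the earlier remark that $\mathsf{BQPSPACE} = \mathsf{PSPACE}$ (via Papadimitriou's generalization of Savitch's theorem), any such matrix entry can be approximated to $2^{-\poly(n)}$ precision in $\mathsf{PSPACE}$---either by iteratively applying $U$ while reusing space, or by a classical path-sum over the exponentially many computational paths through the deep circuit.

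Next, the partial-sum quantities $\beta_w = \sqrt{\sum_{y} |\beta_{wy}|^2}$ and phases $\gamma_x = \alpha_x / \|\alpha_x\|$ invoked by the recursive construction of Proposition \ref{canmake} are themselves $\mathsf{PSPACE}$-computable, since a $\mathsf{PSPACE}$ machine can enumerate exponentially many suffixes and sum $\mathsf{PSPACE}$-computable squared amplitudes while maintaining only a polynomial-size running total. Under the assumption $\mathsf{PSPACE} \subseteq \mathsf{BQP/poly}$, each $\beta_w$ and $\gamma_x$ can therefore be computed by some polynomial-size quantum circuit using a polynomial-size classical advice string that depends only on $n$. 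Substituting these subroutines in place of the oracle queries in Proposition \ref{canmake}'s construction, and reusing the single advice string across all $\poly(n)$ invocations, yields a polynomial-size quantum circuit (with polynomial-size advice) that prepares $\ket{\psi_{b^n}}$ to within trace distance $\epsilon$. Since $\mathcal{C}_\epsilon$ measures non-uniform circuit size, the advice can be hardwired into the circuit, giving $\mathcal{C}_\epsilon(\ket{\psi_{b^n}}) \leq n^{O(1)}$ and completing the contrapositive.

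The main thing requiring care is precision propagation: one must ensure that errors in the approximate $\beta_w$'s and $\gamma_x$'s do not compound beyond $\epsilon$ over the $O(n)$ recursive levels of the construction, which is a routine bookkeeping task handled by demanding $\poly(n)$ bits of precision at every step. Beyond this, the argument is essentially a chaining of Proposition \ref{canmake} with the $\mathsf{PSPACE}$-simulability of deep quantum circuits, and I do not anticipate any serious obstacle.
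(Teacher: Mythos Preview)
Your proposal is correct and follows essentially the same approach as the paper: prove the contrapositive by observing that the amplitudes of $\ket{\psi_t}$ (and hence the aggregated quantities $\beta_w$ and $\gamma_x$) are $\mathsf{PSPACE}$-computable, then invoke the recursive construction of Proposition~\ref{canmake} under the assumption $\mathsf{PSPACE}\subseteq\mathsf{BQP/poly}$. The paper compresses this into a single sentence, whereas you spell out the $\mathsf{PSPACE}$-computability of matrix entries of $U^t$ and the precision bookkeeping, but the argument is the same.
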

\begin{proof}
Consider the contrapositive: if $\mathsf{PSPACE} \subset \mathsf{BQP/poly}$, then we could prepare $\ket{\psi_{t}}$ for any $t\leq b^n$ in polynomial time, by computing the ``aggregated amplitudes'' $\beta_x$ and $\gamma_x$ in $\mathsf{PSPACE}$, as in the method of Proposition \ref{canmake} from Lecture 3.
\end{proof}

(Incidentally, by a simple counting/dimension argument, analogous to that in Lecture 3, one {\em can} prove without much difficulty that ${\cal C}(\ket{\psi_t})$ will become exponentially large after {\em doubly-}exponential time, $t\sim 2^{2^n}$.  But this isn't enough for Susskind's proposal.)

As a consequence of Proposition \ref{nope}, one can't possibly prove that circuit complexity grows linearly with time under unitary evolution, without also proving $\mathsf{PSPACE} \not\subset \mathsf{BQP/poly}$.

Still, despite Proposition \ref{nope}, perhaps we {\em can} prove a lower bound on ${\cal C}_\eps(\ket{\psi_t})$ assuming some reasonable complexity hypothesis? Indeed we can, as in the following result.

\begin{theorem}[Aaronson-Susskind (paper still in preparation)]
\label{thm:asusskind}
There's an $n$-qubit unitary transformation $U$, implemented by a polynomial-size quantum circuit, for which the following holds.  Let
\[
\ket{\psi_t} = \frac{1}{2^{n/2}} \sum_{y \in \{0,1\}^n} \ket{y} \otimes U^t \ket{y}.
\]
Then ${\cal C}_\eps(\ket{\psi_t}) > n^{\omega(1)}$ for some $t > c^n$, unless $\mathsf{PSPACE} \subset \mathsf{PP/poly}$.  Here $1<c<2$ is an absolute constant.
\end{theorem}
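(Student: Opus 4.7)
The plan is to prove the contrapositive: assume $\mathcal{C}_\epsilon(\ket{\psi_t}) \leq n^{O(1)}$ for every $t > c^n$ (at $\epsilon$ small enough, as discussed below) and derive $\mathsf{PSPACE} \subseteq \mathsf{PP/poly}$. I will design $U$ so that $U^t$ for exponentially large $t$ carries out an arbitrary $\mathsf{PSPACE}$ computation; the Choi--Jamiolkowski state $\ket{\psi_t}$ then encodes the answer on every input in superposition, and the hypothesised small circuit for $\ket{\psi_t}$ can be converted into a $\mathsf{PostBQP}/\mathsf{poly}$ decoder by postselection, after which $\mathsf{PostBQP} = \mathsf{PP}$ (Theorem~\ref{postbqppp}) closes the argument.

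For the construction: fix a $\mathsf{PSPACE}$-complete language $L$ decided by a reversible Turing machine $M$ on length-$N$ inputs in tape space $s = \poly(N)$ and time $T \leq 2^s$ (Bennett's reversible simulation, with an appended step-counter so $M$'s working tape truly freezes after halting while the counter ticks on). Choose $n$ as the least integer with $c^n \geq 2T$, so $n = \Theta(s / \log c) = \poly(N)$, and let $U = U_n$ be the $n$-qubit permutation that applies one transition of $M$ to the first $s$ qubits and identity to the remaining $n - s$ ancillas; this admits a $\poly(n)$-size Toffoli circuit. By construction, for any $t \geq T$ and any $x \in \{0,1\}^N$, $U^t\ket{x, 0^{n-N}}$ is a computational basis state whose designated output qubit records $[x \in L]$.

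For the algorithm: pick $t = \lceil c^n \rceil + 1$, so $t > c^n$ and $t \geq T$. By hypothesis there is a $\poly(n)$-size circuit $C$ producing $\rho$ with $\|\rho - \ketbra{\psi_t}{\psi_t}\|_{tr} \leq \epsilon$. Take a classical description of $C$ as the (input-independent) polynomial advice. On input $x$: run $C$; use a Toffoli cascade to write into a fresh qubit $q$ the predicate that the first register of the prepared state equals $\ket{x, 0^{n-N}}$; postselect on $q = 1$ (the ideal success probability is exactly $2^{-n}$, comfortably above the $1/\exp(\poly(n))$ threshold $\mathsf{PostBQP}$ permits); and output the designated qubit of the second register. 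On the ideal state this is deterministically $[x \in L]$; taking $\epsilon \ll 2^{-n}$ keeps the approximate answer correct with probability $\geq 2/3$. Hence $L \in \mathsf{PostBQP/poly} = \mathsf{PP/poly}$, and $\mathsf{PSPACE}$-completeness of $L$ then yields $\mathsf{PSPACE} \subseteq \mathsf{PP/poly}$.

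The main obstacle is precision control under exponentially unlikely postselection: the postselected state drifts from the ideal by up to roughly $2\epsilon / p$ where $p = 2^{-n}$, so the argument genuinely needs $\epsilon \ll 2^{-n}$, rather than the default $\epsilon = 2^{-n}$ of the definition of $\mathcal{C}_\epsilon$. I would address this either by invoking the hypothesis already at $\epsilon = 2^{-2n}$---a morally equivalent assumption, since Solovay--Kitaev (Lecture~2) lets one tighten the precision of an idealised polynomial-size preparation at only polylogarithmic overhead in $\log(1/\epsilon)$---or, starting from the default, by preparing $\poly(n)$ independent approximate copies inside the $\mathsf{PostBQP}$ machine and amplitude-amplifying on their joint postselection event, while keeping advice and runtime polynomial. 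A minor bookkeeping point is verifying that the step-counter trick keeps $U$ a genuine permutation while making $U^t$ stabilise the working tape past time $T$; this is standard in Bennett-style reversible simulations.
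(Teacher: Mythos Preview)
Your proposal is correct and follows essentially the same approach as the paper: choose $U$ to be the step function of a reversible, space-bounded universal machine (the paper uses a cellular automaton and cites Lange--McKenzie--Tapp for reversibility, you use a reversible Turing machine with a step-counter), take the small circuit for $\ket{\psi_t}$ at an exponential time $t$ as the nonuniform advice, postselect the first register onto the encoding of the input $x$, and read the answer from the second register via $\mathsf{PostBQP/poly}=\mathsf{PP/poly}$. The paper also flags the same precision issue you identify---it simply remarks that one needs $\eps$ exponentially small so that the preparation error does not swamp the $2^{-n}$ postselection probability---so your first remedy (take the hypothesis at $\eps\ll 2^{-n}$) is exactly what the paper does; your amplitude-amplification alternative is not needed and, as stated, would not obviously help since the error-to-success-probability ratio does not improve under tensoring.
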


We remark that one can get a stronger conclusion at the cost of a stronger complexity assumption.  For example, if $\mathsf{SPACE}(n)$ requires $\mathsf{PP}$-circuits of size $2^{\Omega(n)}$, then ${\cal C}_\eps(\ket{\psi_t}) = 2^{\Omega(n)}$ for time $t=2^{O(n)}$.  On the other hand, because of blowup in the reduction, we don't currently know how to prove, on the basis of any ``standard'' complexity assumption, that ${\cal C}_\eps(\ket{\psi_t})$ grows {\em linearly} with $t$; that remains a question for future work.

\begin{proof}[Proof of Theorem \ref{thm:asusskind}]
Take $U$ to be the step function of some reversible, computationally-universal cellular automaton.  For this construction, $U$ can even be a {\em classical} reversible CA, but we do need it to be able to solve $\mathsf{PSPACE}$-complete problems using polynomial memory and exponential time.  Fortunately, it's well-known that there are many reversible CAs with this behavior.  A result due to Lange, McKenzie, and Tapp \cite{lmt} even assures us that, if we don't care about exponential overhead in {\em time} (as we don't, in this case), then making a Turing machine reversible incurs only a constant-factor overhead in the required memory.

So suppose toward a contradiction that
${\cal C}_\eps(\ket{\psi_t}) \le n^{O(1)}$ for all $t \le 2^n$. Let $L$ be a $\mathsf{PSPACE}$-complete language.  Then we'll see how to decide $L$ in the complexity class $\mathsf{PostBQP/poly} = \mathsf{PP/poly}$.  The advice to the $\mathsf{PostBQP}$ machine will simply be a description of a small quantum circuit that prepares the state $\ket{\psi_t}$, for some $t=c^n$ that provides sufficient time for the $\mathsf{PSPACE}$ computation to have returned its answer.

So the $\mathsf{PostBQP/poly}$ machine prepares this state $\ket{\psi_t}$, and then postselects on the first register of $\ket{\psi_t}$ being in the state $y = S_x$, where $S_x$ is the encoding of the input $x$ used by the cellular automaton.  Conditioned on the first register being observed in the state $\ket{S_x}$, the second register will be in the state $U^t \ket{S_x}$, which encodes the information about whether $x\in L$.  Therefore $\mathsf{PSPACE} \subset \mathsf{PP/poly}$ as desired.

(One fine point: we need $\eps$ exponentially small to make sure that the error in the preparation doesn't affect the post-selected state.)
\end{proof}

Note that, in the proof of Theorem \ref{thm:asusskind}, we needed {\em some} condition on the unitary $U$.  If, for example, $U=I$ were the identity, then clearly we'd never increase complexity no matter how often $U$ was applied.

Previously, some quantum gravity theorists had speculated that the relevant condition on $U$ would have something to do with its chaotic mixing behavior, or sensitive dependence on initial conditions.  But the condition that we were actually able to use, for a hardness reduction, was basically {\em computational universality}.  Universality might be related to chaos, but the exact nature of the relationship is unclear.

The good news is that ``being able to implement universal reversible computation'' is an extremely mild condition on a unitary or Hamiltonian---one that we can confidently expect the real laws of quantum gravity (whatever they are) to satisfy.  Thus, unlike with (say) the proofs of Theorems \ref{hhthm} and \ref{aarfirewall}, Theorem \ref{thm:asusskind} is not a case of a hardness reduction producing instances of the target problem unlike those that would ever arise in real life.  Rather, this argument should apply even to the physically-relevant choices of $U$.

If we care about the circuit complexity of {\em approximating} $\ket{\psi_t}$---that is, about ${\cal C}_\eps(\ket{\psi_t})$ for large values of $\eps$---we can use a similar reduction, with the difference that now we need to start from a $\mathsf{PSPACE}$-complete language $L$ with the property of {\em worst-case/average-case equivalence}.  In other words, we need a problem that's $\mathsf{PSPACE}$-complete to solve even for (say) $90\%$ of inputs $x\in \{0,1\}^n$, so that even advice that let a $\mathsf{PostBQP/poly}$ machine prepare $\ket{\psi_{b^n}}$ to within some constant error would still let that machine decide $\mathsf{PSPACE}$.  Fortunately, such problems are well-known to exist; see for example \cite{AFK89}.  (As an example, one can start with any $\mathsf{PSPACE}$-complete Boolean function $f:\{0,1\}^n \longrightarrow \{0,1\}$, then take the unique extension of $f$ as a multilinear polynomial over a large finite field.)

As we alluded to before, there are several reasonable ways to define the ``quantum circuit complexity'' of the state $\ket{\psi_t}$, other than the way we chose.  Some of those ways lead to interesting changes to the picture of CFT complexity that we've outlined so far.

To illustrate, let's consider a measure that we call {\em separable complexity}, and
denote ${\cal C}_{\mathrm{sep},\eps}(\ket{\psi})$.  This is defined as the size of the smallest \emph{separable} quantum circuit $C$ that generates $\ket{\psi}$ to within error $\eps$, starting from the initial state
$$ \frac{1}{\sqrt{2^n}} \sum_{x\in \{0,1\}^n} \ket{x}\ket{x}. $$
Here, separable means that $C$ must have no gates that cut across the boundary between the two subsystems, i.e., it must act independently on both halves of the state.  (Restricting to such circuits seems reasonable, since the two halves of the state are located in two regions of space that are only connected by a non-traversable wormhole!)  We also don't allow the use ancilla qubits in this definition.

Because we assumed the initial state to be maximally entangled, the above is equivalent to requiring the circuit to act only on the second register.  This is again because for any unitary $U$, we have
$$\sum_x U \ket{x} \otimes \ket{x} = \sum_x \ket{x} \otimes U^T \ket{x}.$$

We now observe that ${\cal C}_{\mathrm{sep},\eps}(\ket{\psi_t}) =
{\cal C}_\eps (U^t)$: in other words, {\em the separable circuit complexity of $\ket{\psi_t}$, as a state, equals the circuit unitary complexity of $U^t$ as a unitary}.
To see this:
\begin{itemize}
\item The $\le$ direction is obvious---if we have a circuit to compute $U^t$, then we can generate $\ket{\psi_t}$ from the maximally entangled state by acting only on the second register.
\item The $\ge$ direction follows from the property of the maximally entangled state that for all unitaries $U \neq V$,
$$\sum_x \ket{x} \otimes U \ket{x} \neq \sum_x \ket{x} \otimes V \ket{x}.$$
I.e., the only way to generate $\ket{\psi_t}$ from the maximally entangled state, by acting on the second register alone, is to apply $U^t$ itself. Furthermore, while in principle one could apply circuits $U$ and $V$ to the two registers separately, it's easy to see that one doesn't gain anything this way: one could instead just apply $VU^{T}$ to the second register, using a circuit with at most ${\cal C}(U)+{\cal C}(V)$ gates.  (Here we assume, for simplicity, that our gate set is closed under transposition.  By the Solovay-Kitaev Theorem, dropping this assumption increases the circuit complexity by at most a $\log^{O(1)}(1/\eps)$ factor.)
\end{itemize}
As a result, we see that if ${\cal C}_{\mathrm{sep},\eps}(\ket{\psi_t}) \le n^{O(1)}$, then $\mathsf{PSPACE} \subset \mathsf{BQP/poly}$---a stronger conclusion than the $\mathsf{PSPACE} \subset \mathsf{PP/poly}$ that we got from Theorem~\ref{thm:asusskind} for the standard (non-separable) complexity.

To summarize, we have the following chain of implications:
%\begin{align*}
%&{\cal C}_{\mathrm{sep},\eps}(\ket{\psi_t}) \le n^{O(1)} \Leftrightarrow {\cal C}_\eps(U^{b^n}) \le n^{O(1)} \\
%& \Downarrow \\
%& \mathsf{PSPACE} \subset \mathsf{BQP/poly} \Leftrightarrow {\cal C}_\eps(U^{b^n}) \le n^{O(1)} \\
%& \Downarrow \\
%& {\cal C}_{\eps}(\ket{\psi_t}) \le n^{O(1)} \\
%& \Downarrow \\
%& \mathsf{PSPACE} \subset \mathsf{PP/poly}
%\end{align*}
\[
{\cal C}_{\mathrm{sep},\eps}(\ket{\psi_t}) \le n^{O(1)} \Leftrightarrow {\cal C}_{\mathrm{sep},\eps}(U^{b^n}) \le n^{O(1)}
\]
\[
\Downarrow
\]
\[
\mathsf{PSPACE} \subset \mathsf{BQP/poly} \Leftrightarrow {\cal C}_\eps(U^{b^n}) \le n^{O(1)}
\]
\[
\Downarrow
\]
\[
{\cal C}_{\eps}(\ket{\psi_t}) \le n^{O(1)}
\]
\[
\Downarrow
\]
\[
\mathsf{PSPACE} \subset \mathsf{PP/poly}
\]
where $U$ implements the step function of a reversible classical universal cellular automaton.
It's open whether one can strengthen these results.  Note also that a solution to the Unitary Synthesis Problem, from Lecture 3, might let us reverse the first ``down'' implication.

%%%%%%%%%%%%%%%%%%%%%%%%%%%%%%%%%%%%%%%%
%% Lecture 8
%%%%%%%%%%%%%%%%%%%%%%%%%%%%%%%%%%%%%%%%
\lecture{Scott Aaronson}{Oded Regev}{Private-Key Quantum Money}

We now start on our final application of the complexity of states and unitaries---one that, on its face, seems about as remote as possible from quantum gravity, though it will turn out not to be completely unrelated.

In the late 1960s, long before anyone even dreamed of quantum computation, Stephen Wiesner, then a graduate student, wrote a paper proposing to use quantum physics to construct unforgeable money. The paper was repeatedly rejected until it appeared 14 years later in {\em SIGACT News} \cite{wiesner}.

Because of the copyability of classical information, all existing approaches to electronic cash have to rely either on a trusted third party like a bank or credit card agency, or sometimes on the entire Internet (in the case of Bitcoin's block chain).  To get around this, Wiesner's basic idea was to exploit what we now call the No-Cloning Theorem (from Lecture 2), stating that there's no way to duplicate an unknown quantum state.  Why couldn't we use this phenomenon to create ``quantum banknotes,'' which could be held and traded, yet which would be physically impossible to copy?

In implementing quantum money, the obvious engineering difficulty is that the users need to transport quantum states around (ideally, at room temperature and in their wallets), while keeping them coherent for more than a tiny fraction of a second!  It's mainly this issue that's so far prevented the practical realization of quantum money---in contrast to {\em quantum key distribution}, a closely-related idea (also originating in Wiesner's 1969 paper) that's already seen some modest commercial deployment.  In any case, from now on we'll ignore implementation issues and concentrate on the theory of quantum money.

For simplicity, we'll assume throughout that quantum money comes in only one denomination.

At a high level, quantum money needs to satisfy two requirements: it needs to be
\begin{enumerate}
\item[(1)] unclonable (obviously), but also
\item[(2)] verifiable---that is, users should be able to verify that a quantum money state presented to them is a valid state and not some junk.
\end{enumerate}

\section{The Wiesner and BBBW Schemes}

Wiesner's proposal is now as follows.  Each quantum banknote consists of two parts. The first part is an $n$-bit classical string $s$, called the {\em serial number}, chosen independently and uniformly for each note.  Just like with today's classical money, the serial number serves to identify each banknote uniquely.

The second part is an $n$-qubit quantum state of the form
\[ \ket{0} \ket{-} \ket{1} \ket{1} \ket{+} \cdots \ket{0} \]
which is chosen uniformly from among all $4^n$ possible $n$-qubit tensor products of $\ket{0}$, $\ket{1}$, $\ket{+}$, and $\ket{-}$. The bank that generates the notes stores
a database of pairs, $(s,f(s))$ where $f(s)$ is a classical description of the quantum state $\ket{\psi_s}$ generated for note $s$ (note that $f(s)$ is $2n$ bits long).

Clearly, if a user takes a note $(s,\ket{\psi})$ back to the bank, the bank can verify the note's veracity.  Indeed, the bank simply needs to check in its database that $s$ is a valid serial number, and then if it is, measure each qubit of $\ket{\psi_s}$ in the appropriate basis ($\{ \ket{0},\ket{1}\}$ or $\{ \ket{+},\ket{-}\}$) to make sure
that it's in the state corresponding to $f(s)$.  A user who wants to verify a note needs to go to the bank (or send the note to the bank over a quantum channel).

What about counterfeiting?  Assume the counterfeiter is given a legitimate banknote, $(s,\ket{\psi_s})$.  Can he generate from it two notes that both pass the bank's verification with high probability?

The na\"{i}ve strategy to do so would simply be to measure each qubit in (say) the standard basis and then copy the result.  This strategy succeeds with probability $(5/8)^n$. (Why? Qubits that happen to be in the standard basis are copied successfully; this happens with probability $1/2$.  The remaining qubits are damaged by the measurement, so that the result has probability $1/4$ of passing both measurements.)

A less trivial strategy generates two entangled notes such that both pass verification with probability $(3/4)^n$.  This turns out to be tight!  Answering a question posed by Aaronson, this was proven in 2012 by Molina, Vidick, and Watrous \cite{mvw}, and independently by Pastawski et al.\ \cite{pastawski}, using semidefinite formulations of the problem faced by the counterfeiter.  (For some reason, from 1969 until 2012, people had been satisfied with handwavy, qualitative security analyses of Wiesner's scheme.)

Strictly speaking, the results of \cite{mvw,pastawski} don't constitute a full security proof, since (for example) they don't consider a counterfeiter who starts with {\em multiple} legitimate banknotes, $(s_1,\ket{\psi_{s_1}}),\ldots,(s_m,\ket{\psi_{s_m}})$.  However, in recent unpublished work, Aaronson gives a general security reduction that implies, as a special case, the full security of Wiesner's scheme.

The obvious disadvantage of Wiesner's scheme is the need to bring a banknote back to the bank in order to verify it.  However, even if we set that aside, a second disadvantage is that in order to verify, the bank (and all its branches) need to maintain a database of all banknotes ever produced.
In 1982, Bennett, Brassard, Breidbart, and Wiesner (BBBW) \cite{bbbw} suggested using a standard cryptographic trick to get rid of the giant database. Namely, they proposed replacing the random function $f(s)$ by a cryptographic \emph{pseudorandom function}---or more precisely, by a function $f_k:\{0,1\}^n\longrightarrow \{0,1\}^{2n}$ chosen from a pseudorandom function family (PRF) $\{f_k\}_k$.  That way, the bank need only store a single secret key $k$.  Given a banknote $(s,\ket{f_k(s)})$, the bank then verifies the note by first computing $f_k(s)$ and then measuring each qubit of $\ket{f_k(s)}$ in the appropriate basis, as before.

Intuitively, by doing this we shouldn't be compromising the security of the scheme, since a pseudorandom function can't be efficiently distinguished from a truly random function anyway.  More formally, we argue as follows: suppose a counterfeiter could successfully attack the BBBW scheme.  Then we could distinguish the pseudorandom function $f_k$ from a truly random function, by running the attack with the given function and then checking whether the attack succeeds in counterfeiting.  Given a pseudorandom function, the attack should succeed by assumption, whereas given a truly random function, the attack {\em can't} succeed because of the security of Wiesner's original scheme.

\section{Formal Underpinnings}

To make the above discussion rigorous, we must define what exactly we mean by quantum money, and what properties we need it to satisfy.  This was done by Aaronson \cite{Aar09} in 2009.

\begin{definition}[\cite{Aar09}] \label{def:private-key-quantum-money}
A \emph{private-key quantum money scheme} consists of two polynomial-time quantum algorithms:
\begin{itemize}
\item $\mathrm{Bank}(k)$ generates a banknote $\$_k$ corresponding to a key $k$. ($\$_k$ can be a mixed state corresponding, e.g., to a mixture over serial numbers.)
\item $\mathrm{Ver}(k,\rho)$ verifies that $\rho$ is a valid note for the key $k$.
\end{itemize}
We say that the scheme has \emph{completeness error} $\eps$ if valid banknotes $\$_k$ generated by $\mathrm{Bank}(k)$ are accepted by $\mathrm{Ver}(k,\$_k)$ with probability at least $1-\eps$.
We say that the scheme has {\em soundness error} $\delta$ if for all polynomial-time counterfeiters $C$ outputting some number $r$ of registers, and all polynomials $q$,
\[
\Pr[\mathrm{Count}(k,C(\$_k^q)) > q] < \delta \;,
\]
where $\mathrm{Count}$ is the procedure that runs $\mathrm{Ver}(k,\cdot)$ on each of the $r$ registers output by $C$ and counts the number of times it accepts. (So informally, we're asking that given $q$ banknotes, $C$ cannot generate more than $q$ banknotes, except with probability $\delta$.)
\end{definition}

Let's also define a ``mini-scheme'' to be a scheme as above but with soundness restricted to $q=1$ and $r=2$, i.e., there's only one bill in circulation and the counterfeiter is asked to produce two valid bills.

\begin{theorem}[Aaronson 2013, unpublished]
We can build a private-key quantum money scheme with negligible completeness and soundness errors given
\begin{itemize}
\item a secure mini-scheme, and
\item a pseudorandom function family (PRF) $\{f_k\}_k$ secure against quantum attacks.
\end{itemize}
\end{theorem}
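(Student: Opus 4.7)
The plan is to combine a PRF with the mini-scheme in the standard "PRF-expand-a-key" way. Let $(\mathrm{Bank}', \mathrm{Ver}')$ be the mini-scheme, and let $\{f_k\}$ be a quantum-secure PRF family with domain $\{0,1\}^n$ and range equal to the mini-scheme's key space. The full scheme's master key is a uniformly random PRF key $k$. The minting algorithm $\mathrm{Bank}(k)$ samples a fresh serial number $s \in \{0,1\}^n$ uniformly, derives a per-note mini-key $k_s = f_k(s)$, runs $\mathrm{Bank}'(k_s)$ to obtain a mini-banknote $\$'_{k_s}$, and outputs the pair $(s, \$'_{k_s})$. The verification algorithm $\mathrm{Ver}(k,(s,\rho))$ recomputes $k_s = f_k(s)$ and returns $\mathrm{Ver}'(k_s,\rho)$. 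Completeness is immediate: each minted note is accepted with probability at least $1-\eps'$, where $\eps'$ is the completeness error of the mini-scheme.

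\textbf{Soundness by hybrids.} Fix a polynomial-time counterfeiter $C$ that receives $q = \poly(n)$ notes and outputs $r$ registers; write the outputs as $(t_j,\sigma_j)$ for $j\in[r]$. First I replace $f_k$ by a truly random function $R$; any change in the probability that $C$ produces $q+1$ accepting outputs is bounded by $C$'s distinguishing advantage against $\{f_k\}$, which is negligible by the PRF's quantum security. In this ideal hybrid, and conditioning on the negligible-probability event that the $q$ served serial numbers $s_1,\dots,s_q$ are distinct, each $k_{s_i}=R(s_i)$ is an independent uniform mini-key, and the values $R(t)$ for $t\notin\{s_1,\dots,s_q\}$ are uniform and independent of $C$'s entire view. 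Now partition the accepting outputs into \emph{replayed} ones ($t_j\in\{s_i\}$) and \emph{fresh} ones ($t_j\notin\{s_i\}$). If $\mathrm{Count}>q$, either (a)~some fresh serial $t^\star$ is accepted, or (b)~all accepting outputs are replayed, and then by pigeonhole some $s_i$ is accepted at least twice among the outputs.

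\textbf{Reductions to the mini-scheme.} For case (b), the reduction guesses $i^\star\in[q]$, receives a challenge $\$'_{k^\star}$ from the mini-scheme bank, samples independent fresh mini-keys $k_j$ for $j\ne i^\star$ and honestly runs $\mathrm{Bank}'(k_j)$, draws random distinct serials $s_1,\dots,s_q$, and hands $C$ the resulting $q$ notes with $\$'_{k^\star}$ placed at slot $i^\star$. Conditioned on guessing the doubled slot (probability $\geq 1/q$), two accepted output registers attached to $s_{i^\star}$ yield two mini-scheme banknotes for $k^\star$, breaking mini-scheme soundness; thus case (b) occurs with negligible probability. For case (a), I first reduce fresh-forgery to a "from-scratch" forgery: because the $R(t)$ for $t\notin\{s_i\}$ are uniform and independent of $C$'s view, the event that some fresh output passes verification is equivalent to $C$ producing a state $\sigma$ accepted by $\mathrm{Ver}'(k',\cdot)$ for an independent uniform $k'$, with the adversary never having seen any $\$'_{k'}$. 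Such a from-scratch forger $F$ immediately yields a mini-scheme attacker: given $\$'_{k^\star}$, simulate $F$ (which doesn't use any bill) to obtain a second state $\sigma'$; then output $(\$'_{k^\star},\sigma')$, which contains two bills accepted by $\mathrm{Ver}'(k^\star,\cdot)$ whenever the original passes and $F$ succeeds. Hence case (a) also has negligible probability, completing the proof.

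\textbf{Main obstacle.} The subtle point is case (a): the reduction must contend with $C$ producing an entangled, adaptively-chosen output in which the distinguished "fresh" serial number $t^\star$ is not known in advance to the reduction. The right move, enabled by having replaced the PRF with a truly random function, is to observe that on any $t^\star$ outside the served set the mini-key $R(t^\star)$ is information-theoretically independent of $C$'s view, so accepting a fresh output is equivalent to a bill-free forgery, which mini-scheme soundness already forbids via the "keep the input, append the forgery" trick. Making this argument go through cleanly requires being careful that the input-collision event among served serial numbers (which is $O(q^2/2^n)$) and the PRF-distinguishing loss are both absorbed into the final negligible bound, and that the guessing loss in case (b) is only a polynomial factor.
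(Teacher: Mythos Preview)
Your construction is identical to the paper's, and your soundness argument is correct (modulo routine polynomial guessing losses you elide---in case~(b) you must also guess which two output registers to forward, and in case~(a) which fresh index to extract, but these cost only $\poly(q,r)$ factors).

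However, your decomposition genuinely differs from the paper's sketch. The paper highlights as an ``intriguing'' feature that its argument invokes the PRF assumption \emph{twice}: first it argues that a full-scheme break of $M'$ yields either a PRF distinguisher or a break of $M'$ \emph{as a mini-scheme} (i.e., the $q=1$, $r=2$ instance of $M'$ itself); then, separately, it argues that a mini-scheme break of $M'$ yields either a PRF distinguisher or a mini-scheme break of the underlying $M$. You instead swap the PRF for a truly random function once at the outset and then, working entirely in the ideal world, reduce both the fresh-serial and the doubled-replayed-serial cases directly to $M$'s mini-scheme security, never passing through the intermediate object ``$M'$-as-a-mini-scheme.'' Your route is more direct and economical with the PRF hypothesis; the paper's two-step route is more modular, cleanly separating the many-bills-to-one-bill reduction from the strip-off-the-serial-number reduction. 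Both are valid.
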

\begin{proof}[Proof Sketch]
The construction is as follows. Given a mini-scheme $M$ that generates a quantum state $\$_k$, we define a money scheme $M'$ whose bills are of the form
$$\$'_{k,s} := \ket{s}\bra{s} \otimes \$_{f_k(s)},$$
where $s$ is chosen uniformly at random. For verification, the bank applies $\mathrm{Ver}(f_k(s))$.

We remark that when we apply this construction to the ``mini-Wiesner'' scheme (i.e., Wiesner's scheme without the serial number), we get the BBBW scheme.  To get Wiesner's original scheme, we should replace $f_k$ by a truly random function $f$.

The proof of completeness is easy, while soundness (or security) uses a hybrid argument.
Intriguingly, the proof of soundness uses the PRF assumption twice: first one argues that breaking $M'$ implies either an attack on the PRF (i.e., that it can be distinguished from a uniform function) or an attack on $M'$ as a mini-scheme; second, one argues that an attack on $M'$ as a mini-scheme implies either an attack on the PRF, or an attack on $M$ (as a mini-scheme).
\end{proof}

What should we use as the PRF family? There are many constructions of PRF families in cryptography based on various cryptographic assumptions.
There's also a general result, which says that the existence of a one-way function (the most basic of all cryptographic primitives, defined in Lecture 6, without which there's basically no crypto) is already enough to imply the existence of PRFs. This implication is proved in two steps:
\[
\mathrm{OWF} \stackrel{HILL}{\Longrightarrow} \mathrm{PRG} \stackrel{GGM}{\Longrightarrow} \mathrm{PRF} \; ,
\]
The first step---that OWFs imply pseudorandom generators (PRGs), stretching $n$ random bits into $n^{O(1)}$ pseudorandom bits---is due to H{\aa}stad et al.\ \cite{hill}.  The second step---that PRGs imply PRFs, stretching $n$ random bits into $2^n$ pseudorandom bits---is due to Goldreich, Goldwasser, and Micali \cite{ggm}.  Both steps are nontrivial, the first extremely so.

In our case, of course, we need a PRF family secure against \emph{quantum} attacks.  Fortunately, as discussed in Lecture 6, there are many known candidate OWFs that are quantum-secure as far as anyone knows.  Also, one can show, by adapting the HILL and GGM arguments, that quantum-secure PRFs follow from quantum-secure OWFs.  The first step, that of HILL, applies as-is to the quantum setting.  The GGM step, however, breaks in the quantum setting, because it makes crucial use of ``the set of all inputs queried by the attacker,'' and the fact that this set is of at most polynomial size---something that's no longer true when the attacker can query all $2^n$ input positions in superposition.  Luckily, though, in 2012, Zhandry \cite{Zha12} found a different proof of the GGM security reduction that does apply to quantum attackers.\footnote{On the other hand, for applications to quantum money, one can assume {\em classical} queries to the pseudorandom function $f_k$---since the set of banknotes that the attacker possesses is a determinate, classical set---and thus one doesn't strictly speaking need the power of Zhandry's result.}

\section{Security/Storage Tradeoff}

To summarize, we've seen that Wiesner's scheme is unconditionally secure but requires a huge database, and that it can be transformed as above to the BBBW scheme, which is only computationally secure but which doesn't require a huge database.  We also remarked that quantum-secure PRFs, as needed in the BBBW scheme, are known to follow from quantum-secure OWFs.

Given the discussion above, one might wonder whether there are quantum money schemes that
are (1) information-theoretically secure, and (2) don't require lots of storage.  It turns out that such schemes are impossible.

\begin{theorem}[Aaronson 2013, unpublished]
\label{thm:sront_counterfeiter}
Any scheme in which the bank stores only $n$ bits can be broken by an exponential-time
quantum counterfeiter, using only $\poly(n)$ legitimate money states, and $O(n)$ verification queries to the bank. (Alternatively, we can avoid the verification queries, and instead get an output that passes verification with probability $\Omega(1/n)$.)
\end{theorem}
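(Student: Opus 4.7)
My plan is to exploit the fact that an $n$-bit bank key parametrizes at most $2^n$ distinct instances of the scheme, which an exponential-time counterfeiter can enumerate in full. For every $k'\in\{0,1\}^n$, the counterfeiter classically simulates $\mathrm{Bank}(k')$ and $\mathrm{Ver}(k',\cdot)$, writing down explicit descriptions of the density matrix $\rho_{k'} := \$_{k'}$ and of the two-outcome acceptance operator $E_{k'}$ of the verifier $V_{k'}$. The target will be to identify some $k^*$ such that $\rho_{k^*}$ is close in trace distance to the bank's true $\$_k$; once that holds, data processing applied to the POVM $V_k$ gives $V_k(\rho_{k^*})\geq V_k(\$_k) - \|\rho_{k^*}-\$_k\|_{tr}\geq 1-\eps-o(1)$, and the counterfeiter outputs as many fresh copies of $\rho_{k^*}$ as desired by running $\mathrm{Bank}(k^*)$.

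To identify such a $k^*$ from the $\poly(n)$ copies of $\$_k$, I would apply Aaronson's shadow tomography to estimate, for every $k'$, a similarity measure between $\$_k$ and $\rho_{k'}$---e.g.\ $\Tr(\rho_{k'}\$_k)$, or a swap-test fidelity against a precomputed classical copy of $\rho_{k'}$, or $\Tr(\Pi_{k'}\$_k)$ for a projector onto $\rho_{k'}$'s principal support---to additive error $1/\poly(n)$. Shadow tomography estimates $2^n$ two-outcome observables on an $m$-qubit state using only $\tilde{O}(m\cdot\log^2(2^n))=\poly(n)$ copies plus exponential postprocessing, which fits the counterfeiter's budget since $m = \poly(n)$. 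Define the \emph{shortlist} to be all $k'$ whose estimated similarity is within $1/\poly(n)$ of the maximum; with high probability $k$ itself lies in this shortlist, and by construction every shortlisted $\rho_{k'}$ is within inverse-polynomial trace distance of $\$_k$ (up to harmless ties where multiple keys induce literally the same banknote).

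For the $O(n)$-query variant, I would use the verification queries to prune the shortlist by a linear scan: query $V_k(\rho_{k'})$ for each of $O(n)$ shortlisted candidates in turn and output the first one accepted. Data processing guarantees each such test accepts with high probability, so the scan terminates within the query budget; the counterfeiter then prints unlimited copies of $\mathrm{Bank}(k^*)$, each passing $V_k$ with probability $\geq 1-\eps$, yielding strictly more than $\poly(n)$ valid banknotes. For the query-free variant, the counterfeiter picks $k^*$ uniformly from the top $O(n)$ candidates of the tomographic estimates; because tomographic noise can misrank at most $O(n)$ near-optimal keys, the true $k$ is in this set with high probability and is chosen with probability $\Omega(1/n)$, matching the alternative bound.

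The hard part will be a rigorous bound on the shortlist's size and on the tomographic-error budget: either the shortlist is small enough for $O(n)$ queries to sweep, or any large shortlist consists only of mutually near-equivalent $\rho_{k'}$'s so that any candidate is already a valid forgery. The cleanest route I see is an iterative Bayesian refinement argument paralleling Aaronson's $\mathsf{BQP/qpoly}\subseteq \mathsf{PostBQP/poly}$ analysis from Lecture~5: there one simulates quantum advice by starting from the maximally mixed guess and refining it via adaptive classical ``challenge inputs'' on which the current guess fails, and here one refines a classical guess of the bank's key via adaptive verification queries, charging one query per bit of key while using the $\poly(n)$ copies of $\$_k$ as the shadow-tomography pool.
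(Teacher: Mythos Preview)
Your proposal has a genuine gap at its central step. You claim that a key $k'$ whose ``similarity'' $\Tr(\rho_{k'}\$_k)$ (or swap-test value, or support-projector overlap) is near the maximum must have $\rho_{k'}$ close to $\$_k$ in trace distance. This is false in general. For highly mixed banknotes---which the definition allows, since $\$_k$ can be a mixture over exponentially many serial numbers---$\Tr(\$_k^2)$ itself can be exponentially small, so the maximum of $\Tr(\rho_{k'}\$_k)$ over all $k'$ carries essentially no information; support projectors can give $\Tr(\Pi_{k'}\$_k)=1$ for many $k'$ whose $\rho_{k'}$ are far from $\$_k$; and the swap test on mixed states returns $\frac{1+\Tr(\rho_{k'}\$_k)}{2}$, not fidelity. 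Consequently your shortlist need not contain any $k'$ with $\rho_{k'}$ close to $\$_k$ other than $k$ itself, and you have no bound on its size---so neither the $O(n)$-query sweep nor the ``pick one of the top $O(n)$'' variant is justified. The dichotomy you sketch (``either the shortlist is small, or all its members are near-equivalent'') is exactly the hard part, and none of your similarity measures delivers it.

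The paper's proof works quite differently and sidesteps the similarity issue entirely. It maintains a candidate set $S\ni k^*$, initially $\{0,1\}^n$, and in each round submits the \emph{mixture} $\sigma_S=\frac{1}{|S|}\sum_{k\in S}\$_k$ to the bank. If accepted, $\sigma_S$ is already a good forgery. If rejected, the counterfeiter (with exponential computation) forms the set $U$ of keys $k$ for which $\mathrm{Ver}(k,\sigma_S)$ rejects with high probability; note $k^*\in U$. The crucial step is then to find, using only $\poly(n)$ copies of $\$_{k^*}$, some $k'\in U$ with $\mathrm{Ver}(k',\$_{k^*})$ accepting with high probability---this is exactly the ``Secret Acceptor Lemma,'' proved via the Quantum OR Bound plus binary search, and it is the nontrivial measurement-theoretic ingredient your proposal lacks. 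Because $k'\in U$ means $\mathrm{Ver}(k',\sigma_S)$ rejects on average over $S$, Markov's inequality guarantees a constant fraction of $S$ is weeded out by the test $\mathrm{Ver}(k',\$_k)$, while $k^*$ survives. Thus $|S|$ drops by a constant factor per bank query, giving the $O(n)$ query bound. Your final paragraph gestures toward this iterative refinement, but the concrete mechanism---testing the mixture $\sigma_S$ with the bank, and using the Secret Acceptor Lemma on the genuine bills to find the discriminating $k'$---is the missing idea.
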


\begin{proof}
We fix the notation $k^\ast\in\{0,1\}^n$ for the actual secret stored by the bank.  ``We'' (that is, the counterfeiter) are given the state $\$_{k^\ast}^{\otimes m}$ for $m=n^{O(1)}$, and we need to achieve a non-negligible probability of producing $m+1$ or more quantum money states that all pass verification.

A na\"{i}ve strategy would be brute-force search for the key $k^\ast$, but that doesn't work for obvious reasons.  Namely, we have only $\poly(n)$ legitimate bills at our disposal (and the bank is not {\em that} naive to issue us a brand-new bill if ours doesn't pass verification!). And we can make only $\poly(n)$ queries to the bank.

So instead, we'll recurse on the set of ``potential candidate secrets $S$,'' and the crucial statement will say that at any given iteration, $S$ shrinks by at least a constant factor.

A word of warning: we do {\em not} promise to find the actual value $k^\ast$ (e.g. if $\$_{k^\ast}=\$_{k'}$ for some other $k'$, we can never distinguish between the two). We only claim that we can come up with a ``linotype'' $S\ni k^\ast$ that's good enough for forging purposes.

Initially, we don't possess any knowledge of $k^\ast$, so we simply set $S:=\{0,1\}^n$.

Now let $S$ be our current guess.  We generate (not surprisingly) the uniform mixture over the corresponding bills:
$$
\sigma_S = \frac 1{|S|} \sum_{k\in S} \proj{\$_k}
$$
and submit it to the bank.

If $\sigma_S$ is accepted, then $S$ is our linotype; we can use it to print new bills.

The more interesting case is when $\mathrm{Ver}(k^\ast,\sigma_S)$ rejects with high probability. Then, having unlimited computational power, we can construct for ourselves the set $U\ni k^\ast$ of all those $k$ for which $\mathrm{Ver}(k,\sigma_S)$ rejects with high probability. Next, we use $\poly(n)$ copies of the legitimate state $\$_{k^\ast}$ to find (we'll later explain how to do it) {\em some} key $k'\in U$ for which $\mathrm{Ver}(k',\$_{k^\ast})$ accepts with high probability.  And lastly, we use this key $k'$ to go over all individual states $\$_k$ in the mixture $\sigma_S$, and weed out all those $k$ for which $\mathrm{Ver}(k',\$_k)$ rejects with high probability. Note that $k^\ast$ survives (which is the only property we need to maintain).  Meanwhile, the fact that at least a constant fraction of entries in $S$ is removed follows from going through the definitions and from a simple application of Markov's inequality.

So all that's left is to explain how to find $k'$ using only $\poly(n)$ legitimate bills.

For that, we use two more ideas: error amplification and the ``Almost As Good As New Lemma'' (Lemma \ref{lem:aagan} from Lecture 1).  Using error amplification and $\poly(n)$ copies of the legitimate bill, we can assume the following without loss of generality:
\begin{itemize}
\item $\mathrm{Ver}(k^\ast, \$_{k^\ast})$ accepts with the overwhelming probability $1-\exp(-Cn)$, whereas
\item for any ``bad'' key $k$ (defined as a key for which the original acceptance probability is less than $0.8$), $\mathrm{Ver}(k, \$_{k^\ast})$ accepts with exponentially small probability $\exp(-Cn)$.
\end{itemize}
Now consider doing a brute-force search through $U$, say in lexicographic order, repairing our state $\$_{k^\ast}$ using the Almost As Good As New Lemma as we go along.  Then the hope is that our state will still be in a usable shape by the time we reach $k^\ast$---since any ``bad'' keys $k$ that we encounter along the way will damage the state by at most an exponentially small amount.

Unfortunately, the above doesn't {\em quite} work; there's still a technical problem that we need to deal with.  Namely: what if, while searching $U$, we encounter keys that are ``merely pretty good, but not excellent''?  In that case, the danger is that applying $\mathrm{Ver}(k,\cdot)$ will damage the state substantially, but still without causing us to accept $k$.

One might hope to solve this by adjusting the definitions of ``good'' and ``bad'' keys: after all, if $k$ was good enough that it caused even an amplified verifier to have a non-negligible probability of accepting, then presumably the original, unamplified acceptance probability was at least (say) $0.7$, if not $0.8$.  The trouble is that, no matter how we set the threshold, there could be keys that fall right on the border, with a high enough acceptance probability that they damage our state, but not high enough that we reliably accept them.

To solve this problem, we use the following lemma, which builds on work by Aaronson \cite{aar:qmaqpoly}, with an important correction by Harrow and Montanaro \cite{hm16}.
\end{proof}

\begin{lemma}[``Secret Acceptor Lemma'']
\label{secretacc}
Let $M_1,\ldots,M_N$ be known $2$-outcome POVMs, and let $\rho$ be an unknown state.  Suppose we're promised that there exists an $i^{*} \in [N]$ such that
$$\Pr[M_{i^{*}}(\rho)\text{ accepts}] \ge p.$$
Then given $\rho^{\otimes r}$, where $r=O\left( \frac{\log^4 N}{\eps^2} \right)$, there's a measurement strategy to find an $i \in [N]$ such that
$$\Pr[M_{i}(\rho)\text{ accepts}] \ge p - \eps,$$
with success probability at least $1-\frac{1}{N}$.
\end{lemma}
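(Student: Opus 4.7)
The plan is to reduce the search problem to a single-pass sequential-measurement procedure, combining standard Chernoff amplification with the Quantum Union Bound (Lemma~\ref{lem:qunionbound}). The intuition is that if we can build ``sharpened'' POVMs $\tilde{M}_i$ whose accept/reject behavior is essentially deterministic and with only polylogarithmically many copies used per measurement, then we can afford to try them all one after another, reusing the copies of $\rho$ after each rejection via the Almost As Good As New Lemma (Lemma~\ref{lem:aagan}).

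First I would amplify each $M_i$ to a two-outcome POVM $\tilde{M}_i$ acting coherently on $\rho^{\otimes k}$ for $k = O(\log N / \eps^2)$. Concretely, dilate $M_i$ to a unitary $U_i$ on an ancilla, apply $U_i$ in parallel to each of $k$ copies of $\rho$ into $k$ fresh ancilla qubits, then compute a single threshold qubit that indicates whether the fraction of accepting ancillas is above $p - 3\eps/4$, and finally measure only the threshold qubit. By Chernoff, $\tilde{M}_i$ accepts with probability at least $1 - 1/N^4$ whenever $\Tr(M_i \rho) \geq p - \eps/2$, and at most $1/N^4$ whenever $\Tr(M_i \rho) \leq p - \eps$. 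The key is that $\tilde{M}_i$ is implemented as a projective measurement followed by an uncomputation of the ancillas, so it fits the template of Lemma~\ref{lem:aagan}.

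Next I would apply $\tilde{M}_1, \tilde{M}_2, \dots, \tilde{M}_N$ sequentially to a single shared register holding $\rho^{\otimes k}$, outputting the first $i$ whose amplified measurement accepts. By the promise, $i^*$ has $\Tr(M_{i^*} \rho) \geq p \geq p - \eps/2$, so $\tilde{M}_{i^*}$ accepts with probability $\geq 1 - 1/N^4$ on the undisturbed state. For each ``clearly bad'' $j < i^*$ (with $\Tr(M_j \rho) \leq p - \eps$), rejection is guaranteed with probability $\geq 1 - 1/N^4$, so by Lemma~\ref{lem:aagan} the trace-distance damage per step is at most $1/N^2$. The Quantum Union Bound then bounds the cumulative damage before reaching $i^*$ by $O(1/N)$, giving overall success probability at least $1 - O(1/N)$.

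The main obstacle will be the intermediate regime $\Tr(M_j \rho) \in (p - \eps, p - \eps/2)$, where $\tilde{M}_j$'s acceptance probability is uncontrolled and a rejection could damage the state by $\Theta(1)$. The saving grace is that any such $j$ is itself a \emph{valid} output, so an acceptance here is harmless; the real question is bounding damage conditional on rejection. This is exactly the point at which the Harrow--Montanaro correction~\cite{hm16} to Aaronson's original argument is required: one must charge the damage against the acceptance probability on a per-step basis and track a martingale of the form $\sum_j \sqrt{q_j}$ rather than naively summing worst-case perturbations. Pushing the amplification parameter up to $k = O(\log^4 N / \eps^2)$ (rather than the $O(\log N / \eps^2)$ a naive count would suggest) provides enough slack in these sums to absorb the borderline contributions and close out the bound, which is why $r$ appears in the stated form.
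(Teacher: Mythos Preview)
Your approach diverges from the paper's and has a genuine gap at precisely the point you flag. The paper does \emph{not} run the amplified measurements sequentially on a shared register. Instead it treats the Quantum OR Bound of Harrow--Montanaro as a black box (a single yes/no test ``does some $M_i$ in this set accept with probability at least $q$?'') and performs \emph{binary search}: split $\{M_1,\ldots,M_N\}$ into halves, run the OR Bound on each half using \emph{fresh} copies of $\rho$, recurse into the accepting half, and repeat for $\log N$ levels. The per-level threshold shrinks by $\eps/\log N$, and the accumulation of this slack together with the $O(\log N)$ copies per OR call and $\log N$ levels is what yields $r = O(\log^4 N/\eps^2)$. Because each level consumes fresh copies, state damage never compounds across levels.

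Your sequential-scan strategy, by contrast, is essentially the approach the paper explicitly warns against: it is Aaronson's original argument from \cite{aar:qmaqpoly}, in which Harrow and Montanaro later found a bug, and whose soundness the paper notes remains \emph{open}. The obstacle you identify---borderline $j$ with $\Tr(M_j\rho)\in(p-\eps,p-\eps/2)$---is exactly where that argument fails, and your proposed patch does not close it. Amplifying harder (pushing $k$ up to $\log^4 N/\eps^2$) does nothing for a $j$ whose true acceptance probability sits at the threshold $p-3\eps/4$: the amplified acceptance probability remains near $1/2$ regardless of $k$, so a rejection inflicts $\Theta(1)$ trace-distance damage. After even one such rejection, subsequent amplified measurements are being applied to a state far from $\rho^{\otimes k}$, so you lose all control over which indices accept---in particular a ``clearly bad'' $j$ could now accept, or $i^*$ could fail to. The vague appeal to a $\sum_j\sqrt{q_j}$ martingale does not address this: the $q_j$'s themselves drift once the state is damaged, and Harrow--Montanaro's actual fix is not a martingale bound on sequential scanning but a structurally different protocol (a control qubit in $\ket{+}$, with periodic damage checks) that you have not invoked.
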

\begin{proof}[Proof Sketch]
The key ingredient is a result called the ``Quantum OR Bound,'' which states the following:
\begin{itemize}
\item Let $M_1,\ldots,M_N$ be known $2$-outcome POVMs, and let $\rho$ be an unknown state.  Then given $\rho^{\otimes r}$, where $r=O(\log N)$, there exists a measurement strategy that accepts with overwhelming probability if there exists an $i$ such that $\Pr[M_{i}(\rho)\text{ accepts}] \ge 2/3$, and that rejects with overwhelming probability if $\Pr[M_{i}(\rho)\text{ accepts}] \le 1/3$ for all $i$.
\end{itemize}
To prove the Quantum OR Bound, the obvious approach would be to apply an amplified version of $M_i$ to $\rho^{\otimes r}$, for each $i$ in succession.  The trouble is that there might be $i$'s for which the amplified measurement accepts with probability that's neither very close to $0$ nor very close to $1$---in which case, the measurement could damage the state (compromising it for future measurements), yet without leading to an overwhelming probability of acceptance.  Intuitively, the solution should involve arguing that, if this happens, then at any rate we have $\Pr[M_{i}(\rho)\text{ accepts}] > 1/3$, and therefore it's safe to accept.  Or in other words: if we accept on the $i^{th}$ measurement, then either $M_i(\rho)$ accepts with high probability, or else our state was damaged by previous measurements---but in the latter case, some of those previous measurements (we don't know which ones) must have accepted $\rho$ with a suitably high probability.

The trouble is that, if $M_1,\ldots,M_N$ were chosen in a diabolical way, then applying them in order could slowly ``drag'' the state far away from the initial state $\rho^{\otimes r}$, without any of the $M_i$'s accepting with high probability (even though the later measurements would have accepted $\rho^{\otimes r}$ itself).  In 2006, Aaronson \cite{aar:qmaqpoly} encountered this problem in the context of proving the containment $\mathsf{QMA/qpoly} \subseteq \mathsf{PSPACE/poly}$, which we mentioned as Theorem \ref{qmaqpolythm} in Lecture 5.  In \cite{aar:qmaqpoly}, Aaronson claimed that one could get around the problem by simply choosing the $M_i$'s in {\em random} order, rather than in the predetermined (and possibly adversarial) order $M_1,\ldots,M_N$.

Unfortunately, very recently Harrow and Montanaro \cite{hm16} found a bug in Aaronson's proof of this claim.  As they point out, it remains plausible that Aaronson's random measurement strategy is sound---but if so, then a new proof is needed (something Harrow and Montanaro leave as a fascinating open problem).

In the meantime, though, Harrow and Montanaro recover the Quantum OR Bound itself, by giving a different measurement strategy for which they {\em are} able to prove soundness.  We refer the reader to their paper for details, but basically, their strategy works by first placing a control qubit in the $\ket{+}$ state, and then applying the amplified measurements to $\rho^{\otimes r}$ conditioned on the control qubit being $\ket{1}$, and periodically measuring the control qubit in the $\{ \ket{+}, \ket{-} \}$ basis to check whether the measurements have damaged the state by much.  We accept if {\em either} some measurement accepts, or the state has been found to be damaged.

Now, once we have the Quantum OR Bound, the proof of Lemma \ref{secretacc} follows by a simple application of binary search (though this increases the number of copies of $\rho$ needed by a further $\operatorname*{polylog}(N)$ factor).  More precisely: assume for simplicity that $N$ is a power of $2$.  Then we first apply the Quantum OR Bound, with suitably-chosen probability cutoffs, to the set of measurements $M_1,\ldots,M_{N/2}$.  We then apply the OR Bound separately (i.e., with fresh copies of $\rho$) to the set $M_{N/2+1},\ldots,M_N$.  With overwhelming probability, at least one of these applications will accept, in which case we learn a subset of half the $M_i$'s containing a measurement that accepts $\rho$ with high probability---without loss of generality, suppose it's $M_1,\ldots,M_{N/2}$. We then recurse, dividing $M_1,\ldots,M_{N/2}$ into two subsets of size $N/4$, and so on for $\log N$ iterations until we've isolated an $i^{*}$ such that $M_{i^{*}}$ accepts with high probability.  At each iteration we use fresh copies of $\rho$.

There's just one remaining technical issue: namely, at each iteration, we start with a promise that the region we're searching contains a measurement $M_i$ that accepts $\rho$ with probability at least $p$.  We then reduce to a smaller region, which contains a measurement that accepts $\rho$ with probability at least $p-\delta$, for some fudge factor $\delta > 0$.  Thus, if $\delta$ was constant, then we could only continue for a constant number of iterations.  The solution is simply to set $\delta := \frac{\eps}{\log N}$, where $\eps$ is the final error bound in the lemma statement.  This blows up the number of required copies of $\rho$, but only (one can check) to

$$O\left( \frac{\log^4 N}{\eps^2} \right).$$
\end{proof}

\subsection{Open Problems}

As usual, we can now ask: is there any collapse of complexity classes, such as $\mathsf{P}=\mathsf{PSPACE}$, that would make the counterfeiting algorithm of Theorem \ref{thm:sront_counterfeiter} {\em computationally} efficient, rather than merely efficient in the number of legitimate bills and queries to the bank?  We've seen that, if quantum-secure OWFs exist, then the algorithm can't be made efficient---but just like in the case of firewalls and the HH Decoding Task, we don't have a converse to that result, showing that if counterfeiting is hard then some ``standard'' cryptographic assumption must fail.

Meanwhile, the proof of Lemma \ref{secretacc} raises a fascinating open problem about quantum state tomography, and we can't resist a brief digression about it.

Let $\rho$ be an unknown state, and let $E_1,\ldots,E_N$ be some list of known two-outcome POVMs.  Suppose we wanted, not merely to find an $i^{*}$ for which $\Tr(E_{i^{*}} \rho)$ is large, but to estimate {\em all} $\Tr(E_i\rho)$'s to within some additive error $\eps > 0$, with success probability at least (say) $0.99$.  What resources are needed for this?

The task is trivial, of course, if we have $O(N \log N)$ fresh copies of $\rho$: in that case, we just apply every $E_i$ to its own $\log N$ copies.

But suppose instead that we're given only $\operatorname*{polylog} N$ copies of $\rho$.  Even in that case, it's not hard to see that the task is achievable when the $\Tr(E_i\rho)$'s are promised to be bounded away from $1/2$ (e.g., either $\Tr(E_i\rho)>2/3$ or $\Tr(E_i\rho)<1/3$ for all $i$).  For then we can simply apply the amplified $E_i$'s in succession, and use the Almost As Good As New Lemma to upper-bound the damage to $\rho^{\operatorname*{polylog} N}$.

We now raise the following open question:

\begin{question}[``The Batch Tomography Problem'']
Is the above estimation task achievable with only $\operatorname*{polylog} N$ copies of $\rho$, and {\em without} a promise on the probabilities?
\end{question}

Note that this problem doesn't specify any dependence on the Hilbert space dimension $d$, since it's conceivable that a measurement procedure exists with no dimension dependence whatsoever (as happened, for example, for the Quantum OR Bound and Lemma \ref{secretacc}).  But a nontrivial dimension-dependent result (i.e., one that didn't simply depend on full tomography of $\rho$) would also be of interest.

\section{Interactive Attacks}

To return to quantum money, we've seen that there's an inherent tradeoff between Wiesner-style and BBBW-style private-key quantum money schemes.  Whichever we choose, though, there's an immediate further problem.  Namely, a counterfeiter might be able to use repeated queries to the bank---a so-called {\em interactive attack}---to learn the quantum state of a bill.

Indeed, this could be seen as a security flaw in our original definition of private-key quantum money.  Namely, we never specified what happens {\em after} the customer (or, we might as well assume, a counterfeiter) submits a bill for verification.  Does the counterfeiter get back the damaged (that is, post-measured) bill if it's accepted, or does the bank issue a brand-new bill?  And what happens if the bill doesn't pass verification? The possibilities here vary from the bank being exceedingly na\"{i}ve, and giving the bill back to the customer even in that case, to being exceedingly strict and calling the police immediately.  As we'll see in Lecture 9, we can get schemes that are provably secure even with a maximally na\"{i}ve, bill-returning bank, but proving this requires work.

As for the original Wiesner and BBBW schemes, it turns out that both can be fully broken by an interactive attack.  To see this, let's first consider the ``na\"{i}ve bank'' scenario: the bank returns the bill even if it didn't pass the verification.  In that scenario, a simple attack on Wiesner's scheme was independently observed by Aaronson \cite{Aar09} and by Lutomirski \cite{lutomirski:attack}.

The attack works as follows: the counterfeiter starts with a single legitimate bill,

$$ \ket{\$} = \ket{\theta_1} \ket{\theta_2} \cdots \ket{\theta_n}. $$

The counterfeiter then repeatedly submits this bill to the bank for verification, swapping out the first qubit $\ket{\theta_1}$ for $\ket{0}$, $\ket{1}$, $\ket{+}$, and $\ket{-}$ in sequence.  By observing which choice of $\ket{\theta_1}$ causes the bank to accept, after $O(\log n)$ repetitions, the counterfeiter knows the correct value of $\ket{\theta_1}$ with only (say) $1/n^2$ probability of error.  Furthermore, since the bank's measurements of $\ket{\theta_2}, \ldots, \ket{\theta_n}$ are in the correct bases, none of these qubits are damaged at all by the verification process.  Next the counterfeiter repeatedly submits $\ket{\$}$ to the bank, with the now-known correct value of $\ket{\theta_1}$, but substituting $\ket{0}$, $\ket{1}$, $\ket{+}$, and $\ket{-}$ in sequence for $\ket{\theta_2}$, and so on until all $n$ qubits have been learned.

\subsection{The Elitzur-Vaidman Bomb Attack}

Of course, the above counterfeiting strategy fails if the bank adopts the simple countermeasure of calling the police whenever a bill fails verification (or perhaps, whenever too many bad bills get submitted with the same serial number)!

In 2014, however, Nagaj and Sattath \cite{NS14} cleverly adapted the attack on Wiesner's scheme, so that it works even if the bank calls the police after a single failed verification.  Their construction is based on the {\em Elitzur-Vaidman bomb} \cite{EV93}, a quantum effect that's related to Grover's algorithm but extremely counterintuitive and interesting in its own right---so let's now have a digression about the Elitzur-Vaidman bomb.

Suppose someone has a package $P$ that might or might not be a bomb. You can send to the person a bit $b\in\{0,1\}$. You always get the bit back. But if you send $b=1$ (``I believe you have a bomb in your package'') and $P$ is a bomb, then it explodes, killing everyone. The task is to learn whether or not $P$ is a bomb without actually setting it off.

This is clearly impossible classically.  For the only way to gain {\em any} information about $P$ is eventually to dare the question $b=1$, with all its consequences.

Not so quantumly!  We can model the package $P$ as an unknown $1$-qubit operator, on a qubit $\ket{b}$ that you send.  If there's no bomb, then $P$ simply acts as the identity on $\ket{b}$.  If there {\em is} a bomb, then $P$ measures $\ket{b}$ in the $\{\ket{0},\ket{1} \}$ basis.  If $P$ observes $\ket{0}$, then it returns the qubit to you, while if $P$ observes $\ket{1}$, then it sets off the bomb.

Now let
$$
R_\eps = \begin{pmatrix}\cos\eps & -\sin\eps\\ \sin\eps & \cos\eps\end{pmatrix}
$$
be a unitary that rotates $\ket{b}$ by a small angle $\eps > 0$.

Then you simply need to use the following algorithm:

\begin{itemize}
\item \textbf{Initialize} $\ket{b} := \ket{0}$
\item \textbf{Repeat $\frac{\pi}{2 \eps}$ times:}
\begin{itemize}
\item Apply $R_\eps$ to $\ket{b}$
\item Send $\ket{b}$ to $P$
\end{itemize}
\item \textbf{Measure $\ket{b}$ in the $\{\ket{0},\ket{1} \}$ basis}
\item \textbf{If the result is $\ket{0}$ then output ``bomb''; if $\ket{1}$ then output ``no bomb''}
\end{itemize}

Let's analyze this algorithm.  If there's no bomb, then the invocations of $P$ do nothing, so $\ket{b}$ simply gets rotated by an $\eps$ angle $\frac{\pi}{2 \eps}$ times, evolving from $\ket{0}$ to $\ket{1}$.  If, on the other hand, there {\em is} a bomb, then you repeatedly submit $\ket{b} = \cos\eps \ket{0} + \sin\eps \ket{1}$ to $P$, whereupon $P$ measures $\ket{b}$ in the standard basis, observing $\ket{1}$ (and hence setting off the bomb) with probability $\sin^2 \eps \approx \eps^2$.  Thus, across all $\frac{\pi}{2 \eps}$ invocations, the bomb gets set off with total probability of order $\eps^2 / \eps = \eps$, which of course can be made arbitrarily small by choosing $\eps$ small enough.  Furthermore, assuming the bomb is {\em not} set off, the qubit $\ket{b}$ ends up in the state $\ket{0}$, and hence measuring $\ket{b}$ reveals (non-destructively!) that the bomb was present.  Of course, what made the algorithm work was the ability of quantum measurements to convert an $\eps$ amplitude into an $\eps^2$ probability.

Nagaj and Sattath \cite{NS14} realized that one can use a similar effect to attack Wiesner's scheme even in the case of a suspicious bank.  Recall the attack of Aaronson \cite{Aar09} and Lutomirski \cite{lutomirski:attack}, which learned a Wiesner banknote one qubit $\ket{\theta_i}$ at a time.  We want to adapt this attack so that the counterfeiter still learns $\ket{\theta_i}$, with high probability, without ever submitting a banknote that fails verification.  In particular, that means: without ever submitting a banknote whose $i^{th}$ qubit is more than some small $\eps$ away from $\ket{\theta_i}$ itself.  (Just like in the earlier attack, we can assume that the qubits $\ket{\theta_j}$ for $j\neq i$ are all in the correct states, as we vary $\ket{\theta_i}$ in an attempt to learn a classical description of that qubit.)

The procedure by Nagaj et al., like the previous procedure, works qubit by qubit: this is okay since we get our bill back if it has been validated, and if we fail then we have more pressing issues to worry about than learning the remaining qubits.  So let's assume $n=1$: that is, we want to learn a single qubit $\ket\theta \in \{\ket 0, \ket 1, \ket +,\ket -\}$.

The algorithm is as follows: we (the counterfeiter) prepare a control qubit in the state $\ket{0}$.  We then repeat the following, $\frac{\pi}{2 \eps}$ times:

\begin{itemize}
\item Apply a $1$-qubit unitary that rotates the control qubit by $\eps$ counterclockwise.
\item Apply a CNOT gate from the control qubit to the money qubit $\ket{\theta}$.
\item Send the money qubit to the bank to be measured.
\end{itemize}

To see why this algorithm works, there are four cases to consider:

\begin{itemize}
\item If $\ket{\theta}=\ket{+}$, then applying a CNOT to $\ket{\theta}$ does nothing.  So the control qubit just rotates by an $\eps$ angle $\frac{\pi}{2 \eps}$ times, going from $\ket{0}$ to $\ket{1}$. In no case does the bank's measurement yield the outcome $\ket{-}$.
\item If $\ket{\theta}=\ket{0}$, then applying a CNOT to $\ket{\theta}$ produces the state
$$ \cos(\eps) \ket{00} + \sin(\eps) \ket{11}. $$
The bank's measurement then yields the outcome $\ket{0}$ with probability $\cos^2 (\eps) \approx 1-\eps^2$.  Assuming $\ket{0}$ is observed, the state collapses back down to $\ket{00}$, and the cycle repeats.  By the union bound, the probability that the bank {\em ever} observes the outcome $\ket{1}$ is at most
$$ \frac{\pi}{2 \eps} \sin^2(\eps) = O(\eps). $$
\item If $\ket{\theta}=\ket{1}$, then the situation is completely analogous to the case $\ket{\theta}=\ket{0}$.
\item If $\ket{\theta}=\ket{-}$, then applying a CNOT to $\ket{\alpha}$ produces the state
$$ (\cos(\eps) \ket{0} - \sin(\eps) \ket{1}) \otimes \ket{-}. $$
So at the next iteration, the control qubit gets rotated back to $\ket{0}$ (and nothing happens); then it gets rotated back to $\cos(\eps) \ket{0} - \sin(\eps) \ket{1}$, and so on, cycling through those two states.  In no case does the bank's measurement yield the outcome $\ket{+}$.
\end{itemize}

To summarize, if $\ket{\theta}=\ket{+}$ then the control qubit gets rotated to $\ket{1}$, while if $\ket{\theta}$ is any of the other three possibilities, then the control qubit remains in the state $\ket{0}$ (or close to $\ket{0}$) with overwhelming probability.  So at the end, measuring the control qubit in the $\{ \ket{0},\ket{1} \}$ basis lets us distinguish $\ket{\theta}=\ket{+}$ from the other three cases.  By symmetry, we can repeat a similar procedure for the other three states, and thereby learn $\ket{\theta}$, using $O(1/ \eps)$ trial verifications, with at most $O(\eps )$ probability of getting caught.  So by repeating the algorithm for each $\theta_i$, we learn the entire bill with $O(n / \eps)$ trial verifications and at most $O(n \eps)$ probability of getting caught.

While we explained these attacks for Wiesner's scheme, the same attacks work with only minor modifications for the BBBW scheme---or indeed, for {\em any} scheme where the banknotes consist of unentangled qubits, and a banknote is verified by projectively measuring each qubit separately.

However, having explained these attacks, we should also mention an extremely simple fix for them.  The fix is just that, rather than {\em ever} returning a bill to a customer after verification, the bank destroys the bill, and hands back a new bill (with a new serial number) of the same denomination!  If this is done, then the original security definition that we gave for private-key quantum money really does make sense.

\section{Quantum Money and Black Holes}

We end this lecture with the following observation.  As we saw above, quantum-secure OWFs are enough to obtain a private-key quantum money scheme with small secret.  It can be shown that if, moreover, the OWFs are injective, they give us something additional: namely, an ``injective'' private-key quantum money scheme, one where distinct serial numbers $k\ne k'$ map to nearly-orthogonal banknotes, $\left| \left\langle \$_k | \$_{k'} \right\rangle \right| < \eps$.  Now, we saw in Lecture 6 that injective, quantum-secure OWFs are also enough to imply that the HH Decoding Task is hard.

But is there any {\em direct} relationship between private-key quantum money and the HH Decoding Task, not mediated by OWFs?  It turns out that there is.  To see this, in Aaronson's first hardness result for the HH Decoding Task (i.e., the first construction in Theorem \ref{aarfirewall}), we simply need to replace the injective OWF output $\ket{f(x)}$ by the output $\ket{\$_k}$ of an injective private-key quantum money scheme, yielding the state
$$
\ket{\psi}_{RBH} = {1 \over \sqrt{2^{n+1}}} \sum_{k\in\{0,1\}^n}  \left( \ket{k \, 0^{p(n)-n},0}_R \ket{0}_B + \ket{\$_k,1}_R \ket{1} ) \ket{k}_H \right).
$$
Then the same argument as in the proof of Theorem \ref{aarfirewall} shows that the ability to decode entanglement between $R$ and $B$ in this state would imply the ability to pass from $\ket{\$_k}$ to $\ket{k}$---and {\em therefore}, the ability to break the private-key quantum money scheme.  So we get the following consequence, which perhaps exemplifies the strange connections in this course better than any other single statement:

\begin{theorem}
\label{whoa}
Suppose there exists a secure, private-key, injective quantum money scheme with small keys.  Then the HH Decoding Task is hard.
\end{theorem}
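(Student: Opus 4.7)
The plan is to mimic the proof of Theorem \ref{aarfirewall}, substituting a banknote state $\ket{\$_k}$ for the injective OWF output $\ket{f(x)}$. Fix an injective private-key money scheme $(\mathrm{Bank}, \mathrm{Ver})$ in which an $n$-bit key $k$ is mapped by $\mathrm{Bank}$ to a $p(n)$-qubit banknote $\ket{\$_k}$, with $|\braket{\$_k}{\$_{k'}}| \leq \eps$ for all $k \neq k'$. First I would define the tripartite state
\[
\ket{\psi}_{RBH} = \frac{1}{\sqrt{2^{n+1}}} \sum_{k \in \{0,1\}^n} \left( \ket{k, 0^{p(n)-n}, 0}_R \ket{0}_B + \ket{\$_k, 1}_R \ket{1}_B \right) \ket{k}_H,
\]
and exhibit an explicit polynomial-size preparation circuit for it: place $H$ in a uniform superposition of keys, place $B$ in $\ket{+}$, and, conditioned on $H$ and $B$, either coherently copy $k$ into the first $p(n)$ qubits of $R$ (if $B=0$) or coherently run $\mathrm{Bank}$ on $k$ into $R$ (if $B=1$). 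Since $\mathrm{Bank}$ is a polynomial-time quantum algorithm, this circuit has polynomial size, giving a valid instance of the HH Decoding Task.

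Next I would check that the HH promise is satisfied: because distinct keys yield nearly orthogonal banknotes, there exists an exponential-time unitary on $R$ alone that uncomputes $\ket{\$_k, 1}$ to $\ket{k, 0^{p(n)-n}, 1}$, leaving the rightmost qubit of $R$ and $B$ in the state $\tfrac{1}{\sqrt{2}}(\ket{00}+\ket{11})$ (up to $O(\eps)$ error). Now suppose, for contradiction, that the HH Decoding Task admits a polynomial-time quantum algorithm, which on this circuit produces a polynomial-size unitary $U$ on $R$ that distills the Bell pair. Exactly as in the proof of Theorem \ref{aarfirewall}, this forces the existence of normalized states $\{\ket{\phi_k}\}_k$ on the first $p(n)$ qubits of $R$ satisfying $U \ket{k, 0^{p(n)-n}, 0} = \ket{\phi_k}\ket{0}$ and $U \ket{\$_k, 1} = \ket{\phi_k}\ket{1}$. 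Letting $V$ and $W$ denote the $p(n)$-qubit unitaries obtained by fixing the last qubit of $R$ to $0$ and $1$ respectively, we obtain $V^{\dagger} W \ket{\$_k} = \ket{k, 0^{p(n)-n}}$, yielding a polynomial-time quantum algorithm that reads the secret key $k$ from any banknote $\ket{\$_k}$.

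To finish, I would observe that such an inverter immediately defeats the money scheme in the sense of Definition \ref{def:private-key-quantum-money}: upon extracting $k$ from a single legitimate note, the counterfeiter runs $\mathrm{Bank}(k)$ in polynomial time to manufacture arbitrarily many additional valid notes, all of which pass $\mathrm{Ver}(k, \cdot)$, contradicting the assumed soundness. The main technical obstacle I anticipate is the error analysis: the Bell pair is only distilled to within some inverse-polynomial trace distance, and the banknotes are only $\eps$-orthogonal rather than exactly orthogonal, so the identities $V\ket{k,0^{p(n)-n}} = \ket{\phi_k} = W\ket{\$_k}$ hold only approximately. One therefore has to argue that $V^\dagger W$ still recovers $k$ with inverse-polynomial probability, averaged over a random $k$, and that such an inverter can then be amplified (by standard verification-and-retry, using $\mathrm{Bank}(k)$ itself as a correctness check) into a full counterfeiter. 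Given Theorem \ref{aarfirewall}, this is the only piece that requires real care; the rest of the reduction is syntactic.
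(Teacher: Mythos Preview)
Your proposal is correct and follows essentially the same approach as the paper: replace the injective OWF output $\ket{f(x)}$ in the construction of Theorem~\ref{aarfirewall} by the banknote $\ket{\$_k}$, then observe that any efficient HH decoder yields an efficient map $\ket{\$_k}\mapsto\ket{k}$, which immediately breaks the money scheme. Your treatment is in fact more detailed than the paper's (which is a one-paragraph sketch), and your flagging of the approximate-decoding error analysis is appropriate---the paper sidesteps this by assuming perfect decoding throughout.
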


(Or in words: ``the ability to decode the Hawking radiation from a black hole implies the ability to counterfeit quantum money.'')

Since injective OWFs imply injective private-key quantum money with small keys, but the reverse isn't known, Theorem \ref{whoa} is stronger than Theorem \ref{aarfirewall}, in the sense that it bases the hardness of HH decoding on a weaker assumption.

%%%%%%%%%%%%%%%%%%%%%%%%%%%%%%%%%%%%%%%%
%% Lecture 9
%%%%%%%%%%%%%%%%%%%%%%%%%%%%%%%%%%%%%%%%
\lecture{Scott Aaronson}{Alexander Razborov}{Public-Key Quantum Money}

As the last lecture made clear, one interesting question is {\em whether there's any private-key quantum money scheme where the banknote can be returned to the customer after verification, yet that's still secure against interactive attacks}.  We'll return to that question later in this lecture.

But as long as we're exploring, we might as well be even more ambitious, and ask for what Aaronson \cite{Aar09} called a {\em public-key} quantum money scheme.  By analogy to public-key cryptography, this is a quantum money scheme where absolutely anyone can verify a banknote, not just the bank that printed it---i.e., where verifying a banknote no longer requires taking it to the bank at all.  And yet, despite openly publishing a verification procedure, the bank still somehow ensures that no one can use that procedure to {\em copy} a bill efficiently.

Following Aaronson \cite{Aar09}, and Aaronson and Christiano \cite{achristiano}, we now give a more precise definition of public-key quantum money, in the style of Definition~\ref{def:private-key-quantum-money}.

\begin{definition}
A {\em public-key quantum money scheme} consists of three polynomial-time algorithms:
\begin{itemize}
\item $\text{KeyGen}(0^n)$, which outputs a pair $k = (k_{\text{pub}}, k_{\text{private}})$ (this is probabilistic, of course);

\item $\text{Bank}(k)=\$_k$ (exactly as in Definition~\ref{def:private-key-quantum-money});

\item $\text{Ver}(k_{\text{pub}}, \rho)$ (same as in Definition~\ref{def:private-key-quantum-money}, except that now the verifier has access only to the public part of the key).
\end{itemize}
\end{definition}

Completeness and soundness errors are defined exactly as before, replacing $k$ with $k_{\text{pub}}$ in appropriate places.  Completeness is defined in the worst case (with respect to the choice of $k$), whereas while defining the soundness, we average over the internal randomness of KeyGen.

\section{Public-Key Mini-Schemes}

Just like we had a notion of private-key mini-schemes, we have a corresponding notion of public-key mini-schemes.

\begin{definition} \label{def:public_mini}
A {\em public-key mini-scheme} consists of two polynomial-time algorithms:
\begin{itemize}
\item $\text{Bank}(0^n)$, which probabilistically outputs a pair $\$ = (s, \rho_s)$ (where $s$ is a classical serial number);
\item $\text{Ver}(\$)$, which accepts or rejects a claimed banknote.
\end{itemize}
\end{definition}
The scheme has completeness error $\epsilon$ if
$$
\Pr[\text{Ver}(\$)\ \text{accepts}] \geq 1-\epsilon.
$$
It has soundness error $\delta$ if for all polynomial-time counterfeiters mapping $\$ = (s,\rho_s)$ into two possibly entangled states $\sigma_1$ and $\sigma_2$, we have

$$\Pr[\text{Ver}(s,\sigma_1),\ \text{Ver}(s,\sigma_2)\text{ both accept}] < \delta.$$

Just like in the private-key case, to go from a mini-scheme to a full money scheme, one needs an additional ingredient.  In this case, the ingredient turns out to be a conventional {\em digital signature scheme} secure against quantum attacks.  We won't give a rigorous definition, but roughly: a digital signature scheme is a way of signing messages, using a private key $k_{\text{private}}$, in such a way that
\begin{itemize}
\item the signatures can be verified efficiently using a public key $k_{\text{public}}$, and yet
\item agents who only know $k_{\text{public}}$ and not $k_{\text{private}}$ can't efficiently generate {\em any} yet-unseen message together with a valid signature for that message, even after having seen valid signatures for any polynomial number of messages of their choice.
\end{itemize}

Now, given a public-key mini-scheme plus a quantum-secure signature scheme, the construction that produces a full public-key money scheme was first proposed by Lutomirski et al.\ in 2010 \cite{breaking}, with the security analysis given by Aaronson and Christiano in 2012 \cite{achristiano}.

The idea is simply to turn a mini-scheme banknote, $(s,\rho_s)$, into a full money scheme banknote by digitally signing the serial number $s$, and then tacking on the signature as an additional part of the serial number:

$$\$_k = (s, \text{sign}(k,s), \rho_s).$$

To check the resulting banknote, one first checks the digital signature $\text{sign}(k,s)$ to make sure the note $\$_k$ really came from the bank rather than an impostor.  One then checks $(s, \rho_s)$ using the verification procedure for the underlying mini-scheme.  Intuitively, then, to print more money, a counterfeiter {\em either} needs to produce new bills with new serial numbers and new signatures, thereby breaking the signature scheme, {\em or else} produce new bills with already-seen serial numbers, thereby breaking the mini-scheme.  This intuition can be made rigorous:

\begin{theorem}[Aaronson-Christiano \cite{achristiano}]
\label{standardcons}
If there exists a counterfeiter against the public-key quantum money scheme above, then there also exists a counterfeiter against either the mini-scheme or the signature scheme.
\end{theorem}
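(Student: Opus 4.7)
The plan is a standard two-case hybrid reduction. Suppose $C$ is a polynomial-time counterfeiter against the full public-key scheme: on input $q$ legitimate banknotes $\$_1,\dots,\$_q$, where $\$_i=(s_i,\text{sign}(k_{\text{private}},s_i),\rho_{s_i})$, $C$ outputs $r>q$ registers that jointly pass $\text{Ver}(k_{\text{pub}},\cdot)$ with some non-negligible probability $\eta$. Since the serial numbers and signatures in the output are classical (they are read out by $\text{Ver}$), we may, without loss of generality, assume $C$'s output is $r$ triples $(s'_1,\sigma'_1,\tau'_1),\dots,(s'_r,\sigma'_r,\tau'_r)$ where each $\tau'_j$ is a quantum state and each $(s'_j,\sigma'_j)$ is classical. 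Condition on the event $E$ that all $r$ triples pass verification; then $\Pr[E]\geq\eta$. Partition $E$ into two sub-events:
\begin{itemize}
\item $E_{\text{sig}}$: some output serial $s'_j$ is not equal to any input serial $s_i$;
\item $E_{\text{mini}}$: every $s'_j\in\{s_1,\dots,s_q\}$.
\end{itemize}
Then $\Pr[E_{\text{sig}}]+\Pr[E_{\text{mini}}]\geq\eta$, so at least one of these events occurs with probability $\geq\eta/2$.

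First I would handle $E_{\text{sig}}$: build a signature forger $F$ that simulates $C$. Given only $k_{\text{pub}}$ and a signing oracle, $F$ generates $q$ fresh mini-scheme banknotes $(s_i,\rho_{s_i})\leftarrow\text{Bank}(0^n)$ on its own (using the mini-scheme's public $\text{Bank}$ algorithm), queries the signing oracle on each $s_i$ to obtain $\text{sign}(k_{\text{private}},s_i)$, assembles the $q$ bills, and runs $C$. Whenever the event $E_{\text{sig}}$ occurs, some $s'_j$ is a message never queried by $F$, and $\sigma'_j$ is a valid signature on it (since verification passed). Thus $F$ outputs $(s'_j,\sigma'_j)$ as its forgery, succeeding with probability at least $\Pr[E_{\text{sig}}]$. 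This breaks the signature scheme if $\Pr[E_{\text{sig}}]$ is non-negligible.

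Next I would handle $E_{\text{mini}}$: build a mini-scheme counterfeiter $M$ that is given a single challenge banknote $(s^*,\rho_{s^*})$ and must produce two bills that both pass $\text{Ver}(s^*,\cdot)$. $M$ runs $\text{KeyGen}$ itself (so it knows $k_{\text{private}}$), picks an index $i^*\in\{1,\dots,q\}$ uniformly at random, sets $s_{i^*}:=s^*$ and $\rho_{s_{i^*}}:=\rho_{s^*}$, and for each $i\neq i^*$ samples a fresh $(s_i,\rho_{s_i})$ via the honest mini-scheme $\text{Bank}$. Then $M$ computes all signatures using $k_{\text{private}}$, runs $C$, and examines the output. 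Under $E_{\text{mini}}$, every $s'_j$ lies in $\{s_1,\dots,s_q\}$; since $r>q$, by pigeonhole some $s_i$ is used at least twice in the output. Let $i^\dagger$ be the lexicographically-first repeated index; with probability $\geq 1/q$ we have $i^\dagger=i^*$, in which case $M$ extracts the two output registers $\tau'_{j_1},\tau'_{j_2}$ whose classical tags equal $s^*$ and outputs $(\tau'_{j_1},\tau'_{j_2})$. Because the marginal distribution of $(s_{i^*},\rho_{s_{i^*}})$ inside $M$'s simulation is identical to a genuine $\text{Bank}$ output, $C$'s success probability inside this simulation is exactly its original success probability, so $M$ succeeds with probability at least $\Pr[E_{\text{mini}}]/q$, which is non-negligible.

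Combining, $\eta/2$ non-negligible forces at least one of the two primitives to be broken. The only real subtlety, and the step I expect to have to be most careful about, is the guessing argument in the $E_{\text{mini}}$ reduction: $M$ must not use $k_{\text{private}}$ on $s^*$ improperly (which is fine since the signature scheme's key is generated independently by $M$), and the ``repeated serial'' must be identified in a way that does not depend on inspecting quantum state before outputting the two alleged copies, so $M$ commits to $i^*$ before running $C$ and simply discards if the guess fails. Everything else is bookkeeping, and the whole reduction is polynomial-time because $q$, $r$, and the sizes of $\text{Bank}$, $\text{KeyGen}$, $\text{sign}$, and $C$ are all polynomial in $n$.
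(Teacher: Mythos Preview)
Your proposal is correct and is precisely the ``relatively standard cryptographic hybrid argument'' that the paper invokes but omits: split on whether the counterfeiter outputs a fresh serial number (yielding a signature forgery) or reuses input serials (yielding, via pigeonhole and a random index guess, a mini-scheme break). The only minor point worth tightening is the exchangeability justification that $\Pr[i^*=i^\dagger]\geq 1/q$, which you handle correctly by noting that the $q$ banknotes are i.i.d.\ so $i^*$ is independent of $C$'s view.
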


The proof of Theorem \ref{standardcons} is a relatively standard cryptographic hybrid argument, and is therefore omitted.

Now, we have many good candidates for quantum-secure signature schemes (indeed, it follows from a result of Rompel \cite{rompel} that such schemes can be constructed from any quantum-secure one-way function).  Thus, the problem of public-key quantum money can be reduced to the problem of constructing a secure public-key mini-scheme.

\section{Candidates}

How can we do that?  Many concrete mini-schemes were proposed over the past decade, but alas, the majority of them have since been broken, and those that remain seem extremely hard to analyze.  To give three examples:

\begin{itemize}
\item Aaronson \cite{Aar09}, in 2009, proposed a scheme based on stabilizer states.  This scheme was subsequently broken by Lutomirski et al.\ \cite{breaking}, by using a nontrivial algorithm for finding planted cliques in random graphs.
\item There were several schemes based on random instances of the $\mathsf{QMA}$-complete Local Hamiltonians problem (see Lecture 4).  All these schemes were then broken by Farhi et al.\ \cite{farhi:restore} in 2010, using a new technique that they called {\em single-copy tomography}.  We'll have more to say about single-copy tomography later.
\item In 2012, Farhi et al.\ \cite{knots} proposed another approach based on knot theory.  In this approach, a quantum banknote is a superposition over oriented link diagrams sharing a given value $v$ of the Alexander polynomial (a certain efficiently-computable knot invariant), with the serial number encoding $v$.  As of 2016, this scheme remains unbroken, but it remains unclear how to say anything else about its security (for example, by giving a reduction).
\end{itemize}

\section{Public-Key Quantum Money Relative to Oracles}

Given the apparent difficulty of constructing a secure public-key mini-scheme, perhaps we should start with an easier question: {\em is there even an oracle $A$ such that, if the bank, customers, and counterfeiters all have access to $A$, then public-key quantum money is possible?}  If (as we'd hope) the answer turns out to be yes, {\em then} we can work on replacing $A$ by an explicit function.

In 2009, Aaronson \cite{Aar09} showed that there's at least a {\em quantum} oracle (as defined in Lecture 5) relative to which public-key quantum money is possible.

\begin{theorem}[Aaronson 2009 \cite{Aar09}] There exists a quantum oracle $U$ relative to which a public-key mini-scheme exists.
\label{qomoney}
\end{theorem}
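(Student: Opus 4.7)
The plan is to construct a quantum oracle $U = \{U_n\}_{n \geq 1}$ that encodes two pieces of hidden random data: a uniformly random injection $\iota_n : \{0,1\}^n \to \{0,1\}^{3n}$, whose image $S_n$ will play the role of the sparse set of valid serial numbers, and, for each $s \in S_n$, an independent Haar-random $n$-qubit pure state $\ket{\psi_s}$. The oracle will provide two publicly accessible subroutines. A \emph{minting} subroutine $U^{\text{mint}}$ realizes $\ket{r}\ket{0^{3n}}\ket{0^n} \mapsto \ket{r}\ket{\iota_n(r)}\ket{\psi_{\iota_n(r)}}$, and a \emph{verification} subroutine $U^{\text{ver}}$ implements the controlled reflection $\ket{s}\ket{\phi} \mapsto \ket{s}(I - 2\ket{\psi_s}\bra{\psi_s})\ket{\phi}$ for $s \in S_n$, acting as identity elsewhere. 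Bank's algorithm picks a fresh random $r \in \{0,1\}^n$, queries $U^{\text{mint}}$ once to obtain $(\iota_n(r), \ket{\psi_{\iota_n(r)}})$, and outputs this pair; verification applies $U^{\text{ver}}$ with an ancilla control qubit and extracts the phase kickback via a Hadamard test. Completeness and polynomial efficiency are immediate from the construction.

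For soundness, I must show that any $n^{O(1)}$-query counterfeiter $C$, given one banknote $(s, \ket{\psi_s})$, outputs two banknotes both passing $\text{Ver}$ with the same $s$ only with negligible probability, averaged over the random choice of $\iota_n$ and $\{\ket{\psi_s}\}$. I will split $C$'s queries into two classes. A query to $U^{\text{mint}}$ is useful only if $C$ supplies some $r$ satisfying $\iota_n(r) = s$. Since $\iota_n$ is a uniformly random injection into a range of size $2^{3n}$ and $C$ is handed only $s$, a standard Bennett--Bernstein--Brassard--Vazirani hybrid argument shows that any $T$-query algorithm finds such an $r$ with probability $O(T^2 / 2^n)$, so $U^{\text{mint}}$ may be replaced by the identity with negligible loss in $C$'s success probability. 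What remains is $C$'s access to $U^{\text{ver}}$ together with the single initial copy of $\ket{\psi_s}$.

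The heart of the argument—and the step I expect to require the most care—is the following state-cloning lower bound. Given one copy of a Haar-random $\ket{\psi} \in \C^{2^n}$ together with $T$ queries to the reflection $R_\psi = I - 2\ket{\psi}\bra{\psi}$, no quantum algorithm can output a state whose squared fidelity with $\ket{\psi}^{\otimes 2}$ is bounded away from $\frac{1}{2}$ unless $T = 2^{\Omega(n)}$. This is delicate because the No-Cloning Theorem alone only rules out cloning with $T = 0$, while the reflection oracle is strong enough to implement Grover-style amplitude amplification. The plan is to prove the lemma by a hybrid argument: compare the true oracle distribution to one where $R_\psi$ is replaced by the reflection about a random $\ket{\psi'}$ orthogonal to $\ket{\psi}$, show by a BBBV-style inner-product potential argument that any $T$ queries distinguish these two distributions with advantage $O(T / \sqrt{2^n})$, and then observe that a successful cloner, combined with a SWAP test between its two outputs, would itself be such a distinguisher with constant advantage. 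This forces $T = \Omega(\sqrt{2^n})$.

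Combining the two ingredients—BBBV hardness of inverting $\iota_n$ and the reflection-oracle cloning lower bound—and taking expectations over the oracle distribution, any polynomial-time counterfeiter succeeds with at most negligible probability in $n$. A standard averaging argument in the style of Aaronson--Kuperberg (Theorem~\ref{akthm2}) then fixes a single deterministic quantum oracle $U$ relative to which the construction is simultaneously complete and sound against all uniform polynomial-time counterfeiters, establishing that a public-key mini-scheme exists relative to $U$. The main risk in the proof is showing the reflection-oracle cloning lower bound with a clean constant, and potentially dealing with counterfeiters that entangle their two output registers; I expect the hybrid argument above to handle both issues after a careful but routine analysis.
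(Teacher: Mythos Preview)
Your overall construction and strategy match the paper's: Haar-random banknote states $\ket{\psi_s}$, a public verification oracle implementing the reflection $I - 2\ket{\psi_s}\bra{\psi_s}$, and security resting on a ``Complexity-Theoretic No-Cloning'' lemma asserting that one copy of a Haar-random $\ket{\psi}$ together with $T$ queries to $R_\psi = I - 2\ket{\psi}\bra{\psi}$ cannot produce a state with $\Omega(1)$ fidelity to $\ket{\psi}^{\otimes 2}$ unless $T = 2^{\Omega(n)}$. Your handling of minting via a public oracle plus a random injection is a reasonable variant of the paper's design (which simply hides the minting map behind the bank's key), and your BBBV argument for neutralizing that oracle is fine.

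The genuine gap is in your proposed proof of the cloning lower bound. Your hybrid replaces $R_\psi$ by $R_{\psi'}$ for a random $\ket{\psi'}$ orthogonal to $\ket{\psi}$, while leaving the input copy $\ket{\psi}$ in the counterfeiter's hands, and asserts that $T$ queries distinguish the two worlds with advantage only $O(T/\sqrt{2^n})$. This is false: a single controlled application of the oracle to the input state, followed by a Hadamard on the control, distinguishes the two worlds \emph{perfectly}, since $R_\psi\ket{\psi} = -\ket{\psi}$ whereas $R_{\psi'}\ket{\psi} = \ket{\psi}$. Indeed, this one-query test is exactly your own verification procedure, so the two hybrids cannot be close. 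Consequently the SWAP-test distinguisher you build on top of the hybrid never gets off the ground.

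The paper proves the lemma instead via Ambainis's adversary method. One considers Haar-random pairs $\ket{\psi},\ket{\phi}$ with $|\braket{\psi}{\phi}| = 1/2$; in each world the algorithm receives the \emph{matching} state and oracle, $(\ket{\psi}, R_\psi)$ or $(\ket{\phi}, R_\phi)$, so the one-query phase test above yields no information. A successful cloner must drive the inner product between its outputs in the two worlds from $1/2$ down to $|\braket{\psi}{\phi}|^2 = 1/4$, and one shows that each oracle query decreases the expected squared inner product over such pairs by at most $O(2^{-n/2})$, forcing $\Omega(2^{n/2})$ queries. Your plan can be repaired by adopting this progress-measure argument; the crude oracle-indistinguishability hybrid cannot work, precisely because the reflection oracle is \emph{designed} to be distinguishable given $\ket{\psi}$.
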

\begin{proof}[Proof Sketch] We start by fixing a mapping from $n$-bit serial numbers $s$ to $n$-qubit pure states $\ket{\psi_s}$, where each $\ket{\psi_s}$ is chosen independently from the Haar measure.  Then every valid banknote will have the form $(s,\ket{\psi_s})$, and verifying a claimed a banknote $(s,\rho)$ will consist of projecting $\rho$ onto the subspace spanned by $\ket{\psi_s}$.

Now, the quantum oracle $U$ is simply a reflection about the subspace spanned by valid banknotes: that is,

$$ U = I - 2\sum_{s \in \{0,1\}^n} \ket{s}\bra{s} \otimes \ket{\psi_s}\bra{\psi_s}. $$

Given this $U$, verifying a banknote is trivial: we just delegate the verification to the oracle.  (Technically, $U$ also includes a hidden component, accessible only to someone who knows the bank's secret key $k$, which maps $\ket{s}\ket{0\cdots 0}$ to $\ket{s}\ket{\psi_s}$, and which the bank uses to print new bills.)

How do we prove this scheme secure?  On the one hand, it follows immediately from the No-Cloning Theorem that a counterfeiter who had only $(s,\ket{\psi_s})$, and who lacked access to $U$, would not be able to print additional bills.  On the other hand, if a counterfeiter {\em only} had access to $U$, and lacked any legitimate banknote, it's not hard to see that the counterfeiter could produce a valid banknote $(s,\ket{\psi_s})$ using $O(2^{n/2})$ queries to $U$---but, on the other hand, that $\Omega(2^{n/2})$ queries are also necessary, by the optimality of Grover's algorithm (see Lecture 5).

The interesting part is to show that, even if the counterfeiter starts with a valid banknote $(s,\ket{\psi_s})$, the counterfeiter still needs $\Omega(2^{n/2})$ queries to $U$ to produce a {\em second} valid banknote---i.e., to produce any state that has nontrivial overlap with $\ket{\psi_s}\otimes \ket{\psi_s}$.  This is a result that Aaronson \cite{Aar09} calls the {\em Complexity-Theoretic No-Cloning Theorem} (since it combines the No-Cloning Theorem with query complexity):

\begin{itemize}
\item Given a Haar-random $n$-qubit pure state $\ket{\psi}$, together with a quantum oracle $U_{\psi}=I-2\ket{\psi}\bra{\psi}$ that reflects about $\ket{\psi}$, one still needs $\Omega(2^{n/2})$ queries to $U_{\psi}$ to produce any state that has $\Omega(1)$ fidelity with $\ket{\psi}\otimes \ket{\psi}$.
\end{itemize}

We won't prove the Complexity-Theoretic No-Cloning Theorem here, but will just sketch the main idea.  One considers all pairs of $n$-qubit pure states $\ket{\psi},\ket{\phi}$ such that (say) $\left| \left\langle \psi | \phi \right\rangle \right| = 1/2$.  One then notices that a successful cloning procedure would map these pairs to $\ket{\psi}^{\otimes 2},\ket{\phi}^{\otimes 2}$ that satisfy

$$ \left| \bra{\psi}^{\otimes 2} \ket{\phi}^{\otimes 2} \right| = \left| \left\langle \psi | \phi \right\rangle \right|^2 = \frac{1}{4}. $$

In other words, for all these pairs, the procedure needs to {\em decrease the inner product} by $1/4 = \Omega(1)$.  Given any {\em specific} pair $\ket{\psi},\ket{\phi}$, it's not hard to design a single query, to the oracles $U_{\psi},U_{\phi}$ respectively, that would decrease the inner product by $\Omega(1)$.  However, by using Ambainis's adversary method \cite{ambainis}, one can show that no query can do this for {\em most} $\ket{\psi},\ket{\phi}$ pairs, if the pairs are chosen Haar-randomly subject to $\left| \left\langle \psi | \phi \right\rangle \right| = 1/2$.  Rather, any query decreases the {\em expected} squared fidelity,
$$ \mathbb{E}_{\ket{\psi},\ket{\phi}}\left[ \left| \psi | \phi \right|^2 \right], $$
by at most $O(2^{-n/2})$, where the expectation is taken over all such pairs.  This then implies that, to decrease the squared fidelity by $\Omega(1)$ for all or even most pairs, we need to make $\Omega(2^{n/2})$ queries to $U$---i.e., just as many queries as if we were doing pure Grover search for $\ket{\psi}$, rather than starting with one copy of $\ket{\psi}$ and then merely needing to make a second one.
\end{proof}

\section{The Hidden Subspace Scheme}

Because it delegates all the work to the oracle $U$, Theorem \ref{qomoney} gives us little insight about how to construct secure public-key mini-schemes in the ``real,'' unrelativized world.
As the next step toward that goal, one naturally wonders whether public-key quantum money can be shown possible relative to a {\em classical} oracle $A$ (that is, a Boolean function accessed in quantum superposition).  This question was answered by the work of Aaronson and Christiano \cite{achristiano} in 2012.

\begin{theorem}[Aaronson-Christiano 2012 \cite{achristiano}] There exists a classical oracle relative to which public-key quantum money is possible.
\label{acthm}
\end{theorem}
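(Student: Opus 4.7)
The plan is to use the \emph{hidden subspace} construction. For each banknote, the bank picks a uniformly random subspace $A \subseteq \mathbb{F}_2^n$ of dimension $n/2$, identifies it by a classical serial number $s$, and publishes the banknote $\ket{A} = |A|^{-1/2}\sum_{x\in A}\ket{x}$. The classical oracle $\mathcal{O}$, keyed on $s$, computes the two membership predicates $f_A(x) = [x\in A]$ and $f_{A^\perp}(x) = [x\in A^\perp]$. Verification consists of (i) applying the projector $P_A = \sum_{x\in A}\ketbra{x}{x}$ via $f_A$, (ii) applying $H^{\otimes n}$ to rotate $\ket{A}$ into $\ket{A^\perp}$, (iii) applying $P_{A^\perp}$ via $f_{A^\perp}$, and (iv) applying $H^{\otimes n}$ once more to return to the original basis; since $H^{\otimes n}\ket{A} = \ket{A^\perp}$, both measurements accept with probability $1$, giving perfect completeness. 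To lift this mini-scheme (in the sense of Definition~\ref{def:public_mini}) to a full public-key scheme, we use Theorem~\ref{standardcons} together with a quantum-secure digital signature scheme, which can be built relativized to any oracle by adding an independent one-way-function layer (or even from the hidden-subspace oracle itself).

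For soundness, the counterfeiter's task reduces to the following purely query-theoretic problem: given one copy of $\ket{A}$ for uniformly random $A$ of dimension $n/2$, together with black-box access to $f_A$ and $f_{A^\perp}$, produce a bipartite state whose two reduced density matrices each have $\Omega(1)$ fidelity with $\ket{A}$. I would prove a $2^{\Omega(n)}$ query lower bound for this problem by an adversary-method argument patterned on the proof of the Complexity-Theoretic No-Cloning Theorem sketched in Theorem~\ref{qomoney}. The idea is to place a distribution on pairs $(A,B)$ of random $n/2$-dimensional subspaces conditioned on a fixed intersection dimension, then track the progress measure $\Phi = \Exp_{A,B}\bigl[\left|\braket{A}{B}\right|^{2}\bigr]$, or more precisely the analogous measure on the joint states produced by the purported cloner on the two oracle instances. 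Successful cloning must decrease this measure by $\Omega(1)$, since $\bigl|\braket{A^{\otimes 2}}{B^{\otimes 2}}\bigr| = \left|\braket{A}{B}\right|^{2}$.

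The main obstacle will be proving the quantitative per-query bound: a single quantum query to $f_A$ or $f_{A^\perp}$, averaged over the subspace ensemble, changes $\Phi$ by only $O(2^{-n/2})$. The intuition is that a uniformly random input $x$ lies in a random $A$ of dimension $n/2$ with probability exactly $2^{-n/2}$, so on most queries the oracle's answer is forced and reveals no discriminating information between $A$ and $B$. The subtlety is that the algorithm starts with $\ket{A}$, which already encodes significant information about $A$; one must therefore choose the progress measure so that it quantifies only the \emph{new} information obtained from queries beyond what $\ket{A}$ itself carries. Handling this requires exploiting the duality $H^{\otimes n}\ket{A} = \ket{A^\perp}$, which makes queries to $f_A$ and $f_{A^\perp}$ symmetric, so that the relevant ``coset-representative'' information is genuinely inaccessible from one copy of $\ket{A}$. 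Once the per-query inequality is established, iterating it $T$ times yields total deviation $O(T\cdot 2^{-n/2}) + o(1)$, forcing $T = 2^{\Omega(n)}$ for any non-negligible counterfeiting advantage; combined with Theorem~\ref{standardcons} this yields the classical oracle witnessing Theorem~\ref{acthm}.
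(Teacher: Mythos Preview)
Your proposal matches the paper's construction exactly: hidden subspaces $A\le\mathbb{F}_2^n$ of dimension $n/2$, membership oracles for $A$ and $A^\perp$, verification by the projection--Hadamard--projection--Hadamard sequence, and a security argument via Ambainis's adversary method in the spirit of the Complexity-Theoretic No-Cloning Theorem. So the architecture is right.

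There is, however, a quantitative slip. Your claimed per-query deviation of $O(2^{-n/2})$ would yield a lower bound of $\Omega(2^{n/2})$ queries, but that is \emph{impossible}: Grover search over a density-$2^{-n/2}$ predicate finds a nonzero element of $A$ (and hence a full basis, by iteration) in $O(2^{n/4})$ queries, after which cloning is trivial. The adversary method does not convert a ``probability $2^{-n/2}$ of a useful answer'' directly into a $2^{-n/2}$ progress bound; the correct bound involves a square root, giving $O(2^{-n/4})$ per query and hence the tight $\Omega(2^{n/4})$ lower bound that the paper states. Your intuitive justification (``a random $x$ lies in $A$ with probability $2^{-n/2}$'') is fine as motivation but does not yield the inequality you wrote.

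More substantively, the paper flags two further steps that your outline omits and that the adversary argument alone does not cover. First, the adversary bound as you have set it up rules out counterfeiters that achieve fidelity close to $1$ with $\ket{A}^{\otimes 2}$; it says nothing directly about a counterfeiter that achieves only $1/\mathrm{poly}(n)$ fidelity, which already suffices to break the mini-scheme under Definition~\ref{def:public_mini}. The paper closes this gap by a separate amplification lemma (using monotone variants of amplitude amplification) that boosts any weak counterfeiter into a near-perfect one with polynomial query overhead. Second, the adversary argument is inherently average-case over the choice of $A$; it does not by itself rule out a counterfeiter that succeeds on a $1/\mathrm{poly}(n)$ fraction of subspaces. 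The paper handles this via an explicit random self-reduction: given oracles for $A,A^\perp$ one can simulate oracles for a uniformly random $T,T^\perp$ related to $A$ by an invertible linear map, so a counterfeiter good on a noticeable fraction of subspaces becomes one good on all subspaces. Both of these steps are essential to get from the adversary inequality to soundness of the mini-scheme, and your plan should name them.
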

\begin{proof}[Proof Sketch] The construction is based on ``hidden subspaces.''  In particular, let $S\leq \mathbb F_2^n$ be a subspace of the vector space $\mathbb F_2^n$, which is chosen uniformly at random subject to $\text{dim}(S) = n/2$.  (Thus, $S$ has exactly $2^{n/2}$ elements.)  Let
$$S^\perp = \{ x \;|\; x\cdot s\equiv 0 \left(\mathrm{mod }2\right) \forall s\in S\}$$
be $S$'s dual subspace, which also has $\text{dim}(S^\perp) = n/2$ and $\left|S\right| = 2^{n/2}$.

Then in our public-key mini-scheme, each banknote will have the form $(d_S, \ket S)$, where
\begin{itemize}
\item $d_S$ is an obfuscated classical description of the subspace $S$ and its dual subspace (which we take to be the serial number), and
\item $$\ket S =\frac {1}{\sqrt{|S|}}\sum_{x\in S} \ket{x}$$ is a uniform superposition over $S$.
\end{itemize}
When we feed it the serial number $d_S$ as a ``password,'' the classical oracle $A$ gives us access to the characteristic functions $\chi_S, \chi_{S^\perp}$ of $S$ and $S^\perp$ respectively.  Using these, we can easily realize a projection onto $\ket{S}$.  To do so, we just do the following to our money state (which is supposed to be $\ket{S}$):
\begin{itemize}
\item Apply $\chi_S$, to check that the state has support only on $S$ elements.
\item Hadamard all $n$ qubits, to map $\ket S$ to $\ket{S^\perp}$.
\item Apply $\chi_{S^\perp}$, to check that the transformed state has support only on $S^\perp$ elements.
\item Hadamard all $n$ qubits again, to return the state to $\ket S$.
\end{itemize}

As a technical point, the oracle also contains a one-way mapping $(s_1,\ldots,s_{n/2})\rightarrow d_S$, which lets the bank find the obfuscated serial number $d_S$ for a given bill $\ket S$ that it prepares, given a basis for $S$, but {\em doesn't} let a user learn the basis for $S$ given $d_S$, which would enable counterfeiting.

The key claim is this: {\em any algorithm that breaks this mini-scheme---that is, maps $\ket{d_S}\ket S$ to $\ket{d_S}{\ket S}^{\otimes 2}$---must make $\Omega(2^{n/4})$ queries to the oracles $\chi_S$ and $\chi_{S^\perp}$.}

This claim, whose proof we omit, is a somewhat stronger version of the Complexity-Theoretic No-Cloning Theorem from the proof of Theorem \ref{qomoney}.  It's stronger because, in its quest to prepare $\ket{S}^{\otimes 2}$, the counterfeiting algorithm now has access not only to a copy of $\ket{S}$ and a unitary transformation that implements $I-2\ket{S}\bra{S}$, but also the oracles $\chi_S$ and $\chi_{S^\perp}$.  Nevertheless, again by using Ambainis's quantum adversary method, it's possible to show that there's no faster way to prepare a new copy of $\ket{S}$, than simply by doing Grover search on $\chi_S$ or $\chi_{S^\perp}$ to search for a basis for $S$ or $S^\perp$---i.e., the same approach one would use if one were preparing $\ket{S}$ ``de novo,'' with no copy of $\ket{S}$ to begin with.  And we know that Grover search for a nonzero element of $S$ or $S^\perp$ requires $\Omega(2^{n/4})$ queries to $S$ or $S^\perp$ respectively.

There are two further ideas needed to complete the security proof.  First, the argument based on the Complexity-Theoretic No-Cloning Theorem implies that $\Omega(2^{n/4})$ oracle queries are needed to prepare a state that has {\em very} high fidelity with $\ket{S}^{\otimes 2}$.  But what about preparing a state that merely has non-negligible ($1/\text{poly}$) fidelity with $\ket{S}^{\otimes 2}$, which of course would already be enough to break the mini-scheme, according to our security definition?  To rule that out, Aaronson and Christiano give a way to ``amplify'' weak counterfeiters into strong ones, with only a polynomial overhead in query complexity, in any money scheme where the verification consists of a projective measurement onto a $1$-dimensional subspace.  This, in turn, relies on recent variants of amplitude amplification that converge monotonically toward the target state (in this case, $\ket{S}^{\otimes 2}$), rather than ``overshooting'' it if too many queries are made (see, for example, Tulsi, Grover, and Patel \cite{tgp}).

The second issue is that the Complexity-Theoretic No-Cloning Theorem only rules out a counterfeiter that maps $\ket{S}$ to $\ket{S}^{\otimes 2}$ for {\em all} subspaces $S$ with $\text{dim}(S) = n/2$, or at least the vast majority of them.  But again, what about a counterfeiter that only succeeds on a $1/\text{poly}$ fraction of subspaces?  To deal with this, Aaronson and Christiano show that the hidden subspace scheme has a ``random self-reducibility'' property: any counterfeiter $C$ that works on a non-negligible fraction of $S$'s can be converted into a counterfeiter that works on {\em any} $S$.  This is proven by giving an explicit random self-reduction: starting with a state $\ket{S}$, one can apply a linear transformation that maps $\ket{S}$ to some random {\em other} subspace state $\ket{T}$, while also replacing the oracles $\chi_S$ or $\chi_{S^\perp}$ by ``simulated oracles'' $\chi_T$ or $\chi_{T^\perp}$, which use $\chi_S$ and $\chi_{S^\perp}$ to recognize elements of $T$ and $T^{\perp}$ respectively.
\end{proof}

Now, once that we have a public-key quantum money scheme that's secure relative to a classical oracle, Aaronson and Christiano \cite{achristiano} observed that we get something else essentially for free: namely, a {\em private}-key quantum money scheme where bills are returned after verification, yet that's secure against interactive attack.  This solves one of the main problems left open by Lecture 8.  Let's now explain the connection.

\begin{theorem}[Aaronson-Christiano 2012 \cite{achristiano}]
There exists a private-key quantum money scheme secure against interactive attack (with no computational assumptions needed, though with the bank maintaining a huge database).
\end{theorem}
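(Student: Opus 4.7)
The plan is to take the public-key mini-scheme from Theorem~\ref{acthm} and ``de-publicize'' it: keep the subspace-membership oracles $\chi_S$ and $\chi_{S^\perp}$ secret, and make them accessible only through interaction with the bank. Concretely, I would define the private-key scheme as follows. The bank maintains a database in which, for every bill ever issued, it stores a fresh classical serial number $s$ together with a description of a uniformly random subspace $S \leq \mathbb{F}_2^n$ of dimension $n/2$. The corresponding banknote is $(s, \ket{S})$, where $\ket{S} = |S|^{-1/2}\sum_{x\in S}\ket{x}$. When a user presents a claimed banknote $(s, \rho)$ for verification, the bank looks up $S$ in its database and applies the projection onto $\ket{S}$ by running the four-step procedure from the proof of Theorem~\ref{acthm}: check membership in $S$ (which it can do using its private description), apply $H^{\otimes n}$, check membership in $S^\perp$, apply $H^{\otimes n}$ again. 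If all four checks succeed, the (now collapsed) bill is returned to the customer; otherwise it is destroyed. No cryptographic primitives appear anywhere, and the database grows linearly in the number of bills issued, as the statement allows.

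The security reduction then goes as follows. Suppose, toward contradiction, that there is an interactive counterfeiter $C$ which, starting from $m$ legitimate banknotes corresponding to independent subspaces $S_1, \ldots, S_m$, produces $m+1$ bills that all pass verification, using only $q = \poly(n)$ bank-verification queries. I would convert $C$ into a counterfeiter $C'$ in the public-key oracle model of Theorem~\ref{acthm}: $C'$ simulates each of $C$'s bank queries itself, using the public oracles $\chi_{S_i}$ and $\chi_{S_i^\perp}$ in place of the bank's private knowledge. Each verification query costs $O(1)$ oracle calls, so $C'$ makes only $O(q)$ queries overall, yet it succeeds in cloning some $\ket{S_i}$. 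By the Complexity-Theoretic No-Cloning bound underlying Theorem~\ref{acthm}, any such $C'$ must make $\Omega(2^{n/4})$ queries to the subspace oracles, a contradiction. Because the lower bound is purely query-theoretic, no computational assumption enters the argument.

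The main obstacle I expect is making the simulation faithful when $C$ interacts adaptively with several bills at once and submits states that are far from the honest $\ket{S_i}$---including mixed, entangled, or maliciously chosen inputs. One has to check that the bank's verification on any such $\rho$ produces exactly the same accept/reject bit and the same post-measurement state as the oracle-level implementation of the projector $\ket{S}\bra{S}$ used in Aaronson--Christiano's analysis, even when the verification is interleaved arbitrarily with $C$'s private computations and other verification queries. This should follow from a short hybrid argument: both implementations realize the same superoperator on the submitted register (namely, the two-outcome measurement $\{\ket{S}\bra{S}, I - \ket{S}\bra{S}\}$), so the joint view of $C$ across all rounds is identical in the two worlds. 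Once this equivalence is nailed down, an interactive attack with $q$ bank queries translates one-for-one into an oracle attack with $O(q)$ queries, and the $\Omega(2^{n/4})$ lower bound transfers directly, yielding the claimed unconditional security.
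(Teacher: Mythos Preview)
Your proposal is correct and matches the paper's approach essentially line for line: de-publicize the hidden-subspace mini-scheme by having the bank store bases for the subspaces, implement verification as the projection onto $\ket{S}$, and reduce any polynomial-query interactive counterfeiter to an oracle-model counterfeiter contradicting Theorem~\ref{acthm}. One small point: the paper's version is slightly stronger in that it allows the bank to return the post-measurement state even on \emph{failed} verifications, whereas you destroy the bill on failure; your simulation argument goes through unchanged in that more generous setting, so you may as well claim it.
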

\begin{proof}
Each banknote has the form $\ket{z}\ket{S_z}$, where $\ket{z}$ is a classical serial number, and $\ket{S_z}$ is an equal superposition over $S_z$, a subspace of $\mathbb{F}_2^n$ of dimension $n/2$.  The bank, in a giant database, stores the serial number $z$ of every bill in circulation, as well as a basis for $S_z$.  When the user submits $\ket{z}\ket{S_z}$ to the bank for verification, the bank can use its knowledge of a basis for $S_z$ to decide membership in both $S_z$ and $S_z^\perp$, and thereby implement a projection onto $\ket{S_z}$.  On the other hand, suppose it were possible to map $\ket{z}\ket{S_z}$ to $\ket{z}\ket{S_z}^{\otimes 2}$, using $\poly(n)$ queries to the bank (with the submitted bills returned to the user on successful or even failed verifications).  Then by using the oracle $A$ to simulate the queries to the bank, we could also map $\ket{d_S}\ket{S}$ to $\ket{d_S}\ket{S}^{\otimes 2}$ in the scheme of Theorem \ref{acthm}, and thereby break that scheme.  But this would contradict Theorem \ref{acthm}.  It follows that mapping $\ket{z}\ket{S_z}$ to $\ket{z}\ket{S_z}^{\otimes 2}$ requires $\Omega(2^{n/2})$ queries to the bank.
\end{proof}

\section{Instantiation}

In light of Theorem \ref{acthm}, an obvious question is whether it's possible to ``instantiate'' the hidden subspace construction, thus getting rid of the oracle $A$.  This would require providing an ``obfuscated'' description of the subspaces $S$ and $S^\perp$---one that let the user efficiently apply $\chi_S$ and $\chi_{S^\perp}$ in order to verify a banknote, yet that didn't reveal a basis for $S$.

In their paper, Aaronson and Christiano \cite{achristiano} offered one suggestion for this, based on low-degree polynomials.  More concretely: assume that the serial number of a banknote $\ket{S}$ encodes two sets of (say) degree-$3$ polynomials,
$$p_1,\ldots,p_m: \mathbb F_2^n\to \mathbb F_2\text{  and  }q_1,\ldots,q_m: \mathbb F_2^n\to \mathbb F_2,$$
for some $m=O(n)$, such that $p_1,\ldots,p_m$ simultaneously vanish on $S$ (and only on $S$), while $q_1,\ldots,q_m$ simultaneously vanish on $S^\perp$ (and only on $S^\perp$).  Given $S$, it's easy to generate such polynomials efficiently---for example, by choosing random degree-$3$ polynomials that vanish on some canonical subspace $S_0$ (say, that spanned by the first $n/2$ basis vectors), and then acting on the polynomials with a linear transformation that maps $S_0$ to $S$.  For additional security, we could also add noise: an $\epsilon$ fraction of the polynomials can be chosen completely at random.

Clearly, given these polynomials, we can use them to compute the characteristic functions $\chi_S$ and $\chi_{S^\perp}$, with no oracle needed.  But given only the polynomials, there's at least no {\em obvious} way to recover a basis for $S$, so one might hope that copying $\ket{S}$ would be intractable as well.

To give evidence for that, Aaronson and Christiano \cite{achristiano} gave the following security reduction.

\begin{theorem}
\label{secreduction}
Suppose the hidden subspace mini-scheme can be broken, with the functions $\chi_S$ and $\chi_{S^\perp}$ implemented by sets of degree-$3$ polynomials $p_1,\ldots,p_m: \mathbb F_2^n\to \mathbb F_2$ and $q_1,\ldots,q_m: \mathbb F_2^n\to \mathbb F_2$ respectively.  Then there exists a polynomial-time quantum algorithm that, given the $p_i$'s and $q_i$'s, finds a basis for $S$ with success probability $\Omega(2^{-n/2})$.
\end{theorem}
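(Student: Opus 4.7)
The plan is to use the assumed counterfeiter as an ``amplifier'' that turns one copy of $\ket{S}$ into many, and then extract a basis by standard computational-basis sampling. The crucial observation is that, even without any advance knowledge of a banknote, a single copy of $\ket{S}$ can be prepared \emph{at cost $2^{-n/2}$ in success probability} by using only the publicly available verification circuit---and this is exactly the success probability we are asked to achieve.

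Concretely, I would proceed in four steps. First, starting from the all-$\ket{+}$ state $\frac{1}{\sqrt{2^n}} \sum_{x\in\F_2^n} \ket{x}$, apply the mini-scheme verification procedure built out of the $p_i$'s and $q_i$'s. Since verification is a projective measurement onto the $1$-dimensional subspace spanned by $\ket{S}$, and since $|\langle S \, | \, +\rangle^{\otimes n}|^2 = |S|/2^n = 2^{-n/2}$, this yields a clean copy of $\ket{S}$ with probability exactly $2^{-n/2}$. Second, invoke the assumed counterfeiter $C$: feeding it (the classical description of) the polynomials together with $\ket{S}$, it outputs a state close to $\ket{S}^{\otimes 2}$. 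Apply $C$ recursively in a tree-like fashion for $O(\log n)$ levels to obtain $\ket{S}^{\otimes k}$ for some $k=\Theta(n)$, losing only a polynomial factor in fidelity (by a union-bound / triangle-inequality argument on trace distance, since each invocation of $C$ introduces only a $1/\poly(n)$ error after the amplification of \cite{achristiano}). Third, measure each of the $k$ registers in the computational basis; this produces $k$ i.i.d.\ uniformly random elements $s_1,\ldots,s_k \in S$. Fourth, run Gaussian elimination over $\F_2$ on $s_1,\ldots,s_k$: a standard calculation shows that $k = O(n)$ uniform samples from a subspace of dimension $n/2$ span it with probability $1 - 2^{-\Omega(n)}$, yielding a basis for $S$.

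Multiplying the probabilities: the preparation in step 1 succeeds with probability $2^{-n/2}$; steps 2--4 each succeed with probability $1 - 1/\poly(n)$ conditional on step 1. Hence the overall success probability is $\Omega(2^{-n/2})$, as required.

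The main obstacle, and the reason this is not entirely routine, is handling the gap between an \emph{ideal} counterfeiter that maps $\ket{S} \mapsto \ket{S}^{\otimes 2}$ exactly for every $S$, and the \emph{actual} counterfeiter guaranteed by the hypothesis---which only succeeds with non-negligible fidelity and possibly only on a non-negligible fraction of subspaces. Both issues are addressed by the toolkit already built in \cite{achristiano}: monotone amplitude amplification (\`a la Tulsi--Grover--Patel) converts a weak counterfeiter into one of fidelity $1 - 1/\poly(n)$ with only polynomial overhead, since verification is a rank-$1$ projective measurement; and the random self-reducibility of the hidden-subspace scheme---the explicit linear change of variables that maps $S$ to a uniformly random $T$ while transporting $\chi_S,\chi_{S^\perp}$ to simulated oracles $\chi_T,\chi_{T^\perp}$---lets one assume without loss of generality that $C$ works on the specific $S$ whose basis we wish to recover. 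Once these two reductions are in place, the four-step sampling argument above goes through and delivers a basis for $S$ in polynomial time with success probability $\Omega(2^{-n/2})$.
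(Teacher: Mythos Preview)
Your proposal is correct and follows essentially the same four-step argument as the paper: prepare the uniform superposition, project to obtain $\ket{S}$ with probability $2^{-n/2}$, iterate the counterfeiter to get $\Theta(n)$ copies, and measure to recover a basis. If anything, your write-up is more careful than the paper's, which simply assumes an ideal counterfeiter $C:\ket{S}\mapsto\ket{S}^{\otimes 2}$ and does not spell out the weak-to-strong amplification or the random self-reducibility step that you rightly flag.
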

\begin{proof}
The argument is surprisingly simple.  Suppose a counterfeiting algorithm $C$ exists as in the hypothesis, which maps $\ket{S}$ to $\ket{S}^{\otimes 2}$ using $p_1,\ldots,p_m$ and $q_1,\ldots,q_m$.  Then consider the following algorithm to find a basis for $S$ with $\sim 2^{-n/2}$ success probability:
\begin{enumerate}
\item[(1)] Prepare a uniform superposition over all $n$-bit strings.
\item[(2)] Compute $p_1,\ldots,p_m$ on the elements of the superposition, and condition on getting the all-$0$ outcome---thereby collapsing the superposition down to $\ket{S}$ with success probability $2^{-n/2}$.
\item[(3)] If the measurement succeeds, then run $C$ repeatedly, to map $\ket{S}$ to $\ket{S}^{\otimes m}$ for some $m=\Theta(n)$.
\item[(4)] Measure each of the copies of $\ket{S}$ in the computational basis to learn a basis for $S$.
\end{enumerate}
\end{proof}

An algorithm that finds a basis for $S$ with exponentially-small success probability, $\Omega(2^{-n/2})$, might not sound impressive or unlikely.  However, notice that if one tried to find a basis for $S$ by random guessing, one would succeed with probability only $2^{-\Omega(n^2)}$.

\section{Attacks}

Unfortunately, some recent developments have changed the picture substantially.

First, in 2014, Pena et al.\ \cite{PFP15} proved that the {\em noiseless} version of the low-degree polynomial scheme can be broken, if the scheme is defined not over $\mathbb{F}_2$ but over a finite field of odd characteristic.  Their proof used Gr\"{o}bner basis methods, and it actually yielded something stronger than a break of the money scheme: namely, a polynomial-time classical algorithm to recover a basis for $S$ given $p_1,\ldots,p_m$ (the dual polynomials, $q_1,\ldots,q_m$, aren't even needed).  In the $\mathbb{F}_2$ case, Pena et al.\ conjectured, based on numerical evidence, that their Gr\"{o}bner basis techniques would at least yield a quasipolynomial-time algorithm.

In any case, the noisy version of the low-degree polynomial scheme wasn't affected by these results, since the Gr\"{o}bner basis techniques break down in the presence of noise.

However, in a recent dramatic development, Christiano and others have fully broken the noisy low-degree polynomial scheme.  They did this by giving a quantum reduction from the problem of breaking the noisy scheme to the problem of breaking the noiseless scheme, which Pena et al.\ already solved.  (Thus, unlike Pena et al.'s result, this part of the attack is specific to quantum money.)

The reduction is extremely simple in retrospect.  Given the polynomials $p_1,\ldots,p_m$ and $q_1,\ldots,q_m$, and given a money state $\ket{S}$, our goal is to use measurements on $\ket{S}$ to decide which $p_i$'s and $q_i$'s are ``genuine'' (that is, vanish everywhere on $S$), and which ones are ``noisy'' (that is, vanish on only a $\sim 1/2$ fraction of points in $S$).

Given a polynomial (say) $p_j$, we can of course evaluate $p_j$ on $\ket{S}$ and then measure the result.  If $p_j$ is genuine, then we'll observe the outcome $0$ with certainty, while if $p_j$ is noisy, we'll observe the outcomes $0$ and $1$ with roughly equal probability.  The problem with this approach is that, if $p_j$ is noisy, then the measurement will destroy the state $\ket{S}$, making it useless for future measurements.

Here, however, we can use the remarkable idea of {\em single-copy tomography}, which Farhi et al.\ \cite{farhi:restore} introduced in 2010, in the context of breaking other public-key quantum money schemes.  The idea is this: after we make a measurement that corrupts $\ket{S}$ to some other state $\ket{\psi}$, we then use amplitude amplification (see Lecture 5) to restore $\ket{\psi}$ back to $\ket{S}$.  Now, amplitude amplification requires the ability to reflect about both the initial state $\ket{\psi}$ and the target state $\ket{S}$.  But we can do this by using the polynomials $p_1,\ldots,p_m$ and $q_1,\ldots,q_m$!  In particular, to reflect about $\ket{S}$, we negate the amplitude of each basis state iff a sufficient fraction of the $p_i$'s and $q_i$'s evaluate to $0$, while to reflect about $\ket{\psi}$, we negate the amplitude of each basis state iff a sufficient fraction of the $p_i$'s and $q_i$'s evaluate to $0$, {\em and} the polynomial $p_j$ that we just measured evaluates to whichever measurement outcome we observed ($0$ or $1$).  This approach lets us proceed through the $p_i$'s and $q_i$'s, identifying which ones are noisy, without permanent damage to the state $\ket{S}$.

Taking the contrapositive of Theorem \ref{secreduction}, a remarkable corollary of the new attack is that, given a noisy set of low-degree polynomials $p_1,\ldots,p_m$ and $q_1,\ldots,q_m$, there's a polynomial-time quantum algorithm that recovers a basis for $S$ with success probability $\Omega(2^{-n/2})$.

Again, a na\"{i}ve algorithm would succeed with probability only $2^{-\Omega(n^2)}$ (with some cleverness, one can improve this to $2^{-cn}$ for some large constant $c$).  We don't know any quantum algorithm for this problem that succeeds with probability $\Omega(1)$, nor do we know a classical algorithm that succeeds with probability $\Omega(2^{-n/2})$.  Thus, the algorithm that emerges from this attack seems to be a genuinely new quantum algorithm, which is specialized to the regime of extremely small success probabilities.  The most surprising part is that this algorithm came from an attack on a quantum money scheme; we wouldn't know how to motivate it otherwise.

\section{Future Directions}

Returning to quantum money, though, what are the prospects for evading this attack on the low-degree polynomial scheme?  All that's needed is some new, more secure way to instantiate the black-box functions $\chi_S$ and $\chi_{S^\perp}$ in Aaronson and Christiano's hidden subspace scheme.  As a sort of stopgap measure, one could instantiate $\chi_S$ and $\chi_{S^\perp}$ using a recent idea in cryptography known as {\em indistinguishability obfuscation}, or i.o.  I.o.\ has the peculiar property that, when we use it to obfuscate a given function $f$, we can generally prove that, if there's {\em any} secure way to obfuscate $f$, and if moreover i.o.\ itself is secure, then i.o.\ yields a secure obfuscated implementation of $f$.  In this case, the upshot would be that, {\em if} the characteristic functions of the subspaces $S$ and $S^\perp$ can be obfuscated at all, then i.o.\ is one way to obfuscate them!  Ironically, this gives us a strong argument in favor of using i.o.\, even while giving us little evidence that {\em any} secure implementation is possible.

Other ideas include using implementing $\chi_S$ and $\chi_{S^\perp}$, or something like them, using lattice-based cryptography, or returning to other ideas for quantum money, such as the Farhi et al.\ \cite{knots} knot-based scheme, and trying to give them more rigorous foundations.  In the meantime, there are many other wonderful open problems about public-key quantum money.  Here's a particular favorite: is public-key quantum money possible relative to a {\em random} oracle?

Besides quantum money, another fascinating challenge is {\em copy-protected quantum software}: that is, quantum states $\ket{\psi_f}$ that would be useful for evaluating a Boolean function $f$, but not for preparing additional states that could also be used to evaluate $f$.  In forthcoming work, Aaronson and Christiano show that copy-protected quantum software is ``generically'' possible relative to a classical oracle.  But it remains to give a secure ``real-world'' implementation of quantum copy-protection for general functions $f$, or even a good candidate for one.

%% Lecture 10
%%%%%%%%%%%%%%%%%%%%%%%%%%%%%%%%%%%%%%%%
\lecture{Adam Bouland and Luke Schaeffer}{Adam Bouland and Luke Schaeffer}{Classification of Gate Sets}

\section{Physical Universality}

For much of this course, we considered questions of the form: given a universal quantum gate set, which unitaries can we implement, and which states can we prepare, in polynomial time?  In this final lecture, we consider a different but related question: given a (possibly \emph{non}-universal) gate set, which unitaries can we implement at all?  We'll see some recent progress on this latter question, as well as numerous open problems.

First, though, we need the concept of {\em physical universality}.

\begin{definition}
A set of $k$-qubit gates $G = \{ g_1, \ldots, g_\ell \}$ is called \emph{physically universal} if there exists a positive integer $n_0 \in \N$ such that for all $n \geq n_0$, gates from $G$ densely generate $\mathrm{SU}(2^n)$, the special unitary group of dimension $2^n$ on $n$ qubits.
\end{definition}

Let's comment on the definition.  First, we focus on $\mathrm{SU}$ since global phase is irrelevant, so we might as well assume determinant $1$.  Second, note that physical universality is about implementing unitary operations (specifically, all of $\mathrm{SU}(2^n)$ for sufficiently large $n$) rather than performing computation. That is, it's possible to be computationally universal (for quantum computation) but not physically universal, for instance if we only generate $\mathrm{SO}(2^n)$, the special orthogonal group of dimension $2^n$.

Also, the $n_0$ in the above definition is so that we only have to generate all unitaries on any sufficiently large (i.e., larger than $n_0$) number of qubits. It turns out there exist, for instance, $2$-qubit gates $G$ that approximately generate all unitaries on $3$ qubits, but don't approximately generate some $2$-qubit gates, if we apply $G$ to $2$ qubits only. That is, such $G$ can approximate $I \otimes U$ but not $U$ for some $2$-qubit unitary $U$. Surely it's good enough to generate $I \otimes U$ for most practical purposes, so our definition of physical universality allows it.

Recall the Solovay-Kitaev theorem from Lecture 2 (also see \cite{solovaykitaev}).

\begin{theorem}[Solovay-Kitaev]
If a set of gates $G$ (closed under inverses) approximately generates all $n$-qubit unitaries, then we can $\eps$-approximate any unitary on $n$ qubits using $O( \exp(n) \polylog(1/\eps))$ gates from $G$.
\end{theorem}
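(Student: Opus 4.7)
The plan is the standard two-phase Solovay--Kitaev approach, keeping explicit track of the dimension dependence. In Phase 1 (the \emph{base net}), we use the density hypothesis to produce a finite set $N_0$ of words over $G$ whose corresponding unitaries form an $\eps_0$-net of $\mathrm{SU}(2^n)$, where $\eps_0$ is an absolute constant chosen small enough that the Phase 2 iteration will converge. In Phase 2 (the \emph{refinement}), we use the group-commutator identity to recursively boost accuracy from $\eps_k$ to $\eps_{k+1} = C \eps_k^{3/2}$, multiplying word length by only a constant factor of $5$ at each step. Since $k = O(\log\log(1/\eps))$ iterations suffice to drive $\eps_k$ below $\eps$, the final word length is $L_0 \cdot 5^{O(\log\log(1/\eps))} = L_0 \cdot \polylog(1/\eps)$, where $L_0$ is the maximum word length in $N_0$.

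The main obstacle is proving $L_0 = O(\exp(n))$ for the base case. Since $\mathrm{SU}(2^n)$ is a compact real manifold of dimension $4^n - 1$, a volume-packing argument already forces $|N_0| \geq (c/\eps_0)^{4^n - 1}$, so the longest word in $N_0$ must have length $\Omega(4^n \log(1/\eps_0)) = \Omega(\exp(n))$ on pure counting grounds. For the matching upper bound, one would enumerate all words of length up to some $L_0$ over $G$ and argue that for $L_0$ sufficiently large these form an $\eps_0$-net of $\mathrm{SU}(2^n)$; the challenge is making ``sufficiently large'' quantitatively $O(\exp(n))$ rather than the ineffective bound produced by abstract compactness. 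The usual workaround is to invoke a quantitative (spectral-gap) version of the density hypothesis, so that a uniformly random word of length $\Theta(\exp(n)\log(1/\eps_0))$ lands within $\eps_0$ of any fixed target unitary with positive probability; this is the only step of the proof that actually uses the density hypothesis, and it is where essentially all of the $\exp(n)$ blow-up comes from.

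Phase 2 rests on a purely algebraic lemma: for any $W \in \mathrm{SU}(2^n)$ with $\|W - I\| \leq \delta$, there exist $A, B \in \mathrm{SU}(2^n)$ with $\|A - I\|, \|B - I\| = O(\sqrt{\delta})$ such that $W = A B A^{-1} B^{-1}$. This is proven by writing $W = \exp(iH)$ for Hermitian $H$ with $\|H\| = O(\delta)$, then solving $[P,Q] = -iH$ with $\|P\|, \|Q\| = O(\sqrt{\delta})$ and setting $A = \exp(iP)$, $B = \exp(iQ)$; the Baker--Campbell--Hausdorff expansion then gives $A B A^{-1} B^{-1} = W + O(\delta^{3/2})$. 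The point is that the Lie-algebra commutator $[P,Q]$ is bilinear in $P, Q$, so solving for it at scale $\delta$ only requires $P, Q$ at scale $\sqrt{\delta}$, and the higher-order terms in BCH are $O(\delta^{3/2})$, one power of $\sqrt{\delta}$ smaller than $\delta$ itself.

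Given the lemma, the refinement step is as follows: given a target $V$ and an $\eps_k$-net $N_k$, find $U_0 \in N_k$ with $\|V U_0^{-1} - I\| \leq \eps_k$; decompose $V U_0^{-1} = A B A^{-1} B^{-1}$ with $A, B$ near the identity; recursively find $A', B' \in N_k$ with $\|A - A'\|, \|B - B'\| \leq \eps_k$; and return the word $A' B' (A')^{-1} (B')^{-1} U_0$, which one verifies is within $O(\eps_k^{3/2})$ of $V$ and uses at most $5$ times as many gates as a single element of $N_k$. Composing Phase 1 with $O(\log\log(1/\eps))$ rounds of this refinement yields the claimed bound of $\exp(n) \cdot \polylog(1/\eps)$ gates; the constant $c$ in the $\polylog$ comes out to $\log 5 / \log(3/2)$.
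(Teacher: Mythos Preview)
The paper does not prove this theorem; both occurrences (in Lecture~2 and Lecture~10) state it and cite the literature (\texttt{\textbackslash cite\{solovaykitaev\}}) without argument. So there is no paper proof to compare against.

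Your Phase~2 is the standard Solovay--Kitaev recursion and is correct as sketched: the group-commutator decomposition $W = ABA^{-1}B^{-1}$ with $\|A-I\|,\|B-I\| = O(\sqrt{\delta})$, the BCH error of order $\delta^{3/2}$, the five-fold blowup per level, and the $O(\log\log(1/\eps))$ depth all match the usual analysis, yielding the $\polylog(1/\eps)$ factor with exponent $\log 5/\log(3/2)\approx 3.97$.

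Phase~1, however, has a real gap that you yourself flag but do not close. The hypothesis of the theorem is only that $G$ \emph{densely} generates $\mathrm{SU}(2^n)$; this is a purely topological statement and gives no effective bound on the word length $L_0$ needed for an $\eps_0$-net. Your proposed fix---``invoke a quantitative (spectral-gap) version of the density hypothesis''---is not a proof step but an additional assumption not present in the theorem statement. Without such an assumption, compactness gives only that \emph{some} finite $L_0 = L_0(G,n,\eps_0)$ exists, with no control on its size; in particular you cannot conclude $L_0 = O(\exp(n))$ uniformly over all dense gate sets $G$. (Indeed, for adversarial $G$---say a single irrational rotation extremely close to a rational one---$L_0$ can be made arbitrarily large.) The $O(\exp(n))$ in the paper's statement should be read as the dimension-dependent constant one gets for any \emph{fixed} gate set, suppressing the dependence on $G$; your writeup would be honest to state the bound as $L_0(G)\cdot\polylog(1/\eps)$ and note that $L_0(G)$ is at least $\Omega(4^n)$ by your volume argument but has no universal upper bound from density alone.
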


This theorem has, for the most part, let us ignore the gate set in this course. We can usually assume our gate set satisfies the conditions of the Solovay-Kitaev theorem, and therefore generates any other $O(1)$-qubit gate we might prefer to use. Furthermore, using only $O(\polylog(1/\eps)$) gates for an approximation of accuracy $\eps$ is an excellent trade-off.

However, there are some drawbacks to the Solovay-Kitaev theorem. First, it assumes that the gate set is closed under inverses.  It's true that by the Poincar\'{e} recurrence theorem, we can find an $O(1/\eps)$ approximation in polynomial time, but this is insufficient for Solovay-Kitaev.  It's an open problem whether we can remove the requirement of exact inverses from the theorem.

Second, the Solovay-Kitaev theorem requires a universal gate set. It doesn't apply when the gate set doesn't densely generate the entire space. Indeed, there exist non-universal gate sets for which the analogue of Solovay-Kitaev is false. For example, any irrational rotation $R_\theta$ in $\mathrm{SO}(2) \subseteq \mathrm{SU}(2)$ will densely generate $\mathrm{SO}(2)$. But since the only thing we can do with $R_\theta$ is apply it repeatedly, a counting argument shows that there aren't enough different circuits to achieve $O(\polylog(1/\eps))$ accuracy approximations.

Even if we accept these drawbacks, there's another issue: namely, is it even {\em decidable} whether a given gate set is universal? To formalize the problem, let's assume that each gate is given by its unitary matrix.  The matrix entries should be computable numbers at the very least, but let's be on the safe side and assume the entries are algebraic.  It's still not obvious how to decide whether a gate set is physically universal, mainly because we lack an effective bound on $n_0$, the number of qubits needed to ``see'' physical universality.  That is, it could be that a gate set is extremely limited until we have $10^{100}$ qubits of workspace and then, for some reason, it becomes able to approximate any unitary quite easily.  Admittedly, such a situation seems unlikely, but we must have some bound on $n_0$ to prove decidability.

So, we turn to a theorem of Ivanyos \cite{Ivanyos}.

\begin{theorem}[Ivanyos '06]
If a $k$-qudit (i.e., a $d$-dimensonal generalization of a qubit) gate set doesn't densely generate all unitaries on $O(kd^{8})$ qubits, then it will never densely generate all unitaries.
\end{theorem}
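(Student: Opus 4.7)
The plan is to reduce the question of physical universality to one about Lie algebras. Each $k$-qudit gate $g_j = e^{i H_j}$, when applied to an ordered $k$-tuple of qudits out of $n$, yields a Hermitian operator in $\mathfrak{su}(d^n)$. Let $\mathfrak{h}_n \subseteq \mathfrak{su}(d^n)$ denote the real Lie subalgebra generated by all such operators (closed under real linear combinations and commutators). By the standard Lie-group/Lie-algebra correspondence for compact connected groups, $\overline{\langle G_n \rangle}$ is the closed connected subgroup of $\mathrm{SU}(d^n)$ whose Lie algebra is $\mathfrak{h}_n$, so the gate set densely generates $\mathrm{SU}(d^n)$ if and only if $\mathfrak{h}_n = \mathfrak{su}(d^n)$. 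The theorem's contrapositive becomes: if $\mathfrak{h}_n = \mathfrak{su}(d^n)$ for all sufficiently large $n$, then $\mathfrak{h}_{n_0} = \mathfrak{su}(d^{n_0})$ already holds at $n_0 = O(kd^8)$.

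Two structural features of the family $\{\mathfrak{h}_n\}$ drive the argument. First, $\mathfrak{h}_n$ is invariant under conjugation by the symmetric group $S_n$ permuting the tensor factors, since gates may be applied to any ordered $k$-tuple of qudit positions. Second, $\mathfrak{h}_n$ is closed under the natural ``padding'' embedding $X \mapsto X \otimes I$ of $\mathfrak{h}_{n-1}$ into $\mathfrak{su}(d^n)$, so the restriction of $\mathfrak{h}_n$ to any $n_0$-qudit subsystem contains the intrinsic algebra $\mathfrak{h}_{n_0}$. The plan is then to suppose $\mathfrak{h}_{n_0} \neq \mathfrak{su}(d^{n_0})$ and propagate that proper containment to all larger $n$.

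The first main step is a classification of the maximal Lie subalgebras of $\mathfrak{su}(N)$ in which $\mathfrak{h}_{n_0}$ could possibly sit: by Dynkin's theorem these are (i) reducible subalgebras stabilizing a nontrivial invariant subspace, (ii) subalgebras preserving a tensor factorization $\mathbb{C}^N \cong \mathbb{C}^a \otimes \mathbb{C}^b$, (iii) subalgebras preserving a bilinear or sesquilinear form (giving $\mathfrak{so}(N)$, $\mathfrak{sp}(N/2)$, or a real form), and (iv) images of irreducible embeddings of smaller simple Lie algebras. The second step is to show each such obstruction is ``persistent under padding'': if $\mathfrak{h}_{n_0}$ lies inside one of these maximal subalgebras $\mathfrak{m}_{n_0}$, then, using $S_n$-invariance together with the tensor-with-$I$ property, $\mathfrak{h}_n$ lies inside a canonically-constructed maximal subalgebra $\mathfrak{m}_n$ of the same type for all $n > n_0$, contradicting universality at any larger scale.

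The main technical obstacle is making $n_0 = O(kd^8)$ effective. Each of the four cases above must be ruled out by producing an explicit commutator of $O(kd^8)$-qubit gates that lies outside any candidate maximal subalgebra of that type --- for instance, for the form-preserving cases one needs enough qudits to build a Hermitian operator whose transpose-anti-symmetric part is nonzero on the working subspace, and for the tensor-decomposition case one needs to break every possible bipartition of the $n_0$ qudits. The subtlest case is (iv), where one must bound, as a function of $d$ and $k$, the dimension of the largest irreducible representation of a simple Lie algebra that could contain all the local Hamiltonians; this is where the polynomial $d^8$ plausibly arises, via quantitative bounds on Casimir eigenvalues and representation dimensions in a Dynkin-style argument. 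Once each case yields a local witness of size $O(kd^8)$, taking the maximum over the finitely many cases gives the stated bound.
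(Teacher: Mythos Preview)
The paper does not give a proof of this theorem; it is cited from Ivanyos \cite{Ivanyos} and used only as a black box to argue that physical universality is decidable. There is thus nothing in the paper to compare your proposal against, and the sketch must stand on its own.

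Your opening reduction to Lie algebras has a genuine gap. You take $\mathfrak{h}_n$ to be the real Lie algebra generated by chosen logarithms $H_j$ of the gates and assert that $\overline{\langle G_n\rangle}$ is the connected subgroup with Lie algebra $\mathfrak{h}_n$, so that dense generation is equivalent to $\mathfrak{h}_n=\mathfrak{su}(d^n)$. That biconditional is false for discrete gate sets: the Pauli gates $i\sigma_x,i\sigma_y,i\sigma_z\in\mathrm{SU}(2)$ have logarithms spanning all of $\mathfrak{su}(2)$ yet generate only the finite quaternion group. In general the closure need not be connected, and when it is finite (or has nontrivial component group) the obstruction to universality is invisible to your $\mathfrak{h}_n$ and to Dynkin's classification. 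The Hamiltonian/Lie-algebra dictionary you invoke is the right tool for the continuous-time model treated by Childs et al.\ \cite{childs}, not for Ivanyos's discrete-gate setting.

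The persistence step is also problematic as stated. You argue that any maximal-subalgebra obstruction at level $n_0$ pads to an obstruction at every $n>n_0$. If that held for every $n_0\ge k$, then non-universality on $k$ qudits would already force non-universality everywhere---contradicting the phenomenon the paper highlights just before this theorem, of $2$-qubit gates that fail to be universal on $2$ qubits but become universal on $3$. The content of the bound $O(kd^8)$ is precisely that below it there can be ``accidental'' obstructions that evaporate with extra workspace, whereas at the threshold any surviving obstruction is of a structured kind that does persist; your outline does not supply a mechanism separating these two regimes. (Your final paragraph even reverses direction---producing commutators that \emph{escape} each maximal subalgebra would witness universality, not its failure---so the intended logic is not yet pinned down.)
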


For any given gate set, this theorem gives an upper bound on $n_0$. Decidability is still not trivial; we need to compute the closure of a gate set with algebraic entries on a finite number of qubits. Fortunately, this can be done, so physical universality of any gate set is decidable.

Another question: suppose we show a gate set $G$ is universal for some particular $n' \geq 2$.  Can we argue that it's universal for all $n \geq n'$?  Yes, as it turns out.  We use the fact that there are universal two-qubit gates $U$ (see Lecture 2 for examples).  Since $G$ generates $U \otimes I_{n'-2}$, and $U$ generates all unitaries, it follows that $G$ generates any unitary on $n \geq n'$ qubits.

Lastly, let's mention a result that shows that, when we discuss non-universal gate sets, we're actually dealing with a measure-0 subset of all gate sets.

\begin{theorem}
A Haar-random $2$-qubit gate is physically universal with probability $1$.
\end{theorem}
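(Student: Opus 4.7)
The plan is to show that the set of non-universal $2$-qubit gates is contained in a proper real-analytic subvariety of $\mathrm{U}(4)$, which automatically has Haar measure zero. By the Ivanyos theorem just stated, a $2$-qubit gate set is physically universal iff it densely generates $\mathrm{SU}(2^{n_0})$ for the specific fixed $n_0 = O(2 \cdot 2^8)$, so I only need to analyze a single value of $n_0$. Let $G_U \leq \mathrm{U}(2^{n_0})$ denote the subgroup generated by $\{U^{(i,j)} : 1 \leq i < j \leq n_0\}$, where $U^{(i,j)}$ is $U$ applied to qubits $i$ and $j$ and the identity elsewhere. Physical universality is equivalent to $\overline{G_U} \supseteq \mathrm{SU}(2^{n_0})$.

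Next I would translate this into a Lie-algebraic condition. For Haar-almost every $U \in \mathrm{U}(4)$ the spectrum of $U$ avoids $-1$, so $H := -i \log U$ is a well-defined Hermitian matrix depending real-analytically on $U$. The closure $\overline{G_U}$ is a closed Lie subgroup of $\mathrm{U}(2^{n_0})$, and by the one-parameter subgroup generated by each $U^{(i,j)}$ together with Cartan's closed-subgroup theorem, its Lie algebra equals (up to the $\mathrm{U}(1)$ center) the Lie algebra $\mathfrak{g}(H)$ generated by $\{i H^{(i,j)} : i < j \leq n_0\}$ under repeated Lie brackets. Since $\mathfrak{su}(2^{n_0})$ is finite-dimensional, $\mathfrak{g}(H)$ is spanned by brackets of bounded depth $D = D(n_0)$. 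The condition $\dim \mathfrak{g}(H) < 4^{n_0} - 1$ is then the simultaneous vanishing of all maximal minors of a matrix whose entries are polynomial in the entries of $H$; pulled back through the analytic map $U \mapsto H$, this cuts out a real-analytic subvariety $\mathcal{N} \subset \mathrm{U}(4)$. Any proper real-analytic subvariety of a connected real-analytic manifold has measure zero, so it suffices to show $\mathcal{N}$ is proper, i.e.\ to exhibit even one physically universal $2$-qubit $U_0$.

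For the witness, I would take $U_0 = \mathrm{CNOT} \cdot (R_\alpha \otimes R_\beta)$ with generic irrational angles $\alpha,\beta$. Iterating $U_0$ and composing copies on overlapping pairs of qubits, one can approximate the single-qubit rotations $R_\alpha, R_\beta$ on each qubit to arbitrary precision (since the CNOT parts can be ``cancelled'' by conjugating with appropriate products), and separately one can approximate CNOT itself. Combined with the statement from Lecture 2 that $\{\mathrm{CNOT},G\}$ is universal for almost every one-qubit gate $G$, this shows $U_0$ is physically universal. Hence $U_0 \notin \mathcal{N}$, so $\mathcal{N}$ is a proper real-analytic subvariety and has Haar measure zero.

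The main obstacle is the second paragraph: justifying carefully that ``$\mathfrak{g}(H)$ has less than full dimension'' really does cut out a proper analytic subvariety, not just some measurable set. The passage $U \mapsto \log U$ is analytic but not polynomial, so strictly we only get a real-analytic (not algebraic) subvariety; fortunately measure zero still follows. A cleaner, log-free route would work directly with bounded-length words in the $U^{(i,j)}$ and their group commutators, proving that for generic $U$ these generate a dense subgroup by an effective version of the Lie-group-to-Lie-algebra correspondence; this avoids the spectrum-avoidance technicality entirely, at the cost of more bookkeeping. A secondary subtlety is verifying that the explicit $U_0$ above really is universal as a single $2$-qubit gate, without an external supply of one-qubit gates; this is a short but nontrivial exercise that ultimately reduces to the density of an irrational rotation orbit together with the entangling power of CNOT.
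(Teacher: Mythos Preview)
The paper does not actually prove this theorem; it merely cites Deutsch--Barenco--Ekert and Lloyd for the result. So there is no ``paper's own proof'' to compare against.

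Your overall architecture---reduce to a fixed $n_0$ via Ivanyos, express non-universality as an analytic condition, and exhibit a single universal witness to show the bad set is proper---is the right one and matches the spirit of the original arguments. There is, however, a real gap in your second paragraph. You assert that the Lie algebra of $\overline{G_U}$ equals $\mathfrak{g}(H)$, the Lie algebra generated by the $iH^{(i,j)}$. But $G_U$ is generated by the discrete elements $U^{(i,j)}$, not by the one-parameter groups $\{e^{itH^{(i,j)}}\}$. The closure of $\{U^n : n\in\Z\}$ need not contain the curve $\{e^{itH}\}$; it does so only when the eigenphases of $U$ satisfy an irrationality/independence condition. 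That condition is itself generic (its failure is a countable union of proper analytic subvarieties), so the fix is cheap: throw it into the measure-zero exceptional set alongside the ``$-1$ in the spectrum'' locus. But as written, the identification of $\mathrm{Lie}(\overline{G_U})$ with $\mathfrak{g}(H)$ is simply false without this extra hypothesis, and your ``spectrum-avoidance technicality'' remark at the end does not cover it.

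Your witness construction is also underspecified. Showing that the \emph{single} gate $U_0=\mathrm{CNOT}\cdot(R_\alpha\otimes R_\beta)$ is physically universal---as opposed to the two-element set $\{\mathrm{CNOT},R_\alpha\}$---requires actually producing the one-qubit rotations and the bare $\mathrm{CNOT}$ from products of $U_0$ alone on overlapping qubit pairs, and ``the CNOT parts can be cancelled'' is not an argument. This is doable but takes genuine work; alternatively you can just invoke any of the explicit single-gate universality results in the literature for the witness.
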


This result was shown by Deutsch, Barenco and Ekert \cite{deutsch} in 1995, and separately by Lloyd \cite{lloyd} around the same time.

\section{Difficulty of Classification}

We're interested in better criteria for deciding whether a gate set is universal, and if it's not universal, characterizing the set of unitaries it can generate. Let's take perhaps the simplest possible open case; suppose $G$ is a $2$-qubit gate set that's closed under inverses.  Again, since global phase is irrelevant, we can assume without loss of generality that every element of $G$ has determinant $1$.  Then clearly $G$ generates a subgroup $S$ of $\mathrm{SU}(4)$, the group of $4 \times 4$ complex matrices with determinant $1$, or the $4\times 4$ \emph{special unitary group}. Since our notion of generating a gate is approximate, $S$ must be closed in the topological sense (i.e., under taking limits). That is, we say we can generate $U$ if we can construct a sequence of unitaries that approaches $U$ in the limit.

Why not approach the problem of determining $S$ group-theoretically?  Alas, despite the importance of $\mathrm{SU}(n)$ to physics and mathematics, much remains unknown about its subgroups, even for small $n$. Let's survey what's known.

First, $\mathrm{SU}(2)$ is a double cover of $\mathrm{SO}(3)$, the group of rotations in three dimensions, and the latter has been understood since antiquity. Specifically, there are discrete groups corresponding to the platonic solids (the tetrahedral group, the cubic/octahedral group, and the icosahedral/dodecahedral group), the cyclic and dihedral groups of every order, and the trivial group. There's also the group of all rotations in a plane, the group of symmetries of a disk, and the entire group. This is a complete list of subgroups of $\mathrm{SU}(2)$.

For $\mathrm{SU}(3)$, the infinite subgroups were supposedly classified in 1913, but the classification wasn't completed until 2013, when Ludl fixed an error in a classification result from the '70s. Finally, most of the subgroups of $\mathrm{SU}(4)$ are known. More precisely, all of the exceptional subgroups are understood, but there are also known to exist infinite families of subgroups, and these subgroups are poorly characterized.

\section{Classification of Quantum Gates and Hamiltonians}

We'll look at three ways to make the classification problem more approachable. This section covers the first two approaches: two-level unitaries and Hamiltonians.  Then, Section \ref{REV} will discuss classical reversible gates, and the recent solution to the ``classical part'' of the quantum gate classification problem.

\subsection{Two-Level Unitaries}

Two-level unitaries ignore the usual tensor product structure of quantum gates, in favor of a direct sum structure. Ordinarily, when we apply a $1$-qubit gate $G$ to an $8$-dimensional system (i.e., the equivalent of three qubits), that means transforming the state vector by a tensor product matrix
$$
G \otimes I_4 =
\begin{pmatrix}
G & 0 & 0 & 0 \\
0 & G & 0 & 0 \\
0 & 0 & G & 0 \\
0 & 0 & 0 & G \\
\end{pmatrix}
$$
In the two-level unitary model, we instead apply $G$ to two \emph{modes} (i.e., basis states) of the system, instead of to a qubit. The result is that the matrix we apply has the form
$$
G \oplus I_6 =
\begin{pmatrix}
G & 0 \\
0 & I_6
\end{pmatrix}
$$
instead of a tensor product. Two-level unitaries are motivated by beamsplitters, and the kind of quantum computation feasible with optics.

Reck et al.\ \cite{reck} proved the following important physical universality result:

\begin{theorem}[Reck et al.\ '94]
The set of all two-level unitaries generates all of $\mathrm{SU}(n)$, for any $n\ge 2$.
\end{theorem}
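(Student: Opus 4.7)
The plan is to prove this constructively, by giving an explicit algorithm that decomposes an arbitrary $U \in \mathrm{SU}(n)$ into a product of at most $\binom{n}{2}$ two-level unitaries. This is essentially the QR-decomposition-by-Givens-rotations argument, carried out over $\C$ instead of $\R$. The key observation is that a two-level unitary acting on the span of basis states $\ket{i}$ and $\ket{j}$ can implement any element of $\mathrm{U}(2)$ on that $2$-dimensional subspace, so in particular it can zero out a single off-diagonal entry in a column vector (exactly as a $2 \times 2$ Givens rotation does classically).

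First I would describe the single-column reduction. Let $U \in \mathrm{SU}(n)$ and look at its first column $u_1 = (u_{1,1}, u_{2,1}, \ldots, u_{n,1})^T$, which is a unit vector in $\C^n$. Choose a two-level unitary $V_1$ acting on basis states $\ket{n-1}, \ket{n}$ that, applied to $u_1$, rotates the $(n-1,1)$ and $(n,1)$ entries into a single entry at position $n-1$ and a zero at position $n$; such a $V_1$ exists because any nonzero pair $(a,b) \in \C^2$ can be mapped to $(\sqrt{|a|^2+|b|^2}, 0)$ by a unitary of $\C^2$. Then choose $V_2$ acting on $\ket{n-2}, \ket{n-1}$ to zero out entry $n-1$, and so on up to $V_{n-1}$ acting on $\ket{1},\ket{2}$. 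After these $n-1$ steps the matrix $V_{n-1} \cdots V_1 U$ has first column $e^{i\theta} e_1$ for some phase $\theta$, and by unitarity its first row is also supported only on position $1$. A final two-level unitary acting on $\ket{1},\ket{2}$ (diagonal with phases $e^{-i\theta}, e^{i\theta}$) kills the leftover phase without disturbing the other columns.

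Next I would induct. Having reduced the first row and column to $e_1$ and $e_1^T$, the remaining $(n-1)\times(n-1)$ block is itself unitary, and every two-level unitary on basis states $\ket{i},\ket{j}$ with $i,j \geq 2$ in the $(n-1)$-dimensional problem lifts to a two-level unitary on the $n$-dimensional space (acting as identity on $\ket{1}$). Recursing gives a product $V_N \cdots V_1 U = I$ with $N \leq 1 + 2 + \cdots + (n-1) = \binom{n}{2}$ factors, so $U = V_1^{-1} \cdots V_N^{-1}$, and each $V_k^{-1}$ is itself a two-level unitary. Since $U$ was arbitrary in $\mathrm{SU}(n)$, the theorem follows. (Note the generated group is actually all of $\mathrm{U}(n)$, which of course contains $\mathrm{SU}(n)$; restricting to determinant-one factors if desired costs at most one extra global-phase two-level unitary.)

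The argument involves no genuine obstacle, since at every step the required two-level move is just a $2 \times 2$ unitary picked to annihilate a specified entry. The only mild subtlety is bookkeeping the phase at the end of each column-reduction stage, and checking that the two-level unitaries used at later stages really do act as the identity on the coordinates already reduced to $e_i$; both are immediate from the fact that each $V_k$ is supported on only two basis states.
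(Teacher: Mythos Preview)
Your proof is correct and is exactly the standard Givens-rotation/QR-type argument that Reck et al.\ themselves gave. The paper, however, does not supply a proof of this theorem at all; it merely states the result and cites \cite{reck}, then moves on to the Bouland--Aaronson classification. So there is nothing to compare against beyond noting that your argument is the canonical one.
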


This was later improved to a classification result by Bouland and Aaronson \cite{boulandaaronson14}:

\begin{theorem}[Bouland and Aaronson '14]
Let $U$ be any $2 \times 2$ matrix with determinant $-1$. Then either
\begin{enumerate}
\item $U$ is of the form $\left(\begin{smallmatrix} e^{i\theta} & 0 \\ 0 & -e^{-i\theta} \end{smallmatrix}\right)$ or $\left(\begin{smallmatrix} 0 & e^{i\theta} \\ e^{i\theta} & 0 \end{smallmatrix}\right)$ and generates only trivial unitaries,
\item $U$ is real and densely generates $\mathrm{SO}(n)$ for all $n \geq 3$, or
\item $U$ densely generates $\mathrm{SU}(n)$ for all $n \geq 3$.
\end{enumerate}
\end{theorem}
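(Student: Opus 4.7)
The plan is to analyze the closed subgroup $G \subseteq U(n)$ generated by $U$ applied to all pairs of modes, working through its identity component $G_0$ (a connected closed Lie subgroup of $U(n)$) and the associated Lie algebra $\mathfrak{g} \subseteq \mathfrak{u}(n)$. First I would dispose of case 1 by direct inspection: for either listed matrix, $U$ applied to any two modes is a ``monomial'' unitary --- each basis vector is sent to a scalar multiple of a basis vector. The monomial unitaries form a closed proper subgroup of $U(n)$ (isomorphic to a wreath product of $U(1)$ with $S_n$), so $G$ is trapped inside it and cannot contain $SO(n)$ or $SU(n)$. For the remaining cases, a short argument shows that $U$ falls into a clean dichotomy: either every entry of $U$ is real (case 2), or the generated Lie algebra $\mathfrak{g}$ contains genuinely complex Hermitian directions (case 3).

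For the positive results I would first establish the theorem for $n = 3$, where we have three generators $U_{12}, U_{13}, U_{23}$, and then bootstrap. The $n = 3$ step is driven by commutators: since $U_{12}$ and $U_{23}$ overlap on mode 2 but differ on modes 1 and 3, their commutator $[iH_{12}, iH_{23}]$ (writing $U = e^{iH}$ locally) is a new Hermitian operator supported on all three modes, and further nested commutators populate more directions inside $\mathfrak{g}$. The hypothesis that $U$ is not of the form in case 1 translates into the statement that $H$ has at least one nonzero off-diagonal entry, which is exactly what makes these commutators linearly independent from the originals. A dimension count should then force $\mathfrak{g} = \mathfrak{so}(3)$ or $\mathfrak{g} = \mathfrak{su}(3)$, using the known list of closed Lie subalgebras of $\mathfrak{su}(3)$ together with the fact that each $U_{ij}$ is trivial on the third mode (so block-structure subgroups like $\mathfrak{s}(\mathfrak{u}(1) \oplus \mathfrak{u}(2))$ cannot contain all three). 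For $n \geq 4$, once every triple of modes supports $SO(3)$ or $SU(3)$, a standard chaining argument (repeatedly rotating the last mode against the others) fills out $SO(n)$ or $SU(n)$.

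The main obstacle I anticipate is the classification step inside $\mathfrak{su}(3)$: the list of proper closed connected subgroups of $SU(3)$ contains not just the obvious $\mathrm{S}(U(1) \times U(2))$ and $SO(3)$, but also several infinite families that were only fully classified relatively recently, and each must be ruled out using the specific structure of the $U_{ij}$'s. The key leverage is that the three generators are not arbitrary elements of $SU(3)$ but are tied to the three coordinate $2$-planes in $\mathbb{C}^3$, which should let me show that any proper closed subgroup containing all three of them is forced to preserve a monomial or real-orthogonal structure --- and hence reduces to one of the already handled cases. The real subproblem (case 2) should then follow from the same commutator computations carried out over $\mathbb{R}$, since the subgroup lattice of $SO(3)$ is much shorter and the only proper positive-dimensional closed subgroups are conjugates of $SO(2)$, which are ruled out on dimension grounds once $\mathfrak{g}$ contains three linearly independent skew-symmetric directions.
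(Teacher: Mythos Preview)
Your proposal conflates the Hamiltonian model with the discrete-gate model, and this creates a genuine gap. Writing $U = e^{iH}$ and then computing Lie brackets $[iH_{12}, iH_{23}]$ presupposes that the logarithms $H_{ij}$ lie in the Lie algebra $\mathfrak{g}$ of the closure $G$. They need not: we are given only the single unitary $U$, not a one-parameter family $e^{iHt}$, so $U_{ij} \in G$ does not imply $H_{ij} \in \mathfrak{g}$. The identity component $G_0$ could a priori be trivial --- the group generated by the $U_{ij}$ might land inside one of the many finite subgroups of $SU(3)$ --- and in that case your commutator calculus yields nothing. You also seem to be confusing two classification problems: the \emph{connected} closed subgroups of $SU(3)$ (equivalently, Lie subalgebras of $\mathfrak{su}(3)$) were classified long ago and do not come in ``infinite families only recently completed''; the recent work of Ludl that the paper alludes to concerns the \emph{finite} subgroups of $SU(3)$, and those are exactly the obstacle your Lie-algebra approach never confronts.

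The paper's route is different in precisely this respect. It proceeds by brute-force enumeration of the representations of subgroups of $SU(3)$ --- finite and infinite --- and rules out each proper subgroup (other than $SO(3)$) by checking that the three generators $U_{12}, U_{13}, U_{23}$ cannot simultaneously sit inside it. The representation-theoretic data (character values, dimensions of irreducibles, etc.) is what does the work of excluding the finite exceptional and infinite-family subgroups. Your monomial argument for case 1 and your bootstrap from $n=3$ to $n \ge 4$ are both fine and consistent with the paper, but the core $n=3$ step needs the subgroup enumeration rather than Lie brackets.
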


The proof relies heavily on ``brute force'' enumeration of the representations of subgroups of $\mathrm{SU}(3)$. This underlines the importance for the classification problem of understanding the special unitary group.

\subsection{Hamiltonians}

One view of quantum mechanics centers on the Schr\"{o}dinger equation,
$$
\frac{d \Psi}{dt} = -i H \Psi,
$$
where $\Psi$ is a wavefunction for the quantum system (what we've been calling the ``state''), and $H$ is a Hermitian operator called the \emph{Hamiltonian} of the system. The Hamiltonian gives the total energy of the system.

For a finite-dimensional system, with a time-independent Hamiltonian $H$, the wave function evolves over time $t$ according to the unitary transformation
$$
\Psi(t) = e^{i H t} \Psi(0)
$$
where $e^{i H t}$ is a matrix exponential. Designing a quantum computer is about controlling the Hamiltonian of a physical system to perform a sequence of gates. Many proposed designs assume the Hamiltonian can be ``switched'' between several time-independent Hamiltonians. Then, by varying how long we apply a Hamiltonian $H$, we can get any unitary gate in the set
$$
G_H = \{ e^{iHt} : t \in \R \}.
$$
As before, we assume we can apply any gate in $G_H$ to any pair of qubits, and in any order.

Now the problem is to decide whether $G_H$ is universal, for a given Hamiltonian $H$.

The main result on this problem is due to Childs et al.\ \cite{childs}:

\begin{theorem}[Childs et al., '11]
A $2$-qubit Hamiltonian is universal for $2$ qubits, with no ancillas allowed, unless (and only unless)
\begin{itemize}
\item $e^{iH}$ does not generate entanglement, or
\item $e^{iH}$ shares an eigenvector with SWAP.
\end{itemize}
\end{theorem}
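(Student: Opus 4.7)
The plan is to approach this via Lie algebra generation. With $S$ denoting SWAP, the gates accessible on 2 qubits form the set $\{e^{iHt}, e^{i(SHS)t} : t \in \R\}$ (the second family arising by relabelling the two qubits), so the group of generated unitaries is the connected Lie subgroup of $U(4)$ whose Lie algebra $\mathfrak{g}$ is the smallest Lie subalgebra of $u(4)$ containing $iH$ and $i(SHS)$. Universality is thus equivalent to $\mathfrak{g} = u(4)$, and the theorem amounts to characterizing which $H$ fail this.

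For the necessity (``only if'') direction, I would give separate obstructions for each excluded case. (i) If $e^{iH}$ generates no entanglement, then up to a power of $S$, $e^{iH}$ is a tensor product of single-qubit unitaries; consequently both $iH$ and $i(SHS)$ lie in the subalgebra $u(2)\oplus u(2) + \R \cdot iS$, which is a proper subalgebra of $u(4)$, and so is $\mathfrak{g}$. (ii) If $e^{iH}$ and $S$ share an eigenvector $\ket{\phi}$, then because $S\ket{\phi} = \pm\ket{\phi}$, conjugation by $S$ preserves the line $\C\ket{\phi}$, so $\ket{\phi}$ is also an eigenvector of $e^{i(SHS)}$. Hence every generator fixes $\C\ket{\phi}$ setwise, so the generated group lies in the stabilizer of this line, ruling out $\mathfrak{g} = u(4)$.

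For the sufficiency direction, I would work at the Lie algebra level: assuming neither obstruction holds, I want to show $\mathfrak{g} = u(4)$. The natural starting tool is the bracket $T := [iH, i(SHS)] = -[H, SHS]$, which is nonzero unless $H$ commutes with $SHS$, and then iterated brackets of $T$ with $iH$ and $i(SHS)$. The structural fact I would lean on is a classification of the closed proper Lie subalgebras of $u(4)$ that are stable under conjugation by $S$: these are essentially either (a) subalgebras of the local-unitaries-plus-swap algebra $u(2)\oplus u(2) + \R \cdot iS$, corresponding to non-entangling dynamics, or (b) stabilizers of an $S$-eigenline in $\C^4$, together with a short list of additional explicit families (e.g.\ $S$-compatible symplectic/orthogonal subalgebras, or stabilizers of $S$-invariant flags arising from the triplet/singlet decomposition). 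I would then check that hypothesis (i) rules out family (a) and hypothesis (ii) rules out family (b) and its variants, leaving $\mathfrak{g} = u(4)$ as the only remaining option.

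The main obstacle will be the last step: producing the classification of $S$-invariant proper subalgebras of $u(4)$ with enough precision that each such subalgebra is excluded by one of the two hypotheses. As discussed earlier in the notes, the subalgebra structure of $SU(4)$ is already delicate at this rank, but restricting to subalgebras invariant under the specific order-$2$ involution $X \mapsto SXS$ cuts the problem down to a finite, tractable list. The subtlest point is the case where the shared SWAP-eigenvector lies in the $3$-dimensional symmetric (triplet) subspace rather than being the singlet $\ket{\psi^-}$: a symmetric eigenvector of $e^{iH}$ does not by itself imply that $e^{iH}$ preserves the triplet subspace, so one must carefully track the single $1$-dimensional invariant through the commutator bookkeeping to confirm that this case too is exactly what condition (ii) captures. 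This is essentially the route carried out in \cite{childs}.
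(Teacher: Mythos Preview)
Your proposal is correct and follows essentially the same route as the paper, which does not give a full proof but simply sketches the Lie-algebra approach: take the closure of $\{H_{ij}\}$ under real linear combinations and commutators, argue that universality holds iff this closure is all of $u(4)$, and note the obvious obstructions in the two excluded cases (non-entangling gates stay in a local subalgebra; a shared eigenvector with SWAP is preserved by everything generated). Your plan for the sufficiency direction via $S$-invariant subalgebras is more detailed than anything in the paper and is indeed essentially how \cite{childs} proceeds.
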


The proof makes heavy use of Lie algebras. Specifically, we consider the closure of $\{ H_{ij} : i \neq j \in n \}$, where $H_{ij}$ denotes the Hamiltonian $H$ applied to qubits $i$ and $j$, under linear combinations over $\R$ and commutators $[A,B] = AB - BA$. If the closure is all Hermitian matrices, then $G_H$ is physically universal.

On an intuitive level, it's easy to see why gates which fail to generate entanglement aren't universal. Likewise, if $e^{iH}$ shares an eigenvector $\ket{\phi}$ with SWAP, then every gate in $G_H$ has $\ket{\phi}$ as an eigenvector. Since we can only apply gates from $G_H$ and SWAP, $\ket{\phi}$ will then be an eigenvector of everything we generate, which is therefore not all $2$-qubit unitary operations. Note that on $3$ qubits, this eigenvector argument no longer holds. In that case, it's open which of these operations are universal, and on how many qubits.

There's been some partial progress on the problem of classifying commuting Hamiltonians. A \emph{commuting Hamiltonian} is a Hamiltonian $H$ such that any two gates in $G_H$, applied to any pairs of qubits, commute. That is,
$$
e^{i H_{ab} t_1} e^{i H_{cd} t_2} = e^{i H_{cd} t_2} e^{i H_{ab} t_1}
$$
for all $a,b,c,d$ and $t_1, t_2 \in \R$. Alternatively, for $2 \times 2$ unitaries, this is equivalent to all gates in $G_H$ being diagonal after a change of basis (i.e., a conjugate by a single-qubit $U$ on every qubit). \footnote{Clearly diagonal matrices commute, but the converse is more involved.}

Note that if $H$ is a commuting Hamiltonian, then $G_H$ can't be universal, because any unitary we construct is diagonal in the same basis. However, a 2010 result of Bremner, Josza and Shepherd \cite{bremner} shows that a particular set of commuting Hamiltonians of this form can't be classically simulated unless the polynomial hierarchy collapses.

\begin{theorem}[Bremner, Josza, Shepherd \cite{bremner}]
Suppose you can apply any Hamiltonian which is diagonal in the $X$ basis (i.e., after conjugation by the Hadamard gate on all qubits). Using such circuits, one can sample from a probability distribution which cannot be exactly sampled in classical polynomial time unless the polynomial hierarchy collapses.
\end{theorem}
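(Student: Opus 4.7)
The plan follows the now-standard ``postselection'' template due to Bremner, Jozsa, and Shepherd themselves. Let $\mathsf{IQP}$ denote the computational model allowed by the theorem---quantum circuits built from $X$-diagonal Hamiltonians, applied to $\ket{0^n}$ and measured in the computational basis---and let $\mathsf{PostIQP}$ be its postselected analogue, defined exactly like $\mathsf{PostBQP}$ from Lecture~5 but with the underlying circuit restricted to $\mathsf{IQP}$. The central technical step is to establish
\[
\mathsf{PostIQP} \;=\; \mathsf{PostBQP} \;=\; \mathsf{PP},
\]
where the second equality is Aaronson's Theorem~\ref{postbqppp}. The inclusion $\mathsf{PostIQP} \subseteq \mathsf{PostBQP}$ is immediate; the nontrivial direction says that postselection alone is powerful enough to make the $X$-diagonal model computationally universal.

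I would prove $\mathsf{PostBQP} \subseteq \mathsf{PostIQP}$ via gate-teleportation gadgets. Every non-commuting gate in a universal set (say $\{H, T, \mathrm{CNOT}\}$) can be realized by (i) preparing an ancillary ``magic'' resource state, (ii) entangling it with the data register using an $X$-diagonal interaction, and (iii) postselecting on a specific computational-basis measurement outcome---the same logic that underlies magic-state injection and measurement-based computation. The resource states one needs are themselves preparable inside $\mathsf{IQP}$ (they are simple single- or two-qubit states of the form $e^{iHt}\ket{+\cdots +}$ for appropriate $X$-diagonal $H$), so the whole simulation stays within the allowed model. One then checks that each gadget succeeds with at least inverse-polynomial probability and that, across the polynomially many gates in the simulated $\mathsf{BQP}$ circuit, the overall postselection probability remains above $1/\exp(\poly n)$, as required by the definition of $\mathsf{PostBQP}$.

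With $\mathsf{PostIQP} = \mathsf{PP}$ in hand, the theorem follows from a Stockmeyer-style collapse. Suppose, for contradiction, that there is a classical randomized polynomial-time algorithm $A$ that exactly samples from the output distribution of every IQP circuit $C$. Given any language $L \in \mathsf{PostIQP}$, specified by an IQP circuit with designated postselection register $P$ and output bit $O$, we decide $L$ in $\mathsf{PostBPP}$ by running $A$ on the circuit to obtain a classical sample $(p,o)$, postselecting \emph{classically} on $p$ hitting the required value, and reading off $o$. Because $A$'s distribution is exactly that of $C$, the induced conditional distribution on $O$ matches the quantum one, yielding a valid $\mathsf{PostBPP}$ algorithm for $L$. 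Hence $\mathsf{PP} = \mathsf{PostIQP} \subseteq \mathsf{PostBPP}$. But $\mathsf{PostBPP} \subseteq \mathsf{BPP}^{\mathsf{NP}}$ (Han--Hemaspaandra--Thierauf), which sits inside the third level of $\mathsf{PH}$; combining with Toda's theorem \cite{toda} that $\mathsf{PH} \subseteq \mathsf{P}^{\mathsf{PP}}$, we obtain $\mathsf{PH} \subseteq \mathsf{P}^{\Sigma_3^{\mathsf{P}}}$, a finite-level collapse of $\mathsf{PH}$.

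The hard part will be the gadget construction in the second paragraph: one has to exhibit a concrete family of $\mathsf{IQP}$+postselect gadgets compiling a universal gate set, verify the per-gadget success probabilities, and show that composing polynomially many of them preserves the inverse-exponential lower bound on the postselection probability demanded by the definition of $\mathsf{PostBQP}$. Everything else---Aaronson's $\mathsf{PostBQP} = \mathsf{PP}$, the $\mathsf{PostBPP} \subseteq \mathsf{PH}$ containment, and Toda's theorem---is off the shelf and plugged in exactly as above.
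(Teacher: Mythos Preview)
Your proposal is correct and follows exactly the approach sketched in the paper: show $\mathsf{PostIQP}=\mathsf{PostBQP}=\mathsf{PP}$ via postselection gadgets, then argue that exact classical sampling would place $\mathsf{PostBQP}$ inside $\mathsf{PostBPP}$, collapsing $\mathsf{PH}$ via Toda's theorem. You have supplied more detail than the paper's one-paragraph sketch (the gate-teleportation picture for the gadgets, the explicit citation of Han--Hemaspaandra--Thierauf), but the skeleton is identical.
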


The proof uses ``postselection gadgets'': that is, gadgets built from commuting Hamiltonian gates which use postselection to simulate all of $\mathsf{PostBQP}$. One argues that, if every probability distribution generated by the commuting Hamiltonians can be sampled in $\mathsf{BPP}$ then $\mathsf{PostBQP}$ is contained in $\mathsf{PostBPP}$, and hence
$$
\mathsf{PH} \subseteq \mathsf{P}^{\mathsf{PP}} = \mathsf{P}^{\mathsf{PostBQP}} \subseteq \mathsf{P}^{\mathsf{PostBQP}} \subseteq \mathsf{\Delta}_3^{\mathsf{P}},
$$
so the polynomial hierarchy collapses.

This was recently generalized by Bouland, Man\u{c}inski and Zhang \cite{bmz}.
\begin{theorem}[Bouland, Man\u{c}inski and Zhang '16]
Any $2$-qubit commuting Hamiltonian is either classically simulable, or can perform a sampling task which can't be done in classical polynomial time unless the polynomial hierarchy collapses.
\end{theorem}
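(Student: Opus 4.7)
The plan is to reduce the theorem to a case analysis on a normal form for 2-qubit commuting Hamiltonians, then handle each case either by an easy classical simulation or by an adaptation of the Bremner--Jozsa--Shepherd (BJS) postselection-gadget argument.

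First, I would establish a structural normal form. For a 2-qubit Hermitian $H$, the condition that $H_{ab}$ and $H_{cd}$ commute for every choice of (possibly overlapping) qubit pairs $\{a,b\},\{c,d\}$ is very restrictive. By looking at overlapping pairs like $\{a,b\}$ and $\{b,c\}$ and using that $H$ acts symmetrically on its two qubits (which one can assume WLOG, or handle separately), one can show that $H$ must be diagonal in a \emph{product basis} on two qubits: there exists a single-qubit unitary $V$ such that $H = (V\otimes V)\, D\, (V\otimes V)^\dagger$ for some diagonal $D$ (up to an irrelevant single-qubit term absorbable into each $H_{ab}$). Equivalently, every $e^{iHt}$ is diagonal in the basis $\{V\ket{0},V\ket{1}\}^{\otimes 2}$. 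This is the same kind of classification used in Bouland--Man\v{c}inski--Zhang's earlier analyses, and I would prove it by a short Lie-algebra calculation on the span of $H_{ab}$'s.

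Second, I would split into cases based on $V$. Write $\ket{\phi} := V^\dagger\ket{0}$, so that running the gate set on $\ket{0^n}$ and measuring in the computational basis is the same as preparing $\ket{\phi}^{\otimes n}$, applying an arbitrary ``instantaneous'' diagonal circuit $C$ in the computational basis, then measuring each qubit in the $\{V\ket{0},V\ket{1}\}$ basis. This is an IQP-style model, twisted by $V$ on input and output. The trivial case is when $V\ket{0}$ (equivalently $\ket{\phi}$) is itself a computational basis state: then the state stays a product of basis states up to phase, the output distribution is a point mass, and the model is obviously classically simulable. The remaining case is when $\ket{\phi}$ has nontrivial support on both $\ket{0}$ and $\ket{1}$.

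Third, for the nontrivial case I would build postselection gadgets showing that $\mathsf{PostBQP} \subseteq \mathsf{PostBPP}^{\text{sim}}$ if the model were classically simulable in the multiplicative-error BJS sense. Concretely: (i) starting from $\ket{\phi}^{\otimes n}$ with $\braket{0}{\phi}\braket{1}{\phi}\neq 0$, postselecting individual qubits onto $\ket{0}$ or $\ket{1}$ lets us prepare arbitrary tensor products of $\ket{0}$ and $\ket{1}$ anywhere we want ancilla; (ii) the available diagonal $V$-rotated two-qubit unitaries, conjugated by the measurement basis $V$, give us a nontrivial entangling gate plus diagonal phase gates in the computational basis, which by Childs et al.\ densely generate all two-qubit unitaries once combined with the ability to postselect individual qubits into any chosen single-qubit state (which we can prepare via the same trick plus postselected teleportation); (iii) with universal quantum gates and postselection in hand, we recover $\mathsf{PostBQP}$, hence $\mathsf{PP}$ by Aaronson's theorem (\ref{postbqppp}). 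Applying the BJS template then yields $\mathsf{PH}\subseteq\mathsf{BPP}^{\mathsf{NP}}\cdot\mathsf{PostBQP}\subseteq\mathsf{\Delta}_3^{\mathsf{P}}$, so $\mathsf{PH}$ collapses if the distribution is sampled classically in polynomial time (up to multiplicative error).

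The main obstacle is step three for \emph{arbitrary} nontrivial $V$. BJS gives postselection gadgets tuned to $V=H$ (the Hadamard); here the gadgets must be uniform in $V$ and must produce a \emph{universal} gate set, not just one particular $\mathsf{PostBQP}$-hard family. The delicate points are: (a) verifying that for every $V$ with $\braket{0}{V0},\braket{1}{V0}\neq 0$, the centralizer of the diagonal group inside $\mathrm{SU}(4)$, intersected with the $V$-twisted gates, actually contains an entangling gate (this can fail precisely when the diagonal part is trivial on one subsystem, which is a measure-zero degenerate sub-case that must be shown to fall into the classical side of the dichotomy); and (b) ensuring that the success probabilities of the postselections are inverse-polynomially bounded below so the $\mathsf{PostBQP}$ reduction goes through, which requires choosing gadget parameters depending continuously on $V$. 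I expect the bulk of the work, and the most technical calculations, to lie in isolating and dispatching these degenerate sub-cases, so that the ``classically simulable'' side of the dichotomy captures exactly the $V$ for which the gadget construction fails.
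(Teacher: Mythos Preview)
Your overall architecture matches the paper's: normal form reducing to diagonal gates in a $V$-conjugated basis, then BJS-style postselection gadgets to get $\mathsf{PostBQP}$ and collapse $\mathsf{PH}$. So the high-level plan is right.

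There is, however, a genuine gap in step (ii) of your third phase. You write that the entangling gate plus diagonal phases ``by Childs et al.\ densely generate all two-qubit unitaries once combined with the ability to postselect.'' But postselection gadgets are \emph{not unitary}: they are sub-normalized linear maps. Density/universality results like Childs et al.\ live in the group $\mathrm{SU}(4)$ and rely implicitly on closure under inversion; a postselection gadget $P$ need not have $P^{-1}$ realizable as another postselection gadget in your model. This is precisely the technical obstacle the paper singles out---``there's some difficulty constructing inverses of these gadgets due to the non-unitarity of postselection''---and your proposal does not engage with it. The actual work in Bouland--Man\v{c}inska--Zhang is to show, for each nontrivial $V$ and diagonal $D$, how to build enough gadgets (including effective inverses) to reach $\mathsf{PostBQP}$ without ever appealing to a group-closure argument. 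Your points (a) and (b) about degenerate $V$ and success-probability bounds are real side conditions, but they are secondary to this inversion issue; until you explain how you get inverses (or circumvent the need for them), the reduction to $\mathsf{PostBQP}$ is not complete.
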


As before, the proof uses postselection gadgets built from the commuting Hamiltonian gates. There's some difficulty constructing inverses of these gadgets due to the non-unitarity of postselection.

\section{Classification of Reversible Classical Gates}
\label{sec:classical}

Suppose we wish to classify all quantum gate sets in terms of what subset of unitaries they generate. Since classical reversible gates are a subset of quantum gates, we'd first need to classify classical reversible gates. This classification was recently completed by Aaronson, Grier and Schaeffer \cite{ReversibleGates}. Before explaining their result, we'll first describe what was previously known about this problem in the simpler setting of \emph{irreversible} gates, which were classified in the early twentieth century.

\subsection{Background}

Classical circuits are usually described in terms of irreversible gates, such as AND, OR, NOT, and NAND. In introductory electrical engineering courses, one learns that the NAND gate is universal---i.e.\ NAND gates can be used to construct any function, assuming we have access to ancilla bits as input. A natural problem is to classify irreversible gate sets in terms of which sets of Boolean functions that they can generate, assuming we have access to ancilla bits that don't depend on the input.

We can easily see that certain gate sets won't be universal. For instance, the AND and OR gates are monotone, i.e.\ if two input strings $x$ and $y$ satisfy $x\leq y$ (where $\leq$ denotes bitwise less-than-or-equal), then $\mathrm{AND}(x)\leq \mathrm{AND}(y)$.  So AND and OR will only ever be able to generate monotone functions. Likewise, the XOR gate is linear over $\mathbb{F}_2$, and therefore can't generate nonlinear functions such as AND.

One interesting point is that it's possible for non-universal gate sets to be universal in an encoded sense. For instance, suppose that we represent $0$ by the string $01$ and $1$ by the string $10$. This is known as the ``dual rail'' encoding of a bit. (This same encoding is also used in linear optics). Then one can show that $\{\mathrm{AND},\mathrm{OR}\}$ is universal on this encoded space. To perform a logical NOT, one merely needs to swap the bits of the encoded bit. (In the following, we'll assume SWAP gates are included in the gate set.) To perform a logical AND, one takes the AND of the first bits of each encoded bit.  Therefore, in this construction one can perform logical NOT and AND gates, and hence achieve encoded universality.

The classification of ordinary (irreversible) Boolean gates was completed by Post \cite{PostsLattice} in the 1920s, although he only published his work in 1941. Suppose we have a set of $k$-bit irreversible gates, and we're allowed ancillas which don't depend on the input. Then there are only 7 possible subsets of functions that we can generate:
\begin{itemize}
\item All possible functions
\item All monotone functions
\item All functions computable with AND only
\item All functions computable with OR only
\item All functions computable with XOR only
\item All functions computable with NOT only
\item The empty set---i.e.\ only constant functions
\end{itemize}
The relationships among these classes are shown in Figure \ref{post}.
\begin{figure}[h]
\begin{center}
\begin{tikzpicture}[>=latex]
\tikzstyle{class}=[circle, thick, minimum size=1.2cm, text width=0.8cm, align=center, draw, font=\tiny]
\tikzstyle{all}=[class,fill=blue!20]
\tikzstyle{affine}=[class,fill=green!20]
\tikzstyle{monotone}=[class,fill=yellow!20]
\tikzstyle{none}=[class,fill=red!20]
\tikzstyle{invis}=[class,draw=none]
\matrix[row sep=0.7cm,column sep=1.1cm,ampersand replacement=\&] {
\& \node (ALL) [all] {$\top$}; \& \\
\node (XOR) [affine] {$\mathsf{XOR}$}; \& \node (MONO1) {}; \& \node (MONO2) {}; \\
\node (NOT) [affine] {$\mathsf{NOT}$}; \& \node (AND) [monotone] {$\mathsf{AND}$}; \& \node (OR) [monotone] {$\mathsf{OR}$}; \\
\& \node (NONE1) [invis] {}; \& \node (NONE2) [invis] {}; \\
};
\node (MONO) [monotone] at ($(MONO1)!0.5!(MONO2)$) {$\mathsf{MONO}$};
\node (NONE) [none] at ($(NONE1)!0.5!(NONE2)$) {$\bot$};
% EDGES
\path[draw,->] (ALL) edge (XOR)
(ALL) edge (MONO)
(MONO) edge (AND)
(MONO) edge (OR)
(AND) edge (NONE)
(OR) edge (NONE)
(XOR) edge (NOT)
(NOT) edge (NONE);
\end{tikzpicture}
\end{center}
\caption{Simplified Post's Lattice}
\label{post}
\end{figure}
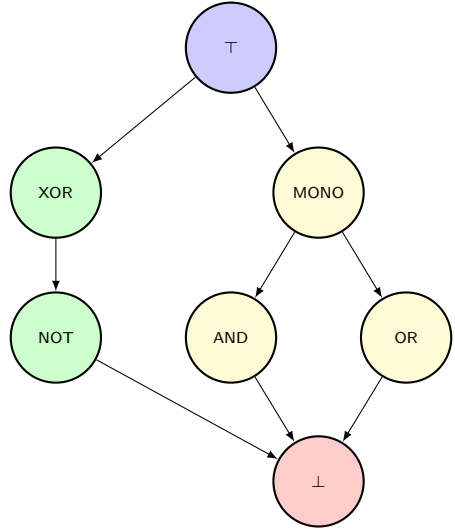

Here an arrow is drawn from gate class A to gate class B if adding any element to B (which isn't already in B) boosts the set of functions generated to at least those generated by A.  The classification is striking in its simplicity, and takes a page or so to prove (we leave it as an exercise).

We should mention, however, that in his original work, Post did {\em not} make the simplifying assumption that $0$ and $1$ ancillas are always available for free.  And when one drops that assumption, the classification (now known as {\em Post's lattice}) becomes horrendously more complicated.

\subsection{Classification of Reversible Gates} \label{REV}

Recently, Aaronson, Grier and Schaeffer \cite{ReversibleGates} completed the classification of reversible Boolean gates, analogous to the classification that we stated above for {\em irreversible} gates.  One can motivate this in two ways.  First, classical reversible gates are a subset of unitary gates, so classifying the former is a necessary step towards classifying all quantum gate sets. Second, even if we're only interested in classical computing, we might want to consider reversible gates because (in contrast to irreversible gates) they don't generate heat by dumping entropy into the environment.

In the following, we assume that the SWAP gate (i.e., relabeling of the bits) is always available for free.  We also assume that we're allowed classical ancillas (in the state 0 or 1), but that at the end of the computation, the classical ancillas must be returned to their initial states. So in particular, the final state of the ancillas can't depend on the input to the function. One reason to make this assumption is that it lets the classification extend naturally to quantum computing. In that context, requiring the ancillas to return to their initial values ensures that if the classical circuit is queried in superposition, the ancillas don't become entangled with the input to the circuit. Therefore, this classification captures which classical functions can be computed (possibly in superposition) using subsets of classical reversible gates.

Additionally, adding ancillas avoids some technical difficulties which seem artificial. For instance, if we're not allowed ancillas, then using 3-bit permutations only (i.e., any 3 bit reversible gates), it's not possible to generate all possible 4-bit permutations---because 3-bit operations generate even permutations when acting on 4 bits, so they can never generate an odd permutation of 4 bits. Allowing ancillas removes this sign issue.

Aaronson, Grier, and Schaeffer's classification is shown in Figure \ref{ReversibleFigure}. To explain the diagram, we need to define some terminology. The Toffoli gate is the Controlled-Controlled-NOT gate. The Fredkin gate is the Controlled-SWAP gate. The NOTNOT gate applies NOT to two bits simultaneously (this is weaker than the NOT gate because applying NOTNOT always preserves the parity of the input string; it doesn't give you the ability to apply a single NOT). The set $C_k$ consists of all gates that preserve Hamming weight mod $k$.
The sets $F_4$, $T_4$, and $T_6$ are exceptional classes of gates which will be defined shortly.

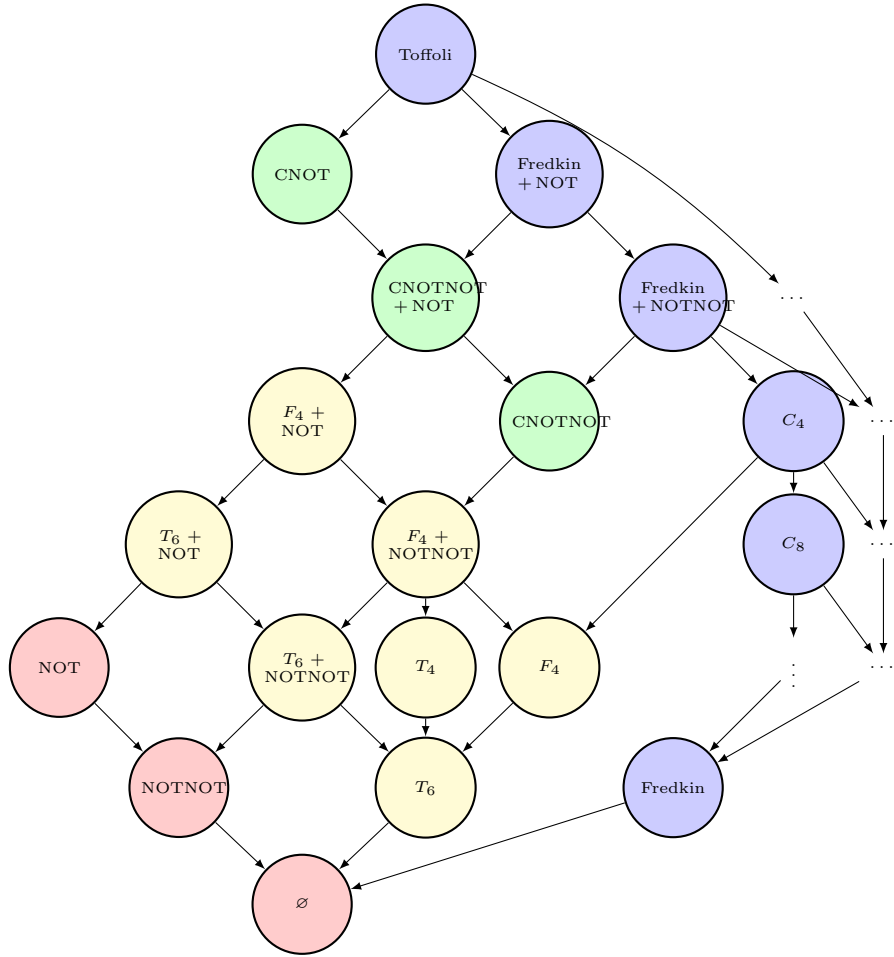
\begin{figure}[ptb]
\begin{center}
\begin{tikzpicture}[>=latex,scale=0.5]
\tikzstyle{class}=[circle, thick, minimum size=1.2cm, text width=1.0cm, align=center, draw, font=\tiny,scale=1.0]
\tikzstyle{nonaffine}=[class, fill=blue!20]
\tikzstyle{affine}=[class, fill=green!20]
\tikzstyle{orthogonal}=[class, fill=yellow!20]
\tikzstyle{inf4}=[class, fill=yellow!20] % I can't figure out what to call this class, so it gets absorbed into "orthogonal"
\tikzstyle{automorphism}=[class,fill=red!20]
\matrix[row sep=0.2cm,column sep=0.2cm,ampersand replacement=\&] {
\& \& \& \node (ALL) [nonaffine]{$\Toffoli$}; \& \& \& \& \\
\& \& \node (CNOT) [affine]{$\CNOT$}; \& \& \node (FRNOT) [nonaffine]{$\Fredkin$ \\ $+\NOT$}; \& \& \& \\
\& \& \& \node (CNOTNOTN) [affine]{$\CNOTNOT$ \\ $+\NOT$}; \& \& \node (MOD2) [nonaffine]{$\Fredkin$ \\ $+\NOTNOT$}; \& \node (MOD3) [font=\tiny]{$\cdots$}; \& \\
\& \& \node (F4NOT) [orthogonal]{$F_4+\NOT$}; \& \& \node (CNOTNOT) [affine]{$\CNOTNOT$}; \& \& \node (MOD4) [nonaffine]{$C_4$}; \& \node (MOD6) [font=\tiny]{$\cdots$}; \\
\& \node (T6NOT) [inf4]{$T_6+\NOT$}; \& \& \node (F4NOTNOT) [orthogonal]{$F_4+\NOTNOT$}; \& \& \& \node (MOD8) [nonaffine]{$C_8$}; \& \node (MOD12) [font=\tiny]{$\cdots$}; \\
\node (NOT) [automorphism]{$\NOT$}; \& \& \node (T6NOTNOT) [inf4]{$T_6+\NOTNOT$}; \& \node (T4) [orthogonal]{$T_4$}; \& \node (F4) [orthogonal]{$F_4$}; \& \& \node (MODELLIPSIS) [font=\tiny]{$\vdots$}; \& \node (MODELLIPSIS2) [font=\tiny]{$\cdots$}; \\
\& \node (NOTNOT) [automorphism]{$\NOTNOT$}; \& \& \node (T6) [inf4]{$T_6$}; \&  \& \node (FREDKIN) [nonaffine]{$\Fredkin$}; \& \& \\
\& \& \node (NONE) [automorphism]{$\varnothing$}; \& \& \& \& \\
};
% EDGES
\path[draw,->] (ALL) edge (FRNOT)
(FRNOT) edge (MOD2)
(ALL) edge[bend left=10] (MOD3)
(MOD2) edge (MOD4)
(MOD2) edge (MOD6)
(MOD3) edge (MOD6)
(MOD4) edge (MOD8)
(MOD4) edge (MOD12)
(MOD6) edge (MOD12)
(MOD8) edge (MODELLIPSIS)
(MOD8) edge (MODELLIPSIS2)
(MOD12) edge (MODELLIPSIS2)
(MODELLIPSIS) edge (FREDKIN)
(MODELLIPSIS2) edge (FREDKIN)
(FREDKIN) edge (NONE)
(ALL) edge (CNOT)
(CNOT) edge (CNOTNOTN)
(CNOTNOTN) edge (CNOTNOT)
(F4NOT) edge (F4NOTNOT)
(F4NOTNOT) edge (F4)
(T6NOT) edge (T6NOTNOT)
(T6NOTNOT) edge (T6)
(NOT) edge (NOTNOT)
(NOTNOT) edge (NONE)
(FRNOT) edge (CNOTNOTN)
(CNOTNOTN) edge (F4NOT)
(F4NOT) edge (T6NOT)
(T6NOT) edge (NOT)
(MOD2) edge (CNOTNOT)
(CNOTNOT) edge (F4NOTNOT)
(F4NOTNOT) edge (T6NOTNOT)
(T6NOTNOT) edge (NOTNOT)
(MOD4) edge (F4)
(F4) edge (T6)
(T6) edge (NONE)
(F4NOTNOT) edge (T4)
(T4) edge (T6);
\end{tikzpicture}
\end{center}
\caption{The inclusion lattice of reversible gate classes}%
\label{ReversibleFigure}
\end{figure}

The classification is quite involved, so let's discuss it from the top down. At the top is the Toffoli class---this set contains all possible permutations of the bit strings. The classes along the upper right of the diagram are non-linear gate sets, all of which are encoded universal, either by the dual rail encoding ($x\rightarrow x\bar{x}$), or the doubled encoding ($x\rightarrow xx$), or the doubled dual rail encoding ($x\rightarrow x\bar{x}x\bar{x}$). All of these gate sets have special properties with regard to parity. The $C_k$ gates preserve parity mod $k$. Fredkin+NOTNOT preserves parity mod 2. And Fredkin+NOT either conserves or anti-conserves parity mod 2. (I.e.\ for any element of the Fredkin+NOT class, either parity is preserved or flipped.) For the $C_k$ classes, where $k$ is prime, adding any gate outside the class bumps the power all the way to that of Toffoli (i.e.\ all permutations)---hence these classes are connected directly to the top of the diagram.

Next are the gate classes that are linear, but not orthogonal transformations of their input. These include the classes generated by CNOT, CNOTNOT (the controlled-NOTNOT gate), and CNOTNOT+NOT.  These classes are not universal. In fact, computation with these gates is contained in the complexity class $\oplus$\textsf{L} (``parity-L''), which is the power of counting the parity of the number of accepting paths of a non-deterministic logspace Turing machine.

Next are the gate classes that are orthogonal but nontrivial. These include three exceptional classes: those generated by gates that we call $F_4$, $T_6$, and $T_4$.  The $T_k$ gates take as input a $k$-bit $x$, and replace $x$ with $\bar{x}$ if the parity of the bits of $x$ is odd, and otherwise leave $x$ fixed. The $F_k$ gates take as input a $k$-bit $x$, and replace $x$ with $\bar{x}$ if the parity of bits of $x$ is even, and otherwise leave $x$ fixed.

The remaining gate classes, such as those generated by NOT or NOTNOT, are trivial.  This completes the classification.

One can prove without great difficulty that the classes in the diagram all distinct, have the containment relations shown, and are generated by the gate sets that we said.  The hard part is to prove that there are no {\em additional} classes.  Briefly, one does this by repeatedly arguing that any gate $G$ that {\em fails} to preserve the defining invariant of some class $C_1$, must therefore be usable to generate a generating set for some larger class $C_2$.  Depending on the $C_1$ and $C_2$ in question, this can require arguments involving, e.g., lattices, Diophantine equations, and Chinese remaindering.  We won't say more here, but see \cite{ReversibleGates} for details.

If we allow {\em quantum} ancillas (e.g. the state $\frac{\ket{00}+\ket{11}}{\sqrt{2}}$) in this model, then many of the classes in the diagram collapse. In particular, only six of the sets remain: Toffoli, Fredkin, CNOT, $T_4$, NOT, and the empty set. An inclusion diagram for these is provided in Figure \ref{ReversibleFigureWithQuantumAncillas}. As a result, if one were trying to classify all quantum gate sets, one would probably want to start with this simplified lattice.

\begin{figure}[h]
\begin{center}
\begin{tikzpicture}[>=latex]
\tikzstyle{class}=[circle, thick, minimum size=1.2cm, text width=0.8cm, align=center, draw, font=\tiny]
\tikzstyle{all}=[class,fill=blue!20]
\tikzstyle{affine}=[class,fill=green!20]
\tikzstyle{monotone}=[class,fill=yellow!20]
\tikzstyle{none}=[class,fill=red!20]
\tikzstyle{invis}=[class,draw=none]
\matrix[row sep=0.7cm,column sep=1.1cm,ampersand replacement=\&] {
\& \node (ALL) [all] {$\mathsf{TOFFOLI}$}; \& \\
\node (CNOT) [affine] {$\mathsf{CNOT}$}; \& \&   \\
\node (T4) [monotone] {$T_4$}; \& \& \node (FREDKIN)[all]{$\mathsf{FREDKIN}$}; \\
\node (NOT)[none]{$\mathsf{NOT}$}; \& \& \\
\& \node (NONE) [none]{$\bot$}; \&  \\
};
% EDGES
\path[draw,->] (ALL) edge (CNOT)
(ALL) edge (FREDKIN)
(CNOT) edge (T4)
(T4) edge (NOT)
(FREDKIN) edge (NONE)
(NOT) edge (NONE);
\end{tikzpicture}
\end{center}
\caption{Classification of Reversible Gates with Quantum Ancillas}
\label{ReversibleFigureWithQuantumAncillas}
\end{figure}
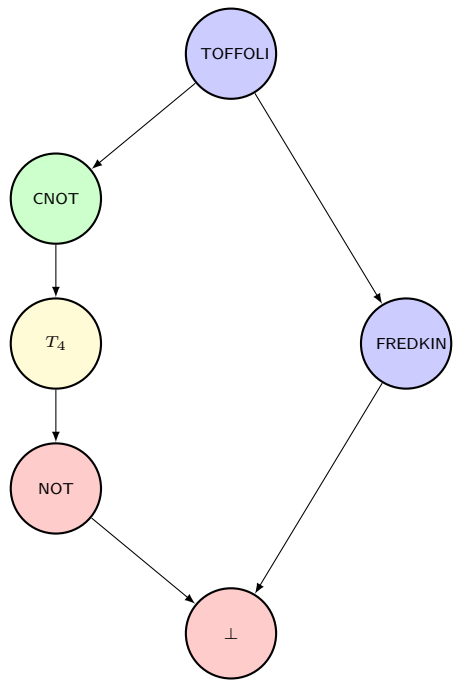

Several open problems remain. First, can we extend this classification to trits (that is, a $3$-symbol alphabet) rather than bits? Already for the case of irreversible gates, trits are much more complicated: the trit lattice is known to have uncountable cardinality, in contrast to Post's lattice, which has only countable cardinality.

More interestingly, can we classify all quantum gate sets acting on qubits, or even just (for example) all the classes generated by $2$-qubit gates?  Can we prove a dichotomy theorem, saying that every quantum gate set is either physically universal, or else fails to be universal in one of a list of prescribed ways?  Besides the classical gates, $1$-qubit gates, stabilizer gates and their conjugates, etc., does that list include anything not yet known?  Finally, what can we say about the complexity of states and unitary transformations for the non-universal classes in the list?

\phantomsection\addcontentsline{toc}{section}{References}

\bibliography{thesis}

\end{document}